\documentclass[10pt, letterpaper]{article}

\usepackage[utf8]{inputenc} 
\usepackage[T1]{fontenc}
\usepackage{microtype}
\usepackage[margin=1in, top=0.8in]{geometry}
\usepackage{mathpazo} 
\usepackage[round]{natbib}

\usepackage{titlesec}
\titleformat{\section}{\large\bfseries\raggedright}{\thesection}{1em}{}
\titlespacing*{\section}{0pt}{1.2ex plus 0.3ex minus 0.2ex}{1.0ex plus 0.2ex minus 0.2ex}
\titleformat{\subsection}{\normalsize\bfseries\raggedright}{\thesubsection}{1em}{}
\titlespacing*{\subsection}{0pt}{1.0ex plus 0.3ex minus 0.2ex}{0.5ex plus 0.2ex}
\titleformat{\subsubsection}{\normalsize\bfseries\raggedright}{\thesubsubsection}{1em}{}
\titlespacing*{\subsubsection}{0pt}{0.8ex plus 0.2ex minus 0.2ex}{0.3ex plus 0.1ex}
\titleformat{\paragraph}[runin]{\normalsize\bfseries}{\theparagraph}{1em}{}
\titlespacing*{\paragraph}{0pt}{0.8ex plus 0.2ex minus 0.2ex}{0.5em}

\renewenvironment{abstract}%
{%
  \vspace*{0.075in}%
  \begin{center}%
    {\large \bfseries Abstract}%
  \end{center}%
  \vspace{-2em}%
  \begin{quote}%
}
{
  \end{quote}%
  \vspace{1ex}%
}

\makeatletter
\def\blfootnote{\gdef\@thefnmark{}\@footnotetext}
\makeatother

\usepackage{url}           
\usepackage{booktabs}
\usepackage{nicefrac}  
\usepackage[table]{xcolor}
\usepackage{enumitem}   
\usepackage{amsmath, amsthm, amsfonts, amssymb, mathtools}
\usepackage{graphicx}
\usepackage{mdframed}
\usepackage{float}
\usepackage{caption}
\usepackage[toc,page,header]{appendix}
\usepackage{wrapfig}
\usepackage{algorithm, algpseudocode}
\usepackage{multicol}
\usepackage{etoolbox}
\usepackage{multirow}
\usepackage[scaled]{beramono} 
\usepackage[most]{tcolorbox}

\usepackage{notation}

\usepackage[
    linktocpage=true,
    plainpages=false,
    pdfpagelabels
]{hyperref}
\hypersetup{
  colorlinks=true,
  citecolor=blue,
  urlcolor=cyan,
  linkcolor=magenta,
}

\definecolor{light-gray}{gray}{0.93}

\setlist{itemsep=0.1em, topsep=0em, leftmargin=2.5em}

\usepackage{thmtools, thm-restate}

\declaretheoremstyle[
  spaceabove=1.5ex plus 0.2ex minus 0.2ex, 
  spacebelow=0.5ex,                        
  headfont=\bfseries,                     
  bodyfont=\normalfont                  
]{tighterrem}

\declaretheoremstyle[
  spaceabove=1.5ex plus 0.2ex minus 0.2ex, 
  spacebelow=0.5ex,                        
  headfont=\bfseries,                     
  bodyfont=\it                     
]{tighterthm}

\newtheorem*{theorem*}{Theorem}
\newtheorem*{informaltheorem*}{Informal Theorem}

\theoremstyle{definition}

\renewenvironment{proof}{\noindent\textbf{Proof.}\,}{\qed}

\title{  
    \bf
    Linear Last-Iterate Convergence Rates for \\
    Optimistic Learning in Harmonic Games
}

\date{}

\author{
  \begin{tabular}{c}
    John Lazarsfeld$^{1}$
  \end{tabular}
  \hfil
  \hspace*{1em}
  \begin{tabular}{c}
    Anas Barakat$^{2}$
  \end{tabular}
  \hfil
  \hspace*{1em}
  \begin{tabular}{c}
    Jacob Abernethy$^{1, 3}$
  \end{tabular}
  \hfil
  \hspace*{1em}
    \begin{tabular}{c}
    Vidya Muthukumar$^{1}$
  \end{tabular}
  \vspace{0.2em} \\ 
  \begin{tabular}{c}
    Georgios Piliouras$^{4}$
  \end{tabular}
  \hspace*{1em}
  \begin{tabular}{c}
    Andre Wibisono$^{5}$
  \end{tabular}
    \hspace*{1em}
  \begin{tabular}{c}
    Juba Ziani$^{1}$
  \end{tabular}
}

\begin{document}
\maketitle
\blfootnote{
    Affiliations:\;\;    
    $^{1}$ Georgia Tech \;\;
    $^{2}$ SUTD \;\;
    $^{3}$ Google Research\;\;
    $^{4}$ Google DeepMind \;\;
    $^{5}$ Yale \;\;
  \vspace*{0.1em}
}
\blfootnote{%
Contact: \qquad
\texttt{jlazarsfeld@gatech.edu,
    anas\_barakat@sutd.edu.sg, 
    prof@gatech.edu, 
    vmuthukumar8@gatech.edu \\
    \hspace*{8em}
    gpil@deepmind.com, 
    andre.wibisono@yale.edu, 
    jziani3@gatech.edu.
}}


\begin{abstract}
\vspace*{0.5em}
We prove that the last iterate of Optimistic Follow-The-Regularized-Leader
(OFTRL) converges at a 
\textit{linear rate}
to a Nash equilibrium in multiplayer harmonic games.
Our result holds 
under a mild game-regularity condition and for a large family of regularizers,
including entropy, Tsallis, and log-barrier.
Our analysis is based on a dual perspective
of OFTRL: we prove that the interaction between
optimism and the global structure of harmonic
games leads to the dissipation of an 
energy function
along the dual iterates,
for which we establish tight quantitative bounds.
Our result also holds for the
Extragradient FTRL variant.
\end{abstract}

\newpage

\section{Introduction}
\label{sec:intro}

The last-iterate behavior of online learning algorithms 
in strategic multiagent settings has attracted 
growing attention in the study of learning in games. 
Unlike regret or time-average results,
a last-iterate guarantee concerns the
actual strategies played by agents over time,
and the distinction is often consequential:
while standard no-regret algorithms
that are run for
$T$ iterations may enjoy quantitative convergence
guarantees for \emph{time-averaged} play,
the \emph{$T$-th iterate itself} may fail to
stabilize and can even be
repelled from equilibrium and spiral out towards 
the boundary of the simplex (see, e.g., \cite{bailey2018multiplicative,
daskalakis-et-al18training-gans,cheung2019vortices}).

Partly motivated by this gap, the past decade has
seen substantial work on the design and analysis of 
\textit{optimistic}  online learning 
algorithms~\citep{rakhlin2013optimization}.
Broadly speaking, optimistic methods use recent 
feedback to predict future losses.
Within this line of work, 
\textit{Optimistic Follow-the-Regularized-Leader} (OFTRL) 
is a prominent family of methods which augments the 
classical FTRL framework with an optimistic prediction 
and naturally accommodates constrained, non-Euclidean geometries
~\citep{shalev2012online}. With entropic regularization,
OFTRL recovers the popular
\textit{Optimistic Multiplicative Weights Update} (OMWU)
algorithm.
In the narrow, yet canonical setting
of two-player zero-sum games, the last-iterate behavior
of OMWU is now comparatively well understood:
asymptotic convergence in the bimatrix
case was shown by~\cite{DaskalakisP19},
subsequently strengthened to a \textit{linear}
rate by~\cite{wei2021linear} and
extended to the extensive-form setting by
\cite{lee2021last}.
More recent work has also given
a finer geometric characterization of the
last-iterate rates in the same bimatrix setting~\citep{lazarsfeld2026}.

Despite this detailed picture in two-player zero-sum games,
the last-iterate behavior of OFTRL beyond zero-sum settings,
and beyond just OMWU, remains much less understood.
This is in contrast to 
the wealth of
regret guarantees for OMWU in
two-player zero-sum \citep{tsuchiya26tight-2pszs} and multiplayer general-sum games
\citep{syrgkanis2015fast,daskalakis2021near,anagnostides2022near,
soleymani2025faster}.
This gap raises a broader question: 
which structural properties of general multiplayer games 
enable fast \textit{last-iterate} convergence of
FTRL-type dynamics to equilibrium?

Toward this direction,
\cite{legacci2024no} recently showed
that some
of the dynamical properties
of OFTRL in two-player zero-sum games extend to the broader class of \textit{multiplayer harmonic games}.
These games capture strategic interactions
between agents with conflicting interests,
and they form one of the fundamental components in the
decomposition of normal-form games into their
potential and harmonic parts
\citep{candogan2011flows, abdou2022decomposition}.
While two-player zero-sum games with
an interior equilibrium form a subclass,
harmonic games need not be constant-sum,
may admit non-convex equilibrium sets, and can induce 
nonlinear, non-monotone game operators.
Nevertheless,
\cite{legacci2024no} proved
in this broader setting
that the last iterate of both OFTRL and the related
\textit{Extragradient FTRL} (EGFTRL) variant
\textit{converges asymptotically}
to the game's Nash equilibrium set.
However, establishing \textit{convergence rates} 
has remained open. We thus ask:
\vspace*{-0.25em}
\begin{center}
    \it
    How fast do optimistic FTRL-type
    dynamics converge in harmonic games,\\
    and what structural mechanisms govern their convergence?
\end{center}

\subsection{Our Contributions and Techniques}

The main contributions of our paper
give quantitative answers to the above question.

\paragraph{Last-iterate convergence rate for
OFTRL, EGFTRL.}
We prove that both
OFTRL and EGFTRL
\textit{converge at a linear last-iterate rate to a 
Nash equilibrium in multiplayer harmonic games}
(Theorem~\ref{thm:main-last-iterate}).
Specifically, when instantiated with
a suitable regularizer $Q$,
and constant stepsize $\eta$,
the iterates $\{w_t\}$ of the algorithms
converge to an equilibrium $w^\star$
and satisfy
\begin{equation*}
    \textbf{Theorem~\ref{thm:main-last-iterate} (Informal):}
    \;\;
    \D_{Q}(w^\star, w_{t+1})
    \le
    2 \kappa \D_{Q}(w^\star, w_0)
    \cdot
    \exp(- \gamma \eta^2 \cdot t)
    \;\;\text{for all $t \ge 0$} \;.
\end{equation*}
Here, $\D_Q$ is the Bregman divergence
induced by $Q$, and $\gamma > 0$ and $\kappa >0$
are instance-dependent constants.
Note that convergence rates
in $\D_Q$ also imply rates in the $\Gap$ function
(Remark~\ref{rem:gap-vs-bregman}).
This result establishes the first global,
quantitative last-iterate guarantees
for FTRL variants in harmonic games.
Moreover, in contrast to prior work specific to two-player
zero-sum games~\citep{DaskalakisP19, wei2021linear, lazarsfeld2026},
we do not assume existence of a \textit{unique} Nash equilibrium, and we
additionally give an exact characterization of
the equilibrium selected by the dynamics as a function
of the initialization,
strengthening the asymptotic
convergence result of~\cite{legacci2024no}.
Our result holds for a broad class of Legendre
regularizers (including negative Shannon entropy,
Tsallis entropies of order $q \in (0, 1)$,
and the log-barrier function), and under a mild
non-degeneracy condition on the game.
This condition always holds for the linear
polymatrix setting, as well as in other non-linear
settings.

\paragraph{Dual and variational perspective.}
Our analysis is based on a dual
perspective on OFTRL and EGFTRL.
We prove that the convex
conjugate of the regularizer plays the role 
of an \textit{energy function} that
dissipates along the dual iterates
(Lemma~\ref{lem:energy-dissipation-main}).
This energy dissipation corresponds to a contraction of the
Bregman divergence from equilibrium along the primal
iterates (Proposition~\ref{prop:bregman-shifted-energy}).
Dissipation
stems from the action of optimism
on the global structure of harmonic games.
Structurally, the \textit{loss operators} of
these games (defined in~\eqref{eq:G-def})
satisfy a \textit{weighted}
Minty variational inequality
\citep{minty1967generalization}
that holds globally with equality,
a property shown by~\cite{legacci2024no}
and restated in Proposition~\ref{prop:harmonic-mixed}.
In continuous time, due to this global harmonic structure, 
energy is conserved along the FTRL flow in a scaled coordinate system.

However, in discrete time, we show that an 
idealized \textit{Implicit FTRL} (IFTRL) 
discretization of the FTRL flow \textit{dissipates} energy.
By coupling their iterates with IFTRL,
we prove that dissipation
persists under both OFTRL and EGFTRL with a sufficiently small
constant stepsize.
Our dual viewpoint and proof techniques are
inspired from the recent work of~\cite{lazarsfeld2026} who studied OMWU 
in two-player zero-sum games. However, the special
skew-symmetric structure of zero-sum game operators
is not available in harmonic games, and we develop
a different coupling analysis technique with the implicit method.
Moreover, our analysis and convergence rates
apply to \textit{multiplayer} games with 
possibly \textit{multiple} equilibria,
and they hold for both Optimistic \textit{and} Extragradient FTRL
with regularizers beyond negative Shannon entropy.

\paragraph{Quantitative separation for vanilla FTRL.}
Our techniques additionally establish
quantitative guarantees for
vanilla (non-optimistic) FTRL
in polymatrix harmonic games:
here, we prove that the FTRL iterates \textit{move away}
from equilibrium at a geometric rate
within any fixed Bregman divergence sublevel set
(Theorem~\ref{thm:ftrl-diverge-main}).
This extends and makes quantitative
the asymptotic result
of~\cite{bailey2018multiplicative}
that was proven for two-player zero-sum games.
Taken together, our results give a unified
quantitative picture of the last-iterate behavior
of FTRL-type dynamics under the global harmonic structure.
We illustrate our results further
in Figure~\ref{fig:intro}.

\captionsetup[figure]{labelfont=footnotesize}
\begin{figure}[t!]
\centering
\vspace*{-1.5em}
\includegraphics[width=0.85\textwidth]{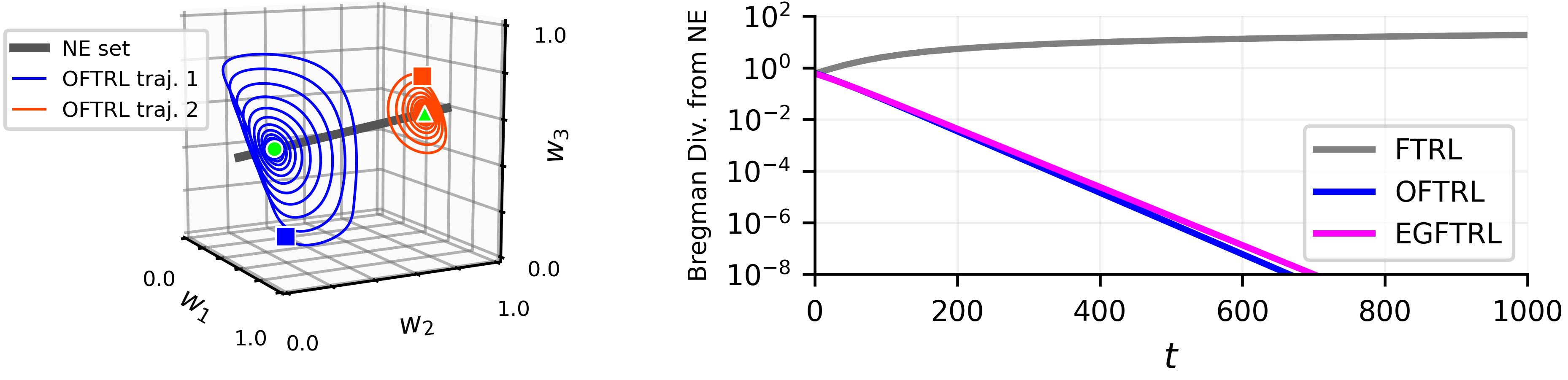}
\vspace*{-1em}
\caption{%
    \footnotesize
    Variants of FTRL (with
    entropy regularizer) on 
    the 3-player harmonic game
    in~\ref{app:harmonic-ex1}.
    Left: two trajectories of OFTRL
    converging to 
    their distinct analytic limits (in green)
    from Theorem~\ref{thm:main-last-iterate} (i).
    Right: Bregman
    divergence from NE for 
    the $t$-th iterate of
    OFTRL, EGFTRL, and FTRL from a 
    fixed initialization.
    OFTRL and EGFTRL converge linearly
    (Theorem~\ref{thm:main-last-iterate}(ii)),
    while FTRL diverges 
    at a geometric rate (Theorem~\ref{thm:ftrl-diverge-main}).
    }
\label{fig:intro}
    \vspace*{-1em}
\end{figure}

\smallskip
\subsection{Related Work}

Compared to the FTRL family, the
last-iterate behavior of
optimistic and extragradient
\textit{gradient-descent-ascent} methods is better understood
(see, e.g.,~\cite{mokhtari-et-al20unified,
golowich2020tight,cai2022finite,gorbunov-et-al22mvi}).
In constrained settings, however, these Euclidean schemes are generally distinct from FTRL dynamics (see, e.g., \citet{FGKLLZ24}).
For FTRL and their closely related variants, the iterate-based guarantees most relevant to our setting are either asymptotic
\citep{mertikopoulos-et-al18optim-MDextra-mile, hsieh2021adaptive},
concern only the \textit{best iterate}
\citep{anagnostides2022near},
or provide \textit{local} convergence rates
\citep{lei-et-al21,azizian2024rate}.
We discuss several of these lines of 
related work in more detail below.

\paragraph{Optimistic and extragradient methods.}
Optimistic and extragradient
methods\footnote{
Throughout, we colloquially say
\textit{optimism} to
describe both types of variants.
} have a long history in optimization \citep{korpelevich1976extragradient,popov1980modification,nemirovski2004prox,nesterov2007dual}. 
Our proof strategy of coupling OFTRL and Extragradient FTRL with IFTRL 
is reminiscent of the unified analysis proposed in \cite{mokhtari-et-al20unified}. 
Showing that both OGDA and the extragradient method 
can be interpreted as approximate variants of the proximal point method, 
\citet{mokhtari-et-al20unified} prove linear last-iterate convergence rates for both 
optimistic and extragradient methods. In contrast to their \textit{unconstrained} 
convex-concave min-max problem, our setting is \textit{constrained}. The 
interpretation of the extragradient method as an approximation of the Proximal Point 
method was first discussed and developed in \cite{nemirovski2004prox}. 

\paragraph{Last-iterate convergence in monotone variational inequalities and beyond.}
Last iterate convergence results for optimistic gradient and extragradient methods 
were proved in \citet{golowich2020tight,golowich-et-al20last-it-slower} and later 
tightened under more relaxed conditions 
\citep{gorbunov-et-al22last-it-mon-vi-extragd,gorbunov-et-al22mvi,cai-et-al22tight}. 
A line of works also established 
convergence guarantees \textit{beyond} monotone variational inequalities (see, e.g., 
\cite{gorbunov-et-al23,pethick-et-al25} and the references therein).

\paragraph{Other works on optimistic online learning methods.}
Optimistic variants of online learning algorithms
have also been studied for their connections
to accelerated methods in optimization
\citep{WA07,wang2024no}.
Recent work has also 
analyzed the optimistic FTRL
methods in symmetric cone games~\citep{barakat-et-al26tmlr-scg}.

\paragraph{Dual and energy-based perspectives
on learning in games.}
There has been a growing body of work 
in the past decade that studies learning dynamics
in games using dual and physics-inspired perspectives,
in particular for FTRL-type dynamics;
see, e.g.,~\cite{mertikopoulos2018cycles},~\cite{bailey2018multiplicative},
~\cite{bailey2019fast},~\cite{bailey2019multi},
~\cite{wibisono2022alternating}, 
~\cite{azizian2024rate},~\cite{lazarsfeld2025optimism,lazarsfeld2025fp},~\cite{katona2024symplectic},~\cite{BAILEY2026676},
and the references therein. 

\paragraph{Learning in harmonic games.}
We refer the reader to \cite{legacci2024geometric}
for a discussion on the dynamics of harmonic games within a dynamics-driven 
decomposition of games, where it is further shown that the continuous-time 
exponential weight dynamics are Poincaré recurrent in harmonic games. 
Besides~\cite{legacci2024no},
several recent works have developed convergence results in harmonic games. 
Under a weighted Minty condition extending the 
standard Minty condition in variational inequalities, 
\cite{zhang-et-al25scale} provide \textit{best-iterate} 
guarantees for parameter-free variants of predictive regret matching. 
In particular, this result applies to harmonic games
which satisfy the aforementioned weighted Minty condition. 
Taking a computational perspective, \cite{anagnostides-et-al25polynomial-vi-minty} 
provide a polynomial-time algorithm for 
computing approximate Nash equilibria in harmonic games.

\paragraph{Last-iterate convergence 
results for FTRL-type dynamics.}
There has been a series 
of last-iterate results obtained
for optimistic and extragradient FTRL
variants in several continuous settings,
but only with asymptotic guarantees;
see, e.g.,
\cite{mertikopoulos-et-al18optim-MDextra-mile},
\cite{lei-et-al21}, \cite{hsieh2021adaptive},
\cite{fasoulakis-et-al22flbr-mwu}.
Recent works of~\cite{FGKLLZ24,cai-et-al26separation} 
have also established a quantitative
separation in the last-iterate, best-iterate,
and random-iterate guarantees
of OFTRL dynamics in two-player zero-sum games.
Recently, \cite{fasoulakis-et-al2026flbr-mwu-last-it-rate}
also obtained a linear last-iterate convergence
rate in two-player zero-sum games 
for a certain extragradient variant of MWU.
\cite{wang-et-al26last-it-sym-gen-sum-2x2}
proved last-iterate guarantees
for the~\textit{vanilla} variant of MWU
in two-player, two-action \textit{symmetric games},
which comes in contrast to
the divergence result 
for vanilla FTRL that we prove for harmonic games.

\paragraph{Further results on last-iterate convergence in min-max optimization.} 
Recently, beyond existing results in two-player zero-sum games,
\cite{orabona26last-it-cv-omwu} proved asymptotic last-iterate convergence of OMWU 
with a small enough constant stepsize in convex-concave saddle point problems. 
We refer the reader to Section 2 in \cite{orabona26last-it-cv-omwu} for a more 
detailed discussion of related work on existing last-iterate convergence results.
Beyond static two-player zero-sum games,~\cite{wu-et-al25sicon-lastit-lqg}
consider finite-horizon zero-sum linear quadratic dynamical games 
and establish a linear last-iterate convergence rate in terms of the 
objective function-value gap for a \textit{nested} natural policy gradient algorithm. 
Their proof technique relies on using a gradient domination inequality
proved for the primal function. We note that their algorithm 
is nested and the two players need to take turns whereas 
we consider a multiplayer setting where players make decisions simultaneously.


\section{Preliminaries and Setting}
\label{sec:prelims}

\paragraph{Notation.}
For any integer $k \ge 1$, let $[k] = \{1, \dots, k\}$.
Let $\Delta_k \subset \R^k$ denote the probability simplex, and let 
$\relint(\Delta_k) = \{q \in \Delta_{k} : q(j) > 0, \forall j \in[k] \}$
denote its relative interior.
We say that $q$ is \textit{interior} when $q \in \relint(\Delta_k)$.
Let $\1_k \in \R^k$ and $I_k \in \R^{k \times k}$
denote the $k$-dimensional all-ones vector
and identity matrix, respectively.
Let $\langle \cdot, \cdot \rangle$ be the standard
$\ell_2$ inner product and $\|v\|_2 = \sqrt{\langle v, v\rangle}$. 
If $M$ is positive semi-definite, we write $M \succeq 0$.
For a linear subspace $\calU \subseteq \R^k$, 
let $\Pi_{\calU} : \R^k \to \calU$ denote the orthogonal 
projection onto $\calU$.
For $g: \calU \to \calU$ and $u \in \calU$,
let $J_{g}(u): \calU \to \calU$ denote the Jacobian
of $g$ at $u$ when it exists.
For a convex set $\calC \subseteq \R^k$ and a  differentiable convex
$f: \calC \to\R$,
the \textit{Bregman divergence} 
$\D_f : \calC \times \relint(\calC) \to \R$ is given by
$\D_f(y', y) = f(y') - f(y) - \langle \nabla f(y), y'-y \rangle$.

\subsection{Multiplayer Normal-Form Games}

Finite normal-form games consist of $N \ge 2$ players.
Each player $i \in [N]$
has a set $\calA_i := \{\alpha_j\}_{j \in [d_i]}$ of $d_i$ actions,
and under a joint action profile  
$a = (a_1, \dots, a_N) \in \calA := \prod_{i \in [N]} \calA_i$,
player $i$ incurs a loss $\ell_i(a)$, 
where $\ell_i: \mathcal{A} \to \mathbb{R}$ is the player's loss function.  
Each player~$i$ can choose
a mixed strategy $w_i \in \calW_i := \Delta_{d_i}$
over $\calA_i$.
Using the joint space 
$\calW := \prod_{i \in [N]} \calW_i$,
we overload notation 
and let $\ell_i: \calW \to \R$ denote
player $i$'s expected loss defined as $\ell_i(w) = \E_{a \sim w} [\ell_i(a)]$ 
for any joint strategy profile  
$w = (w_1, \dots, w_N) \in \calW\,.$ 
We set $D := d_1 + \dots + d_N$
and use the standard 
notation $\calA_{-i} := \prod_{j \neq i \in [N]} \calA_{j}$
and $\calW_{-i} := \prod_{j \neq i \in [N]} \Delta_{d_j}$.

\paragraph{Nash equilibrium.} 
Our primary solution concept will be the celebrated Nash equilibrium:
For $\eps \ge 0$, $w^\star \in \calW$ is an $\eps$-approximate Nash equilibrium ($\eps$-NE) if $\max_{i \in [N]}\,
\max_{w_i \in \calW_{i}} 
\{\ell_i(w^\star) - \ell_i(w_i, w_{-i}^{\star})\} \le \eps$.
Due to~\cite{nash1951non}, every finite normal-form game has at
least one ($\eps=0$) NE.

\subsection{Harmonic Games}
\label{sec:prelims:harmonic}

We focus on the class of
multiplayer \textit{harmonic games} defined as follows.

\begin{restatable}
    [Harmonic game]
    {define}{defharmonic}
	\label{def:harmonic-finite}
	A finite normal-form game is \textit{harmonic} if 
	for each player $i \in [N]$, there exists a positive measure 
	$\mu_i = (\mu_{i}(1), \dots, \mu_{i}(d_i)) \in (0, \infty)^{d_i}$ such that
	\begin{equation}
		\textstyle
		\sum_{i \in [N]} \,
		\sum_{j \in [d_i]} \, 
		\mu_{i}(j) \cdot \big( 
			\ell_i(a) - \ell_i(\alpha_{j}, a_{-i})
		\big) = 0 
		\;\;\text{for all $a = (a_i, a_{-i}) \in \calA$} \;.
		\label{eq:harmonic-finite}
	\end{equation}
\end{restatable}

Definition~\ref{def:harmonic-finite} is the \textit{weighted}
definition of harmonic games (introduced by~\cite{abdou2022decomposition}
and considered in~\cite{legacci2024no}),
extending the unweighted definition of~\cite{candogan2011flows}.
Intuitively,~\eqref{eq:harmonic-finite} imposes a global balance condition: at every action profile, the ${\mu_i}$-weighted sum of the loss differences induced by all unilateral deviations is zero. 

\begin{restatable}
	[Examples]
	{rem}{remharmonicexamples}
	\label{rem:harmonic-examples}
    Harmonic games (Definition~\ref{def:harmonic-finite})
	are in general multiplayer, 
	non-constant-sum, and can have 
	non-convex sets of Nash equilibria. 
	We provide such families of examples in Section~\ref{app:harmonic:examples}.
    An important subclass
    is two-player zero-sum games with
	at least one interior NE
    (but not every zero-sum game is harmonic);
    See Proposition~\ref{prop:zsg-interior}.
\end{restatable}

\subsection{Variational Characterization of Harmonic Games} 
\label{sec:harmonic:variational}

Our analysis will rely on the variational formulation of finite games through their associated game operator~\citep{facchinei2003finite}. In this language, the harmonicity condition of Definition~\ref{def:harmonic-finite} admits an equivalent characterization that will be central to the analysis that follows. We introduce the relevant variational notation and discuss this characterization below.

\paragraph{Joint loss operator.}
For a finite normal-form game, 
we define its joint loss operator $G: \calW \to \R^D$. 
For any $w \in \calW$,
the $i$-th block-coordinate $G_i(w)$ of $G(w)$ 
is given by  
$g_i: \calW_{-i} \to \R^{d_i}$,
where $g_i(w_{-i}) = \big(
	\E_{a_{-i} \sim w_{-i}}[\ell_i(\alpha_j, a_{-i})]
\big)_{j \in [d_i]} \in \R^{d_i}$.
In particular, as $\ell_i(w) = \langle w_i, g_i(w_{-i}) \rangle$, 
observe that each $g_i(w_{-i}) = \nabla_{w_i} \ell_i(w)$
is exactly the gradient of the $i$-th player's
loss with respect to their own action. 
The map $G$ is then the concatenation of these
individual loss gradients,~i.e.
\begin{equation}
	G(w) := (
		\nabla_{w_1} \ell_1(w),
		\dots,
		\nabla_{w_N} \ell_N(w)
	) \in \R^D
	\;\;\text{for all $w \in \calW$}\;.
	\label{eq:G-def}
\end{equation}
We associate each game 
with its operator $G$ and make standard regularity assumptions.
For all $i \in [N]$, let each $\calW_i$
be equipped with norm $\|\cdot \|_{i}$ and dual norm 
$\|\cdot\|_{i, *}$.
Then we define the global product norms
$\|w\| = (\sum_{i=1}^N \|w_i\|^2_{i})^{1/2}$
for $w \in \calW$
and $\|z\|_{*} = (\sum_{i=1}^N \|z_i\|^2_{i, *})^{1/2}$
for $z \in \R^D$.
We then assume that 
$G$ is $L$-Lipschitz and $B$-bounded with respect to these norms.

\smallskip
\begin{restatable}[Lipschitz and bounded operator]
	{ass}{assoperator}
	\label{ass:operator}
	There exist constants $L \ge 1$ and $B \ge 1$ such that
	(i) $\|G(w') - G(w)\|_{*} \le L\|w'-w\|$
	and (ii) $\|G(w)\|_{*} \le B$
	for all $w, w' \in \calW$.
\end{restatable}

Assumption~\ref{ass:operator} is always satisfied when the losses $\ell_i$ are smooth, 
with the global constants $L$ and $B$ inherited 
from the corresponding regularity bounds on the losses.

\paragraph{Variational characterization of NE.}
The operator $G$ also provides a standard variational
characterization of Nash equilibrium: $w^\star \in \calW$ is a Nash
equilibrium (for $\eps=0$) if and only if
\begin{equation}
	\langle G(w^\star), w- w^\star \rangle \ge 0
	\;\;\text{for all $w \in \calW$} \;.
	\label{eq:nash-svi}
\end{equation}
We will thus denote by $\NE(G) \subseteq \calW$ the set of all $w^\star$ 
satisfying~\eqref{eq:nash-svi}.
Moreover, for any point $w \in \calW$,
we can measure its distance to a Nash equilibrium
via the \textit{gap function}: for any $w \in \calW$, let  
\begin{equation}
	\textstyle
	\Gap(w) 
	:= 
	\max_{\wt w \in \calW} \,
	\langle G(w), w  - \wt w \rangle \;.
	\label{eq:gap}
\end{equation}
If $\Gap(w') \le \eps$, then 
$w'$ is an $\eps$-NE.

\paragraph{Variational characterization of harmonic games.}
As shown by \cite{legacci2024no}, 
harmonic games from Definition~\ref{def:harmonic-finite} also 
admit the following variational characterization
in terms of $G$:

\begin{restatable}[\cite{legacci2024no}, Proposition A.4]
	{prop}{propharmonicmixed}
	\label{prop:harmonic-mixed}
	A game with loss operator $G: \calW \to \R^D$
	is harmonic if and only if there exist 
	(i) a matrix $M = \Diag(m_1 I_{d_1}, \dots, m_N I_{d_N})$
	where $m_i > 0$ for all $i \in [N]$
	and (ii) an interior point $w' \in \relint(\calW)$ such that:
	\begin{equation}
		\langle M G(w), w- w' \rangle = 0 
		\;\;\text{for all $w \in \calW$} \;.
		\label{eq:harmonic-wmeq}
	\end{equation}
\end{restatable}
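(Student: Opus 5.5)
The plan is to show that both directions reduce to an identity between polynomials in the mixed strategies. The polynomial is obtained by taking the expectation of the harmonic condition \eqref{eq:harmonic-finite} under product distributions.

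\textbf{Harmonic $\Rightarrow$ \eqref{eq:harmonic-wmeq}.} Given weights $\mu_i$, set $m_i := \sum_{j} \mu_i(j) > 0$ and $w'_i := \mu_i / m_i$, which lies in $\relint(\Delta_{d_i})$. Fix $w \in \calW$ and average \eqref{eq:harmonic-finite} over $a \sim w$ (a product distribution).

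The first term gives $\sum_i \sum_j \mu_i(j)\, \ell_i(w)$. Since $\ell_i(w) = \langle w_i, g_i(w_{-i})\rangle$, this equals $\sum_i m_i \langle w_i, G_i(w)\rangle$.

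The second term gives $\sum_i \sum_j \mu_i(j)\, \E_{a_{-i}\sim w_{-i}}[\ell_i(\alpha_j, a_{-i})]$. This uses that $\ell_i(\alpha_j, a_{-i})$ does not depend on $a_i$, so averaging over $a$ reduces to averaging over $a_{-i}$. The expression equals $\sum_i \langle \mu_i, g_i(w_{-i})\rangle = \sum_i m_i \langle w'_i, G_i(w)\rangle$.

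Subtracting the two terms gives exactly $\langle MG(w), w - w'\rangle = 0$ with $M = \Diag(m_i I_{d_i})$.

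\textbf{\eqref{eq:harmonic-wmeq} $\Rightarrow$ harmonic.} Conversely, given $M$ and an interior $w'$, define $\mu_i := m_i w'_i$. These weights are strictly positive because $m_i > 0$ and $w'$ is interior. Reversing the computation above, the left-hand side of \eqref{eq:harmonic-finite}, averaged over $a \sim w$, equals $\langle MG(w), w-w'\rangle$ for every $w \in \calW$.

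Define $F(a)$ to be the left-hand side of \eqref{eq:harmonic-finite} at profile $a$. Then $\E_{a\sim w}[F(a)] = \sum_a F(a) \prod_i w_i(a_i)$ vanishes for all $w \in \calW$. Evaluating at pure profiles, i.e.\ taking each $w_i$ to be a vertex $e_{a_i}$, gives $F(a) = 0$ for every $a \in \calA$. This is precisely \eqref{eq:harmonic-finite}.

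\textbf{Main obstacle.} The only delicate point is bookkeeping: checking that $\E_{a\sim w}[\ell_i(\alpha_j,a_{-i})]$ equals $g_i(w_{-i})_j$, which again holds because the deviation term is independent of $a_i$. One also needs that pure profiles are admissible elements of $\calW$ in \eqref{eq:harmonic-wmeq}. They are, since the identity is assumed for all $w \in \calW$ and not merely for interior points. Interiority of $w'$ matters only to ensure that the weights $\mu_i$ are strictly positive.
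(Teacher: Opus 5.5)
Your proof is correct. Averaging the pure-profile identity over product distributions with $m_i=\|\mu_i\|_1$ and $w'_i=\mu_i/m_i$ gives exactly $\langle MG(w),w-w'\rangle$. Conversely, setting $\mu_i=m_iw'_i$ and evaluating the vanishing expectation at vertex profiles recovers \eqref{eq:harmonic-finite}.

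The paper gives no proof of its own and simply cites \cite{legacci2024no}. Your argument is the standard multilinear-extension correspondence one would expect behind that result, and it matches the paper's remark that $m_i=\|\mu_i\|_1$.
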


Intuitively,~\eqref{eq:harmonic-wmeq}
says that, relative to a center $w'$, 
the $M$-weighted operator is
orthogonal to all displacements $w-w'$,
expressing in variational terms
the weighted cancellation of unilateral deviations
in~\eqref{eq:harmonic-finite}.
We give more background in Section~\ref{app:prelims:harmonic},
and we highlight here several additional points: 
\begin{itemize}[
	leftmargin=1.5em,
	topsep=-0.25em,
	itemsep=0.05em
]
\smallskip
\item
\textit{Equilibrium condition}:
Any $w' \in \calW$ satisfying the proposition
is an NE (see Lemma~\ref{lem:wmeq-ne}), and thus~\eqref{eq:harmonic-wmeq} is simultaneously
a structural property of $G$ as well as an equilibrium condition. 
\item
\textit{Weight matrix}:
We call $M$ from the proposition 
a \textit{weights matrix}, 
and note that  each $m_i = \|\mu_i\|_1$ 
for $\mu_i$ from Definition~\ref{def:harmonic-finite}.
For fixed $M$, we let $m_{\max} = \max_{i \in [N]} m_i$
and $m_{\min} = \min_{i \in [N]} m_i$.

\item
\textit{Operator properties:}
Operators satisfying~\eqref{eq:harmonic-wmeq}
are generally non-monotone and nonlinear, even for multiplayer normal-form games.
The case of linear $G$
(meaning $G(w) = Pw$)
corresponds
to \textit{polymatrix games}
with the harmonic property;
see Section~\ref{app:harmonic:examples}.
\end{itemize}

\smallskip
\smallskip
\paragraph{Weighted Minty equality.} 
Note that~\eqref{eq:nash-svi} is \textit{Stampacchia's variational inequality} while~\eqref{eq:harmonic-wmeq} is 
a weighted Minty variational \textit{equality} for the operator $G$ 
(see Section~\ref{app:prelims:variational} for background).
For this reason, it is useful to define the set of 
\textit{all solutions} of~\eqref{eq:harmonic-wmeq},
which we dub \textit{weighted Minty equality} points.
For a fixed weights matrix $M$, let 
\begin{equation}
	\WMEQ_{M}(G) 
	= 
	\{w' \in \calW \,:\,
	\langle MG(w), w- w' \rangle = 0
	\;\text{for all $w \in \calW$}
	\}\;.
	\label{eq:wmeq}
\end{equation}
By Proposition~\ref{prop:harmonic-mixed}, a game is harmonic
iff
$\WMEQ_M(G) \cap \relint(\calW) \neq \emptyset$
for a weights matrix~$M$.
Due to Lemma~\ref{lem:wmeq-ne},
we have $\WMEQ_M(G) \subseteq \NE(G)$,
but in general~$\NE(G) \neq \WMEQ_M(G)$. 
Moreover, we note 
the generality of this variational perspective
in terms of the underlying game.

\smallskip
\begin{restatable}
	[Generality of variational perspective]
	{rem}{remgeneralityvariational}
	\label{rem:generality-variational}
	Our main results
 	only use the characterization
 	of Proposition~\ref{prop:harmonic-mixed}
 	under the conditions on $G$ from 
 	Assumption~\ref{ass:operator}.
 	More generally, our results extend beyond
    normal-form harmonic games to the setting where the individual
    loss functions $\ell_i: \calW \to \R$ 
    are each convex in their $i$'th coordinate $w_i$,
    where $G$ is the resulting
 	pseudo-gradient map 
 	in~\eqref{eq:G-def},
    and where 
    $\WMEQ_M(G) \cap \relint(\calW) \neq \emptyset$
 	for a weight matrix $M$.
 	By Proposition~\ref{prop:harmonic-mixed},
 	normal-form games
 	satisfying Definition~\ref{def:harmonic-finite} are one such instantiation.
\end{restatable}

\section{A Global View of 
Optimistic and Extragradient FTRL}
\label{sec:algorithms}

We now introduce the global dynamics of the family of
OFTRL and EGFTRL algorithms
where each player $i \in [N]$
uses a strictly convex regularizer $Q_i : \calW_i \to \R$
and a fixed stepsize $\eta_i > 0$.
We can then define the induced
\textit{global regularizer} $Q: \calW \to \R$
and \textit{global stepsize} $\eta > 0$,
which together 
reduce the simultaneous 
playerwise updates into a single primal-dual system.
We develop this global view here and 
defer the full details on the 
playerwise updates to Section~\ref{app:ftrl-prelims:system}.

\paragraph{Global stepsize and regularizer.}  
Let $\eta := \max_{i\in [N]} \eta_i >0$,
and let  $Q: \calW \to \R$ be the separable regularizer
that absorbs the player-specific stepsizes via
\begin{equation}
	\textstyle
	Q(w)
	= 
	\sum_{i=1}^N
	\frac{\eta}{\eta_i} Q_i(w_i)
	\;\;\text{for all $w = (w_1, \dots, w_N) \in \calW$} \;.
	\label{eq:global-reg}
\end{equation}

\paragraph{Global primal and dual updates.}
Using $\eta$ and $Q$ from~\eqref{eq:global-reg},
the global primal update for FTRL-type algorithms 
admits a similar structure to the standard playerwise definitions. 
Formally, fix initial $w_0 \in \relint(\calW)$.
Then for $t \ge 1$, let $w_{t} := (w_{t, i})_{i\in [N]} \in \calW$
be the \textit{global primal iterate} given by
\begin{equation}	
	\textstyle
	w_{t} 
	= 
	\argmin_{w \in \calW} \;
	\{ 
		- \langle w, z_{t} \rangle + Q(w) 
	\} 
	\;\;\;
	\text{for all $t \ge 1$.}
	\label{eq:ftrl-primal}
\end{equation}
Here, $z_{t} = (z_{t, i})_{i\in[N]} \in \R^D$ 
is a variant-specific \textit{global dual iterate}
that accumulates operator vectors scaled by~$\eta$.
For example, the vanilla FTRL variant sets 
$z_t := z_0 - \eta \sum_{k=0}^{t-1} G(w_k)$ for
all $t \ge 1$ and initial $z_0 \in \R^D$.
More generally, for initial $z_0 \in \R^D$ and all $t \ge 0$,
the dual sequences $\{z_t\}$ for
the vanilla (FTRL), implicit (IFTRL),
optimistic (OFTRL), and extragradient (EGFTRL)
variants can each be written recursively
via the update rules:
\begin{align}
	z_{t+1}
	&= z_t - \eta \, G(w_t) 
	\label{eq:ftrl}
	\tag{FTRL}\\
	z_{t+1}
	&= z_t - \eta \, G(w_{t+1})
	\label{eq:iftrl}
	\tag{IFTRL} \\
	z_{t+1} 
	&= z_t - 2\eta \, G(w_t) + \eta \, G(w_{t-1})  
	\label{eq:oftrl}
    \tag{OFTRL} 
    \\
    z_{t+1/2}
    &= 
    z_{t} - \eta \, G(w_t) 
    \,,\quad
    z_{t+1} 
    = z_t - \eta \, G(w_{t+1/2}) \;.
	\label{eq:egftrl}
    \tag{EGFTRL} 
\end{align}
Here, we use for~\eqref{eq:oftrl} 
the notational convention $w_{-1} := w_0 \in \relint(\calW)$,
and for~\eqref{eq:egftrl} the convention that
$w_{t+1/2}$ is the minimizer of the
objective in~\eqref{eq:ftrl-primal} with respect to $z_{t+1/2}$.

\smallskip
\begin{restatable}
    [Intuition for optimism]
    {rem}{remoptimismintuition}
    Intuitively, both OFTRL and EGFTRL
    augment the vanilla FTRL update
    in an attempt to \textit{predict} the next vector $G(w_{t+1})$. 
    Thus, both methods can be viewed as
    approximations of the idealized (unimplementable)
    IFTRL update (see also Section~\ref{sec:proof-overview:energy}).
\end{restatable}

\subsection{Assumptions on regularizer}
\label{sec:algorithms:assumptions}

We study instantiations of
FTRL variants with regularizers $Q$ satisfying
the following property:

\begin{restatable}[Relatively Legendre]
	{ass}{asslegendre}
	\label{ass:legendre}
	$Q$ is 
	(i) strictly convex and twice continuously differentiable on
	$\relint(\calW)$, and (ii) $\|\nabla Q(w) \| \to + \infty$
	as $w$ approaches the relative boundary of
	$\calW$.
\end{restatable}

Let $H := Q^* : \R^D \to \R$ denote the
\textit{convex conjugate} of $Q$,
which we refer to as an \textit{energy function}
following prior work~\citep{wibisono2022alternating}.
Under Assumption~\ref{ass:legendre},
$H$ is continuously differentiable and 
admits the following key primal-dual relationship
for FTRL variants
(see Section~\ref{app:ftrl-prelims:primal-dual}):

\smallskip
\begin{restatable}
	[Primal-dual invariant]
	{prop}{propftrlprimaldual}
	\label{prop:ftrl-primal-dual-invariant}
	Let $\{w_t\}$ be the iterates
	from~\eqref{eq:ftrl-primal} 
	using the dual sequence $\{z_t\}$
	of~\eqref{eq:ftrl},~\eqref{eq:iftrl},
	\eqref{eq:oftrl}, or~\eqref{eq:egftrl}.
	Then $w_{t} = \nabla H(z_t) \in \relint(\calW)$
	for all $t \ge 1$.
\end{restatable}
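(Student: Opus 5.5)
The plan is to reduce the claim to the standard characterization of the subdifferential of a convex conjugate, using the fact that each $w_t$ is defined as the maximizer of $\langle w, z_t\rangle - Q(w)$ over $\calW$. First extend $Q$ to all of $\R^D$ by setting $Q(w) = +\infty$ for $w \notin \calW$. This extension is proper, lower semicontinuous and strictly convex on its domain. We need lower semicontinuity at the relative boundary, which holds because $Q$ is continuous on the compact set $\calW$ (or we may take its closure, which does not change the minimizers below). Then $H(z) = Q^*(z) = \max_{w \in \calW}\{\langle w, z\rangle - Q(w)\}$ is finite for every $z \in \R^D$, since $\calW$ is compact. By Danskin's theorem, or equivalently Fenchel's identity $\partial H(z) = \argmax_{w}\{\langle w,z\rangle - Q(w)\}$, the subdifferential of $H$ at $z$ is exactly the set of maximizers.

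Second, I will show that for every $z$ this maximizer is unique and lies in $\relint(\calW)$. Uniqueness follows from strict convexity of $Q$ (Assumption~\ref{ass:legendre}(i)). Interiority is the main step, and I would argue by contradiction. Suppose the maximizer $\bar w$ lies on the relative boundary. Pick an interior point $w^\circ$ and consider $\phi(s) = \langle \bar w + s(w^\circ - \bar w), z\rangle - Q(\bar w + s(w^\circ-\bar w))$ for $s \in (0,1]$. Its derivative is $\langle w^\circ - \bar w, z - \nabla Q(\bar w + s(w^\circ-\bar w))\rangle$. Using Assumption~\ref{ass:legendre}(ii), I will show this derivative tends to $+\infty$ as $s \downarrow 0$, which contradicts maximality at $s=0$.

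I expect this divergence to be the delicate point, because (ii) only states that the norm of $\nabla Q$ blows up, not that it does so in the direction $w^\circ - \bar w$. To handle it, I would use convexity: the map $s \mapsto \langle \nabla Q(w_s), w^\circ - \bar w\rangle$ is nondecreasing, so its limit as $s \downarrow 0$ exists in $[-\infty,\infty)$. If this limit were finite, $Q$ would have a finite one-sided directional derivative at $\bar w$, and one would then need to argue that the gradient norm stays bounded near $\bar w$, contradicting (ii). The cleanest approach is to interpret Assumption~\ref{ass:legendre} as essential smoothness in the sense of Rockafellar. This is standard for Legendre-type functions restricted to the affine hull, where $\|\nabla Q\|$ is measured after projecting onto the tangent space, since components along $\1$ are irrelevant on the simplex. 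Rockafellar's Theorem 26.1 then gives that $\partial Q$ is empty on the boundary, so the optimality condition $z \in \partial Q(\bar w) + N_{\calW}(\bar w)$ cannot hold there. Hence $\bar w \in \relint(\calW)$.

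Third, once the maximizer is a singleton for every $z$, $H$ is differentiable everywhere with $\nabla H(z)$ equal to the unique maximizer, and it is continuously differentiable because a convex differentiable function on $\R^D$ is $C^1$. Finally, for each variant and each $t\ge 1$, the iterate $z_t$ is some vector in $\R^D$: it is defined recursively from $z_0$ and the operator values, which are finite by Assumption~\ref{ass:operator}(ii), including the half-step $z_{t+1/2}$ for~\eqref{eq:egftrl}. Also, $w_t$ is by definition~\eqref{eq:ftrl-primal} the argmin of $-\langle w, z_t\rangle + Q(w)$, which is the same maximizer. Hence $w_t = \nabla H(z_t) \in \relint(\calW)$, independently of the variant. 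For~\eqref{eq:iftrl}, the implicit definition presupposes a solution pair $(w_{t+1}, z_{t+1})$, and the identity applies to it verbatim.
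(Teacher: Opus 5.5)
Your proof is correct and follows the paper's route: the maximizer of $\langle w, z_t\rangle - Q(w)$ over $\calW$ exists, is unique by strict convexity, lies in $\relint(\calW)$ by relative steepness, and equals $\nabla H(z_t)$. There are two minor differences. You justify interiority through Rockafellar's essential-smoothness theorem applied on the affine hull, a step the paper only asserts. You also obtain differentiability of $H$ from the fact that $\partial H(z)$ is a singleton, whereas the paper imports it from Assumption~\ref{ass:lhs}, so your version does not depend on that assumption.
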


\begin{restatable}[OMWU]
	{rem}{remlegendreexamples}
	\label{rem:legendre-examples}
    The (componentwise) negative Shannon entropy
    is the canonical $Q$ satisfying
	Assumption~\ref{ass:legendre}.
	Here, the energy $H = Q^*$ is the \textit{log-sum-exp} 
	function, and $\nabla H (z): \R^D \to \relint(\calW)$
	is the \textit{softmax map}.
	This yields the MWU algorithm 
	and its optimistic variants.
    See Section~\ref{app:ftrl-prelims:examples}.
\end{restatable}

Our main convergence results 
also require two additional assumptions
on $Q$ and $H$.

\begin{restatable}[1-strong-convexity]
	{ass}{assregsc}
	\label{ass:reg-sc}
	 $\forall w \in \relint(\calW), w' \in \calW$, 
	$\D_Q(w',w) \ge \frac{1}{2} \|w'-w\|^2$.
\end{restatable} 
\begin{restatable}
	[Local Hessian stability]
	{ass}{asslhs} 
	\label{ass:lhs}
	$H$ is twice continuously differentiable, 
	and $\exists\; \omega \ge 1$ s.t.
	\begin{equation}
		\forall z, z' \in \R^D:\;\;
		\exp(-\omega  \|z-z'\|_{*})\cdot
		\nabla^2 H(z') 
		\preceq
		 \nabla^2 H(z)
		\preceq
		 \exp(\omega \|z-z'\|_{*}) \cdot
		\nabla^2 H(z') 
		\;.
		\label{eq:lhs}
		\tag{LHS}
	\end{equation}
\end{restatable}

Assumption~\ref{ass:reg-sc} is standard
and readily generalizes
to $K$-strong-convexity for $K > 0$.
Assumption~\ref{ass:lhs} specifies 
a local notion of third-order 
smoothness related to
\textit{generalized self-concordance}~\citep{sun2019generalized},
which intuitively controls the variation of the local dual
geometry induced by $H$.
Due to the separability of $Q$,
Assumptions~\ref{ass:legendre},
\ref{ass:reg-sc}, \ref{ass:lhs}
all hold whenever each $Q_i$ and $H_i := Q^*_i$
satisfy the componentwise properties,
including for the following canonical choices:

\begin{restatable}[Examples]
    {rem}{remlegendrereg}
    \label{rem:legendre-reg-examples}
    Regularizers satisfying
    Assumptions~\ref{ass:legendre},~\ref{ass:reg-sc},
    and~\ref{ass:lhs}
    include negative Shannon entropy,
    Tsallis entropy of order $q \in (0,1)$, 
    and the log-barrier regularizer;
    see Section~\ref{app:ftrl-prelims:examples}.
\end{restatable}

\subsection{Geometry of Primal and Dual Iterates}
\label{sec:ftrl:geometry}

Our analysis of FTRL variants also 
relies on several geometric properties
that we introduce here:

\paragraph{Dual update space and reachable primal set.}
Each dual iterate of the FTRL variants satisfies
$z_t - z_0 \in \calZ := \Span(G(\calW))$,
and thus we call the linear subspace $\calZ \subseteq \R^D$
the \textit{dual update space}.
Similarly, for initializations
with $w_0 = \nabla H(z_0)$, 
each $w_t \in  \calR(w_0) := \{\nabla H(z_0 + z) : z \in \calZ\}$,
and thus we refer to 
$\calR(w_0) \subseteq \relint(\calW)$ as the \textit{reachable primal set}.

\paragraph{Effective dual space and stationarity map.}
Importantly, not every dual direction in $\calZ$ affects updates
to the primal point under $\nabla H$. 
To make this precise, define $\calS \subseteq \R^D$ 
and $\calS^\bot \subseteq \R^D$ by
\begin{equation}
    \textstyle
    \calS := 
    \prod_{i=1}^N \Span(\1_{d_i})
 	\;\;\text{and}\;\;
	\calS^\bot := 
	\{ v \in \R^D : \langle v, s \rangle = 0\,, \forall s \in \calS\}\,.
	\label{eq:calS-def}
\end{equation}
The linear subspace $\calS$ and its orthogonal complement
$\calS^\bot$ are the \textit{normal space} and \textit{tangent space}
of the affine hull of $\calW$, respectively (see Section~\ref{app:prelims:variational}).
Under Assumption~\ref{ass:legendre},
a key invariant is that 
$\nabla H(z_0+z) = \nabla H(z_0 + \Pi_{\calS^\bot}z)$
for $z \in \calZ$
(Proposition~\ref{prop:invariance-corr}). 
Thus, we let $\calU := \Pi_{\calS^\bot}\calZ$
denote the \textit{effective dual space},
and for $z_0 \in \R^D$ and $w_0 = \nabla H(z_0)$, we define the 
\textit{effective stationarity map} 
\begin{equation}
    \pi_0: \calU \to \calU \;,\;\;\;
	\pi_0(u) = \Pi_{\calS^\bot} G(\nabla H(z_0 + u))
	\;\;\text{for all $u \in \calU$} \;.
	\label{eq:pi0-def}
\end{equation} 
Importantly, $\pi_0(u) = 0$ 
if and only if $\nabla H(z_0 + u) \in \NE(G)$
(Proposition~\ref{prop:pi0-zeros-ne}), and thus $\|\pi_0(u)\|$ 
acts as a first-order measure of distance to equilibrium 
in effective dual coordinates. 
We will also assume the following
regularity condition on the zeros of $\pi_0$:

\begin{restatable}[Regularity]
{ass}{assrestrictedregularity}
	\label{ass:restricted-regularity}
	For initial $z_0 \in \R^D$
	and $w_0 = \nabla H(z_0)$,
    assume
	$\calR(w_0) \not \subseteq \NE(G)$,
    and that 
	every zero of $\pi_0$ is \textit{regular}, i.e.,   
	if $\pi_0(u) = 0$ for $u \in \calU$, then 
    the Jacobian $J_{\pi_0}(u)$ is nonsingular.     
\end{restatable}

\begin{restatable}
	{rem}{remrestrictedregularity}
	\label{rem:restricted-regularity}
    Assumption~\ref{ass:restricted-regularity}
    is a first-order
    nondegeneracy 
    condition
    whose main consequence is a local linear
    error bound
    around 
    each reachable 
    Nash equilibrium.  
    The residual
    $\|\pi_0(u)\|$ controls the distance 
    to that equilibrium
    in the effective dual space.
    For linear $G$
	(polymatrix games)
	the assumption always holds
    (Proposition~\ref{prop:linear-regular}),
    and it also holds 
    for nonlinear examples
    (see Section~\ref{app:harmonic-ex2}).
\end{restatable}

\section{Main Result: 
Linear Convergence Rates in Harmonic Games}
\label{sec:main-results}

Our main theorem proves that, for harmonic
games, OFTRL and EGFTRL converge
at a linear last-iterate rate to an NE
$w^\star \in \WMEQ_M(G)$ in Bregman divergence $\D_Q(w^\star, \cdot)$.
To state the result, we define 
the \textit{$M$-weighted regularizer}
given by
$Q_M(w)
= 
\sum_{i=1}^N m_i \cdot (\frac{\eta}{\eta_i}Q_i(w_i))$.

\begin{tcolorbox}[
	colback=gray!06,
	boxsep=-0.5mm,
	arc=0.01em,
	colframe=white,
	boxrule=-0.1em,
	left=0.5em,
	right=0.5em,
]
\begin{restatable}
	[Linear last-iterate convergence rate]
	{thm}{thmmainlast}
	\label{thm:main-last-iterate}
	\, Let $G: \calW \to \R^D$ be a loss operator
	such that $\WMEQ_M(G) \cap\, \relint(\calW)\neq \emptyset$
	for a weight matrix $M$,
	and let $\kappa_M = \frac{m_{\max}}{m_{\min}}$.
	Let Assumptions~\ref{ass:operator},~\ref{ass:legendre},
	\ref{ass:reg-sc}, \ref{ass:lhs}, and~\ref{ass:restricted-regularity} hold.
	Then for 
	\eqref{eq:oftrl}~or \eqref{eq:egftrl}\,
	with initialization $w_0 \in \relint(\calW)$
	and with stepsize $\eta \in (0, 1/(880LB\omega \kappa_M)]$,
	there exists a constant $\gamma(w_0) > 0$ such that
	the following hold:
	\begin{align*}
		(i)&\;\;
		\text{There exists a unique limit }
		w_{\infty} := \lim\nolimits_{t\to\infty} w_t 
		= 
		\argmin\nolimits_{w^\star \in \WMEQ_M(G)}\,
		\D_{Q_M}(w^\star, w_0) \;. \\[0em]
		(ii)&\;\;
		\text{For all $t \ge 0$:}\quad
		\D_{Q}(w_{\infty}, w_{t+1})
		\;\le\;
		2 \kappa_M
		 \cdot
		 \D_{Q}(w_{\infty}, w_0)
		\cdot 
		\exp(
			- \gamma(w_0) \eta^2 \cdot t
		) \;.
	\end{align*}
\end{restatable}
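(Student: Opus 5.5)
The plan is to work in $M$-weighted dual coordinates and treat a shifted energy as a Lyapunov function. Fix any $w^\star \in \WMEQ_M(G)\cap\relint(\calW)$. Since $Q_M = \sum_i m_i \frac{\eta}{\eta_i} Q_i$, its conjugate is $H_M(y) = \sum_i m_i H_i(y_i/m_i)$ (with the stepsize scaling absorbed), and the natural dual variable for $Q_M$ is $y_t = M z_t$. Up to constants, the energy
\[
E_t := \D_{Q_M}(w^\star, w_t) = H_M(M z_t) - \langle w^\star, M z_t\rangle + Q_M(w^\star)
\]
is the $M$-weighted Bregman divergence, which is the content of Proposition~\ref{prop:bregman-shifted-energy}.

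\textbf{Step 1: dissipation for IFTRL.} The first task is to show the energy strictly decreases along \eqref{eq:iftrl}. Suppose $M z_{t+1} = M z_t - \eta M G(w_{t+1})$. By convexity of $H_M$, the one-step difference satisfies
\[
E_{t+1}-E_t = -\eta\langle MG(w_{t+1}), w_{t+1}-w^\star\rangle - \D_{H_M}(Mz_t, Mz_{t+1}).
\]
The first term vanishes by \eqref{eq:harmonic-wmeq}. This leaves exact dissipation $-\D_{H_M}(Mz_t,Mz_{t+1})$, and by \eqref{eq:lhs} this quantity is of order $\eta^2\|\Pi_{\calS^\bot}G(w_{t+1})\|^2_{\nabla^2 H}$.

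\textbf{Step 2: coupling OFTRL and EGFTRL to IFTRL.} Each of these steps can be written as an implicit step plus an error $e_t = \eta(G(w_{t+1}) - \widetilde G_t)$, where $\widetilde G_t = 2G(w_t)-G(w_{t-1})$ or $G(w_{t+1/2})$. Using Lipschitzness of $G$, $B$-boundedness, and Assumption~\ref{ass:reg-sc}, the error satisfies $\|e_t\|_*\lesssim \eta^2 LB$ relative to consecutive moves. The error's contribution to $E_{t+1}-E_t$ is $\langle w_{t+1}-w^\star, Me_t\rangle$ plus a cross term. I would bound these by $\kappa_M L\eta\,(\|z_{t+1}-z_t\|^2+\|z_t-z_{t-1}\|^2)$, using \eqref{eq:lhs} to compare Hessians within an $O(\eta B)$ dual ball. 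For OFTRL this requires a potential of the form $E_t + c\,\eta^2\|G(w_t)-G(w_{t-1})\|^2$. The stepsize condition $\eta \le 1/(880LB\omega\kappa_M)$ then lets the dissipation absorb the errors, giving
\[
E_{t+1}\le E_t - c\,\eta^2\|\pi_0(u_{t+1})\|^2_{\nabla^2 H(z_{t+1})},
\]
with $u_t$ the effective dual iterate. This establishes Lemma~\ref{lem:energy-dissipation-main}.

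\textbf{Step 3: limit identification, part (i).} Monotonicity of $E_t$ for every $w^\star\in\WMEQ_M(G)\cap\relint(\calW)$ keeps the iterates in a compact sublevel set of $\calR(w_0)$. Summability of the dissipation gives $\pi_0(u_t)\to0$, so every cluster point is an NE. To identify the limit, I would use the fact that $\langle w^\star, M(z_t-z_0)\rangle$ is independent of $w^\star \in \WMEQ_M$. This holds because the difference of any two WMEQ points is $M$-orthogonal to $G(\calW)$, hence to $\calZ$. Therefore $E_t(w^\star)-E_t(w^{\star\prime})$ is constant in $t$. Regularity (Assumption~\ref{ass:restricted-regularity}) makes the zeros isolated in $\calU$, so the iterates converge to a single zero. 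Limits of WMEQ points reachable from $w_0$ are WMEQ, since the identity passes to limits. The constancy identity then forces the limit to be the $\D_{Q_M}(\cdot,w_0)$-minimizer, via the Bregman three-point identity.

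\textbf{Step 4: linear rate, part (ii).} I need an error bound
\[
\|\pi_0(u)\|^2_{\nabla^2H}\ge \gamma'\,\D_{Q_M}(w_\infty,\nabla H(z_0+u))
\]
on the sublevel set. Near $u_\infty$ it follows from nonsingularity of $J_{\pi_0}(u_\infty)$ together with the local quadratic behavior of $\D$. Away from it, it follows from compactness and the absence of other zeros in the sublevel set, which is the part I expect to be the main obstacle. With this bound, $E_{t+1}\le(1-\gamma\eta^2)E_t$. Converting back via $m_{\min}\D_Q\le \D_{Q_M}\le m_{\max}\D_Q$ gives the factor $\kappa_M$, and the extra factor $2$ comes from the OFTRL potential correction at $t=0$.
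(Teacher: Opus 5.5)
Your architecture matches the paper's: your $E_t$ is its shifted energy $F_{x^\star}(y_t)$ up to an additive constant, since $F(y)=H_M(M^{1/2}y)$ and $\D_R=\D_{Q_M}$. However, two load-bearing steps are missing.

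\textbf{Step~2.} The coupling term is $\langle w_{t+1}-w^\star, Me_t\rangle$. This is the whole error; there is no separate cross term (cf.\ Proposition~\ref{prop:meta-coupling}). It carries the factor $w_{t+1}-w^\star$, which is of order one, not small. Write $\|G\|$ for the local size of the projected operator. Cauchy--Schwarz and Lipschitzness then bound the term by roughly $\eta^2 L\,\|w_{t+1}-w^\star\|\,\|G\|$. The dissipation $\D_{H_M}(Mz_t,Mz_{t+1})$ is of order $\eta^2\|G\|^2$, so the bound has the same power of $\eta$ but with $\|w_{t+1}-w^\star\|$ in place of one factor $\|G\|$. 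No stepsize restriction lets the dissipation absorb it.

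The missing idea is to apply the Minty equality at \emph{every} point where $G$ is evaluated in the update, not only at $w_{t+1}$.
For EGFTRL, $\langle w_{t+1}-w^\star, MG(w_{t+1/2})\rangle=\langle w_{t+1}-w_{t+1/2}, MG(w_{t+1/2})\rangle$, and $z_{t+1}-z_{t+1/2}=\eta(G(w_t)-G(w_{t+1/2}))$ is $O(\eta^2)$.
For OFTRL, $\langle w_{t+1}-w^\star, M(2G(w_t)-G(w_{t-1}))\rangle=2\langle w_{t+1}-w_t, MG(w_t)\rangle-\langle w_{t+1}-w_{t-1}, MG(w_{t-1})\rangle$. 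This regroups into $\langle MG(w_t), w_{t+1}-2w_t+w_{t-1}\rangle$ plus a product of two increments. The primal second difference is then expanded through $\nabla^2 H$. Its dual counterpart $z_{t+1}-2z_t+z_{t-1}$ is a combination of $G$-differences of order $\eta^2$, and the Hessian remainder is controlled by (LHS).
The paper handles lagged increments with the dual stability bound $\|y_t-y_{t-1}\|_{y_t}\le 3\|y_{t+1}-y_t\|_{y_t}$ (Lemma~\ref{lem:oftrl-dual-stability}) rather than a modified potential. This also yields two-sided bounds on $E_t$ itself.

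\textbf{Steps~3--4.} These never establish that the limit is a WMEQ point. Your argument yields a cluster point $\bar w=\nabla H(z_0+\bar u)$ with $\pi_0(\bar u)=0$, i.e.\ a reachable Nash equilibrium. The remark that limits of reachable WMEQ points are WMEQ does not apply, because the iterates are not WMEQ points. Isolatedness of zeros, with connectedness of the limit set, only gives convergence to \emph{some} zero. Yet both the three-point identification in part~(i) and the contraction of $\D_{Q_M}(w_\infty,\cdot)$ in Step~4 require $w_\infty\in\WMEQ_M(G)$, since dissipation holds only relative to WMEQ reference points. The ``absence of other zeros'' you flag as the main obstacle is the same issue.

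Regularity closes this through a different mechanism than isolatedness (Proposition~\ref{prop:restricted-degeneracy}). Fix $w^\star\in\WMEQ_M(G)$ and differentiate $s\mapsto\langle MG(w(s)),w(s)-w^\star\rangle\equiv 0$ along $w(s)=\nabla H(z_0+\bar u+sh)$, $h\in\calU$. Because $G(\bar w)\in\calS$ and $\dot w(0)\in\calS^\bot$, this gives $\langle J_{\pi_0}(\bar u)h, M(\bar w-w^\star)\rangle=0$ for all $h\in\calU$. Nonsingularity then forces $M(\bar w-w^\star)\perp\calU$, hence $M(\bar w-w^\star)\perp\calZ$, which is exactly $\bar w\in\WMEQ_M(G)$.

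You also need that at most one WMEQ point is reachable (Proposition~\ref{prop:unique-reachable-meq}). Two reachable WMEQ points $w_i=\nabla H(z_0+u_i)$ satisfy $\langle M(w_1-w_2),u_1-u_2\rangle=0$. This contradicts strict monotonicity of $u\mapsto M\nabla H(z_0+u)$ on $\calS^\bot$ unless $u_1=u_2$.

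Together these show $\pi_0$ has exactly one zero on all of $\calU$. That single fact gives convergence of the full sequence, $w_\infty\in\WMEQ_M(G)$, and strict positivity of $\|\pi_0(u)\|/\|u-u_\infty\|$ away from $u_\infty$ in your Step~4.
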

\end{tcolorbox}

\smallskip
Theorem~\ref{thm:main-last-iterate}
also gives a linear convergence rate 
in $\Gap(w_t)$ due to the following relationships:

\begin{restatable}
	[Gap vs. Bregman divergence]
	{rem}{remgapvsbregman}
	\label{rem:gap-vs-bregman}
	\, Let $w^\star \in \WMEQ_M(G)$.
	Then for all $w \in \relint(\calW)$:
	\begin{equation}
		\textstyle
		\frac{1}{2(L \rho)^2} \cdot 
		\Gap(w)^2
		\;\le \;
		\frac{1}{2} \|w^\star - w\|^2
		\;\le\;
		\D_Q(w^\star, w) \;,
		\label{eq:gap-relations}
	\end{equation}
	where $\rho := \max_{w', w'' \in \calW} \|w' - w''\|$
	is the diameter of $\calW$. 
	Thus, convergence in $\D_Q(w^\star, \cdot)$
	implies convergence in $\Gap(\cdot)$.
	The full derivation of these relationships
	is in Proposition~\ref{prop:gap-vs-bregman}.
\end{restatable}

Before giving an overview of the proof in Section~\ref{sec:main-proof-overview},
we make several additional remarks:

\begin{restatable}[Equilibrium selection]
	{rem}{remmainthmeqselection}
	\label{eq:main-thm-eq-selection}
    \,Part (i) of Theorem~\ref{thm:main-last-iterate}
    proves that the
	limit points of OFTRL and EGFTRL are 
	in $\WMEQ_M(G) \subseteq \NE(G)$,
	along with an exact initialization-dependent characterization
	in terms of the $M$-weighted regularizer $Q_M$.
	This strengthens the result of~\cite{legacci2024no},
	who showed only asymptotic convergence to the
	larger set $\NE(G)$.
    Note that the theorem does not assume
    uniqueness of NE, in contrast to
    prior last-iterate results for OMWU
    in two-player zero-sum games
    \citep{DaskalakisP19, wei2021linear, lazarsfeld2026}. 
\end{restatable}

\begin{restatable}[Instance-dependent rate]
	{rem}{remuniversallastiterate}
	\label{rem:main:universal-rate}
	The constant $\gamma(w_0) > 0$ in part (ii) of
	Theorem~\ref{thm:main-last-iterate} depends on $G$,
	on properties of the regularizer $Q$
	and energy function $H = Q^*$,
	and on the initial $w_0 \in \relint(\calW)$.
    As in prior results
    for two-player zero-sum games
    \citep{wei2021linear,lazarsfeld2026},
    the constant
    $\gamma(w_0)$ is thus instance-dependent,
    but uniform over all steps $t \ge 1$.
\end{restatable}

\paragraph{High-level proof strategy.}
The proof has
two main components:
(i) proving \textit{dissipative} bounds
on the change in energy along the dual iterates,
and (ii), proving a structural \textit{operator domination}
inequality that relates the magnitude of dissipation back 
to primal Bregman divergence from equilibrium.
This high-level strategy is similar to \cite{lazarsfeld2026} 
for the special case of OMWU in two-player zero-sum games,
with a unique, interior NE, but our result 
is substantially more general.
We do not leverage the 
special skew-symmetric structure of zero-sum game operators,
do not assume uniqueness of NE,
and our analysis simultaneously holds for both
OFTRL \textit{and} EGFTRL for a broad class of regularizers
beyond negative Shannon entropy. 

\paragraph{Implications of analysis for vanilla FTRL.}
Our techniques for proving Theorem~\ref{thm:main-last-iterate}
simultaneously establish a geometric rate of 
\textit{divergence} away from equilibrium
under \textit{vanilla} FTRL.
Specifically, we prove the following result 
for linear operators $G$
(which capture polymatrix games):
\begin{tcolorbox}[
	colback=gray!06,
	boxsep=-0.5mm,
	arc=0.01em,
	colframe=white,
	boxrule=-0.1em,
	left=0.3em,
	right=0.3em,
]
\begin{restatable}[Divergence of FTRL]
	{thm}{thmftrldiverge}
	\label{thm:ftrl-diverge-main}
    \, Let $G: \calW \to \R^D$ be a linear loss operator 
    such that
	$\WMEQ_M(G) \cap\, \relint(\calW) \neq \emptyset$
	for a weight matrix $M$,
	and let $\kappa_M = \frac{m_{\max}}{m_{\min}}$.
	Consider~\eqref{eq:ftrl}
	with $\eta \in (0, 1/(10 B\omega \kappa_M)]$
	and 
	$w_0 \notin \NE(G)$.
    Then for $w^\star \in \NE(G) \cap \relint(\calW)$ 
    and $\alpha > \D_{Q_M}(w^\star, w_0)$, 
    letting 
    $T_{\alpha}(w^\star) = 
    \sup\{t \ge 0: \D_{Q_M}(w^\star, w_t) \le \alpha\}$,
    there exists $\beta(w_0, w^\star, \alpha) > 0$
    such that
	\begin{equation*}
	\textstyle
	\D_Q(w^\star, w_{t}) 
	\ge 
	\kappa_M^{-1}\cdot
	\D_Q(w^\star, w_0) \cdot 
	(1 + \beta(w_0, w^\star, \alpha) \cdot \eta^2)^t
	\;\;\text{for all $0 \le t \le T_{\alpha}(w^\star)$}\;.
	\end{equation*}
\end{restatable}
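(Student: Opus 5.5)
The plan is to track the $M$-weighted energy along the FTRL dual iterates and show that it strictly increases each step. Fix $w^\star \in \NE(G)\cap\relint(\calW)$. For linear $G$, interior NE satisfy $\Pi_{\calS^\bot}G(w^\star)=0$, and together with the harmonic identity~\eqref{eq:harmonic-wmeq} this should place $w^\star \in \WMEQ_M(G)$. So we may use $\langle MG(w), w-w^\star\rangle = 0$ for all $w\in\calW$.

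The potential is $\Phi_t := \D_{Q_M}(w^\star, w_t)$. Because $M$ is blockwise scalar, $\nabla Q_M = M\nabla Q$ on each block modulo $\calS$. Writing $H_M$ for the conjugate of $Q_M$, with dual variable $Mz_t$, we get
\[
\Phi_t = H_M(Mz_t) - \langle Mz_t, w^\star\rangle + Q_M(w^\star).
\]
For the FTRL update $z_{t+1}=z_t-\eta G(w_t)$, a Taylor expansion of $H_M$ gives
\[
\Phi_{t+1}-\Phi_t = -\eta\langle MG(w_t), w_t - w^\star\rangle + \D_{H_M}(Mz_{t+1}, Mz_t).
\]
The first term vanishes by~\eqref{eq:harmonic-wmeq}. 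The change in energy is therefore exactly a dual Bregman divergence, which is nonnegative. By the integral form of the remainder and~\eqref{eq:lhs} with $\eta B\omega\kappa_M\le 1/10$, this divergence is at least $c\,\eta^2\|\Pi_{\calS^\bot}G(w_t)\|^2_{\nabla^2 H(z_t)}$ up to $m_{\min}$ and $\exp(-\omega\eta B)$ factors. This is the dissipation lemma with the opposite sign: in continuous time energy is conserved, and explicit Euler injects energy.

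Next I need an operator-domination inequality
\[
\eta^2\|\pi_0(u_t)\|^2_{\nabla^2H} \ge \beta\,\eta^2\,\Phi_t
\]
that holds uniformly on the sublevel set $\{\Phi\le\alpha\}$. The argument is by compactness. On $\{w:\D_{Q_M}(w^\star,w)\le\alpha\}$, $w$ stays a uniform distance from the boundary, so $\nabla^2 H$ is uniformly bounded below on $\calS^\bot$. The ratio $\|\pi_0(u)\|^2/\D_{Q_M}(w^\star,\nabla H(z_0+u))$ is continuous and positive wherever $w\notin\NE(G)$. The main obstacle is points of the sublevel set that lie in $\NE(G)$, where the numerator vanishes.

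To handle this I would use linearity. For linear $G$, the equilibria reachable in $\calR(w_0)$ are zeros of $\pi_0$, and every zero is regular (Proposition~\ref{prop:linear-regular}), so the zeros are isolated. Together with energy monotonicity, $\Phi_t\ge\Phi_0>0$ rules out $w^\star$ itself. I would then argue that the iterates stay bounded away from any other NE in the sublevel set, since those carry strictly positive $\Phi$ and a local error bound, and that the ratio is bounded below near each of them. This step is the delicate one.

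Given $\Phi_{t+1}\ge(1+\beta\eta^2)\Phi_t$ for $t<T_\alpha$, induction yields $\Phi_t\ge\Phi_0(1+\beta\eta^2)^t$. Converting with $m_{\min}\D_Q\le\D_{Q_M}\le m_{\max}\D_Q$ gives
\[
\D_Q(w^\star,w_t)\ge \kappa_M^{-1}\D_Q(w^\star,w_0)(1+\beta\eta^2)^t .
\]
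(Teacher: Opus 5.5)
\textbf{There is a genuine gap, in the operator-domination step.} Your energy-expansion step is fine. It is the paper's Lemma~\ref{lem:ftrl-energy-expand} in unsymmetrized coordinates, since $\D_{Q_M}(w^\star,w)=\D_R(M^{1/2}w^\star,M^{1/2}w)$. The operator-domination step, however, is not just delicate: your claim that the ratio $\|\pi_0\|^2/\Phi$ is bounded below near the other equilibria is false.

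For linear $G$ you have $\NE(G)=\WMEQ_M(G)$, and $\calR(w_0)$ contains exactly one such point $\wt w$. It is unique by Proposition~\ref{prop:unique-reachable-meq}, and it exists, e.g.\ as the OFTRL limit. When equilibria are not unique (e.g.\ Example~1), typically $\wt w\neq w^\star$, and $\wt w$ lies in your sublevel set. Near $\wt w$ you have $\pi_0\to 0$ while $\Phi\to\D_{Q_M}(w^\star,\wt w)>0$, so the ratio tends to $0$.

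The local error bound at $\wt w$ gives $\|\pi_0(u)\|_2\gtrsim\|u-\wt u\|_2$. That controls $\D_{Q_M}(\wt w,\cdot)$, not $\D_{Q_M}(w^\star,\cdot)$. Nor does $\Phi(\wt w)>0$ keep the iterates away from $\wt w$. Monotonicity of $\Phi_t$ would do that only if you knew $\Phi(\wt w)<\Phi_0$, which you have not shown.

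\textbf{The missing idea.} Your energy identity holds with \emph{every} interior WMEQ point in place of $w^\star$, in particular with $\wt w$. Hence $b_t:=\D_{Q_M}(\wt w,w_t)$ is nondecreasing, and $b_0>0$ because $w_0\notin\NE(G)$. This is what keeps the iterates away from the zero of $\pi_0$.

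\emph{The paper's route.} The paper applies operator domination relative to $\wt w$, which gives $\|V(x_t)\|_{y_t}^2\ge\wt\gamma\, b_t$ on the compact sublevel set. It then uses the decomposition $\Phi_t=\D_{Q_M}(w^\star,\wt w)+b_t$. This yields $b_t/\Phi_t\ge b_0/\Phi_0$, and hence $\Phi_{t+1}\ge\bigl(1+\tfrac{\wt\gamma\, b_0}{4\Phi_0}\eta^2\bigr)\Phi_t$.

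\emph{Repairing your compactness framing.} For $t\le T_\alpha(w^\star)$ the iterates lie in $\{w\in\calR(w_0):\Phi(w)\le\alpha,\ \D_{Q_M}(\wt w,w)\ge b_0\}$. This is a compact subset of $\relint(\calW)$. It contains no zero of $\pi_0$, and $\Phi>0$ on it: if $w^\star\in\calR(w_0)$ then $w^\star=\wt w$, which is excluded. So your ratio is bounded below there by continuity, with no local analysis needed.
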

\end{tcolorbox}

\smallskip
The proof is in Section~\ref{app:vanilla-ftrl}
and is similar to that of Theorem~\ref{thm:main-last-iterate}.
Moreover, we note the following:

\begin{restatable}[Quantitative separation]
	{rem}{remftrldivergence}
	\label{rem:ftrl-divergence}
    Theorem~\ref{thm:ftrl-diverge-main}
    shows a separation between~\eqref{eq:ftrl} and 
    the~\eqref{eq:oftrl} and~\eqref{eq:egftrl} variants.
	In the case of linear $G$ 
    (which captures polymatrix games), 
    the last-iterate of FTRL moves 
    away from all interior Nash equilibria
    and toward the simplex boundary,
    and within any fixed Bregman sublevel set 
    with respect to a fixed interior NE,  
    it does so at a geometric rate. 
    This is in contrast to last-iterate
    convergence rates to NE for OFTRL and EGFTRL. 
    The theorem generalizes and makes quantitative
	the divergence result of~\cite{bailey2018multiplicative}
    beyond the case of OMWU in two-player zero-sum games. 
\end{restatable}

\section{Overview of Proof of Theorem~\ref{thm:main-last-iterate}}
\label{sec:main-proof-overview}

\vspace*{-0.2em}
In this section, we give a technical overview
of the proof of Theorem~\ref{thm:main-last-iterate}
(the full proof is in 
Section~\ref{app:last-iterate-proof}).
The starting point of our analysis is to introduce
a change of basis that allows for
studying OFTRL and EGFTRL
over scaled spaces,
and using a \textit{symmetrized} operator $V$.

\paragraph{Reduction to unweighted system.}
Fix $G: \calW \to \R^D$
with $\WMEQ_M(G) \neq \emptyset$ for a
weight matrix $M$. 
Define the scaled primal domain 
$\calX := M^{1/2} \calW \subseteq \R^D$,
and for each $w \in \calW$ and $z \in \R^D$,
let $x = M^{1/2} w$ and $y = M^{1/2} z$.
Then let $V : \calX \to \R^D$ be the 
$M$-symmetrized operator, where 
\begin{equation}
	V(x) = M^{1/2} G(M^{-1/2} x)
	\;\;\text{for all $x \in \calX$}\;.
	\label{eq:V-def}
\end{equation}
The key property of $V$ is that
all solutions of the \textit{weighted}
Minty variational equality of~\eqref{eq:harmonic-wmeq} under $G$
also satisfy the following \textit{unweighted} Minty variational equality
under $V$:

\begin{tcolorbox}[
	colback=gray!06,
	boxsep=-0.5mm,
	arc=0.01em,
	colframe=white,
	boxrule=-0.1em,
	left=0.5em,
	right=0.5em,
]
\begin{restatable}[Unweighted Minty equality]
	{prop}{propmeqsymmetrized}
	\label{prop:symmetrized-meq} \,
	Let $V: \calX \to \R^D$ be the operator from~\eqref{eq:V-def}.
	For any solution $w^\star \in \WMEQ_M(G)$,
	let $x^\star := M^{1/2} w^\star \in \calX$. 
	Then it holds that:
	\vspace*{-0.25em}
	\begin{equation}
		\langle V(x), x- x^\star \rangle = 0
		\;\;\text{for all $x \in \calX$} \;.
		\label{eq:minty-equality}
		\tag{MEQ}
	\end{equation}	
\end{restatable}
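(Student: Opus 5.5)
This is a direct change of variables, using that $M$ is diagonal, positive definite, and hence symmetric. Fix $x\in\calX$ and set $w := M^{-1/2}x$. Since $\calX = M^{1/2}\calW$ and $M^{1/2}$ is invertible, $w \in \calW$, and $V(x) = M^{1/2}G(w)$ by~\eqref{eq:V-def}. Likewise $x^\star = M^{1/2}w^\star$, so $x - x^\star = M^{1/2}(w - w^\star)$.

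Using the symmetry of $M^{1/2}$ (it is $\Diag(\sqrt{m_1}I_{d_1},\dots,\sqrt{m_N}I_{d_N})$, which is well defined because each $m_i>0$), we move one factor across the inner product:
\begin{equation*}
\langle V(x), x - x^\star\rangle
= \langle M^{1/2}G(w), M^{1/2}(w-w^\star)\rangle
= \langle M G(w), w - w^\star\rangle .
\end{equation*}

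Since $w^\star\in\WMEQ_M(G)$ and $w\in\calW$, the definition~\eqref{eq:wmeq} makes the right-hand side vanish. As $x\in\calX$ was arbitrary, this gives~\eqref{MEQ}.

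The argument has no real obstacle. The only points to check are that the map $x\mapsto M^{-1/2}x$ is a bijection from $\calX$ onto $\calW$, so every $x\in\calX$ is covered, and that $M^{1/2}$ is self-adjoint, which is what lets the two half-weights combine into the full weight $M$.
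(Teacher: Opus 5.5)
Your proof is correct and follows the same argument as the paper. Both set $w = M^{-1/2}x$ and use the symmetry of $M^{1/2}$ to merge the two half-weights into $\langle MG(w), w - w^\star\rangle$, which vanishes by the weighted Minty equality defining $w^\star$.
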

\end{tcolorbox}
We write $\MEQ(V) \subseteq \calX$
to denote the set of solutions $x^\star$ satisfying~\eqref{eq:minty-equality}.
As a straightforward corollary:
$w^\star \in \WMEQ_M(G)$ if and only if
$x^\star = M^{1/2} w^\star \in \MEQ(V)$ 
(see Section~\ref{app:reduction}). 
From an algorithmic standpoint, the change of basis 
also yields the following equivalence:

\begin{tcolorbox}[
	colback=gray!06,
	boxsep=-0.5mm,
	arc=0.01em,
	colframe=white,
	boxrule=-0.1em,
	left=0.5em,
	right=0.5em,
]
\begin{restatable}[Symmetrized dynamics]
	{prop}{propsymmetrizeddynamics}
	\label{prop:reduction-algorithmic} \,
	Let $V: \calX \to \R^D$ be the operator 
	from~\eqref{eq:V-def}.
	Then there exist a global regularizer $R: \calX \to \R$
	and energy function $F = R^*$ such that,
	if $\{(w_t, z_t)\}$ is an~\eqref{eq:oftrl} ~or~ \eqref{eq:egftrl}~
	trajectory for $(G, H, \eta)$, 
	then letting $x_t = M^{1/2} w_t$
	and $y_t = M^{1/2} z_t$ for all $t \ge 0$,
	it follows that $\{(x_t, y_t)\}$
	is an~\eqref{eq:oftrl}~or~\eqref{eq:egftrl}~ trajectory 
	for $(V, F, \eta)$. 
\end{restatable}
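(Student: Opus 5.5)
The approach is to construct the scaled regularizer explicitly and check that the primal argmin step and the dual recursions both transform correctly under the change of basis $x = M^{1/2}w$, $y = M^{1/2}z$.

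\textbf{Choice of $R$ and $F$.} Define $R:\calX\to\R$ by $R(x) := Q(M^{-1/2}x)$. Its conjugate is
\[
F(y) = \sup_{x\in\calX}\{\langle x,y\rangle - Q(M^{-1/2}x)\}.
\]
Substituting $x = M^{1/2}w$ and using that $M^{1/2}$ is symmetric gives
\[
F(y) = \sup_{w\in\calW}\{\langle w, M^{1/2}y\rangle - Q(w)\} = H(M^{1/2}y).
\]
Hence $\nabla F(y) = M^{1/2}\nabla H(M^{1/2}y)$.

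\textbf{Primal step.} Let $y_t := M^{1/2}z_t$ and $x_t := M^{1/2}w_t$. By Proposition~\ref{prop:ftrl-primal-dual-invariant}, $w_t = \nabla H(z_t)$. Note that $z_t = M^{-1/2}y_t$, so $\nabla H(z_t) = \nabla H(M^{-1/2}y_t)$, whereas $\nabla F(y_t) = M^{1/2}\nabla H(M^{1/2}y_t)$. These arguments differ, $M^{-1/2}y_t$ versus $M^{1/2}y_t$, so the map $y = M^{1/2}z$ combined with $R(x)=Q(M^{-1/2}x)$ does not line up.

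The fix is to check the argmin directly. We have
\[
\argmin_{x\in\calX}\{-\langle x,y\rangle + R(x)\} = M^{1/2}\argmin_{w\in\calW}\{-\langle w, M^{1/2}y\rangle + Q(w)\}.
\]
We need $M^{1/2}y = z$, that is, $y = M^{-1/2}z$. This clashes with the stated convention $y = M^{1/2}z$.

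I would resolve the mismatch by choosing a different regularizer, $R(x) := Q_M(M^{-1/2}x)$, where $Q_M$ is the $M$-weighted regularizer. Then
\[
-\langle x,y\rangle + R(x) = -\langle M^{1/2}w, M^{1/2}z\rangle + Q_M(w) = -\langle w, Mz\rangle + Q_M(w).
\]
Because everything is blockwise, the player-$i$ block of this objective is $m_i\bigl(-\langle w_i,z_i\rangle + \tfrac{\eta}{\eta_i}Q_i(w_i)\bigr)$. Each block is the original player-$i$ objective multiplied by the positive constant $m_i$, and the constraint $\calW=\prod_i\calW_i$ is a product. The blockwise minimizers therefore coincide with those of the original problem, so $x_t = M^{1/2}w_t$ is exactly the FTRL primal point for $(R, y_t)$. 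This gives $x_t = \nabla F(y_t)$ with $F = R^*$. One should still verify that $R$ inherits Legendre-type properties on $\relint(\calX)$, which is immediate from the fact that $M^{1/2}$ is a positive diagonal linear bijection.

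\textbf{Dual recursions.} Multiply each dual recursion by $M^{1/2}$ and use $M^{1/2}G(w) = M^{1/2}G(M^{-1/2}x) = V(x)$. For \eqref{eq:oftrl} this yields
\[
y_{t+1} = y_t - 2\eta V(x_t) + \eta V(x_{t-1}),
\]
with $x_{-1} = x_0$. For \eqref{eq:egftrl}, the half-step gives $y_{t+1/2} = y_t - \eta V(x_t)$. The half-step primal point $x_{t+1/2} = M^{1/2}w_{t+1/2}$ is the $R$-argmin at $y_{t+1/2}$ by the same blockwise argument, and then $y_{t+1} = y_t - \eta V(x_{t+1/2})$. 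The initialization also matches, since $x_0 = M^{1/2}w_0 = \nabla F(y_0)$ by the identical argmin computation.

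\textbf{Main obstacle.} The only delicate point is the primal step: choosing $R$ so that the argmin commutes with the scaling. The naive choice $R(x) = Q(M^{-1/2}x)$ fails because of the $M^{1/2}$ versus $M^{-1/2}$ mismatch. The $Q_M$-based choice works only because the weights are constant on each player block, so each player's objective is merely rescaled by $m_i$.
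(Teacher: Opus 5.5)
Your proposal is correct and follows the paper's proof: you use the same regularizer $R(x) = Q_M(M^{-1/2}x)$, and you obtain the dual recursions the same way, by multiplying each one by $M^{1/2}$ so that $M^{1/2}G(M^{-1/2}x) = V(x)$. The only difference is how you check $x_t = \nabla F(y_t)$. You note that the $R$-objective at $y_t$ is, on each player block, the original objective at $z_t$ multiplied by $m_i > 0$. The paper instead uses the gradient identity $\nabla R(x) = M^{1/2}\nabla Q(w)$ together with the first-order characterization $y - \nabla R(x) \in \calS$. Both routes rely on $M$ being constant on each player block.
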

\end{tcolorbox}

See Section~\ref{app:reduction} for the proof and 
the exact definitions of $R$ and $F$.
There, we also show
for all $w^\star \in \WMEQ_M(G)$,
$x^\star = M^{1/2} w^\star \in \MEQ(V)$,
and all $w \in \relint(\calW)$, $x = M^{1/2} w \in \relint(\calX)$ that:
\begin{equation}
	m_{\min} \cdot \D_Q(w^\star, w)
	\;\le\;
	\D_R(x^\star, x)
	\;\le\;
	m_{\max} \cdot 
	\D_Q(w^\star, w)  \;.
	\label{eq:DQ-vs-DR}
\end{equation}

\paragraph{Intuition for reduction.}
While $G$ satisfies the \textit{$M$-weighted} variational equality
of~\eqref{eq:wmeq}, the iterates $\{z_t\}$ are updated via 
$G$ in an \textit{unweighted} manner. 
The reduction specifies the coordinate system
under which the resulting operator  satisfies the
unweighted equality~\eqref{eq:minty-equality}
while also preserving the OFTRL/EGFTRL update structure. 
Analyzing $\D_Q(w^\star, w_t)$ thus
reduces to controlling $\D_R(x^\star, x_t)$.

\subsection{Energy Dissipation and Continuous-Time Intuition}
\label{sec:proof-overview:energy}

To prove $\D_R(x^\star, x_t) \to 0$
under the scaled primal iterates $\{x_t\}$, 
our analysis is based on establishing the 
dissipation of a \textit{shifted} energy function, related to $\D_R$, 
along the 
scaled dual iterates $\{y_t\}$.  

\paragraph{Shifted energy function.}
For $x^{\star} \in \calX$, let
$F_{x^\star} : \R^D \to \R$ be  
the \textit{shifted energy function}: 
\begin{equation}
	F_{x^\star}(y) = F(y) - \langle x^\star , y \rangle 
	\;\;\text{for all $y \in \R^D$}\;.
	\label{eq:shifted-energy}
\end{equation}	
We then have the following
key relationship between 
$\D_R$ and $F_{x^{\star}}$
(see the proof in
Section~\ref{app:reduction}):
\begin{tcolorbox}[
	colback=gray!06,
	boxsep=-0.5mm,
	arc=0.01em,
	colframe=white,
	boxrule=-0.1em,
	top=0.3em,
	left=0.5em,
	right=0.5em, 
]
\begin{restatable}[Change in shifted energy and Bregman divergence]
{prop}{propshiftedenergy}
	\label{prop:bregman-shifted-energy}
	Let $y, y' \in \R^D$, and  let
	$x = \nabla F(y), x' = \nabla F(y') \in \relint(\calX)$. 
	Then for any $x^{\star} \in \relint(\calX)$, 
    it holds that:
	\begin{equation*}
		\textstyle
	    \D_R(x^{\star}, x') - \D_R(x^{\star}, x) 
		\;=\;
		F_{x^{\star}}(y') - F_{x^{\star}}(y) \;.
	\end{equation*}
\end{restatable}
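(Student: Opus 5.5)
The plan is to reduce the claim to Fenchel--Young duality for the conjugate pair $(R, F)$. Under the Legendre-type assumption transported to $R$ by the reduction, we have $x = \nabla F(y)$, and for such pairs
\[
R(x) + F(y) = \langle x, y\rangle .
\]
This equality holds because $x$ maximizes $\langle \cdot, y\rangle - R(\cdot)$ over $\calX$, which is the scaled analogue of Proposition~\ref{prop:ftrl-primal-dual-invariant}. The same identity holds for $(x', y')$.

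Next I would write out the Bregman divergence. Since $R$ is only defined on $\calX$, which lies in an affine subspace, the gradient $\nabla R(x)$ is determined only up to the normal space $M^{-1/2}\calS$, more precisely the normal space of the affine hull of $\calX$. The first-order optimality condition for $x = \argmax\{\langle x, y\rangle - R(x)\}$ at an interior point says that $y$ is a valid (sub)gradient representative of $R$ at $x$. Differences $x^\star - x$ lie in the tangent space, so any representative gives the same value of $\langle \nabla R(x), x^\star - x\rangle$. We may therefore take $\nabla R(x) = y$, which gives
\[
\D_R(x^\star, x) = R(x^\star) - R(x) - \langle y, x^\star - x\rangle .
\]

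Substituting $R(x) = \langle x, y\rangle - F(y)$ then yields
\[
\D_R(x^\star, x) = R(x^\star) + F(y) - \langle y, x^\star\rangle = R(x^\star) + F_{x^\star}(y).
\]
The same computation gives $\D_R(x^\star, x') = R(x^\star) + F_{x^\star}(y')$. Subtracting the two expressions cancels the constant $R(x^\star)$ and gives exactly the claimed identity.

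The only delicate step is justifying that $y$ can replace $\nabla R(x)$ in the Bregman divergence. This rests on the optimality condition $y - \nabla R(x) \in (\text{tangent space})^\perp$ at interior $x$. That condition follows from strict convexity, differentiability, and the boundary blow-up in Assumption~\ref{ass:legendre}, carried to $R$ through the linear change of variables $x = M^{1/2}w$. Everything else is algebra.
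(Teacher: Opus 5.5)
Your proposal is correct and is essentially the paper's argument: the paper applies Proposition~\ref{prop:bregman-pd} to the pair $(R,F)$. Part (i) of that proposition is exactly your Fenchel--Young computation $\D_R(x^\star, x) = R(x^\star) + F(y) - \langle y, x^\star \rangle$, justified by $y - \nabla R(x) \in \calS$ and $x^\star - x \in \calS^\bot$, and the paper then subtracts the two instances. Two minor points: the normal space of $\calX$ is just $\calS$, since $M^{1/2}$ acts as a scalar on each player block; and cancelling $R(x^\star)$ requires it to be finite, which $x^\star \in \relint(\calX)$ guarantees.
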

\end{tcolorbox}

\paragraph{Dissipation of shifted energy.}
Given Proposition~\ref{prop:bregman-shifted-energy},
it then suffices to establish 
\textit{dissipative} bounds 
under OFTRL and EGFTRL
on the one-step change in $F_{x^\star}$
for $x^\star \in \MEQ(V)$.
This first requires the following notation.
For all $y \in \R^D$, 
the Hessian $\nabla^2 F(y)$
is positive semi-definite and induces 
a local degenerate inner product 
and local semi-norm:
for $u, v \in \R^D$ 
define $\langle u, v \rangle_{y}
:= \langle u, \nabla^2 F(y) v \rangle$
and
$\|v\|_{y} := \sqrt{\langle v, \nabla^2 F(y) v \rangle}$
(see Section~\ref{app:error-machinery:seminorms} 
for further background).
Then, our main technical lemma shows the dissipation 
of $F_{x^\star}$ along the dual iterates,
quantified under the semi-norm:
\begin{tcolorbox}[
	colback=gray!06,
	boxsep=-0.5mm,
	arc=0.01em,
	colframe=white,
	boxrule=-0.1em,
	top=0.3em,
	left=0.5em,
	right=0.5em,
]
\begin{restatable}[Dissipation of shifted energy]
	{lem}{lemenergydissipationmain}
	\label{lem:energy-dissipation-main}
	Let Assumptions~\ref{ass:operator}, 
	\ref{ass:legendre},
	\ref{ass:reg-sc}, \ref{ass:lhs} hold.
	Let $\{(x_t, y_t)\}$ be the \textnormal{OFTRL}
	or \textnormal{EGFTRL} iterates from 
	Proposition~\ref{prop:reduction-algorithmic}
	with $\eta \in (0, 1/(880 L B \omega \kappa_M)]$.
	Then for $t \ge 1$:
	\begin{equation*}
	\textstyle
		- \frac{7}{8} \cdot \eta^2 \|V(x_t)\|^2_{y_t}
		\;\le\;
		F_{x^\star}(y_{t+1}) - 
		F_{x^\star}(y_t)
		\;\le\;
		-\frac{1}{8} \cdot \eta^2 \|V(x_t)\|^2_{y_t}\,, 
       \quad\text{$\forall \; x^\star \in \MEQ(V)$.}
	\end{equation*}
\end{restatable}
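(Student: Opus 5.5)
The plan is to Taylor-expand the shifted energy along each dual step and to use the Minty equality as the conservation law that removes the first-order term. For any $y,y'$ we have $\nabla F_{x^\star}(y) = \nabla F(y) - x^\star = x - x^\star$, where $x=\nabla F(y)$. By Taylor's theorem with integral remainder,
\begin{equation*}
F_{x^\star}(y_{t+1}) - F_{x^\star}(y_t) = \langle x_t - x^\star, \Delta_t\rangle + \int_0^1 (1-s)\,\|\Delta_t\|^2_{y_t + s\Delta_t}\,ds ,
\end{equation*}
where $\Delta_t := y_{t+1}-y_t$. Write $\Delta_t = -\eta V(x_t) - \eta E_t$, where the error is $E_t = V(x_t)-V(x_{t-1})$ for OFTRL and $E_t = V(x_{t+1/2}) - V(x_t)$ for EGFTRL. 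By \eqref{eq:minty-equality}, $\langle x_t - x^\star, V(x_t)\rangle = 0$. The linear term therefore reduces to $-\eta\langle x_t-x^\star, E_t\rangle$.

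The idealized scheme explains where dissipation comes from. For IFTRL, $\Delta = -\eta V(x_{t+1})$, and expanding $F_{x^\star}$ around $y_{t+1}$ instead gives $F_{x^\star}(y_t) - F_{x^\star}(y_{t+1}) = \langle x_{t+1}-x^\star, -\Delta\rangle + \D_F(y_t, y_{t+1})$. The linear term vanishes by \eqref{eq:minty-equality}, so the energy drops by exactly $\D_F(y_t,y_{t+1}) \approx \tfrac{\eta^2}{2}\|V\|^2_{y}$. I will mimic this by coupling. Define the implicit point $\hat y_{t+1} = y_t - \eta V(\nabla F(\hat y_{t+1}))$. It exists by a fixed-point argument, since $\eta L$ is small and $\nabla F$ is Lipschitz in the $1$-strongly-convex geometry. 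Then write
\begin{equation*}
F_{x^\star}(y_{t+1}) - F_{x^\star}(y_t) = \underbrace{[F_{x^\star}(\hat y_{t+1}) - F_{x^\star}(y_t)]}_{=-\D_F(y_t,\hat y_{t+1})} + [F_{x^\star}(y_{t+1}) - F_{x^\star}(\hat y_{t+1})].
\end{equation*}
The first bracket is bounded between $-\tfrac{5}{8}\eta^2\|V(x_t)\|^2_{y_t}$ and $-\tfrac{3}{8}\eta^2\|V(x_t)\|^2_{y_t}$. To see this, use \eqref{eq:lhs} to compare the Hessians along the segment with $\nabla^2F(y_t)$; this costs factors $\exp(\pm\omega\eta B\sqrt{\kappa_M})$, which are close to $1$. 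Then compare $\|V(\hat x_{t+1})\|_{y_t}$ with $\|V(x_t)\|_{y_t}$. For that comparison, bound $\|V(\hat x_{t+1})-V(x_t)\|_{y_t}$ by $O(\eta L B\omega)\,\|V(x_t)\|_{y_t}$, using Lipschitzness of $V$ together with the local estimate $\|\hat x_{t+1}-x_t\| \lesssim \|\nabla^2F(y_t)\,\eta V\|$, controlled in the local seminorm.

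The main obstacle is the second bracket. Its linear part, $\langle \hat x_{t+1} - x^\star, y_{t+1}-\hat y_{t+1}\rangle$, involves the unknown distance to $x^\star$, and this distance cannot be bounded by $\|V(x_t)\|_{y_t}$. My first attempt is to expand around $\hat y_{t+1}$ and use \eqref{eq:minty-equality} at the point $\hat x_{t+1}$ as well. The coupling error is $y_{t+1}-\hat y_{t+1} = -\eta(V(w)-V(\hat x_{t+1}))$, where $w$ is the prediction point ($x_{t}$-extrapolated, or $x_{t+1/2}$). The key is to show that the only surviving contributions are $\langle x_{t+1}-\hat x_{t+1}, \cdot\rangle$-type terms, that is, differences of primal points. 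Such differences are controlled by Hessian-weighted dual differences via the mean value theorem and \eqref{eq:lhs}. This rewriting needs the telescoping identity $\langle \hat x - x^\star, V(a)-V(b)\rangle = \langle b-\hat x, V(b)\rangle - \langle a - \hat x, V(a)\rangle$, which follows from \eqref{eq:minty-equality} applied at $a$ and $b$. After the rewriting, every term has second order $\eta^2\cdot O(\eta L B\omega\kappa_M)\,\|V(x_t)\|^2_{y_t}$ or better. For OFTRL there is a further complication: the error involves $V(x_{t-1})$, so the bound must absorb $\|V(x_{t-1})\|$ into $\|V(x_t)\|$ up to a constant factor using one-step stability, and the condition $t\ge1$ is needed for this. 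The stepsize $\eta\le 1/(880LB\omega\kappa_M)$ is chosen so that the total error is at most $\tfrac{1}{4}\eta^2\|V(x_t)\|^2_{y_t}$. Adding the two brackets then yields the claimed window $[-\tfrac78,-\tfrac18]\,\eta^2\|V(x_t)\|^2_{y_t}$, uniformly over $x^\star\in\MEQ(V)$.
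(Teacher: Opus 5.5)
Your outline is correct, but you couple with a different object than the paper does. The paper never solves an implicit equation. In Proposition~\ref{prop:meta-coupling} it sets $\wh y_{t+1} := y_t - \eta V(x_{t+1})$ using the \emph{actual} next iterate, which gives the exact two-term identity $F_{x^\star}(y_{t+1}) - F_{x^\star}(y_t) = -\D_F(y_t, y_{t+1}) + C_t$. The dissipation is thus attached to the real increment. That is why the paper needs the two-sided comparisons $\|y_{t+1}-y_t\|_{y_t} \asymp \eta \|V(x_t)\|_{y_t}$ (Propositions~\ref{prop:oftrl-dual-increment-op} and~\ref{prop:egftrl-dual-increment-op}). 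It removes $x^\star$ from $C_t$ by applying \eqref{eq:minty-equality} at $x_{t+1}, x_t, x_{t-1}$ (resp.\ $x_{t+1}, x_{t+1/2}$).

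You instead take a genuine IFTRL step from $y_t$. Your contraction argument is fine, since $\nabla F$ is $1$-Lipschitz from $\|\cdot\|_*$ to $\|\cdot\|$ and $\eta \wh L < 1$. This makes the main dissipation term $-\D_F(y_t, \wh y_{t+1})$ independent of the algorithm, with a clean one-step comparison of $\|V(\wh x_{t+1})\|_{y_t}$ with $\|V(x_t)\|_{y_t}$; it is closer to the approximate-proximal-point viewpoint. The price is the fixed-point step and an extra, harmless fourth-order term $\D_F(y_{t+1}, \wh y_{t+1})$. For OFTRL you still need a stability bound proved by induction along the trajectory, as in Lemma~\ref{lem:oftrl-dual-stability}, because $y_t - y_{t-1}$ involves $V(x_{t-2})$. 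Calling this ``one-step stability'' undersells it.

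One OFTRL step must be made explicit, because as you state it, it is false term by term. For nonlinear $V$ there is no prediction point $w$ with $V(w) = 2V(x_t) - V(x_{t-1})$. Your telescoping identity gives the linear part $-2\eta\langle \wh x_{t+1} - x_t, V(x_t)\rangle + \eta\langle \wh x_{t+1} - x_{t-1}, V(x_{t-1})\rangle$. Each of these terms has size about $2\eta^2 \|V(x_t)\|^2_{y_t}$, so bounding them separately swamps the dissipation.

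You have to regroup the linear part as $\eta\langle 2x_t - x_{t-1} - \wh x_{t+1}, V(x_t)\rangle + \eta\langle \wh x_{t+1} - x_{t-1}, V(x_{t-1}) - V(x_t)\rangle$, the analogue of the paper's~\eqref{eq:oi1}. Then linearize the primal second difference at $y_t$. The dual second difference is $2y_t - y_{t-1} - \wh y_{t+1} = \eta(V(\wh x_{t+1}) - V(x_{t-1})) + \eta(V(x_{t-2}) - V(x_{t-1}))$. In the local seminorm it is $O(\eta^2 \wh L)\,\|V(x_t)\|_{y_t}$. The Hessian remainders contribute the $\eta^2 \cdot O(\eta \omega B \sqrt{\kappa_M})$ piece.

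This cancellation, not norm absorption, is the real reason for $t \ge 1$. At $t = 0$ the optimistic correction vanishes, the step is a vanilla FTRL step, and the energy can increase (Lemma~\ref{lem:oftrl-initial-energy}).
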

\end{tcolorbox}

\smallskip
Note that these energy dissipation bounds
hold simultaneously for all
points $x^\star \in \MEQ(V)$,
and the magnitude of dissipation is
quantified under the geometry of 
$\nabla^2 F(y_t)$ at the current dual iterate. 
Moreover, these bounds are tight 
(with the lower and upper inequality matching
up to an absolute constant), and
thus when combined with Proposition~\ref{prop:bregman-shifted-energy},
the lemma implies 
that the magnitude of dissipation
exactly controls the change in $\D_R(x^\star, x_t)$
under the iterates
for all $x^\star \in \MEQ(V)$.

\paragraph{Intuition from continuous time.}
The starting point for understanding the dissipation in
Lemma~\ref{lem:energy-dissipation-main} is the continuous-time analogue 
of FTRL, which is the first-order ODE that we refer to as \textit{FTRL flow}:
\begin{equation}
	\dot y_t = - V(x_t)
	\quad\text{and}\quad
	x_{t} = \nabla F(y_t) \;.
	\tag{FTRL flow}
	\label{eq:ftrl-flow}
\end{equation}
Importantly, along~\eqref{eq:ftrl-flow}, 
it holds for every $x^\star \in \MEQ(V)$
that the Bregman divergence from $x^{\star}$
(and equivalently
the shifted energy by Proposition~\ref{prop:bregman-shifted-energy})
is \textit{exactly conserved}. 
Indeed, differentiating
Proposition~\ref{prop:bregman-shifted-energy}
along the~\eqref{eq:ftrl-flow}
trajectory and using the chain rule,
we have for all initializations and
all $x^\star \in \MEQ(V)$ that:
\begin{align}
	\textstyle
	\tfrac{d}{dt} \D_R(x^\star, x_t)
	= 
	\tfrac{d}{dt} F_{x^\star}(y_t)
	&=
	\langle \dot y_t, \nabla F_{x^*}(y_t) \rangle  \nonumber \\
	&= 
	- \langle V(x_t), \nabla F(y_t) - x^\star \rangle 
	= - \langle V(x_t), x_t - x^\star \rangle 
	= 0 \;.
	\label{eq:flow-conservation}
\end{align}
Here, the final equality holds 
exactly due to the structural~\eqref{eq:minty-equality}
property. This conservation
was shown in~\cite{legacci2024no} and is 
similar to other continuous-time conservation results 
for two-player zero-sum games.
A consequence of~\eqref{eq:flow-conservation}
is that the behavior of
discrete-time FTRL variants can be
better understood when viewed as discretizations
of the~\eqref{eq:ftrl-flow}:

\paragraph{FTRL and Implicit FTRL.}
Taking a forward (resp. backward) \textit{Euler discretization}
of~\eqref{eq:ftrl-flow} with stepsize $\eta > 0$ then
yields the vanilla~\eqref{eq:ftrl}
(resp. implicit~\eqref{eq:iftrl}) update for the symmetrized system.
A similar calculation to \eqref{eq:flow-conservation} in discrete time
 yields for all $x^\star \in \MEQ(V)$:
\begin{equation}
	F_{x^\star}(y_{t+1}) - F_{x^\star}(y_t)
	= 
	\begin{cases}
	\;\;\;\D_F(y_{t+1}, y_t) \;\ge\; 0 
	\quad\text{for FTRL} \\
	-\D_F(y_t, y_{t+1}) \;\le\; 0 
	\quad\text{for IFTRL} \;.
	\end{cases} 
	\label{eq:forward-backward-energy}
\end{equation}	
In other words, the continuous-time energy conservation 
under~\eqref{eq:ftrl-flow} becomes 
energy \textit{expansion} under the forward discretization~\eqref{eq:ftrl},
and energy \textit{dissipation} under the backward discretization~\eqref{eq:iftrl}
(see Proposition~\ref{prop:ftrl-iftrl-energy} 
for the proof).
The resulting $\D_F$ (resp. $-\D_F$) term
captures the \textit{second-order} forward (resp. backward) discretization error.  
Under this perspective, \eqref{eq:oftrl} and~\eqref{eq:egftrl}
can both be viewed as implementable approximations
of the idealized~\eqref{eq:iftrl}, which leads 
to the dissipation in Lemma~\ref{lem:energy-dissipation-main}.
More concretely, the proof of the lemma follows by 
coupling OFTRL and EGFTRL at each $t \ge 1$ 
with an \textit{implicit} step,
which we sketch below:

\paragraph{Proof sketch of 
Lemma~\ref{lem:energy-dissipation-main}.}
To set up the coupling, we establish the following proposition:
\begin{tcolorbox}[
	colback=gray!06,
	boxsep=-0.5mm,
	arc=0.01em,
	colframe=white,
	boxrule=-0.1em,
	top=0.3em,
	left=0.5em,
	right=0.5em,
]
\begin{restatable}[Implicit step coupling]
	{prop}{propenergycouplingmeta}
    \label{prop:meta-coupling}
	Let $\{(x_t, y_t)\}$ be a sequence
	satisfying $x_t = \nabla F(y_t)$
	for all $t \ge 0$.
	Further, define the implicit update $\wh y_{t+1} := y_t - \eta V(x_{t+1})$
	for $t \ge 0$. Then for $t \ge 1$:
	\vspace*{-0.1em}
    \begin{equation*}
		F_{x^\star}(y_{t+1}) - F_{x^\star}(y_t)
		\;=\;
		\underbrace{- \D_F(y_t, y_{t+1})}_{\textnormal{Dissipation term}}
		+ 
	\underbrace{\langle x_{t+1} - x^\star, y_{t+1} - \wh y_{t+1} \rangle}_{\textnormal{Coupling error term $C_t$}}\,,
        \;\;\;
        \forall \; x^\star \in \MEQ(V)\;.
	\end{equation*}
\end{restatable}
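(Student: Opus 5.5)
The identity is purely algebraic: it uses only $x_t = \nabla F(y_t)$, the definition of the shifted energy, and \eqref{eq:minty-equality}. The plan is to first write the energy change through a Bregman divergence of $F$, and then split off the implicit-step comparison term.

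\textbf{Step 1: three-point expansion.} Since $\nabla F_{x^\star}(y) = \nabla F(y) - x^\star$ and $F_{x^\star}$ differs from $F$ by a linear term, $\D_{F_{x^\star}} = \D_F$. Expanding $\D_F(y_t, y_{t+1})$ at the base point $y_{t+1}$ gives
\begin{equation*}
F_{x^\star}(y_{t+1}) - F_{x^\star}(y_t) \;=\; -\D_F(y_t, y_{t+1}) + \langle \nabla F(y_{t+1}) - x^\star,\, y_{t+1} - y_t \rangle .
\end{equation*}
Using $\nabla F(y_{t+1}) = x_{t+1}$, the inner product becomes $\langle x_{t+1} - x^\star, y_{t+1} - y_t\rangle$.

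\textbf{Step 2: insert the implicit step.} Write $y_{t+1} - y_t = (y_{t+1} - \wh y_{t+1}) + (\wh y_{t+1} - y_t)$. By definition of $\wh y_{t+1}$, the second piece equals $-\eta V(x_{t+1})$. This contributes $-\eta\langle V(x_{t+1}), x_{t+1} - x^\star\rangle$, which vanishes by \eqref{eq:minty-equality} (Proposition~\ref{prop:symmetrized-meq}), since $x_{t+1} \in \calX$ and $x^\star \in \MEQ(V)$. The first piece is exactly $C_t$, which gives the claimed identity for every $x^\star \in \MEQ(V)$ simultaneously.

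\textbf{Main care point.} The only subtlety is checking that $x_{t+1} = \nabla F(y_{t+1})$ lies in the domain $\calX$ on which $V$ and \eqref{eq:minty-equality} are defined. This follows from the hypothesis $x_t = \nabla F(y_t)$, which holds by the primal-dual invariant of Proposition~\ref{prop:ftrl-primal-dual-invariant} transported via Proposition~\ref{prop:reduction-algorithmic}.

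A second point is that $F$ is only differentiable (it is not strictly convex in the $\calS$ directions), so $\D_F$ is defined but possibly degenerate. The identity itself needs no strict convexity, so this causes no difficulty. The restriction $t \ge 1$ is not actually needed for the algebra; it matters only for the later bounds on $C_t$ under OFTRL and EGFTRL.
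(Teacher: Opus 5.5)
Your proof is correct and follows the paper's argument: the paper also writes the shifted-energy change as $-\D_F(y_t,y_{t+1}) + \langle x_{t+1}-x^\star, y_{t+1}-y_t\rangle$ by expanding the Bregman divergence at base point $y_{t+1}$, then splits $y_{t+1}-y_t = -\eta V(x_{t+1}) + (y_{t+1}-\wh y_{t+1})$ and kills the first piece with \eqref{eq:minty-equality}. One small remark: $x_{t+1}\in\calX$ follows directly from Proposition~\ref{prop:relative-duality} applied to $(R,F)$, because $\nabla F$ maps all of $\R^D$ into $\relint(\calX)$, so you do not need the algorithm-specific primal-dual invariant.
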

\end{tcolorbox}

Fixing any $x^\star \in \MEQ(V)$, 
Proposition~\ref{prop:meta-coupling}
then shows that the one-step change
in shifted-energy exactly decomposes
into a \textit{dissipation} term
(that matches the one-step change 
of~\eqref{eq:iftrl} from~\eqref{eq:forward-backward-energy}),
and a \textit{coupling error term} $C_t$.
Under the dual iterates of OFTRL and EGFTRL, 
we prove that both
$- \D_F(y_t,y_{t+1}) \lesssim - \eta^2 \|V(x_t)\|^2_{y_t}$
and that 
$C_t \lesssim \eta^2 \|V(x_t)\|^2_{y_t}$,
and for a sufficiently small constant stepsize $\eta$, 
the constant scaling of the negative dissipation 
term dominates that of $C_t$.
In particular, to control the error terms,
we rely on consequences of the local 
Hessian stability property~\eqref{eq:lhs}
(Assumption~\ref{ass:lhs}).
This establishes the upper bound of the lemma,
and the matching lower bound follows similarly. 
The full proof is given in Section~\ref{app:energy:main-lemma-proof}.
~\hfill $\blacksquare$

\subsection{Operator Domination Inequality}
\label{sec:main:op-dom}

Next, we prove a structural inequality
relating the local seminorms
from Lemma~\ref{lem:energy-dissipation-main}
back to $\D_R$.
For this, we extend the notation
from Section~\ref{sec:ftrl:geometry}
to the present, symmetrized coordinates:
for $y_0 \in \R^D$ and $x_0 = \nabla F(y_0) \in \relint(\calX)$, 
let $\calR(x_0) := \{\nabla F(y_0 + y) : y \in \Span(V(\calX))\}$,
and let $\NE(V) \subseteq \calX$
denote the solutions of the Stampacchia variational inequality
\eqref{eq:nash-svi} for the operator $V$.

\begin{tcolorbox}[
	colback=gray!06,
	boxsep=-0.5mm,
	arc=0.01em,
	colframe=white,
	boxrule=-0.1em,
	left=0.5em,
	right=0.5em,
]
\begin{restatable}[Operator domination on reachable set]
	{prop}{propopdom}
	\label{prop:op-dom}
	Let Assumption~\ref{ass:restricted-regularity} hold.
	Fix $y_0 \in \R^D$ and $x_0 = \nabla F(y_0)$
    such that $\calR(x_0) \not \subseteq \NE(V)$. 
	Let $x^\star \in \MEQ(V) \cap \calR(x_0)$,
	and let $\calK \subset \relint(\calX)$ be any compact set
	such that $x^\star \in \calK$. 
	Then there exists a positive constant 
    $\wt \gamma(x_0, \calK) > 0$ such that
	\begin{equation*}
    	\|V(x)\|^2_{y}
        \;\ge\;
        \wt \gamma(x_0, \calK) \cdot \D_R(x^\star, x) \,,
        \quad
         \forall \; y \in \R^D\,
        \text{s.t. $x=\nabla F(y) \in \calK \cap \calR(x_0)$}\;.
	\end{equation*}
\end{restatable}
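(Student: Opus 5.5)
The plan is a compactness argument combined with a local linear error bound near the zeros of the effective stationarity map. I will work in effective dual coordinates. Write $y = y_0 + u + s$ with $u \in \calU$ (in the symmetrized coordinates) and $s$ in the normal directions. The invariance $\nabla F(y_0+z) = \nabla F(y_0 + \Pi_{\calS^\bot} z)$ shows that $x$, and hence the relevant Hessian action, depend only on $u$. The starting observation is that the Legendre structure makes $\nabla^2 F(y)$ degenerate exactly along $\calS$ and positive definite on $\calS^\bot$. Consequently
\[
\|V(x)\|_y^2 = \|\Pi_{\calS^\bot}V(x)\|_y^2 \ge \lambda_{\min}(\calK)\,\|\pi_0(u)\|_2^2 ,
\]
where $\lambda_{\min}(\calK)>0$ is the minimum over $x\in\calK$ of the smallest eigenvalue of $\nabla^2F$ restricted to $\calS^\bot$. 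This minimum is positive because $\nabla^2 F(y) = (\nabla^2 R(x))^{-1}$ on the tangent space, it is continuous, and $\calK \subset \relint(\calX)$ is compact. So it suffices to show $\|\pi_0(u)\|^2 \gtrsim \D_R(x^\star,x)$ on $\calK\cap\calR(x_0)$.

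For the upper side of the comparison, I will bound $\D_R(x^\star,x)$ by a constant times $\|x-x^\star\|^2$ on $\calK$, using the fact that $R$ is $C^2$ with bounded Hessian on the compact set $\calK\cup\{x^\star\}$ (and its convex hull, which stays in the interior). In turn, $\|x - x^\star\|$ is at most a constant times $\|u-u^\star\|$, because $u\mapsto \nabla F(y_0+u)$ is Lipschitz on the compact preimage of $\calK$. Here $u^\star$ is the unique effective coordinate of $x^\star\in\calR(x_0)$; uniqueness follows from strict convexity on the tangent space.

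The core step is the error bound $\|\pi_0(u)\| \ge c\,\|u-u^\star\|$ for $u$ in the compact preimage $\calK_u := \{u : \nabla F(y_0+u)\in\calK\}$. This set is compact because $\nabla F$ restricted to $y_0+\calU$ is a homeomorphism onto its image, whose inverse is continuous, given by $\nabla R$ projected. I will split into two cases.

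\emph{Near zeros.} The zeros of $\pi_0$ in $\calK_u$ are regular by Assumption~\ref{ass:restricted-regularity}, so they are isolated and, by compactness, finite in number: $u^\star = u_1,\dots,u_k$. By the inverse function theorem, around each $u_j$ there is a ball where $\|\pi_0(u)\|\ge c_j\|u-u_j\|$. The main obstacle is that the target is $x^\star$, not the nearest zero $u_j$. Near $u_j$ with $j\neq 1$, however, $\D_R(x^\star,x)$ is bounded above simply by $\operatorname{diam}$-type constants, and $\|\pi_0\|$ is small. This is the delicate point, so I will argue that $\MEQ(V)\cap\calR(x_0)$ is a single point. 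Suppose $x^\star, x'^\star$ both lie in it. By Proposition~\ref{prop:bregman-shifted-energy}, the conservation \eqref{eq:flow-conservation} gives that $F_{x^\star}-F_{x'^\star}$ is affine in $y$, with slope $x'^\star - x^\star \in \calS^\bot$ orthogonal to all of $\calU$-reachable directions. Combined with the MEQ orthogonality $\langle V(x), x^\star - x'^\star\rangle=0$, this should force $x^\star = x'^\star$ via strict convexity. Even if that fails, I would instead show that zeros $u_j$ other than $u^\star$ still satisfy the required inequality. At such a zero the left side $\|V\|_y$ vanishes, so any such $u_j \neq u^\star$ would make the claim false. The claim therefore implicitly requires that every zero of $\pi_0$ in $\calK_u$ is $u^\star$, and I expect to prove this from MEQ: if $\pi_0(u_j)=0$, then $x_j\in\NE$, and applying MEQ with $x = x_j$ together with a Bregman three-point identity gives $\D_R(x^\star,x_j)=0$ after using conservation along the segment. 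This is the step I expect to be hardest.

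\emph{Away from zeros.} On the compact set $\calK_u$ minus the open balls around the zeros, $\|\pi_0\|$ is continuous and nonvanishing, hence bounded below by some $\delta>0$, while $\|u-u^\star\|$ is bounded above. So the ratio is bounded below there as well. Combining the two cases, I will take $\wt\gamma(x_0,\calK)$ to be the product of the resulting constants.
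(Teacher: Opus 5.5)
Your reductions match the paper's own proof. You lower-bound $\|V(x)\|_y^2$ by a constant times $\|\pi_0(u)\|_2^2$ using the Hessian on a compact set, upper-bound $\D_R(x^\star,x)$ by a constant times $\|u-u^\star\|_2^2$, and split into a local regime near zeros and a compact regime away from them. You also correctly see that everything hinges on $u^\star$ being the \emph{only} zero of $\pi_0$ on $\calK_u$: any other zero $u_j$ makes the left side vanish while $\D_R(x^\star,x_j)>0$. The gap is in how you propose to prove this.

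Your suggested argument uses MEQ at $x=x_j$, a three-point identity, and conservation along a segment. It never uses nonsingularity of $J_{\pi_0}(u_j)$, and it cannot work without it.

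\begin{itemize}
\item MEQ at $x=x_j$ is vacuous. Since $x_j$ is an interior NE, $V(x_j)\in\calS$, while $x_j-x^\star\in\calS^\bot$, so $\langle V(x_j),x_j-x^\star\rangle=0$ automatically.
\item The conclusion is false under MEQ alone, even with isolated zeros. On $\calX=\Delta_2\times\Delta_2$ with $r_i=x_{i,1}-x_{i,2}$, let $V_i(x)=(f_i(r),-f_i(r))$, where $f(r)=\big((r_1-a)^2+r_2^2\big)(r_2,-r_1)$ and $a\in(0,1)$. Then $\langle V(x),x-x^\star\rangle=\sum_i f_i(r)r_i\equiv 0$ for the uniform profile $x^\star$. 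The values of $V$ span $\calS^\bot$, so every interior point is reachable. Yet $r=(a,0)$ is an isolated interior NE distinct from $x^\star$, and $f$ has zero derivative there.
\end{itemize}

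So your intended lemma fails in general. Nonsingularity of the Jacobian is precisely what rules such points out.

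The missing idea is the paper's step that regular reachable zeros are MEQ points.
\begin{itemize}
\item Fix $h\in\calU$ and differentiate $\langle V(x(t)),x(t)-x^\star\rangle\equiv0$ along $x(t)=\nabla F(y_0+u_j+th)$ at $t=0$.
\item The term $\langle V(x_j),\dot x(0)\rangle$ vanishes, because $V(x_j)\in\calS$ and $\dot x(0)\in\calS^\bot$.
\item This leaves $\langle J_{\pi_0}(u_j)h,\,x_j-x^\star\rangle=0$ for all $h\in\calU$.
\item Nonsingularity makes $J_{\pi_0}(u_j)$ onto $\calU$, so $x_j-x^\star\perp\calU$.
\item Since $x_j-x^\star\in\calS^\bot$ and $\calU=\Pi_{\calS^\bot}\calY$, this upgrades to $x_j-x^\star\perp\calY$. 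Combined with MEQ for $x^\star$, this gives $x_j\in\MEQ(V)$.
\end{itemize}

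Only then does uniqueness of the reachable MEQ point give $x_j=x^\star$. Your sketch of that uniqueness is on the right track, but state it precisely. Subtracting the two MEQ identities gives $x^\star-x'^\star\in\calY^\bot\cap\calS^\bot\subseteq\calU^\bot$. Hence $\langle\nabla F(y_0+u^\star)-\nabla F(y_0+u'^\star),\,u^\star-u'^\star\rangle=0$, and strict monotonicity of $u\mapsto\nabla F(y_0+u)$ on $\calS^\bot$ forces $u^\star=u'^\star$.

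With $u^\star$ established as the unique zero, your finitely-many-zeros bookkeeping is unnecessary. The rest of your plan then goes through as in the paper: a local bound from the Jacobian at $u^\star$, and a compactness bound away from it.
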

\end{tcolorbox}

Proposition~\ref{prop:op-dom}
establishes a structural
inequality
between the dissipation term
$\|V(x)\|^2_{y}$
and the Bregman divergence 
$\D_R(x^\star, x)$ that is
analogous in spirit to 
a gradient-domination inequality
in optimization. 
In general, such a relationship
need not hold uniformly 
over the entire primal space.
However, under the regularity 
condition of 
Assumption~\ref{ass:restricted-regularity},
we show that a local
error bound around
a reachable MEQ point
$x^\star \in \MEQ(V)  \cap \calR(x_0)$
can be made uniform over
$\calK \cap \calR(x_0)$ 
for any compact subset 
 $\calK \subset \relint(\calX)$ 
 containing $x^\star$.
This yields a positive domination
constant $\wt \gamma(x_0, \calK)$
for which the inequality 
holds uniformly over $\calK \cap \calR(x_0)$,
but with dependencies on the initialization $x_0$
and the set $\calK$ (and, intrinsically,
on geometric properties of the
regularizer over this set). 
We give the full details
of the proof in Section~\ref{app:op-dom-details}.

\paragraph{Combining the pieces.}
To combine the pieces,
we show that the dissipation bounds of 
Lemma~\ref{lem:energy-dissipation-main}
imply that all  OFTRL and EGFTRL iterates
lie in a compact set.
In Lemma~\ref{lem:eq-selection}
(see Section~\ref{app:eq-selection}),
we use this compactness to then prove
the asymptotic equilibrium selection result of 
part (i) of the theorem.
Compactness then further
allows us to apply the operator domination inequality 
of Proposition~\ref{prop:op-dom}
along the OFTRL and EGFTRL iterates.  
Together with the upper bound
of Lemma~\ref{lem:energy-dissipation-main},
this recursively establishes
the linear convergence rate 
of part (ii).

\medskip
\section{Conclusion}
\label{sec:conclusion}

This work establishes linear last-iterate convergence rates for OFTRL and EGFTRL 
in harmonic games, and more broadly for operators satisfying the global (weighted)
Minty equality.  
Several directions remain open,
including adaptive stepsize variants, 
and a finer understanding of the instance dependence of the linear rate.
More broadly, it would be interesting to 
investigate whether
our techniques can be used 
in general games through their harmonic and potential decompositions.

\medskip
\subsection*{AI use statement}

In this work, we used generative AI tools to proofread and suggest editing changes for the technical writing and proofs. We also used generative AI tools for verifying properties of example regularizers and for generating and verifying examples of harmonic games. Finally, we used generative AI tools to suggest edits of our manuscript to improve readability, and to run simulations of the algorithms we study. We take responsibility for the final content of this work, including all proofs.

\medskip
\subsection*{Acknowledgements.}
VM gratefully acknowledges the support of the NSF through 
grants CCF-2239151 and IIS-2212182, and an Amazon Research Award. 
JZ gratefully acknowledges the support of the NSF through
grants CAREER-2336236 and III-2504990.
AW is supported by NSF awards CCF-2403391 and CAREER CCF-2443097.

\newpage
\setcounter{tocdepth}{1}
\renewcommand*\contentsname{Table of Contents}
{\setlength{\parskip}{-0.5em}\tableofcontents}

\appendix

\newpage

\section{Additional Preliminaries}
\label{app:prelims}

In this section, we provide additional preliminaries on variational analysis, harmonic games and properties of the Bregman divergence.

\subsection{Variational Preliminaries}
\label{app:prelims:variational}

\smallskip
Here, we provide
additional preliminaries
on the variational
perspective discussed in Section~\ref{sec:prelims}.

\subsubsection{Overview of spaces} 

Recall that the joint strategy space is given by $\calW = \calW_1 \times \dots \times \calW_N \subseteq \R^D$,
where each $\calW_i = \Delta_{d_i}$. 
Moreover, recall from~\eqref{eq:calS-def} that 
we define the linear subspaces $\calS$ and $\calS^\bot$
by 
\begin{equation*}
    \textstyle
    \calS := 
    \prod_{i=1}^N \Span(\1_{d_i})
 	\;\;\text{and}\;\;
	\calS^\bot := 
	\{ v \in \R^D : \langle v, s \rangle = 0\,, \forall s \in \calS\}\,.
\end{equation*}
In particular, we note that 
$\calS$ and $\calS^\bot$ 
are the \textit{normal}
and \textit{tangent} spaces 
of the affine hull $\aff(\calW)$.
In particular, recall that 
$\aff(\calW) = w' + \calS^\bot$
for any $w' \in \calW$. Further background can be found in, e.g.,~\citet[Section 6]{rockafellar98var-book}. Then, we also have the following result for which we provide a proof for completeness.  

\begin{restatable}
{lem}{lemnormalspace}
    \label{lem:normal-diff}
    For any $w' \in \calW,$ 
    $\calS^\bot = 
    \Span\{w - w' : w \in \calW\} \;.$
\end{restatable}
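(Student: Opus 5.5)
We show the two inclusions separately, working blockwise. Fix $w' \in \calW$ and write $\calD := \Span\{w - w' : w \in \calW\}$.

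\emph{Inclusion $\calD \subseteq \calS^\bot$.} Take any $w \in \calW$. For each player $i$, both $w_i$ and $w'_i$ lie in $\Delta_{d_i}$, so $\langle \1_{d_i}, w_i - w'_i \rangle = 1 - 1 = 0$. Every $s \in \calS$ has the form $s = (c_1 \1_{d_1}, \dots, c_N \1_{d_N})$, hence $\langle s, w - w' \rangle = \sum_i c_i \langle \1_{d_i}, w_i - w'_i\rangle = 0$. Thus each generator $w - w'$ lies in $\calS^\bot$. Since $\calS^\bot$ is a linear subspace, the span $\calD$ does too.

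\emph{Inclusion $\calS^\bot \subseteq \calD$.} First observe that $\calS^\bot = \prod_{i} \{v_i \in \R^{d_i} : \langle \1_{d_i}, v_i\rangle = 0\}$, because $\calS$ is a product of the one-dimensional blocks $\Span(\1_{d_i})$. This space is spanned by vectors supported on a single block of the form $e^{(i)}_j - e^{(i)}_k$, where $e^{(i)}_j$ is the $j$-th standard basis vector placed in block $i$ and $j,k \in [d_i]$. Indeed, any $v_i$ with coordinates summing to zero equals $\sum_j v_i(j)(e_j - e_1)$.

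It therefore suffices to show each such difference lies in $\calD$. Let $w^{(j)} \in \calW$ be the profile that agrees with $w'$ in every block except block $i$, where it is the pure strategy $e_j$. Then $w^{(j)} - w' \in \calD$ and $w^{(k)} - w' \in \calD$, so
\[
e^{(i)}_j - e^{(i)}_k = (w^{(j)} - w') - (w^{(k)} - w') \in \calD.
\]
This gives $\calS^\bot \subseteq \calD$, and combining with the first inclusion proves equality.

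No step is a real obstacle. The only point needing care is to use profiles that change a single player's block while fixing the others at $w'$. That construction works for arbitrary $w' \in \calW$, including boundary points, and requires no interiority assumption.
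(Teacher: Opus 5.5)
Your proof is correct. The first inclusion is the same blockwise computation as in the paper. The reverse inclusion takes a different route.

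\textbf{The paper's route.} The paper invokes the identity $\aff(\calW) = w' + \calS^\bot$, taken from standard variational-analysis background. It then writes $w' + v$ as an affine combination $\sum_j \lambda_j w^j$ of points of $\calW$ and reads off $v = \sum_j \lambda_j (w^j - w')$.

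\textbf{Your route.} You characterize $\calS^\bot$ blockwise as the product of the zero-sum subspaces $\{v_i : \langle \1_{d_i}, v_i \rangle = 0\}$. You exhibit the spanning set $\{e^{(i)}_j - e^{(i)}_k\}$. You then realize each generator as the difference of two profiles in $\calW$ that deviate from $w'$ only in block $i$, each to a pure strategy.

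\textbf{Comparison.} The paper's argument is shorter. However, for $w' \in \calW$ one always has $\aff(\calW) = w' + \Span\{w - w' : w \in \calW\}$, so the identity it cites is essentially a restatement of the lemma. Your argument proves that fact from first principles, using only that each block is a simplex whose vertices lie in $\calW$. It also yields an explicit spanning set for $\calS^\bot$, and it works for boundary points $w'$ just as you note.

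One cosmetic point: the profiles $w^{(j)}$ should carry the block index $i$ in their notation as well.
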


\begin{proof}
Let $w' \in \calW.$ 
We first prove that $w - w' \in \calS^\bot$
for any $w \in \calW$, 
which implies that the span is also a subset 
of the subspace $\calS^\bot$.  
For every $i\in[N]$ and every
$s_i\in\Span(\1_{d_i})$, there exists $c_i\in\R$ such that
$s_i=c_i\1_{d_i}$. Since $w_i,w_i'\in\Delta_{d_i}$, we have
$\langle s_i,w_i-w_i'\rangle
=c_i\langle\1_{d_i},w_i-w_i'\rangle =c_i(1-1)=0.$
Hence, for every
$s=(s_i)_{i\in[N]}\in\calS$,
$\langle s, w - w'\rangle = 
\sum_{i=1}^N \langle s_i, w_i - w_i'\rangle = 0$, 
and thus $w - w' \in \calS^\bot$ by definition of $\calS^\bot$.

We now prove the remaining inclusion. 
Let $v \in \calS^\bot$.
Since $\aff(\calW) = w' + \calS^\bot$, 
we have $w' + v \in \aff(\calW)$. Then by definition 
of $\aff(\calW)$, there exist $(\lambda_j)_{j \in [k]} \in \mathbb{R}^k$ 
for some $k \in \mathbb{N}$ and $(w^j)_{j \in [k]} \in \calW^{k}$ 
such that $\sum_{j=1}^k \lambda_j = 1$ and 
$w' + v = \sum_{j=1}^k \lambda_j w^j$. 
Hence $v = \sum_{j=1}^k \lambda_j (w^j - w') 
\in \Span\{w - w' : w \in \calW\}$.
\end{proof}

\smallskip
\subsubsection{Properties of Global Loss Operator}

Recall that $G: \calW \to \R^D$
is the operator from~\eqref{eq:G-def}.
By definition, $G$ is continuous. 
As is standard for normal-form games,
we further assume that $G$ 
satisfies the following global
Lipschitz and boundedness properties:

\smallskip
\assoperator*

\noindent
We note that boundedness follows from continuity of
$G$ and compactness of the joint space $\calW$.

Moreover, recall the variational
characterization of Nash equilibria
from~\eqref{eq:nash-svi}:
$w^\star \in \calW$ is a
Nash equilibrium if and only if
\begin{equation}
    \langle G(w^\star), w -w^\star \rangle
    \ge 0 
    \;\;\;\text{for all $w \in \calW$} \;.    
    \label{eq:NE}
    \tag{NE}
\end{equation}
The proof of this equivalence is standard
and follows from the first-order optimality 
conditions for each player’s loss minimization problem,
with the strategies of all other players fixed, 
as required by the definition of a Nash equilibrium 
(see, e.g.,~\cite{facchinei2003finite}, Proposition 1.4.2). 
Recall that we then write
$\NE(G)\subset \calW$ 
to denote the set of all solutions\footnote{
    A reader familiar with variational
    inequalities will recognize~\eqref{eq:nash-svi}
    (and equivalently~\eqref{eq:NE}) as
    simply the set of Stampacchia VI (SVI) solutions
    for $G$. For the setting of 
    Section~\ref{sec:prelims},
    these two sets coincide. 
} $w^\star$ of~\eqref{eq:NE}.

\smallskip
\paragraph{Interior Nash Equilibria 
and Normal Operator Vectors.} ~

In the following proposition,
we establish an equivalence between 
interior Nash equilibria $w^\star \in \relint(\calW)$
and points $w \in \calW$ 
with normal operator values $G(w) \in \calS$:

\smallskip
\begin{restatable}[NE vs. Normal Operator Value]{prop}{propsvinormalrelate}
	\label{prop:ne-vs-normal}
	Let $G: \calW \to \R^D$ and 
    let $w' \in \calW$. Then, the following statements hold: 
	\begin{enumerate}[
		label={(\roman*)},
        leftmargin=2em,        
	]
	\item If 
	$G(w') \in \calS$, then $w' \in \NE(G)$.

	\item
    If $w' \in \relint(\calW) \cap \NE(G)$,
    then $G(w') \in \calS$ and 
    $\langle G(w'), w - w' \rangle = 0$ for any $w \in \calW$.
\end{enumerate}
\end{restatable}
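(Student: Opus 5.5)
Both parts follow from Lemma~\ref{lem:normal-diff}, which identifies the tangent space $\calS^\bot$ with $\Span\{w - w' : w \in \calW\}$, together with the characterization~\eqref{eq:NE} of Nash equilibria.

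\emph{Part (i).} Suppose $G(w') \in \calS$. For any $w \in \calW$, Lemma~\ref{lem:normal-diff} gives $w - w' \in \calS^\bot$. Therefore $\langle G(w'), w - w'\rangle = 0 \ge 0$. This is exactly~\eqref{eq:NE}, so $w' \in \NE(G)$.

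\emph{Part (ii).} Let $w' \in \relint(\calW) \cap \NE(G)$. The plan is to first show that $\langle G(w'), v\rangle = 0$ for every $v \in \calS^\bot$, and then deduce both claims.

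\begin{itemize}
\item \emph{Interiority gives both signs.} Fix $v \in \calS^\bot$. Since $w'$ lies in the relative interior and $\aff(\calW) = w' + \calS^\bot$, there is $\epsilon > 0$ such that $w' \pm \epsilon v \in \calW$. Concretely, the blockwise sums of $v$ are zero, and every coordinate of $w'$ is strictly positive, so a small enough $\epsilon$ keeps all coordinates nonnegative.
\item \emph{Apply~\eqref{eq:NE} twice.} Taking $w = w' + \epsilon v$ gives $\epsilon\langle G(w'), v\rangle \ge 0$. Taking $w = w' - \epsilon v$ gives $-\epsilon\langle G(w'), v\rangle \ge 0$. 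Hence $\langle G(w'), v\rangle = 0$.
\end{itemize}

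Since this holds for all $v \in \calS^\bot$, we get $G(w') \in (\calS^\bot)^\bot = \calS$, because $\calS$ is a finite-dimensional subspace. The second claim of (ii) then follows from part (i)'s computation: $w - w' \in \calS^\bot$ for every $w \in \calW$, so $\langle G(w'), w - w'\rangle = 0$.

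The only mildly delicate step is justifying $w' \pm \epsilon v \in \calW$. I would make it explicit using $\epsilon = \min_{i,j} w'_i(j) / (2\|v\|_\infty)$ when $v \neq 0$; the case $v = 0$ is trivial.
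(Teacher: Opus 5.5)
Your proposal is correct and follows essentially the same route as the paper: part (i) uses Lemma~\ref{lem:normal-diff} to get $w - w' \in \calS^\bot$, and part (ii) tests~\eqref{eq:NE} at $w' \pm \epsilon v$ for $v \in \calS^\bot$. Your explicit choice of $\epsilon$ and your explicit derivation of $\langle G(w'), w - w'\rangle = 0$, which the paper leaves implicit, are small but welcome additions.
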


\smallskip

\begin{proof}
	We prove the two claims separately:

    \noindent
    \textit{Proof of (i).}
    If $G(w') \in \calS$, then for all $v \in \calS^\bot$, $\langle G(w'), v \rangle = 0$ by definition of $\calS$ and $\calS^\bot$. 
	As $w-w' \in \calS^\bot$ for all 
    $w \in \calW$ (by Lemma~\ref{lem:normal-diff}), 
    this further implies that
	$\langle G(w'), w-w'\rangle = 0$ for all $w \in \calW$,
	which means that $w' \in \NE(G)$.

    \smallskip
    \noindent
	\textit{Proof of (ii).}
    Suppose now that $w' \in \relint(\calW) \cap \NE(G)$. 
	Since $w' \in \relint(\calW)$, for every $v \in \calS^\bot$, there 
	exists a sufficiently small 
    $\eps > 0$ such that $w' + \eps v \in \calW$ and $w' - \eps v \in \calW$. 
	Then, since $w' \in \NE(G)$, 
	it follows from \eqref{eq:NE} that for all $v \in \calS^\bot$, 
	\begin{align*}
		\eps \langle G(w'), v \rangle = 
        \langle G(w'), (w'+ \eps v) - w' \rangle &\ge 0\,, \\
		\quad
		- \eps \langle G(w'), v \rangle = 
        \langle G(w'), (w' - \eps v) - w' \rangle &\ge 0\,. 
	\end{align*}
	As $\eps > 0$, we obtain  
	$\langle G(w'),  v \rangle = 0$ for all 
    $v \in \calS^\bot$, 
    and thus 
	$G(w') \in \calS$.
\end{proof}

\medskip
\subsubsection{Polymatrix Games and Linear Operators}
\label{app:prelims:polymatrix}

For the normal-form game setting introduced 
in Section~\ref{sec:prelims}, 
we have a linear characterization of $G$ 
when the underlying game is a
\textit{polymatrix} game
(e.g., \cite{daskalakis2009network}).

\smallskip
\begin{restatable}
[Linear operators and polymatrix games]
{prop}{proplinearpolymatrix}
\label{prop:linear-polymatrix}
Consider a polymatrix game and let $G: \calW \to\R^D$ 
be the corresponding operator defined in~\eqref{eq:G-def}. 
Then, there exists a matrix $P \in \R^{D \times D}$ such that $G(w) = Pw$ for every $w \in \calW.$
\end{restatable}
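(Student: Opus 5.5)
The proof is by direct computation from the definition of a polymatrix game. Recall that in a polymatrix game, each pair of players $i \neq j$ is associated with a matrix $A_{ij} \in \R^{d_i \times d_j}$. Player $i$'s loss under a pure profile $a$ is the sum of pairwise interactions, $\ell_i(a) = \sum_{j \neq i} A_{ij}(a_i, a_j)$; a player-specific term depending only on $a_i$ can also be absorbed. We then extend $\ell_i$ to mixed strategies $w \in \calW$ via $\ell_i(w) = \E_{a \sim w}[\ell_i(a)]$, as in Section~\ref{sec:prelims}. The plan is to compute the block $G_i(w) = g_i(w_{-i})$ explicitly and observe that it is linear in $w$.

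First, fix player $i$ and action $\alpha_k \in \calA_i$. By the definition of $g_i$,
\[
g_i(w_{-i})(k) = \E_{a_{-i} \sim w_{-i}}\Big[\textstyle\sum_{j \neq i} A_{ij}(\alpha_k, a_j)\Big] = \textstyle\sum_{j \neq i} \sum_{l \in [d_j]} A_{ij}(k,l)\, w_j(l).
\]
This uses linearity of expectation and the fact that each summand depends only on the marginal $a_j \sim w_j$, since $w_{-i}$ is a product distribution. In vector form this reads $g_i(w_{-i}) = \sum_{j \neq i} A_{ij} w_j$.

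Second, we assemble the block matrix $P \in \R^{D \times D}$ whose $(i,j)$ block is $A_{ij}$ for $j \neq i$ and whose diagonal blocks $(i,i)$ are zero. Then $(Pw)_i = \sum_{j \neq i} A_{ij} w_j = G_i(w)$ for every $i$, so $G(w) = Pw$ for all $w \in \calW$. If the game includes an own-action term $c_i(a_i)$, it contributes the constant vector $c_i$ to $G_i$. Since $\langle \1_{d_j}, w_j\rangle = 1$ on $\calW$, this constant can be written as $c_i \1_{d_j}^\top w_j$ for any fixed $j$ and folded into $P$, so linearity on $\calW$ is preserved.

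There is no real obstacle here. The only point requiring care is stating precisely which polymatrix convention is being used, and noting that the product structure of mixed profiles is what makes the expectation factor into pairwise linear terms.
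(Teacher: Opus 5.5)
Your proposal is correct and follows essentially the same route as the paper. Both identify $G_i(w)=\sum_{j\neq i}A^{i,j}w_j$ and assemble the block matrix $P$ with zero diagonal blocks. The paper gets this block by differentiating the mixed-strategy loss $\ell_i(w)=\sum_{j\neq i}\langle w_i,A^{i,j}w_j\rangle$, while you compute the expectation over pure profiles. Your remark on own-action terms, folded into $P$ using that each $w_j$ sums to one on $\calW$, is a harmless generalization beyond the paper's convention.
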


\begin{proof}
By definition of a polymatrix game,
for every $(i,j)\in[N]\times[N]$ there exists a matrix
$A^{i,j}\in\R^{d_i\times d_j}$, with $A^{i,i}=0$, 
such that the loss $\ell_i: \calW \to \R$ of player $i$ is given by
\begin{equation}
\textstyle
\ell_i(w)
=
\sum_{j=1, j \neq i}^N
\langle w_i, A^{i,j}w_j\rangle
\;\;
\text{for all $w\in\calW$}\,.
\end{equation}
Hence, by the definition of $G$ in~\eqref{eq:G-def}, 
$G_i(w) =  \nabla_{w_i} \ell_i(w) = \sum_{j\neq i}^N A^{i, j} w_j$
for all $i \in [N]$.
Now let $P\in\R^{D\times D}$ be the block matrix 
whose $(i,j)$-block is $P_{i,j} = A^{i,j}$. 
Since $A^{i,i} = 0$, the $i$'th
block of $Pw$ is $(Pw)_i = \sum_{j\neq i}^N A^{i, j} w_j = G_i(w)$.
Therefore, $G(w) = Pw$ for every $w \in \calW.$
\end{proof}

\smallskip
\subsection{Preliminaries on Harmonic Games}
\label{app:prelims:harmonic}

\smallskip
In this section, we give several further
preliminaries on harmonic games. 

\subsubsection{Details on the Variational 
Characterization of Harmonic Games}
\label{app:prelims:harmonic-variational}

As introduced in Section~\ref{sec:prelims:harmonic},
harmonic games from Definition~\ref{def:harmonic-finite}
admit the following
variational characterization
in terms of the operator $G$,
a result due to~\cite{legacci2024no} that we restate below.

\smallskip
\propharmonicmixed*

Recall from~\eqref{eq:wmeq} 
that for a weights matrix $M$
satisfying Proposition~\ref{prop:harmonic-mixed},
\begin{equation}
	\WMEQ_{M}(G) 
	= 
	\{w' \in \calW \,:\,
	\langle MG(w), w- w' \rangle = 0
	\;\text{for all $w \in \calW$}
	\}\;.
	\label{eq:wmeq-app}
\end{equation}
Moreover, we further have 
by the proposition that 
$\WMEQ_M(G) \cap \relint(\calW) \neq \emptyset$.
Note that $\WMEQ_M(G)$
can be viewed as the solution set to
an $M$-weighted \emph{Minty variational equality}.
Importantly, this implies that 
(1) the set $\WMEQ_M(G)$ is convex,
and (2) $\WMEQ_M(G) \subseteq \NE(G)$. 
We prove these properties in what follows:

\smallskip
\begin{restatable}
    [Convexity of WMEQ set]
    {lem}{lemconvexitywmeq}
    \label{lem:wmeq-convex}
    Let $G: \calW \to \R^D$
    satisfy 
    $\WMEQ_M(G) \neq \emptyset$
    for a weights matrix $M$. 
    Then, the set $\WMEQ_M(G)$ defined in \eqref{eq:wmeq-app} is convex.
\end{restatable}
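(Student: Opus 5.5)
The plan is to show directly that $\WMEQ_M(G)$ is an intersection of sets that are each convex, using only that the defining condition in~\eqref{eq:wmeq-app} is affine in the candidate point $w'$. Concretely, for each fixed $w \in \calW$ I will set
\begin{equation*}
	A_w := \{ w' \in \calW : \langle M G(w), w - w' \rangle = 0 \}\,.
\end{equation*}
By definition, $\WMEQ_M(G) = \bigcap_{w \in \calW} A_w$. Since an arbitrary intersection of convex sets is convex (and the empty set is trivially convex, although the hypothesis $\WMEQ_M(G) \neq \emptyset$ rules that case out anyway), it will suffice to prove that each $A_w$ is convex.

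For a fixed $w$, the vector $c := M G(w) \in \R^D$ and the scalar $b := \langle c, w \rangle$ do not depend on $w'$. The condition $\langle M G(w), w - w' \rangle = 0$ is therefore the same as $\langle c, w' \rangle = b$. So $A_w$ is the intersection of the convex set $\calW$ (a product of simplices) with an affine hyperplane, or with all of $\R^D$ if $c = 0$, or with the empty set if $c = 0 \neq b$. In every case, $A_w$ is an intersection of convex sets and hence convex.

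As an alternative that avoids the bookkeeping, I can verify convexity directly. Take $w', w'' \in \WMEQ_M(G)$ and $\lambda \in [0,1]$, and put $\bar w := \lambda w' + (1-\lambda) w''$. Then $\bar w \in \calW$ because $\calW$ is convex. For every $w \in \calW$, linearity of the inner product in its second argument gives
\begin{equation*}
\langle M G(w), w - \bar w \rangle = \lambda \langle M G(w), w - w' \rangle + (1-\lambda)\langle M G(w), w - w'' \rangle = 0\,,
\end{equation*}
so $\bar w \in \WMEQ_M(G)$.

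I do not expect any step to be a real obstacle. The one point that needs care is remembering to check $\bar w \in \calW$, which follows from convexity of the product of simplices. Note that no property of $G$ beyond being well defined on $\calW$ is used: the operator is evaluated only at $w$, never at $w'$, which is exactly why the Minty-type solution set is convex even when $G$ is nonlinear.
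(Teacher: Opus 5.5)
Your proposal is correct: the direct verification in your third paragraph is exactly the paper's proof. The paper takes $w_\lambda = \lambda w' + (1-\lambda)w''$, notes $w_\lambda \in \calW$ by convexity, and splits the inner product by linearity. Your intersection-of-hyperplanes version rests on the same fact, that the defining condition is affine in $w'$. It also shows that $\WMEQ_M(G)$ is closed, hence compact, because each $A_w$ is the intersection of the closed set $\calW$ with a closed hyperplane.
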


\begin{proof}
    Let $w', w'' \in \WMEQ_M(G)$.
    For $\lambda \in [0, 1]$,
    let $w_{\lambda} := \lambda w' + (1-\lambda) w''$.
    To prove convexity of $\WMEQ_M(G)$, 
    we show that $w_{\lambda} \in \WMEQ_M(G)$ for 
    any $\lambda \in [0, 1]$.  
    
    By convexity of $\calW$,
    we have 
    $w_{\lambda} \in \calW$ for all $\lambda \in [0, 1]$.
    Then for any $w \in \calW$, observe that
    \begin{align*}
        \langle M G(w), w-w_{\lambda} \rangle
        = 
        \lambda \langle M G(w), w-w' \rangle
        + 
        (1-\lambda) \langle MG(w), w-w''\rangle
        = 0\,, 
    \end{align*}
    where the last equality follows from the fact 
    that $w', w'' \in \WMEQ_M(G)$. 
    Thus $w_{\lambda} \in\WMEQ_M(G)$.
\end{proof}

\medskip
\begin{restatable}
    [WMEQ points are NE]
    {lem}{lemwmeqne}
    \label{lem:wmeq-ne}
    Let $G: \calW \to \R^D$
    satisfy 
    $\WMEQ_M(G) \neq \emptyset$
    for a weights matrix $M$. 
    Then $\WMEQ_M(G) \subseteq \NE(G)$,
    and $G(w^\star) \in \calS$ 
    for
    every $w^\star \in \WMEQ_M(G)$.
\end{restatable}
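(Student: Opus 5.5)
The plan is to show that every $w^\star \in \WMEQ_M(G)$ satisfies $\langle G(w^\star), w - w^\star\rangle = 0$ for all $w \in \calW$. This gives membership in $\NE(G)$ via~\eqref{eq:NE} directly. Since every $w - w^\star$ lies in $\calS^\bot$ and these differences span $\calS^\bot$ by Lemma~\ref{lem:normal-diff}, it also gives $G(w^\star) \in \calS$.

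The key step is to evaluate the weighted Minty equality at perturbed points near $w^\star$ and differentiate. Fix $w \in \calW$ and set $w_\lambda := w^\star + \lambda(w - w^\star) \in \calW$ for $\lambda \in (0,1]$. Since $w^\star \in \WMEQ_M(G)$,
\begin{equation*}
0 = \langle M G(w_\lambda), w_\lambda - w^\star\rangle = \lambda \langle M G(w_\lambda), w - w^\star\rangle .
\end{equation*}
Dividing by $\lambda > 0$ gives $\langle M G(w_\lambda), w - w^\star\rangle = 0$. Letting $\lambda \to 0^+$ and using continuity of $G$ (Assumption~\ref{ass:operator}) yields $\langle M G(w^\star), w - w^\star\rangle = 0$ for all $w \in \calW$.

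Next, remove the weights blockwise. Take $w$ that agrees with $w^\star$ in every block except block $i$, with $w_i \in \calW_i$ arbitrary. Because $M$ is block-diagonal with scalar blocks $m_i I_{d_i}$, the identity above reduces to
\begin{equation*}
m_i \langle G_i(w^\star), w_i - w^\star_i\rangle = 0 .
\end{equation*}
Since $m_i > 0$, this gives $\langle G_i(w^\star), w_i - w_i^\star\rangle = 0$ for every $i$ and every $w_i \in \calW_i$. Summing over $i$ for a general $w \in \calW$ then gives $\langle G(w^\star), w - w^\star\rangle = 0$. This yields both claims: $w^\star$ satisfies~\eqref{eq:NE} with equality, and $G(w^\star) \perp \calS^\bot$, i.e.\ $G(w^\star) \in \calS$.

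The only delicate point is the limiting argument. It needs $G$ to be continuous at a possibly boundary point $w^\star$, which holds because $G$ is continuous (indeed Lipschitz) on all of $\calW$. The blockwise removal of $M$ is immediate from its scalar-block structure. Notably, no interiority of $w^\star$ is needed, which is why the conclusion $G(w^\star) \in \calS$ holds even for boundary WMEQ points, unlike Proposition~\ref{prop:ne-vs-normal}(ii).
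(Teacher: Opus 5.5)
Your proof is correct, but it takes a different route from the paper's.

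\textbf{The paper's route.} The paper never passes to a limit. It evaluates the WMEQ identity directly at the unilateral deviation $w=(w_i,w^\star_{-i})$. There $w-w^\star$ vanishes outside block $i$, so the identity reads $m_i\langle G_i(w),w_i-w_i^\star\rangle=0$. In a normal-form game $G_i(w)=g_i(w_{-i})$ depends only on the opponents' strategies, so $G_i(w)=G_i(w^\star)$ exactly. This gives the blockwise Stampacchia equality in one line, and summing over $i$ finishes.

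\textbf{Your route.} You first convert the Minty-type equality into $\langle MG(w^\star),w-w^\star\rangle=0$ using the classical Minty-lemma argument: shrink along the segment $w_\lambda$, divide by $\lambda$, and use continuity of $G$. Only then do you strip off the block-scalar weights, using the same unilateral deviations.

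\textbf{What each buys.} The paper's argument is purely algebraic and needs no continuity, but it relies essentially on $G_i$ not depending on $w_i$. Yours needs only continuity of $G$ (automatic here, since $G$ is multilinear) together with the product structure of the strategy space and the block-scalar form of $M$. It therefore also covers the broader setting of Remark~\ref{rem:generality-variational}, where losses are merely convex in the player's own coordinate and $G_i$ may depend on $w_i$. In that setting the paper's step $G_i(w)=G_i(w^\star)$ is no longer available.
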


\begin{proof}
    Let $w^\star \in \WMEQ_M(G)$.
    For each player $i \in [N]$
    and any $w_i \in \calW_i$,
    we then apply~\eqref{eq:wmeq-app}
    at the point $w = (w_i,  w^\star_{-i})$.
    By the block diagonal structure of 
    $M = \Diag(m_1 I_{d_1}, \dots, m_N I_{d_N})$,
    for $m_i > 0$, 
    and since $G_i(w) = G_i(w^\star)$
    by definition of $G$ in~\eqref{eq:G-def},
    this yields at the $i$-th component:
    \begin{equation}
        \textstyle
        0 = 
        m_i 
        \langle G_i(w), w_i- w^\star_i \rangle
        = 
        m_i 
        \langle G_i(w^\star),w_i - w^\star_i
        \rangle
        = 0\;.
    \end{equation}
    Dividing by $m_i> 0$ 
    then gives 
    $\langle G_i(w^\star), w_i -w^\star_i \rangle= 0$
    for all $w_i \in \calW_i$.
    As this latter equality holds for all players
    simultaneously, summing over
    $i \in [N]$ then yields
    \begin{equation}
        \langle G(w^\star), w-w^\star \rangle= 0
        \;\;\;\text{for all $w \in \calW$}\;.
        \label{eq:wmeqne01}
    \end{equation}
    Thus $w^\star$ satisfies~\eqref{eq:NE}
    with equality and so 
    $w^\star \in \NE(G)$.

    Moreover, as 
    $\calS^\bot = 
    \Span\{w-w^\star: w \in \calW\}$
    by Lemma~\ref{lem:normal-diff}, 
    expression~\eqref{eq:wmeqne01}
    also immediately implies that
    $G(w^\star) \in \calS$,
    which concludes the proof.
\end{proof}

\subsubsection{Harmonic games
with linear operators}

When the operator $G$ 
satisfying Proposition~\ref{prop:harmonic-mixed}
is~\textit{linear},
which is the case for 
polymatrix games (Proposition~\ref{prop:linear-polymatrix}), the sets $\WMEQ_M(G)$
and $\NE(G)$ coincide. 

\smallskip
\begin{restatable}
    [NE and WMEQ for linear G]
    {prop}{propwmeqlinear}
    \label{prop:ne-wmeq-linear}
    Let $G: \calW \to \R^D$
    satisfy 
    $\WMEQ_M(G) \cap \relint(\calW) 
    \neq \emptyset$
    for a weights matrix $M$.
    Suppose
    $G(w) = Pw$ for some 
    $P \in \R^{D\times D}$.
    Then $\WMEQ_M(G) = \NE(G)$.
\end{restatable}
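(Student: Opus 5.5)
Since Lemma~\ref{lem:wmeq-ne} already gives $\WMEQ_M(G) \subseteq \NE(G)$, it remains to show $\NE(G) \subseteq \WMEQ_M(G)$. Fix an interior point $w' \in \WMEQ_M(G) \cap \relint(\calW)$ and let $w^\star \in \NE(G)$ be arbitrary. The goal is to show that $\langle MPw, w - w^\star\rangle = 0$ for all $w \in \calW$. We will use linearity of $G$ and the Minty equality at $w'$ to pin down the structure of $MP$ on the tangent space $\calS^\bot$.

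\textbf{Structure of $MP$.} From $\langle MPw, w - w'\rangle = 0$ for all $w \in \calW$, write $w = w' + v$ with $v \in \calS^\bot$ small. Since $w'$ is interior, such $w$ ranges over a relatively open neighborhood of $w'$ in $\aff(\calW)$. Expanding in $v$ gives
\[
\langle MPw', v\rangle + \langle MPv, v\rangle = 0 .
\]
The constant term vanishes at $v=0$. Replacing $v$ by $\eps v$ and comparing the coefficients of $\eps$ and $\eps^2$ yields
\[
\langle MPw', v\rangle = 0 \quad\text{and}\quad \langle MPv, v\rangle = 0 \qquad \text{for all } v \in \calS^\bot .
\]
Thus $MPw' \in \calS$ (equivalently, $Pw' \in \calS$, since $M$ acts as a scalar on each block and so preserves $\calS$). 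Moreover, the compression $\Pi_{\calS^\bot} MP \,\Pi_{\calS^\bot}$ is skew-symmetric on $\calS^\bot$.

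\textbf{Handling an arbitrary NE.} Let $w^\star \in \NE(G)$ and set $d = w^\star - w'$ and $u = w - w'$, both in $\calS^\bot$. Then
\[
\langle MPw, w - w^\star\rangle = \langle MP(w' + u), u - d\rangle = -\langle MPw', d\rangle + \langle MPu, u\rangle - \langle MPu, d\rangle = -\langle MPu, d\rangle .
\]
Skew-symmetry gives $-\langle MPu, d\rangle = \langle MPd, u\rangle$. So it suffices to show that $\langle MPd, u\rangle = 0$ for all $u \in \calS^\bot$, i.e.\ that $\Pi_{\calS^\bot} MPd = 0$.

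This is the main step, where the NE property of $w^\star$ enters. Since $Pw' \in \calS$, we have $Pw^\star = Pw' + Pd$, and $\Pi_{\calS^\bot}(Pw^\star) = \Pi_{\calS^\bot}(Pd)$. Because $M$ is blockwise scalar, it commutes with $\Pi_{\calS^\bot}$, so it is enough to show $\Pi_{\calS^\bot}(Pd) = 0$.

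The NE condition $\langle Pw^\star, w - w^\star\rangle \ge 0$ for all $w \in \calW$ does not directly give this when $w^\star$ lies on the boundary. The idea is to test it along the segment through $w^\star$ and $w'$. The point $w_s = w' - s d$ lies in $\calW$ for small $s > 0$ (interiority of $w'$), and $w_s - w^\star = -(1+s)d$. The NE inequality then gives $\langle Pw^\star, d\rangle \le 0$. Moreover,
\[
\langle Pw^\star, d\rangle = \langle Pw', d\rangle + \langle Pd, d\rangle .
\]
The first term is $0$ because $Pw' \in \calS$. The second term is $0$ by applying the skew-symmetry (weighted by $M$) blockwise. I expect the weighting to need care here. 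Concretely, I would apply the NE inequality to the $M$-weighted combination $\langle MPw^\star, w - w^\star\rangle = \sum_i m_i \langle G_i(w^\star), w_i - w^\star_i\rangle \ge 0$, which is valid because each player's block satisfies the NE inequality separately.

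Using this with $w = w' + u$ gives
\[
\langle MPw^\star, u - d\rangle \ge 0 \quad \text{for all } u \in \calS^\bot \text{ with } w' + u \in \calW .
\]
Now $\langle MPw^\star, d\rangle = \langle MPw', d\rangle + \langle MPd, d\rangle = 0$. Hence $\langle MPw^\star, u\rangle \ge 0$ for all $u$ in a neighborhood of $0$ in $\calS^\bot$. Since this neighborhood contains $-u$ whenever it contains $u$, we get $\langle MPw^\star, u\rangle = 0$, so $\Pi_{\calS^\bot} MPw^\star = 0$. This equals $\Pi_{\calS^\bot} MPd$, since $MPw' \in \calS$, which is exactly what was needed, and the proof is complete.
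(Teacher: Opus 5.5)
Your proof is correct and follows essentially the same route as the paper. The key step $MG(w^\star)\in\calS$ comes from the $M$-weighted NE inequality, the identity $\langle MG(w^\star), w^\star - w'\rangle = 0$, and interiority of $w'$, and your skew-symmetry of $MP$ on $\calS^\bot$ is just the polarization identity that the paper extracts from the WMEQ condition at the midpoint $\tfrac12(w+w^\star)$. The one blemish is the abandoned unweighted paragraph: its justification of $\langle Pd,d\rangle=0$ is invalid, because skew-symmetry only kills the weighted form $\langle MPv,v\rangle$ on $\calS^\bot$, and $\langle Pv,v\rangle$ need not vanish (cf.\ the non-monotone Example~1 in Section~\ref{app:harmonic-ex1}). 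Simply delete that paragraph, since your final weighted argument does not use it.
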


\begin{proof}
    As $\WMEQ_M(G) \subseteq \NE(G)$
    by Lemma~\ref{lem:wmeq-ne},
    it suffices to show that 
    $\NE(G) \subseteq \WMEQ_M(G)$
    under the assumptions of the proposition.
    For this, let $w^\star \in \NE(G)$,
    and let $\wt w \in \WMEQ_M(G) \cap 
    \relint(\calW)$, which exists by assumption.  
    To show that $w^{\star} \in \WMEQ_M(G)$, 
    we prove in what follows that for any $w \in \calW$, 
    $\langle MG(w), w - w^{\star} \rangle = 0$, 
    as per the definition of the set $\WMEQ_M(G)$ in \eqref{eq:wmeq-app}.

    First, fix any $w \in \calW$. Then observe that we can write
    \begin{equation}
        \langle MG(w), w - w^{\star} \rangle =
        \langle MG(w), w - \wt w \rangle 
        + \langle MG(w), \wt w - w^{\star} \rangle
        = 
        \langle MG(w) , \wt w - w^\star \rangle \;.
        \label{eq:proof-master-decomp}
    \end{equation}
    Here, the final equality follows from the 
    fact that $\wt w \in \WMEQ_M(G)$, and so
    $\langle MG(w), w - \wt w \rangle = 0$.
    The remainder of the proof is thus
    devoted to establishing that 
    $\langle MG(w), \wt w - w^\star \rangle = 0$.

    For this, observe that since $\wt w \in \WMEQ_M(G)$, 
    then evaluating the WMEQ condition in \eqref{eq:wmeq-app} 
    at $w^{\star}$ gives
    \begin{equation}
     \label{eq:zero-eq1}
        \langle MG(w^{\star}), w^{\star} - \wt w \rangle = 0\,.
    \end{equation}
    Moreover, since $w^{\star} \in \NE(G)$, the 
    playerwise Nash inequalities and the positivity of the weights 
    $(m_i)_{i\in[N]}$ (defining the matrix $M$) imply: 
    \begin{equation}
    \label{eq:ineq1}
        \langle MG(w^{\star}), w -  w^{\star} \rangle \geq 0\,,\quad 
        \forall w \in \calW\,.
    \end{equation}
    Therefore, combining \eqref{eq:zero-eq1} and \eqref{eq:ineq1} yields: 
    \begin{equation}
    \label{eq:ineq2}
        \langle MG(w^{\star}), w -  \wt w \rangle \geq 0\,,
        \quad \forall w \in \calW\,.
    \end{equation}
    Now, since $\wt w \in \relint(\calW)$, 
    we have that for any $h \in \calS^\bot$, 
    there exists a sufficiently small $\eps > 0$ such that 
    $\wt w + \eps h \in \calW$ and $\wt w - \eps h \in \calW$. 
    Substituting these two points in \eqref{eq:ineq2} 
    gives two inequalities: $\langle MG(w^\star), h \rangle \geq 0$
    and $\langle MG(w^\star), h \rangle \leq 0$. 
    Hence, $\langle MG(w^\star), h \rangle = 0$ for every $h \in \calS^\bot$.
    This implies that $MG(w^\star) \in(\calS^\bot)^\bot=\calS$. 

    Now let $v = \frac 12 (w + w^{\star})$,
    and observe that $v \in \calW$ by convexity of $\calW$.
    By linearity of $G$, we also have
    $G(v) = \frac{1}{2} (G(w) + G(w^\star))$. 
    Moreover, since $\wt w \in \WMEQ_M(G)$, it follows 
    again from the definition of $\WMEQ_M(G)$ in \eqref{eq:wmeq-app} 
    used at $v \in \calW$ that 
    $\langle MG(v), v - \wt w \rangle = 0.$ 
    Multiplying this identity by 4, substituting
    in the definition of $v$, and expanding then yields
    \begin{equation}
        0 = \langle MG(w), w - \wt w\rangle + \langle MG(w), w^{\star} - \wt w\rangle + \langle MG(w^{\star}), w - \wt w\rangle + \langle MG(w^{\star}), w^{\star} - \wt w\rangle\,.
        \label{eq:last-decomp}
    \end{equation}
    Since $\wt w \in \WMEQ_M(G)$ and  $w, w^{\star} \in \calW$, 
    the first and fourth terms in~\eqref{eq:last-decomp} vanish.
    Moreover, the third term also vanishes
    due to the previously-established 
    fact that $MG(w^\star) \in \calS$ 
    and since $w-\wt w \in \calS^\bot$ 
    (which follows from Lemma~\ref{lem:normal-diff}). 
    We thus conclude from~\eqref{eq:last-decomp} that
    $\langle MG(w), w^\star - \wt w \rangle = 0$.
    Substituting this identity back
    into~\eqref{eq:proof-master-decomp}
    then yields 
    $\langle MG(w), w - w^{\star} \rangle = 0$,
    which concludes the proof. 
    \end{proof}

\medskip
\paragraph{Zero-sum games with an
interior equilibrium.} 
When $G$ is the loss operator
associated with a two-player
zero-sum game with payoff matrix 
$A \in \mathbb{R}^{d_1 \times d_2}$, 
we further show that
if $\NE(G) \cap \relint(\calW)$
is nonempty 
(meaning there exists an interior
equilibrium)
then $G$ is harmonic
with identity weights matrix $I_D$.

\begin{restatable}[Zero-sum
games with interior NE]
    {prop}{propzsginterior}
    \label{prop:zsg-interior}
    Let $G: \calW \to \R^D$
    be the loss operator
    associated to a zero-sum game
    with payoff matrix $A \in \R^{d_1 \times d_2}$.
    Suppose that 
    $\NE(G) \cap \relint(\calW) \neq \emptyset$.
    Then $G$ satisfies 
    Proposition~\ref{prop:harmonic-mixed}
    with weights matrix $M = I_D$.
\end{restatable}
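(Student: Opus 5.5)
We have to show: a two-player zero-sum game with an interior NE $w^\star=(p^\star,q^\star)$ satisfies $\langle G(w), w - w^\star\rangle = 0$ for all $w \in \calW$. Then $w^\star$ itself serves as the interior center $w'$ in Proposition~\ref{prop:harmonic-mixed}, with $M = I_D$ (so $m_1=m_2=1$).

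\textbf{Step 1: write out $G$.} Take losses $\ell_1(w)=\langle w_1, A w_2\rangle$ and $\ell_2(w) = -\langle w_1, A w_2\rangle$. Then
\[
G(w) = (A w_2,\, -A^\top w_1),
\]
which is linear. Write $G(w)=Pw$ with $P$ the block matrix having $P_{12}=A$, $P_{21}=-A^\top$, and zero diagonal blocks. This $P$ is skew-symmetric, so $\langle Pv, v\rangle = 0$ for every $v \in \R^D$.

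\textbf{Step 2: use interiority.} Since $w^\star \in \NE(G)\cap\relint(\calW)$, Proposition~\ref{prop:ne-vs-normal}(ii) gives $G(w^\star)\in\calS$. By Lemma~\ref{lem:normal-diff}, $w - w^\star \in \calS^\bot$ for all $w \in \calW$. Hence
\[
\langle G(w^\star), w - w^\star\rangle = 0 \quad \text{for all } w \in \calW.
\]

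\textbf{Step 3: combine.} For any $w \in \calW$, linearity of $G$ gives
\[
\langle G(w), w - w^\star\rangle = \langle P(w-w^\star), w-w^\star\rangle + \langle G(w^\star), w-w^\star\rangle .
\]
The first term vanishes by skew-symmetry and the second by Step 2. This is exactly~\eqref{eq:harmonic-wmeq} with $M=I_D$ and interior point $w^\star$, which proves the claim.

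One detail to check is the paper's sign convention for the loss of player 2, i.e. that player 2 minimizes $-\langle w_1, A w_2\rangle$ (equivalently, maximizes the payoff). Any consistent zero-sum convention makes $P$ skew-symmetric, so the argument does not depend on this choice.
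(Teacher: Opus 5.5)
Your proposal is correct and follows essentially the same route as the paper: write $G(w)=Kw$ with the skew-symmetric block matrix $K$, use Proposition~\ref{prop:ne-vs-normal}(ii) at the interior equilibrium $w^\star$ to get $\langle G(w^\star), w-w^\star\rangle = 0$, and then split $\langle Kw, w-w^\star\rangle = \langle K(w-w^\star), w-w^\star\rangle + \langle Kw^\star, w-w^\star\rangle$. The first term vanishes by skew-symmetry and the second by the equilibrium condition, so $w^\star$ is the required interior center with $M=I_D$.
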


This result was shown 
by~\cite{legacci2024no}
(Proposition A.10), 
but we give a separate proof here:

\smallskip
\begin{proof}
    For a zero-sum game with payoff
    matrix $A$, 
    the operator $G$ is the
    linear operator $G(w) = Kw$ 
    for the skew-symmetric
    matrix $K = -K^\top \in \R^{D \times D}$
    defined as follows: 
    \begin{equation*}
        \textstyle
        K = \begin{pmatrix}
            0 & A \\
            -A^\top & 0 
        \end{pmatrix} \;.
    \end{equation*}
    See, e.g.,~\cite{wibisono2022alternating, lazarsfeld2026}.
    Moreover, recall by skew-symmetry
    that $\langle x, K x \rangle =
    \langle Kx, x \rangle = 0$
    for any $x\in\R^D$.
    Let $w^\star \in \NE(G) \cap \relint(\calW)$. 
    Then, by Proposition~\ref{prop:ne-vs-normal}, 
    we have for all $w \in \calW,$ 
    \begin{align*}
        0 = 
        \langle G(w^\star), w-w^\star \rangle
        &= 
        \langle Kw^\star, w-w^\star \rangle\\
        &=
        \langle 
        Kw, w-w^\star
        \rangle
        - 
        \langle K(w-w^\star), w-w^\star \rangle  \\
        &= 
        \langle Kw, w-w^\star \rangle \\
        &= 
        \langle G(w), w-w^\star \rangle \;,
    \end{align*}
    where the fourth equality holds
    due to skew-symmetry of $K=-K^\top$ which gives $\langle K(w-w^\star), w-w^\star \rangle = 0$.
    Thus, $w^\star \in \WMEQ_M(G)$ 
    with weights matrix $M = I_D$,
    which by Proposition~\ref{prop:harmonic-mixed}
    shows that the underlying game
    is harmonic. 
\end{proof}

Note that not every
zero-sum game is harmonic
(see, e.g.,~\cite{candogan2011flows, abdou2022decomposition, legacci2024no}), 
and Proposition~\ref{prop:zsg-interior}
requires the assumption
that the game has an interior equilibrium.

\subsection{Relationship Between Gap Function and Bregman Divergence}

\smallskip
Here, we establish 
the relationships between
the Bregman divergence and
the $\Gap$ function,
reported in 
Remark~\ref{rem:gap-vs-bregman}:

\smallskip
\begin{restatable}[Gap vs. Bregman divergence from WMEQ point]
	{prop}{propgapvsbregman}
	\label{prop:gap-vs-bregman}
    Let $G: \calW \to \R^D$
    satisfy Assumption~\ref{ass:operator} and suppose that $\WMEQ_M(G) \neq \emptyset$
    for some weights matrix $M$. 
    Let $Q: \calW \to \R$
    be a regularizer satisfying
    Assumption~\ref{ass:reg-sc}.
    	Define $\rho := \max_{w', w'' \in \calW} \|w' - w''\|$.
	Then for every $w^\star \in \WMEQ_M(G)$ and
    every $w \in \relint(\calW)$,
	\begin{equation*}
		\textstyle
		\frac{1}{2(L\rho)^2} \cdot \Gap(w)^2
		\;\le\;
		\frac{1}{2} \|w - w^\star \|^2
		\;\le\; 
		D_Q(w^\star, w) \;.
	\end{equation*}
\end{restatable}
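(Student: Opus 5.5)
The plan is to prove the two inequalities separately. The right-hand inequality $\frac12\|w-w^\star\|^2 \le \D_Q(w^\star,w)$ is exactly Assumption~\ref{ass:reg-sc}, applied with $w\in\relint(\calW)$ and $w'=w^\star\in\calW$. Nothing else is needed for it.

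For the left-hand inequality it suffices to show $\Gap(w)\le L\rho\,\|w-w^\star\|$; squaring and dividing by $2(L\rho)^2$ then gives the claim. Fix any $\wt w\in\calW$ and split
\begin{equation*}
\langle G(w), w-\wt w\rangle
= \langle G(w)-G(w^\star), w-\wt w\rangle + \langle G(w^\star), w-\wt w\rangle .
\end{equation*}
The second term vanishes. By Lemma~\ref{lem:wmeq-ne} we have $G(w^\star)\in\calS$, and by Lemma~\ref{lem:normal-diff} we have $w-\wt w\in\calS^\bot$.

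For the first term, I would use the duality between the product norms: blockwise, $\langle a_i,b_i\rangle\le\|a_i\|_{i,*}\|b_i\|_i$, and Cauchy--Schwarz over the blocks then gives $\langle a,b\rangle\le\|a\|_*\|b\|$. Combining this with Assumption~\ref{ass:operator}(i) yields
\begin{equation*}
\langle G(w)-G(w^\star), w-\wt w\rangle \le L\|w-w^\star\|\,\|w-\wt w\| \le L\rho\,\|w-w^\star\| .
\end{equation*}
Taking the maximum over $\wt w$ gives $\Gap(w)\le L\rho\|w-w^\star\|$. If the gap is negative the bound is trivial, but in fact $\Gap\ge0$ by choosing $\wt w=w$.

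I do not expect a real obstacle here. The only point needing care is that the linear term $\langle G(w^\star),w-\wt w\rangle$ vanishes exactly rather than merely being nonnegative. Only an exact zero (or a nonpositive value) lets us drop it from the upper bound. This is precisely what the WMEQ structure via Lemma~\ref{lem:wmeq-ne} supplies; a general Nash equilibrium would only give the opposite sign.
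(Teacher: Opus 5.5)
Your proposal is correct and follows essentially the same argument as the paper. Like the paper, you kill $\langle G(w^\star), w-\wt w\rangle$ via Lemma~\ref{lem:wmeq-ne} and Lemma~\ref{lem:normal-diff}, bound the remaining term by $L\rho\|w-w^\star\|$ using Lipschitzness and the diameter, then square, and you take the right-hand inequality directly from Assumption~\ref{ass:reg-sc}. Your remarks on the product-norm duality and on $\Gap(w)\ge 0$ (needed before squaring) make explicit small details the paper leaves implicit.
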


\begin{proof} 
Let $w^\star \in \WMEQ_M(G)$. 
For the first inequality, recall by 
Lemma~\ref{lem:wmeq-ne} 
that $w^\star \in \WMEQ_M(G)$ implies
$G(w^\star) \in \calS$.
Thus $\langle G(w^\star), w - w' \rangle = 0$
for all $w, w' \in \calW$. 
Thus for all $w, w' \in \calW$, we have
\begin{align}
	\langle G(w), w - w' \rangle 
	= 
	\langle G(w) - G(w^\star), w - w' \rangle  
	&\le
	\|G(w) - G(w^\star) \|_{*} \cdot  \|w-w'\| 
	\le 
	L \rho \cdot \|w - w^\star\|  \;,
	\label{eq:gvbd-01}
\end{align}
where the final inequality comes from 
the $L$-Lipschitzness of $G$ 
and the definition of $\rho$. 
Then using the definition of 
$\Gap(w) = \max_{w' \in \calW} \langle G(w), w-w'\rangle$, 
the first inequality of the proposition follows by
squaring both sides of~\eqref{eq:gvbd-01} and rearranging.
The second inequality of the proposition
follows directly from 1-strong-convexity
of $Q$ with respect to the norm $\|\cdot \|$.
\end{proof}

\medskip
Finally, recall that for any
$w\in \calW$, if $\Gap(w) \le \eps$, 
then $w$ is an $\eps$-approximate NE
(see, e.g.,~\cite{facchinei2003finite},
Section 1.5.3).

\medskip
\subsection{Examples of Harmonic Games}
\label{app:harmonic:examples}

In this section, we give several
additional examples of harmonic games
that go beyond the canonical
subclass of two-player zero-sum games
with an interior NE (e.g., see Proposition~\ref{prop:zsg-interior}).
In particular, we give two non-trivial examples
of harmonic games that are
(1) multiplayer,
(2) non-constant sum, and 
(3) have multiple equilibria.

\smallskip
\subsubsection{Example 1: 3-player 
    polymatrix game with
    WMEQ continuum}
\label{app:harmonic-ex1}

First, we give a three-player 
polymatrix game example. 
In particular, the plots of Figure~\ref{fig:intro}
use an instantiation of this family.
Note that while the underlying operator $G$
is linear for this example 
(due to Proposition~\ref{prop:linear-polymatrix}),
the underlying game itself is 
non-constant-sum,
has a continuum of WMEQ points,
and the operator is non-monotone.
We give these details here:

\paragraph{Weighted polymatrix setup.}~ 

\noindent
Consider three players with mixed-strategy  
space $\calW=(\Delta_2)^3$;~i.e. each player has $2$ pure strategies.
For any $w_i=(p_i,1-p_i) \in \Delta_2$,
let $r_i=2p_i-1$. Then for positive weights
$m_1,m_2,m_3 > 0$ that are not all equal, 
define the losses
$\ell_1(w)=r_1(m_2r_2-m_3r_3)$,
$\ell_2(w)=r_2(m_3r_3-m_1r_1)$, and
$\ell_3(w)=r_3(m_1r_1-m_2r_2)$.
The joint loss operator $G=(G_i)_{i=1}^3$ is given by
\begin{equation*}
f(r)=
\begin{pmatrix}
m_2r_2-m_3r_3\\
m_3r_3-m_1r_1\\
m_1r_1-m_2r_2
\end{pmatrix} \;,
\quad\text{and}\quad
G_i(w)=
\begin{pmatrix}
f_i(r)\\
-f_i(r)
\end{pmatrix}.
\end{equation*}
Since each $r_i=w_{i,1}-w_{i,2}$, the 
operator $G$ is linear in $w$.

\paragraph{Continuum of WMEQ points.} ~

\noindent
Let $M=\Diag(m_1I_2,m_2I_2,m_3I_2)$ and
$m_{\min}=\min_i m_i$. 
It is straightforward to verify that 
the set $\WMEQ_M(G)$ is then given by
\begin{equation}
\textstyle
\WMEQ_M(G)
=
\{
 w\in\calW:
    r_i= (t/m_i)
    \ \text{for all $i,\;\; |t|\le m_{\min}$}
\} \;.
\end{equation}
Equivalently, in terms of 
the coordinates $p^\star_i \in [0, 1]$, 
each $w^\star = (w^\star_1, w^\star_2, w^\star_3) \in \WMEQ_M(G)$ 
is parameterized
by $p^\star_i=(1+t/m_i)/2$ for $|t|\leq m_{\min}$.
It is straightforward to verify the WMEQ identity for every point
of this segment by substituting into
$\langle MG(w),w-w^\star\rangle$.
Conversely, for any WMEQ point $w^\star$,
the expression $\langle MG(w),w-w^\star\rangle$
must vanish for every $w \in \calW$.
Since this expression is linear in the independently
varying coordinates $r_1, r_2, r_3$, the factor multiplying
each coordinate must be zero, which yields
$m_1 r^\star_1 = m_2 r^\star_2 = m_3 r^\star_3 = t$. 
Thus $r^\star_i = t/m_i$ and $|t| \le m_{\min}$, and so every 
WMEQ point lies on the segment. 
As $G$ is linear and clearly 
$\WMEQ_M(G) \cap \relint(\calW) \neq \emptyset$,
it then follows by Proposition~\ref{prop:ne-wmeq-linear}
that $\NE(G) = \WMEQ_M(G)$.

As an example, the left-subplot in Figure~\ref{fig:example}
plots the continuum of $\WMEQ_M(G)$
for the case
when $m_1 = 3, m_2 = 6, m_3 = 4$.

\paragraph{Nonmonotonicity and nonconstant sum.} ~

\noindent
While the operator $G$ is linear,
we show that $G$ is non-monotone,
and the underlying game is additionally
non-constant-sum. 
For this,  
let $w^0$ be the uniform strategy profile.
Then by definition of $G$ and $\ell_i$, 
we can compute for any other $w \in \calW$
\begin{align}
\langle G(w)-&G(w^0),w-w^0\rangle \nonumber \\
&=\ssum\nolimits_{i=1}^3\ell_i(w)
=(m_2-m_1)r_1r_2
 +(m_1-m_3)r_1r_3
 +(m_3-m_2)r_2r_3.
 \label{eq:e1-sum}
\end{align}
As the weights $\{m_i\}$
are not all equal,
at least one term in~\eqref{eq:e1-sum}
has non-zero coefficient $m_i - m_j$.
Supposing without loss of generality 
that $m_ 2 - m_1 > 0$,
then set $r_3 = 0$, 
and choose $w_1, w_2 \in\Delta_2$
such that $r_1 r_2 < 0$.
Then it follows from~\eqref{eq:e1-sum}
that $\langle G(w)  - G(w^0), w-w^0 \rangle < 0$,
and thus $G$ is non-monotone.
Moreover, at this profile, 
\eqref{eq:e1-sum} also implies that
the total loss $\sum_{i=1}^3 \ell_i(w)$ is strictly negative.
In contrast, when $w = w^0$,
the total loss is zero. Thus, 
the game is also non-constant-sum. 

\subsubsection{Example 2: non-linear operator with
    non-convex NE set.}
\label{app:harmonic-ex2}

\smallskip
Here, we give an example
of a harmonic game that
does not correspond to a linear polymatrix game.
In particular, the following example induces
a non-linear loss operator $G$ that 
has a continuum of WMEQ points,
a nonconvex set of NE points, 
is nonmonotone, corresponds
to a non-constant-sum game, 
and which additionally satisfies
the restricted regularity condition
of Assumption~\ref{ass:restricted-regularity}.
We give more details here:

\begin{figure}[tb!]
    \centering
    \includegraphics[width=0.8\linewidth]{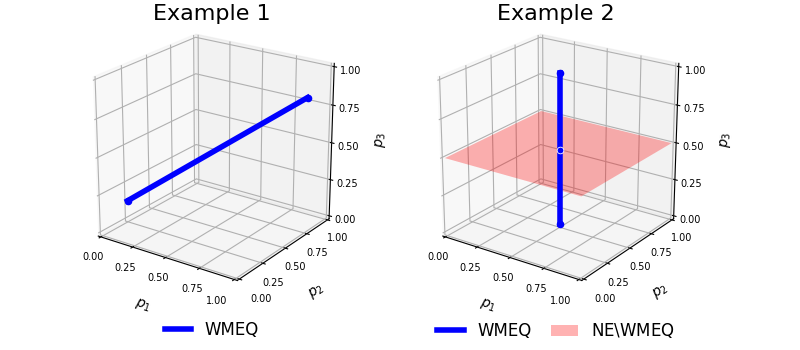}
    \caption{The sets $\WMEQ_M(G)$
    and $\NE(G) \setminus \WMEQ_M(G)$ 
    for the instantiation 
    of Example 1 from Section~\ref{app:harmonic-ex1}
    with $m_1 = 3$, $m_2 = 6$, $m_3 = 4$ (left),
    and Example 2 from Section~\ref{app:harmonic-ex2}.}
    \label{fig:example}
\end{figure}

\paragraph{Game setup.} ~ 

\noindent
Consider three players with 
mixed-strategy space $\calW=(\Delta_2)^3$.
Write $w_i= (w_{i, +}, w_{i, -}) = (p_i,1-p_i)$ and $r_i=2p_i-1$, 
and let $w(r)$ denote the strategy profile with 
centered coordinates $r\in[-1,1]^3$.
Define the losses $\ell_1(w)=r_1r_2r_3$,
$\ell_2(w)=-r_1r_2r_3$, and $\ell_3(w)=(w_{3, +} + w_{3, -}) r_1r_2$.

\paragraph{Nonlinear loss operator.} ~ 

\noindent
Given $\ell_1, \ell_2, \ell_3$, 
the joint loss operator $G$
as in~\eqref{eq:G-def} is given by
\begin{equation*}
    G(w)=
    \big(r_2r_3,\,-r_2r_3,\,-r_1r_3,\,r_1r_3,\,r_1r_2,\,r_1r_2
    \big) \;.
\end{equation*}
The cross-interactions between $r_i, r_j$
mean that $G$ is nonlinear on $\calW$.
For example, since $G_{1,+}(w)=(2p_2-1)(2p_3-1)$, we have 
$\partial^2G_{1,+}/(\partial p_2\,\partial p_3)=4$.
Thus, $G$ is not affine in $p$, and so 
there is no matrix $P$ such that $G(w)=Pw$ for $w \in \calW$.

\paragraph{Continuum of MEQ points.} ~ 

\noindent
Let $M=I_D$, where $D=6$. 
For any centered profiles $r,s$, 
we can compute
\begin{equation}
\langle G(w(r)),w(r)-w(s)\rangle
=
-s_1r_2r_3+s_2r_1r_3.
\label{eq:e2-0}
\end{equation}
Observe 
that~\eqref{eq:e2-0} vanishes for every $r\in[-1,1]^3$ if and only if
$s_1=s_2=0$. 
Hence, 
\begin{equation}
    \WMEQ_{I_D}(G)
    =
    \MEQ(G)
    =
    \{w(0,0,t):t\in[-1,1]\}.
\end{equation}
Equivalently, the MEQ set is the segment $p_1=p_2=1/2$,
$p_3\in[0,1]$. 
Thus, since $\WMEQ_M(G)$ contains
the uniform profile in $\relint(\calW)$
with $M = I_D$, the underlying 
game is harmonic due to variational
characterization of Proposition~\ref{prop:harmonic-mixed}.

\paragraph{Nonconvex Nash equilibrium set.} ~ 

\noindent
It is straightforward to verify that a profile 
corresponds to an NE if and only
if $r_2r_3=r_1r_3=0$.
Thus, including boundary profiles,
we have that 
\begin{align*}
\NE(G)
&=
\MEQ(G)
\cup
\{w(r_1,r_2,0):(r_1,r_2)\in[-1,1]^2\}, \\
\text{and thus}\;\;\;
\NE(G)\setminus\MEQ(G)
&=
\{w(r_1,r_2,0):
(r_1,r_2)\in[-1,1]^2\setminus\{(0,0)\}\}.
\end{align*}
In particular, there is a continuum of interior
but non-MEQ Nash
equilibria. 
Thus, the full set $\NE(G)$ 
is nonconvex: $w(0,0,1/2)$ and
$w(1/2,0,0)$ are both in $\NE(G)$, 
while their midpoint $w(1/4,0,1/4) \notin \NE(G)$.
In particular, the right-subplot in Figure~\ref{fig:example}
plots both the continuum of $\WMEQ_M(G) = \MEQ(G)$
in blue and the continuum of $\NE(G)\setminus \WMEQ_M(G)$ 
in red. The full set $\NE(G)$ is the union
of the two sets, which illustrates that $\NE(G)$
is non-convex.

\smallskip
\paragraph{Nonmonotonicity and nonconstant sum.} ~

\noindent
For $r=(1/2,0,1/2)$ and $s=(1/2,-1/2,0)$, 
we have using a direct calculation that 
\begin{equation*}
    \textstyle
\langle G(w(r))-G(w(s)),w(r)-w(s)\rangle=-\frac18<0.
\end{equation*}
Thus $G$ is not monotone. As
$\sum_{i=1}^3\ell_i(w)=r_1r_2$ varies with $w$, 
the game is also
not constant sum. 

\smallskip
\paragraph{Restricted regularity.} ~

\noindent
When the regularizer
$Q$ and its energy function $H = Q^*$
satisfy Assumptions~\ref{ass:legendre}
and~\ref{ass:lhs}, we
further show that $G$ satisfies
the regularity condition 
from Assumption~\ref{ass:restricted-regularity}.

For this, first consider the
standard (componentwise) 
negative Shannon entropy regularizer
$Q(w)=\sum_{i=1}^3\sum_{a\in\{+,-\}}w_{i,a}\log w_{i,a}$,
and let $H = Q^*$,
which satisfy~Assumptions~\ref{ass:legendre}
and~\ref{ass:lhs}, respectively. 
In reduced dual coordinates $b_i=(z_{i,+}-z_{i,-})/2$, 
its choice map is $r_i=\tanh b_i$, where $\tanh$ is the hyperbolic tangent.
Since player 3's loss block is normal to the simplex, 
the effective dual update space and reachable set are
\begin{equation*}
\calU=\{(u_1,-u_1,u_2,-u_2,0,0):u\in\R^2\},
\qquad
\calR(w_0)=\{w(r_1,r_2,c):(r_1,r_2)\in(-1,1)^2\} \;,
\end{equation*}
where $c=r_{3,0}=\tanh b_{3,0}$.
Identifying $\calU$ with $\mathbb{R}^2$, 
the effective stationarity map is then given by
\begin{equation*}
\pi_0(u)=
\begin{pmatrix}
c\tanh(b_{2,0}+u_2)\\
-c\tanh(b_{1,0}+u_1)
\end{pmatrix}.
\end{equation*}
If $c\neq0$, the unique zero is
$u^\star=(-b_{1,0},-b_{2,0})$.
Then we have 
$
J_{\pi_0}(u^\star)=
\begin{pmatrix}
0&c\\
-c&0
\end{pmatrix}
$
and thus 
$\det J_{\pi_0}(u^\star)=c^2>0$.
If $c=0$, the entire reachable set 
$\calR(w_0) \subset \NE(G)$,
and this case is excluded by Assumption~\ref{ass:restricted-regularity}.
Thus, for initializations 
with $c\neq 0$, Assumption~\ref{ass:restricted-regularity} holds. 

Note also  that the same argument applies to 
any other separable regularizer
$Q$  and energy function $H = Q^*$
satisfying Assumptions~\ref{ass:legendre}
and~\ref{ass:lhs}, respectively:
for this, we can replace $\tanh$ by the 
scalar choice maps $h_i$, whose derivatives 
are strictly positive,
which follows from the fact that
$\Null(\nabla^2H(z))=\calS$ 
(see Proposition~\ref{prop:second-order-duality} (iii)).
At a zero, the determinant  of $J_{\pi_0}$
then becomes
$c^2h_1'(b_1^\star)h_2'(b_2^\star)>0$ when $c\neq0$,
where $h_i(b_i^\star)=0$, 
and the case $c=0$ remains excluded
from the assumption.


\medskip
\section{Details on FTRL Dynamics}
\label{app:ftrl-prelims}

In this section, we give more details
on the global dynamics of FTRL variants
introduced in Section~\ref{sec:algorithms}.
First, we give in Section~\ref{app:ftrl-prelims:system}
a more complete derivation 
of the global dynamics from 
their playerwise definitions.
Then, we present in Section~\ref{app:ftrl-prelims:primal-dual}
several key primal-dual relationships
of the global regularizer and corresponding
energy function that are used throughout. 
This also includes the proof of
Proposition~\ref{prop:ftrl-primal-dual-invariant}.

\subsection{Playerwise and Global FTRL Dynamics}
\label{app:ftrl-prelims:system}

Recall from Section~\ref{sec:algorithms}
that each player $i \in [N]$
uses a strictly convex regularizer $Q_i : \calW_i \to \R$
and a fixed stepsize $\eta_i > 0$.
Recall also that we define
the global stepsize $\eta := \max_{i \in [N]} \eta_i > 0$.

\paragraph{Playerwise primal update.}
The standard playerwise FTRL update 
rule~\citep[Section 2.3]{shalev2012online},
along with its Optimistic, Extragradient, and Implicit variants,
all follow a similar primal update structure. 
First, we assume each player $i \in [N]$
has initial $w_{0, i} \in \relint(\calW_i)$.
Then for $t \ge 0$, they choose
$w_{t+1, i} \in \calW_i$ according to
\begin{equation}
	\textstyle
	w_{t+1, i} 
	= 
	\argmin_{w_i \in \calW_i} \;
	\{ 
		- \langle w_i, v_{t+1, i} \rangle + Q_i(w_i) 
	\} 
	\label{eq:ftrl-primal-player}
\end{equation}
where $v_{t+1, i} \in \R^{d_i}$ is  
a variant-specific \textit{dual iterate} (specified below)
that accumulates $\eta_i$-scaled operator vectors. 
Note here that the argmin in~\eqref{eq:ftrl-primal-player}
is unique due to strict convexity of $Q_i$.

\paragraph{Playerwise, variant-specific dual iterates.}
Recall that $w_t = (w_{t, 1}, \dots, w_{t, N}) \in \R^D$
is the concatenation of the primal iterates of all the players 
at time $t \ge 0$.  
Moreover, for each $i \in [N]$, 
each player has dual initialization $v_{0, i} \in \R^{d_i}$.
Then, the variant-specific
dual iterate $v_{t+1, i} \in \R^{d_i}$ is given by:
\begin{equation}
	\begin{aligned}
		\text{(FTRL)}&
		\quad
		v_{t+1, i}
		= 
		v_{t, i}
		- 
		\eta_i G_i(w_t)\,, \\
		\text{(IFTRL)}&
		\quad
		v_{t+1, i}
		= 
		v_{t, i}
		- 
		\eta_i G_i(w_{t+1})\,, \\
		\text{(OFTRL)}&
		\quad
		v_{t+1, i}
		= 
		v_{t, i}
		- 
		2 \eta_i G_i(w_t)
		+ \eta_i G_i(w_{t-1})\,, \\
		\text{(EGFTRL)}&
		\quad
		v_{t+1/2, i}
		= 
		v_{t, i}
		- 
	 	\eta_i G_i(w_{t})  \,,
	 	\;\;\;
	 	v_{t+1, i}
		= 
		v_{t, i}
		- 
	 	\eta_i G_i(w_{t+1/2})  \;.
	\end{aligned}
	\label{eq:playerwise-dual-unscaled}
\end{equation}
Here, we use the convention for
OFTRL that $w_{-1} = w_0$,
and for EGFTRL, 
we let $w_{t+1/2}$
denote the concatenation 
of the minimizers in~\eqref{eq:ftrl-primal-player}
with respect to 
$v_{t+1/2} = (v_{t+1/2, i})_{i\in[N]}$.
When instantiating~\eqref{eq:ftrl-primal-player}
with one of the dual iterate definitions 
in~\eqref{eq:playerwise-dual-unscaled},
we exactly obtain the standard form of 
the corresponding FTRL-type update rule
(see, e.g.,~\cite{syrgkanis2015fast} and
~\cite{legacci2024no}).

Then, for a fixed playerwise dual  sequence $\{v_{t, i}\}$,
we also use the global stepsize $\eta >0$ to define
the $\eta$-normalized 
dual iterates $\{z_{t, i}\}$.
Specifically, for all $i \in [N]$, 
let $z_{t, i} = \frac{\eta}{\eta_i} \cdot v_{t, i}$
for all $t \ge 0$.
It then follows that, for any of the dual sequences
in~\eqref{eq:playerwise-dual-unscaled},
the playerwise update
in~\eqref{eq:ftrl-primal-player} for $t \ge 0$
can be equivalently written in terms of $z_{t, i}$ as
\begin{equation}
	\textstyle
	w_{t+1, i} 
	= 
	\argmin_{w_i \in \calW_i} \;
	\{ 
		- \langle w_i, \frac{\eta_i}{\eta} z_{t+1, i} \rangle + Q_i(w_i) 
	\} 
	=
	\argmin_{w_i \in \calW_i} \;
	\{ 
		- \langle w_i, z_{t+1, i} \rangle + \frac{\eta}{\eta_i} Q_i(w_i) 
	\}  \;.
	\label{eq:ftrl-primal-player-scaled}
\end{equation}
Moreover, the variant-specific \textit{scaled} dual sequences
are explicitly given by:
\begin{equation}
	\begin{aligned}
		\text{(FTRL)}&
		\quad
		z_{t+1, i}
		= 
		z_{t, i}
		- 
		\eta G_i(w_{t})\,, \\
		\text{(IFTRL)}&
		\quad
		z_{t+1, i}
		= 
		z_{t, i}
		- 
		\eta G_i(w_{t+1})\,, \\
		\text{(OFTRL)}&
		\quad
		z_{t+1, i}
		= 
		z_{t, i}
		- 
		2 \eta G_i(w_t)
		+ \eta G_i(w_{t-1})\,, \\
		\text{(EGFTRL)}&
		\quad
		z_{t+1/2, i}
		= 
		z_{t, i}
		- 
	 	\eta G_i(w_{t})  \,,
	 	\;\;\;
	 	z_{t+1, i}
		= 
		z_{t, i}
		- 
	 	\eta G_i(w_{t+1/2})  \;.
	\end{aligned}
	\label{eq:playerwise-dual-scaled}
\end{equation}

\paragraph{Global primal update.}
Now consider the collection of any of the scaled
playerwise dual sequences of~\eqref{eq:playerwise-dual-scaled},
and let $z_t = (z_{t, 1}, \dots, z_{t,N}) \in \R^D$ for 
all $t \ge 0$. Moreover, recall from~\eqref{eq:global-reg}
that we define the global regularizer $Q: \calW \to \R$
by
\begin{equation*}
	\textstyle
	Q(w) 
	= 
	\sum_{i=1}^N
	\frac{\eta}{\eta_i} Q_i(w_i)
	\;\;\;\text{for all $w = (w_1, \dots, w_N) \in \calW$} \;.
\end{equation*}
Now, we use the fact that $\calW = \calW_1 \times \dots \times \calW_N$
is a product space and 
that $Q$ is separable with respect to the player coordinates $ i \in [N]$.
Because of this, for $w_t = (w_{t, 1}, \dots, w_{t, N}) \in \R^D$ 
where $w_{t, i}$ is defined in~\eqref{eq:ftrl-primal-player-scaled},
it holds that 
\begin{equation}
	\textstyle
	w_{t} 
	= 
	\argmin_{w \in \calW} \;
	\{ 
		- \langle w, z_{t} \rangle + Q(w) 
	\} \;,
	\label{eq:ftrl-global-primal-app}
\end{equation}
which is exactly the global primal update rule 
from~\eqref{eq:ftrl-primal}.

\paragraph{Global dual updates.}
Fix any of the playerwise dual sequences 
in~\eqref{eq:playerwise-dual-scaled},
and let $z_t = (z_{t, i})_{i=1}^N \in \R^D$
for all $t \ge 0$.
For the sequence $\{w_t\}$ 
using~\eqref{eq:ftrl-global-primal-app},
it is then straightforward to see 
that we then exactly recover the global  dual
sequences of~\eqref{eq:ftrl},
\eqref{eq:iftrl},~\eqref{eq:oftrl},
and~\eqref{eq:egftrl} from Section~\ref{sec:algorithms}.

\subsection{Primal-Dual Relationships
Between Regularizer and Energy}
\label{app:ftrl-prelims:primal-dual}

We now present several key properties of
the global regularizer $Q: \calW \to \R$
and its corresponding
energy function (convex conjugate) $H = Q^*: \R^D \to \R$.
Recall that $H$ is given by 
\begin{equation}
	\textstyle
	H(z) := 
	Q^*(z)
	= 
	\max_{w \in \calW} \,
	\{
		\langle w,  z \rangle - Q(w)
	\}
	\;\;\;\text{for all $z \in \R^D$}\;.
	\label{eq:conjugate}
\end{equation}
For further background
and preliminaries on conjugate functions,
see, e.g.,~\citet[Section 3.3]{boyd2004convex}.
Moreover, recall from Section~\ref{sec:algorithms:assumptions}
that we consider the following assumption:

\asslegendre*

We remark that Assumption~\ref{ass:legendre}
is the standard notion of Legendre-type 
functions (see, e.g.~\cite{rockafellar1997convex}, Chapter 26), 
but defined with respect to
the relative interior of $\calW$
(which is lower-dimensional).
Nevertheless, Assumption~\ref{ass:legendre}
implies several standard primal-dual relationships
between $Q$ and $H$,
which we state in the following sequence of propositions.
We first make the following technical remark
regarding differentiability and our notation:

\paragraph{Remark on differentiability.}
Throughout, derivatives of the regularizer $Q$
are taken only on $\relint(\calW)$
and relative to its affine hull,
whose direction space is $\calS^\bot$.
Specifically, $\nabla Q(w) \in \calS^\bot$
is the unique vector satisfying
$\frac{d}{dt} Q(w + tv)|_{t=0} = \langle \nabla Q(w), v \rangle$
for every $v \in \calS^\bot$.
Similarly, for each $w \in \relint(\calW)$,
the Hessian $\nabla^2 Q(w): \calS^\bot \to \calS^\bot$
is the derivative of this relative gradient 
along the same directions. 
The steepness condition of 
Assumption~\ref{ass:legendre} (ii)
should thus be interpreted relative to the affine hull,
meaning the norm of the relative gradient $\nabla Q(w)$
goes to infinity as $w$ approaches
the relative boundary $\rbd(\calW)$ of $\calW$.

Moreover, note that relative gradients suffice to define
the Bregman divergence $\D_Q(w', w)$ for $w' \in \calW$ and
$w \in \relint(\calW)$: since $w' - w \in \calS^\bot$ by 
Lemma~\ref{lem:normal-diff}, the normal component 
of an ambient gradient, which lies in $\calS$, contributes zero 
to the inner product in the definition of $\D_Q(w', w)$.
Also, when using a smooth ambient extension of $Q$,
note that its ambient gradient can be projected onto $\calS^\bot$
to obtain the relative gradient. 
For the corresponding energy function
$H = Q^*:\R^D\to\R$, 
derivatives are ordinary ambient derivatives.

Throughout, we regard $Q$ as a proper, lower semicontinuous, 
convex function on $\calW$,
allowing the value $+\infty$ at boundary points,
and extended to $+\infty$ outside
$\calW$.
We will also assume throughout that 
Assumptions~\ref{ass:reg-sc} (strong convexity)
and~\ref{ass:lhs} (local Hessian stability) hold
for $Q$ and $H$.

\subsubsection{Relative duality and Proof of 
Proposition~\ref{prop:ftrl-primal-dual-invariant}.}

We start with the following property
which gives a useful characterization of $\nabla H$.
This should be thought of as the 
basic property of Legendre-type functions
(e.g., as in~\cite{rockafellar1997convex}, Theorem 26.5),
but adapted to hold in a relative sense: 

\smallskip
\begin{restatable}{prop}{proprelativeduality}
	\label{prop:relative-duality}
	For every $z \in \R^D$,
	the maximizer of $H(z)$ in~\eqref{eq:conjugate}
	is unique and lies in $\relint(\calW)$. 
	It holds specifically that 
	$
    \nabla H(z)
    = 
    \argmax_{w \in \calW}
    \{ \langle w, z\rangle - Q(w) \}  \in \relint(\calW)
	$.
\end{restatable}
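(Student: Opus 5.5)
The proof has three steps: existence, uniqueness, and interiority of the maximizer, followed by the identification of the gradient through Danskin's theorem.

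\textbf{Existence and uniqueness.} Fix $z \in \R^D$ and set $\phi_z(w) := \langle w, z\rangle - Q(w)$ on $\calW$. We regard $Q$ as proper, lower semicontinuous and convex on the compact set $\calW$, possibly $+\infty$ at boundary points. Then $\phi_z$ is upper semicontinuous and not identically $-\infty$, since $Q$ is finite on $\relint(\calW)$. Hence a maximizer exists by compactness. Uniqueness follows from strict convexity of $Q$ (Assumption~\ref{ass:legendre}(i)), which makes $\phi_z$ strictly concave on its effective domain. For this we need strict convexity to extend to the closure, or at least to any maximizer. That holds once we show below that every maximizer is interior.

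\textbf{Interiority, the main obstacle.} This is where the steepness condition of Assumption~\ref{ass:legendre}(ii) enters. Suppose a maximizer $\bar w$ lies on $\rbd(\calW)$. Pick any $w' \in \relint(\calW)$ and set $w_s := \bar w + s(w' - \bar w)$ for $s \in (0,1]$; these points lie in $\relint(\calW)$. Then consider the one-dimensional concave function $g(s) := \phi_z(w_s)$. Its derivative on $(0,1]$ is
\[
g'(s) = \langle z - \nabla Q(w_s), w' - \bar w\rangle,
\]
using relative gradients, which is legitimate since $w' - \bar w \in \calS^\bot$ by Lemma~\ref{lem:normal-diff}. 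The standard Legendre argument (Rockafellar, Thm.~26.3/26.5) shows $\langle \nabla Q(w_s), w' - \bar w\rangle \to -\infty$ as $s \downarrow 0$. Let me carry it out in the relative setting. By convexity, $s \mapsto \langle \nabla Q(w_s), w'-\bar w\rangle$ is nondecreasing, so the limit exists in $[-\infty,\infty)$. If it were finite, $\nabla Q(w_s)$ would stay bounded along the segment. Essential smoothness in the relative sense, meaning $\|\nabla Q\| \to \infty$ at the relative boundary, then excludes this. More precisely, I would invoke the equivalence of steepness and the directional blow-up condition for essentially smooth functions, applied to the restriction of $Q$ to $\aff(\calW)$, identified with $\calS^\bot$. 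Hence $g'(s) \to +\infty$, so $g(s) > g(0)$ for small $s$, contradicting maximality of $\bar w$. Thus every maximizer lies in $\relint(\calW)$, and strict convexity there gives uniqueness.

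\textbf{Gradient identification.} With a unique maximizer $w(z) \in \relint(\calW)$, $H$ is a supremum of affine functions of $z$ indexed by the compact set $\calW$. By Danskin's theorem, $\partial H(z) = \mathrm{conv}\{w : w \text{ maximizes } \phi_z\} = \{w(z)\}$. So $H$ is differentiable with $\nabla H(z) = w(z) = \argmax_{w\in\calW}\{\langle w,z\rangle - Q(w)\} \in \relint(\calW)$. Alternatively, $w \in \partial H(z)$ iff $z \in \partial Q(w)$ by Fenchel–Young, and the first-order condition at an interior point reads $\Pi_{\calS^\bot} z = \nabla Q(w)$.
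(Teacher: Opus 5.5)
Your proof is correct. Existence, interiority and uniqueness follow the same three steps as the paper, and the real difference is in the gradient identification. The paper takes differentiability of $H$ from Assumption~\ref{ass:lhs}. It then notes that the maximizer $w$ satisfies $H(z+th)\ge H(z)+t\langle w,h\rangle$, so $w\in\partial H(z)$, and reads off $\nabla H(z)=w$ from one-sided derivatives. You instead show that $\partial H(z)=\{w(z)\}$ and \emph{deduce} that $H$ is differentiable. This buys something: Assumption~\ref{ass:legendre} alone then makes $H$ differentiable, and since $H$ is convex it is $C^1$. That is exactly what the main text claims just before Proposition~\ref{prop:ftrl-primal-dual-invariant}, whereas the paper's proof borrows differentiability from Assumption~\ref{ass:lhs}. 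You also spell out the steepness-to-interiority step, which the paper states in one line. And you correctly place interiority before uniqueness, since strict convexity is only assumed on $\relint(\calW)$.

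Two points need tightening.

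\textbf{The boundedness heuristic.} The sentence ``if the limit were finite, $\nabla Q(w_s)$ would stay bounded along the segment'' is false as stated. A bounded component along $w'-\bar w$ says nothing about the other components of the relative gradient. What you actually need is that steepness implies $\langle\nabla Q(w_s),w'-\bar w\rangle\to-\infty$. This is the nontrivial half of Rockafellar's Lemma~26.2 (its proof goes through the description of $\partial Q$ at boundary points). Apply it to $Q$ restricted to $\aff(\calW)$, cite it directly, and drop the heuristic.

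\textbf{Danskin's theorem.} Its textbook hypotheses require joint continuity of $(z,w)\mapsto\langle w,z\rangle-Q(w)$. This fails for extended-valued $Q$ such as the log-barrier. Rely instead on your Fenchel--Young alternative. For closed proper convex $Q$, $\partial Q^*(z)$ is exactly the set of maximizers of $\langle w,z\rangle-Q(w)$ (Rockafellar, Thm.~23.5). A finite convex function on $\R^D$ whose subdifferential at $z$ is a singleton is differentiable there.
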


\begin{proof}
	Fix $z \in \R^D$.
	By compactness of $\calW$ and lower semicontinuity
	of $Q$, 
	the maximum in the definition of
	$H(z)$ from~\eqref{eq:conjugate} is finite
	and attained. 
	By the relative steepness property of
	Assumption~\ref{ass:legendre} (ii), 
	every maximizer lies in~$\relint(\calW)$.
	By strict convexity of $Q$, the 
	maximizer is also unique. 
	This proves the first claim of the proposition.

	For the second claim, 
	let $w \in \relint(\calW)$ denote the maximizer
	defining $H(z)$ in~\eqref{eq:conjugate}.
	Then for every $h \in \R^D$ and $t \in \R$,
    we have that 
    \begin{align*}
        H(z+ th) 
        = \max\nolimits_{v \in \calW}\, 
        \{\langle v,  z + th\rangle - Q(v)\} 
        &\ge 
        \langle w,  z + th \rangle - Q(w) \\
        &= 
        \langle w,  z \rangle - Q(w) + t \langle w, h \rangle 
        = H(z) + t \langle w, h \rangle\,,
    \end{align*}
    which gives the identity
	$H(z+th) \ge H(z) + t\langle w, h\rangle$.
	Then taking the left-hand derivative when $t \uparrow 0$ 
    and the right-hand derivative when $t \downarrow 0$ 
    on both sides of the identity, 
    and using the differentiability 
	of $H$ (which holds by Assumption~\ref{ass:lhs}), 
    we obtain $\langle \nabla H(z), h \rangle = \langle w, h \rangle$
	for all $h \in \R^D$. 
	Thus, $\nabla H(z) = w$. 
\end{proof}

\smallskip
Note  that as a consequence of Proposition~\ref{prop:relative-duality},
we obtain a short proof of the
primal-dual invariants of the global FTRL-type 
dynamics from Proposition~\ref{prop:ftrl-primal-dual-invariant},
restated here:

\propftrlprimaldual*

\begin{proof}
	Observe, by the global primal update rule
	of~\eqref{eq:ftrl-primal}, that
	\begin{equation*}
		\textstyle
		w_t = 
		\argmin_{w \in \calW}
		\{
			- \langle w, z_t \rangle
			+ Q(w) 
		\} 
		= 
		\argmax_{w \in \calW}
		\{
			\langle w, z_t \rangle
			- Q(w)
		\} \;.
	\end{equation*}
	Then by Proposition~\ref{prop:relative-duality},
	we have $w_t = \nabla H(z_t) \in \relint(\calW)$.
\end{proof}

\smallskip
\subsubsection{First-Order and Second-Order Primal-Dual Relationships}  

We now state several further primal-dual relationships
between $Q$ and $H$ that hold 
under Assumption~\ref{ass:legendre}.
Note that these relationships 
can be thought of as the usual
consequences of Legendre duality,
but working over the quotient space $\R^D / \calS$. 

\medskip
\begin{restatable}{prop}{propfirstorderduality}
	\label{prop:first-order-duality}
	For any $z \in \R^D$ and $w \in \relint(\calW)$, 
	$w = \nabla H(z)$
	$\iff$ $z - \nabla Q(w) \in \calS$.
\end{restatable}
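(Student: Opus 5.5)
The plan is to reduce both directions to the first-order optimality condition of the concave maximization defining $H(z)$ in~\eqref{eq:conjugate}, using Proposition~\ref{prop:relative-duality} to identify $\nabla H(z)$ with the unique maximizer, and Lemma~\ref{lem:normal-diff} to translate optimality over $\calW$ into a statement about $\calS^\bot$.

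For ($\Rightarrow$), suppose $w = \nabla H(z)$. Proposition~\ref{prop:relative-duality} says $w \in \relint(\calW)$ maximizes $\phi(v) := \langle v, z\rangle - Q(v)$ over $\calW$. Because $w$ is relatively interior, for every direction $h \in \calS^\bot$ the points $w + th$ stay in $\calW$ for all $|t|$ small. The function $t \mapsto \phi(w+th)$ then has an interior maximum at $t=0$, so its derivative vanishes there. Using the relative gradient convention, this gives $\langle z - \nabla Q(w), h\rangle = 0$ for all $h \in \calS^\bot$. Hence $z - \nabla Q(w) \in (\calS^\bot)^\bot = \calS$.

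For ($\Leftarrow$), suppose $z - \nabla Q(w) = s \in \calS$ with $w \in \relint(\calW)$. For any $v \in \calW$, Lemma~\ref{lem:normal-diff} gives $v - w \in \calS^\bot$, so $\langle s, v - w\rangle = 0$ and
\[
\langle z, v - w \rangle = \langle \nabla Q(w), v - w\rangle .
\]
Convexity of $Q$ (extended to boundary points by lower semicontinuity) gives the gradient inequality $Q(v) \ge Q(w) + \langle \nabla Q(w), v - w\rangle$. Combining the two yields $\phi(v) \le \phi(w)$ for all $v\in\calW$, so $w$ is a maximizer. By the uniqueness in Proposition~\ref{prop:relative-duality}, $w = \nabla H(z)$.

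The main obstacle is the gradient inequality at boundary points $v \in \rbd(\calW)$, where $Q$ may be $+\infty$ or only lower semicontinuous. I would first prove it for $v \in \relint(\calW)$ by applying convexity along the segment to $w$, all of which lies in $\relint(\calW)$. For a boundary $v$, I would then pass to the limit along $v_\lambda = (1-\lambda) w + \lambda v$ as $\lambda \uparrow 1$. One could invoke lower semicontinuity here, but the inequality goes the wrong way for that. Instead I would use convexity directly: the one-dimensional function $g(\lambda) = Q(v_\lambda)$ is convex on $[0,1]$ with $g'(0) = \langle \nabla Q(w), v-w\rangle$. This gives $g(1) \ge g(0) + g'(0)$, with $g(1)=+\infty$ allowed trivially.
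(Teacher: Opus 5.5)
Your proof is correct and matches the paper's argument. The forward direction uses the vanishing directional derivative along $\calS^\bot$ at the relatively interior maximizer. The reverse direction uses the convexity gradient inequality, Lemma~\ref{lem:normal-diff} to kill the $\calS$-component, and uniqueness from Proposition~\ref{prop:relative-duality}. Your extra one-dimensional argument, $g(1)\ge g(0)+g'(0)$ for convex $g$ on $[0,1]$ with $g(1)$ possibly $+\infty$, spells out the boundary case of the gradient inequality, which the paper uses without comment.
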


\begin{proof}
Fix $z\in\R^D$ and $w\in\relint(\calW)$. 
For convenience, we will write 
$F_z(w'):=Q(w')-\langle w',z\rangle$
for any $w' \in \calW$. 
By Proposition~\ref{prop:relative-duality}, 
recall that 
$w = \nabla H(z)$ if and only if 
$w = \argmin_{w' \in \calW}\{F_z(w')\}$.
We will then prove both implications of 
the proposition statement separately. 

For the forward implication, suppose that
$w=\nabla H(z).$ 
Then $w$ minimizes $F_z$ over $\calW$.
Since $w\in\relint(\calW)$, for every $h\in\calS^\bot$
there exists $\eps>0$ such that $w+th\in\calW$ 
for all $|t|<\eps$. 
Hence, the function $\phi: \R \to \R$ given by
$\phi(t):=F_z(w+th)$ has a 
local minimum at the interior point $t=0 \in (- \eps, \eps)$. 
Therefore, for every $h\in\calS^\bot$, we have 
$0=\phi'(0)=\langle \nabla F_z(w),h\rangle=\langle \nabla Q(w)-z,h\rangle$.
Note here that we have used the fact that the minimizer of 
$\phi$ is interior. 
Thus, we have shown that 
$z - \nabla Q(w) \in(\calS^\bot)^\bot=\calS.$

Conversely, suppose that $z-\nabla Q(w)\in\calS.$  
It follows that $\langle \nabla Q(w)-z,h\rangle=0$ 
for all $h\in\calS^\bot$. 
For any $w'\in\calW$, we have
$w'-w\in\calS^\bot$ by Lemma~\ref{lem:normal-diff}. 
Since $F_z$ is convex, we have for any $w' \in \calW$, 
\begin{equation*}
    F_z(w')
\ge
F_z(w)
+
\langle \nabla F_z(w),w'-w\rangle.
\end{equation*}
Since $\nabla F_z(w)=\nabla Q(w)-z \in \calS$ by assumption 
and $w'-w\in\calS^\bot$, we obtain 
$\langle \nabla F_z(w),w'-w\rangle=0$.
Hence $F_z(w')\ge F_z(w)$ for every $w'\in\calW$. 
Thus $w$ minimizes $F_z$ over $\calW$. 
By Proposition~\ref{prop:relative-duality}, the minimizer 
is unique and equals $\nabla H(z)$. Therefore
$w=\nabla H(z)$, which concludes the proof.
\end{proof}

\smallskip
As an immediate consequence of
Proposition~\ref{prop:first-order-duality},
we also have that $\nabla H$ is 
invariant to shifts under directions in $\calS$. Formally:
\begin{restatable}{prop}{propfirstorderinvariance}
	\label{prop:shift-invariance}
	For all $z \in \R^D$ and $s \in \calS$, 
	$\nabla H(z+ s) = \nabla H(z)$.
\end{restatable}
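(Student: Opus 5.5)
The plan is to derive this directly from Proposition~\ref{prop:first-order-duality}, which characterizes $\nabla H(z)$ through membership of $z - \nabla Q(w)$ in the normal space $\calS$.

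Fix $z \in \R^D$ and $s \in \calS$, and set $w := \nabla H(z)$. By Proposition~\ref{prop:relative-duality}, $w \in \relint(\calW)$, so the relative gradient $\nabla Q(w) \in \calS^\bot$ is well defined. The forward implication of Proposition~\ref{prop:first-order-duality}, applied to the pair $(z, w)$, gives $z - \nabla Q(w) \in \calS$.

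Since $\calS$ is a linear subspace and $s \in \calS$, it follows that
\[
(z+s) - \nabla Q(w) = \bigl(z - \nabla Q(w)\bigr) + s \in \calS .
\]
Now apply the reverse implication of Proposition~\ref{prop:first-order-duality} to the pair $(z+s, w)$, with $w \in \relint(\calW)$. This yields $w = \nabla H(z+s)$, and hence $\nabla H(z+s) = \nabla H(z)$.

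A second route, which serves as a sanity check, works directly from the variational form in Proposition~\ref{prop:relative-duality}. For every $w' \in \calW$ we have $\langle w', s\rangle = \sum_i c_i \langle w'_i, \1_{d_i}\rangle = \sum_i c_i$, which is independent of $w'$. The objective $\langle w', z+s\rangle - Q(w')$ therefore differs from $\langle w', z\rangle - Q(w')$ by a constant, so the two problems have the same unique maximizer.

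I expect no real obstacle here. The only point needing care is that Proposition~\ref{prop:first-order-duality} requires $w \in \relint(\calW)$, and this is supplied by Proposition~\ref{prop:relative-duality}.
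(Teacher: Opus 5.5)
Your proposal is correct and matches the paper's proof step for step: you take $w = \nabla H(z) \in \relint(\calW)$ from Proposition~\ref{prop:relative-duality}, apply the forward implication of Proposition~\ref{prop:first-order-duality}, use that $\calS$ is a subspace to absorb $s$, and conclude with the backward implication. Your alternative check, that $\langle w', s\rangle$ is constant on $\calW$, is also valid.
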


\begin{proof}
Let $z \in \R^D$ and $s \in \calS$.
Define $w = \nabla H(z)$. 
By Proposition~\ref{prop:relative-duality}, 
we have $w \in \relint(\calW)$. Then by the 
forward implication of Proposition~\ref{prop:first-order-duality}, 
it follows that $z - \nabla Q(w) \in \calS$. 
Since $\calS$ is a linear subspace and $s \in \calS$, 
we also have that $z + s - \nabla Q(w) \in \calS$. 
Then by the backward implication of 
Proposition~\ref{prop:first-order-duality},
we have $w = \nabla H(z+s)$.
Thus $\nabla H(z) = w = \nabla H(z+s)$. 
\end{proof}

\medskip
As a direct corollary
of Proposition~\ref{prop:shift-invariance}, 
we also have the following property:

\begin{restatable}
    {prop}{propinvariancecor}
    \label{prop:invariance-corr}
    For all $z \in \R^D$
    and any $v \in \R^D$,
    $\nabla H(z + v) = 
    \nabla H(z+ \Pi_{\calS^\bot} v)$.
\end{restatable}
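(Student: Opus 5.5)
The plan is to reduce the claim to Proposition~\ref{prop:shift-invariance} through the orthogonal decomposition $\R^D = \calS \oplus \calS^\bot$. Fix $z, v \in \R^D$. Because $\calS$ is a linear subspace and $\calS^\bot$ is its orthogonal complement, we can split $v$ uniquely as
\begin{equation*}
    v = \Pi_{\calS^\bot} v + \Pi_{\calS} v,
\end{equation*}
where $s := \Pi_{\calS} v = v - \Pi_{\calS^\bot} v \in \calS$. Here we only use that $I - \Pi_{\calS^\bot} = \Pi_{\calS}$ for an orthogonal projection onto a subspace of a finite-dimensional inner product space.

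Next, set $z' := z + \Pi_{\calS^\bot} v \in \R^D$, so that $z + v = z' + s$ with $s \in \calS$. Applying Proposition~\ref{prop:shift-invariance} at the point $z'$ with shift $s$ gives
\begin{equation*}
    \nabla H(z + v) = \nabla H(z' + s) = \nabla H(z') = \nabla H(z + \Pi_{\calS^\bot} v),
\end{equation*}
which is exactly the claim.

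I do not expect a real obstacle. The one point to check is that Proposition~\ref{prop:shift-invariance} holds for every base point $z \in \R^D$, not only for dual iterates. It does: its proof uses only Propositions~\ref{prop:relative-duality} and~\ref{prop:first-order-duality}, and both hold for arbitrary $z$.
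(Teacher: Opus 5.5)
Your proof is correct and is the argument the paper intends. The paper states this as a direct corollary of Proposition~\ref{prop:shift-invariance} without writing it out, and your decomposition $v = \Pi_{\calS^\bot} v + \Pi_{\calS} v$, followed by shift invariance applied at $z + \Pi_{\calS^\bot} v$, is exactly that corollary.
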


\medskip
Next, we establish the following
second-order primal-dual relationships:

\medskip
\begin{restatable}{prop}{propsecondorderduality}
	\label{prop:second-order-duality}
	For all $z \in \R^D$ and 
	$w = \nabla H(z) \in \relint(\calW)$, 
	\begin{enumerate}[
		label={(\roman*)},
		leftmargin=2.5em,
	]
	\item
	$\nabla^2 H(z + s) = \nabla^2 H(z)$
	for all $s \in \calS$.

	\item
	$\nabla^2 H(z) \nabla^2 Q(w) u = u$
	for all $u \in \calS^\bot$. 

	\item
	$\Null(\nabla^2 H(z)) = \calS$
	and $\Range(\nabla^2 H(z)) = \calS^\bot$.
	\end{enumerate}
\end{restatable}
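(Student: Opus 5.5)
We derive all three items from the first-order facts already established, namely Proposition~\ref{prop:shift-invariance} and Proposition~\ref{prop:first-order-duality}, by differentiating them once.

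\emph{Item (i).} Proposition~\ref{prop:shift-invariance} gives $\nabla H(z+s)=\nabla H(z)$ for every $z\in\R^D$ and fixed $s\in\calS$. Both sides are $C^1$ in $z$ by Assumption~\ref{ass:lhs}, so differentiating in $z$ yields $\nabla^2H(z+s)=\nabla^2H(z)$.

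\emph{Nullspace inclusion.} Differentiating $t\mapsto\nabla H(z+ts)$ at $t=0$ gives $\nabla^2H(z)s=0$ for all $s\in\calS$. Hence $\calS\subseteq\Null(\nabla^2H(z))$.

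\emph{Range inclusion.} Since $\nabla H$ takes values in $\relint(\calW)$ (Proposition~\ref{prop:relative-duality}), every directional derivative $\nabla^2H(z)h=\lim_{t\to0}(\nabla H(z+th)-\nabla H(z))/t$ is a limit of differences of points of $\calW$. These differences lie in $\calS^\bot$ by Lemma~\ref{lem:normal-diff}, so $\Range(\nabla^2H(z))\subseteq\calS^\bot$.

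\emph{Item (ii).} Fix $u\in\calS^\bot$ and set $w_t:=w+tu$, which lies in $\relint(\calW)$ for small $|t|$. Put $z_t:=z+\nabla Q(w_t)-\nabla Q(w)$.
\begin{itemize}
\item By Proposition~\ref{prop:first-order-duality}, $z-\nabla Q(w)\in\calS$. Therefore $z_t-\nabla Q(w_t)=z-\nabla Q(w)\in\calS$, and the backward implication gives $\nabla H(z_t)=w_t$.
\item By Assumption~\ref{ass:legendre}, $Q$ is $C^2$ on $\relint(\calW)$, so $t\mapsto z_t$ is $C^1$ with $\dot z_0=\nabla^2Q(w)u$.
\item Differentiating $\nabla H(z_t)=w+tu$ at $t=0$ by the chain rule gives $\nabla^2H(z)\nabla^2Q(w)u=u$, which is (ii).
\end{itemize}
The main care needed is that $\nabla Q$ is the relative gradient and $\nabla^2Q(w)$ is the relative Hessian mapping $\calS^\bot\to\calS^\bot$, as set out in the remark on differentiability. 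Only directions $u\in\calS^\bot$ are ever used, and $\dot z_0\in\calS^\bot$, so the chain rule applies without issue.

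\emph{Item (iii), equalities.} Item (ii) shows every $u\in\calS^\bot$ lies in $\Range(\nabla^2H(z))$, so $\Range(\nabla^2H(z))=\calS^\bot$. For the nullspace, $\nabla^2H(z)$ is symmetric, so $\Null(\nabla^2H(z))=\Range(\nabla^2H(z))^\bot=(\calS^\bot)^\bot=\calS$. This closes the inclusion from the first step. As an alternative check, dimension counting gives $\dim\Null=D-\dim\calS^\bot=\dim\calS$, which combined with $\calS\subseteq\Null$ also gives equality.
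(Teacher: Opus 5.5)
Your proof is correct and follows essentially the paper's route. You prove (i) by differentiating shift invariance, (ii) by differentiating $\nabla H(z_t)=w+tu$ along a dual curve $z_t$ representing $w+tu$ with $u\in\calS^\bot$, and (iii) from (ii) together with symmetry of $\nabla^2 H(z)$.

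The differences are minor. Your curve $z_t=z+\nabla Q(w+tu)-\nabla Q(w)$ passes through $z$, so you do not need (i) to move the Hessian from $\nabla Q(w)$ to $z$, as the paper does. In (iii) you read surjectivity onto $\calS^\bot$ directly off (ii) and then get the nullspace by symmetry. The paper instead uses (ii) to show injectivity on $\calS^\bot$ via $\|u\|^2=\langle \nabla^2 H(z)u,\nabla^2 Q(w)u\rangle$, and then gets the range by symmetry. Your separate limit argument placing the range inside $\calS^\bot$ is correct but redundant, since it already follows from $\calS$ lying in the nullspace plus symmetry.
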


\smallskip
\begin{proof}
	For (i), fix $s \in \calS$.
	Then differentiating
	the identity $\nabla H(z+s) = \nabla H(z)$ 
    from Proposition~\ref{prop:shift-invariance}
	with respect to $z$ gives
	$\nabla^2 H(z+s) = \nabla^2 H(z)$. 

	For (ii), since $\nabla Q(v) - \nabla Q(v) = 0 \in \calS$ 
    for any $v \in \relint(\calW)$, the backward implication of 
    Proposition~\ref{prop:first-order-duality} 
    with $z = \nabla Q(v)$ 
    gives $\nabla H(\nabla Q(v)) = v$ for any $v \in \relint(\calW)$. 
    Now fix $u \in \calS^\bot$. Since $w \in \relint(\calW)$, 
    then $w + tu \in \relint(\calW)$ for all sufficiently small $t$.
    Therefore, the previous identity gives 
    $\nabla H(\nabla Q(w + tu)) = w + tu$.
    Using the chain rule to differentiate with respect 
    to $t$ at $t = 0$ then yields: 
    	\begin{equation}
		\nabla^2 H(\nabla Q(w)) \nabla^2 Q(w)u = u \;.
		\label{eq:dhelper01}
	\end{equation}
    Since $w = \nabla H(z) \in \relint(\calW)$, it follows from Proposition~\ref{prop:first-order-duality} 
    that $\nabla Q(w) -z = - (z- \nabla Q(w)) \in \calS$. 
    Then using (i) yields 
    $\nabla^2 H(\nabla Q(w)) = 
    \nabla^2 H(z + \nabla Q(w) - z) = \nabla^2 H(z)$.
    Finally, substituting this identity into~\eqref{eq:dhelper01} 
    yields claim (ii).

	For claim (iii), we first prove that $\Null(\nabla^2 H(z)) = \calS$
    by double inclusion.
    For the forward direction, if $s \in \calS$, 
    then $\nabla H(z + t s) = \nabla H(z)$ 
    for any $t \in \R$
    by the shift invariance property of 
    Proposition~\ref{prop:shift-invariance}. 
    Differentiating this identity with respect to $t$ at $t = 0$ gives 
    $\nabla^2 H(z) s = 0$, meaning $s \in \Null(\nabla^2 H(z))$.
    
    For the second direction, suppose now that 
    $v \in \Null(\nabla^2 H(z))$.
    Since $\calS$ and $\calS^\bot$ are orthogonal subspaces,
    we can decompose $v = s + u$ with $s \in \calS$ 
    and $u \in \calS^\bot$. 
    Since $s \in \calS$, we have from the first inclusion 
    that $s \in \Null(\nabla^2 H(z))$, and hence $\nabla^2 H(z) s = 0$.    
    Moreover, since $v \in \Null(\nabla^2 H(z))$, we also have 
    $\nabla^2 H(z) v = 0$.
    Thus, $\nabla^2 H(z) u = \nabla^2 H(z) v - \nabla^2 H(z) s =  0$. 
    Now by part (ii), since $u \in \calS^{\bot}$, we have 
    $\nabla^2 H(z) \nabla^2 Q(w) u = u$. Then it follows that
    \begin{equation*}
        \|u\|^2 
        = \langle u, u \rangle 
        = \langle u, \nabla^2 H(z) \nabla^2 Q(w) u \rangle 
        = \langle \nabla^2 H(z) u,  \nabla^2 Q(w) u \rangle  = 0\,,
    \end{equation*}
    where in the third equality we used the symmetry of $\nabla^2 H(z)$ 
    and in the last equality we used the fact that $\nabla^2 H(z) u =0$.
    Thus we have shown that $u = 0$, 
    which by the decomposition of $v = s + u$ implies that
    $v = s \in \calS$. 
    Combining both inclusions gives $\Null(\nabla^2 H(z)) = \calS$
    and establishes the first part of claim (iii). 
    
    Finally, as $\nabla^2 H(z)$ is symmetric, 
    we have $\Range(\nabla^2 H(z)) = \Null(\nabla^2 H(z))^\bot = \calS^\bot$,
    which establishes the second part of claim (iii).
\end{proof}

\smallskip
\subsubsection{Properties of Bregman Divergences}

Recall that the Bregman divergence
$\D_{Q}: \calW \times \relint(\calW) \to \R$ 
is given by
\begin{equation}
	\D_Q(\wt w, w) = Q(\wt w) - Q(w) - \langle \nabla Q(w), \wt w - w \rangle \;.
	\label{eq:dr-app}
\end{equation}
Here, note that $\D_Q(\wt w, w) = + \infty$
whenever $Q(\wt w) =+  \infty$.

Then we further have the following
useful primal-dual characterization of 
the Bregman divergence for~$Q$ satisfying
Assumption~\ref{ass:legendre}:

\begin{restatable}
	{prop}{propbregmanpd}
	\label{prop:bregman-pd}
	Fix $z, z' \in \R^D$, and 
	let $w = \nabla H(z), w' = \nabla H(z') \in \relint(\calW)$.
	Then for $\wt w \in \calW$
	with $Q(\wt w) < \infty$, the following 
    properties hold:
	\begin{enumerate}[
		label=({\roman*}), 
		leftmargin=3em,
	]
	\item
		$\D_Q(\wt w, w)
		= 
		Q(\wt w) + H(z) - \langle z, \wt w \rangle$
	\item
		$\D_Q(\wt w, w') - \D_Q(\wt w, w)
		=
		H(z') - H(z) - \langle \wt w, z'- z \rangle 
		$.
	\end{enumerate}

\end{restatable}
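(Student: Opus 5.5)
For (i), I would first express $Q(w)$ through the conjugate relation. Since $w = \nabla H(z)$, Proposition~\ref{prop:relative-duality} says $w$ is the unique maximizer in~\eqref{eq:conjugate}, so
\[
H(z) = \langle w, z\rangle - Q(w).
\]
Substituting $Q(w) = \langle w, z\rangle - H(z)$ into the definition~\eqref{eq:dr-app} gives
\[
\D_Q(\wt w, w) = Q(\wt w) + H(z) - \langle w, z\rangle - \langle \nabla Q(w), \wt w - w\rangle .
\]
It then remains to replace $\nabla Q(w)$ by $z$ in the last inner product. By Proposition~\ref{prop:first-order-duality}, $z - \nabla Q(w) \in \calS$. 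By Lemma~\ref{lem:normal-diff}, $\wt w - w \in \calS^\bot$. Hence
\[
\langle \nabla Q(w), \wt w - w\rangle = \langle z, \wt w - w\rangle .
\]
Combining the two displays, the $\langle w, z\rangle$ terms cancel and we are left with $Q(\wt w) + H(z) - \langle z, \wt w\rangle$, which is (i). The finiteness assumption $Q(\wt w) < \infty$ ensures every quantity is a real number, so no $\infty - \infty$ arises.

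For (ii), I would apply (i) twice: once to the pair $(z, w)$ and once to $(z', w')$, with the same $\wt w$. Subtracting the two expressions cancels $Q(\wt w)$, which is finite by assumption. What remains is $H(z') - H(z) - \langle \wt w, z' - z\rangle$, which is (ii).

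The only delicate step is the relative-gradient bookkeeping in (i). There $\nabla Q(w)$ is the relative gradient, which lies in $\calS^\bot$, and it equals $z$ only modulo $\calS$. The argument therefore depends on $\wt w - w$ lying in $\calS^\bot$, so that the discrepancy in $\calS$ contributes zero. Everything else is direct substitution.
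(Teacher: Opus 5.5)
Your proposal is correct and follows essentially the same route as the paper. Part (i) uses the Fenchel--Young equality from Proposition~\ref{prop:relative-duality}, then discards the discrepancy between $\nabla Q(w)$ and $z$ via $z-\nabla Q(w)\in\calS$ and $\wt w - w\in\calS^\bot$. Part (ii) applies (i) to both pairs and subtracts.
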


\begin{proof}
	For the first claim (i), since $w = \nabla H(z)$,
    we have by the Fenchel-Young equality 
    (see, e.g.,~\citet[Section 2.7]{shalev2012online})
    that $Q(w) + H(z) = \langle z, w\rangle$
    (indeed, this follows from 
    Proposition~\ref{prop:relative-duality},
    which establishes $w$ as the 
    maximizer in the definition
    of $H(z)$ from~\eqref{eq:conjugate}, 
    i.e., 
    $w 
	= \argmax_{w' \in \calW} \{ \langle w', z\rangle - Q(w') \}$).
	Then substituting this identity into~\eqref{eq:dr-app} 
    and simplifying yields
	\begin{align*}
		\D_Q(\wt w, w) 
		&= 
		Q(\wt w) - Q(w) - \langle \nabla Q(w), \wt w - w \rangle\\
		&=Q(\wt w) + H(z) - \langle z, w \rangle 
		-  \langle \nabla Q(w), \wt w - w \rangle \\
        &=Q(\wt w) + H(z) - \langle z, w \rangle 
		+ \langle z - \nabla Q(w), \wt w - w \rangle - \langle z, \wt w - w \rangle \\
        &=Q(\wt w) + H(z) - \langle z, \wt w \rangle 
		+ \langle z - \nabla Q(w), \wt w - w \rangle \\
		&= 
		Q(\wt w) + H(z) - \langle z, \wt w \rangle\,,
	\end{align*}
    where the last equality follows from observing
    that $\langle z - \nabla Q(w), \wt w - w \rangle =0$ 
    since $z - \nabla Q(w) \in \calS$ by 
    Proposition~\ref{prop:first-order-duality} 
    and $\wt w - w \in \calS^\bot$ by Lemma~\ref{lem:normal-diff}. 
	This yields claim (i).

	For claim (ii), we  apply claim (i)
	with  the pairs $(w, z)$ and $(w', z')$,
	and take their difference. 
\end{proof}

\medskip
Finally, we establish the following 
boundary divergence property 
for the Bregman divergence $\D_Q$:

\smallskip
\begin{restatable}
[Boundary blowup of Bregman divergence]
{prop}{propbregmanboundary}
\label{prop:bregman-boundary-blowup}
Fix $\wt w \in \relint(\calW)$.
Then 
$\D_Q(\wt w, w) \to +\infty$
as $w \in \relint(\calW) \to \rbd(\calW)$. 
Moreover, for finite $\alpha \ge 0$, 
$\calC = \{w \in \relint(\calW) : \D_Q(\wt w, w) \le \alpha\} 
\subset \relint(\calW)$
is a compact subset of $\relint(\calW)$.
\end{restatable}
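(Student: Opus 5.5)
The argument has two parts: a blowup claim on the relative boundary, followed by compactness of the sublevel sets. Both go through the primal-dual representation of Proposition~\ref{prop:bregman-pd}. For the blowup, I would write $w = \nabla H(z)$ with $z = \nabla Q(w)$; this is valid because Proposition~\ref{prop:first-order-duality} gives $\nabla H(\nabla Q(w)) = w$. Proposition~\ref{prop:bregman-pd}(i) then yields
\[
\D_Q(\wt w, w) = Q(\wt w) + H(\nabla Q(w)) - \langle \nabla Q(w), \wt w\rangle .
\]
The Fenchel--Young equality $H(\nabla Q(w)) = \langle \nabla Q(w), w\rangle - Q(w)$ lets us rewrite this as
\[
\D_Q(\wt w, w) = Q(\wt w) - Q(w) + \langle \nabla Q(w), w - \wt w\rangle .
\]
This is just the definition again, so the real content has to come from steepness, Assumption~\ref{ass:legendre}(ii).

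The key step is to show $\langle \nabla Q(w), w - \wt w\rangle \to +\infty$ as $w \to \rbd(\calW)$, using only that $\wt w$ is interior. I would proceed as follows.
\begin{itemize}
\item Since $\wt w \in \relint(\calW)$, there is $r>0$ such that $\wt w + v \in \calW$ for all $v \in \calS^\bot$ with $\|v\| \le r$.
\item Let $g := \nabla Q(w) \in \calS^\bot$ (the relative gradient). Convexity of $Q$ gives, for every such $v$,
\[
Q(\wt w + v) \ge Q(w) + \langle g, \wt w + v - w\rangle .
\]
\item Hence $\langle g, w - \wt w\rangle \ge \langle g, v\rangle + Q(w) - Q(\wt w + v)$.
\item Choose $v = r\, g/\|g\|$. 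This gives $\langle g, w - \wt w\rangle \ge r\|g\| + Q(w) - \max_{\|v\|\le r} Q(\wt w + v)$.
\end{itemize}
The maximum on the right is finite because $Q$ is continuous on the compact ball, which lies inside $\relint(\calW)$; shrinking $r$ if necessary makes the ball compact there. Also $Q(w) \ge \min_{\calW} Q > -\infty$ by lower semicontinuity on the compact set $\calW$. Substituting back,
\[
\D_Q(\wt w, w) \ge r\|\nabla Q(w)\| - C
\]
for a constant $C$ depending only on $\wt w$ and $r$. By Assumption~\ref{ass:legendre}(ii), the right side tends to $+\infty$ as $w \to \rbd(\calW)$. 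Note that this step is cleaner than one might expect: we never need $Q$ to be finite on the boundary, and the $-Q(\wt w + v)$ term is controlled by interiority alone.

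For compactness of $\calC$, I would first note that $\calC \subseteq \calW$ is bounded. It then remains to show $\calC$ is closed in $\R^D$. Take $w_k \in \calC$ with $w_k \to \bar w$; then $\bar w \in \calW$, and there are two cases.
\begin{itemize}
\item If $\bar w \in \rbd(\calW)$, the blowup just proved forces $\D_Q(\wt w, w_k) \to \infty$, contradicting $\D_Q(\wt w, w_k) \le \alpha$. So $\bar w \in \relint(\calW)$.
\item On $\relint(\calW)$ the map $w \mapsto \D_Q(\wt w, w)$ is continuous, since $Q$ is $C^2$ there. Therefore $\D_Q(\wt w, \bar w) \le \alpha$ and $\bar w \in \calC$.
\end{itemize}
So $\calC$ is closed and bounded, hence compact, and it is contained in $\relint(\calW)$.

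The step I expect to require the most care is the boundary-blowup argument. In particular, the limit "$w \to \rbd(\calW)$" must be read in the sense used in Assumption~\ref{ass:legendre}(ii): $\|\nabla Q(w_k)\| \to \infty$ along every sequence $w_k \to \rbd(\calW)$. To avoid subtleties I would phrase the blowup claim sequentially.
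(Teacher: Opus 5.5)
Your proof is correct and follows the paper's argument: your supporting-hyperplane inequality, tested at $\wt w + r\,\nabla Q(w)/\|\nabla Q(w)\|$, is the paper's conjugate lower bound $H(z) \ge \langle \wt w + r z/\|z\|, z\rangle - Q(\wt w + r z/\|z\|)$ with $z = \nabla Q(w)$, rewritten in primal form via Fenchel--Young. It yields the same estimate $\D_Q(\wt w, w) \ge Q(\wt w) + r\|\nabla Q(w)\| - \max_{\|v\|\le r} Q(\wt w + v)$, and your sequential-closedness argument for $\calC$ is the same as the paper's (the appeal to $\min_{\calW} Q$ is unnecessary, since $Q(w)$ cancels on substitution, which is also why the bound holds for the log-barrier).
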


\begin{proof}
	We choose some $r > 0$ 
	such that the relative closed Euclidean ball
	$\calB = \{\wt w +h:h\in\calS^\bot,\ \|h\|_2\le r\}$ 
	lies in $\relint(\calW)$.
	By continuity, $Q$ is bounded above 
	on $\calB$, and thus let $c > 0$ denote such an upper bound. 
	Now for $w \in \relint(\calW)$,
	let $z =  \nabla Q(w) \in \calS^\bot$ 
    (recall here that $\nabla Q(w)$ is defined as the relative gradient, 
    i.e., the gradient of $Q$ along the affine hull of $\calW$,
    and hence by definition $\nabla Q(w) \in \calS^{\bot}$). 
	By Proposition~\ref{prop:first-order-duality}, 
    since $z - \nabla Q(w) = 0 \in \calS$,  we
	have $w = \nabla H(z)$. 

    If $z \neq 0$, since $\wt w +r \frac{z}{\|z\|_2} \in \calB$ as $z \in \calS^\bot$, we have from the definition of 
    $H(z)$ in~\eqref{eq:conjugate} that
    \begin{equation*}
        \textstyle
        H(z) = 
        \max_{v \in \calW} \,\{\langle v,  z \rangle - Q(v)\}
        \geq \big\langle \wt w +r \frac{z}{\|z\|_2},  z \big\rangle 
        - Q\Big(\wt w + r \frac{z}{\|z\|_2}\Big) 
        \ge \langle \wt w, z \rangle + r \|z\|_2 - c\,.
    \end{equation*}
    This same inequality also holds for $z = 0$,
    as again by \eqref{eq:conjugate}, and 
    since $\wt w \in \calW$ and $Q$ is upper bounded by~$c$,
    we have $H(0) \geq - Q(\wt w) \geq -c$, 
	Then, combining part (i) of Proposition~\ref{prop:bregman-pd} 
    and the inequality above, we obtain
	\begin{equation}
		\D_Q(\wt w, w) = 
		Q(\wt w) + H(z) - \langle z, \wt w \rangle \ge Q(\wt w) + r \|z\|_2 - c = Q(\wt w) + r \|\nabla Q(w)\|_2 - c.
		\label{eq:drbu-01}
	\end{equation}
	By the relative steepness condition of
	Assumption~\ref{ass:legendre}, part (ii),
	it follows that the right hand side of~\eqref{eq:drbu-01}
	tends to $+\infty$ as $w \to \rbd(\calW)$. 

    Since $\calW$ is compact and $\calC \subset \relint(\calW) \subset \calW$,
    it is sufficient to prove that $\calC$ is closed in $\calW$ 
    to prove that $\calC$ is also a compact subset of $\relint(\calW)$.
    For this, let $(w_n)$ be a sequence in $\calC$ such that 
    $w_n \to \overline{w} \in \calW$. 
    In what follows, we will prove that $\overline{w} \in \calC$.
    First, we show that $\overline{w} \in \relint(\calW)$. 
    For this, since $\calW = \relint(\calW) \cup \rbd(\calW)$ 
    and the union is disjoint, it suffices to show that 
    $\overline{w} \notin \rbd(\calW)$. 
    Suppose by contradiction that $\overline{w} \in \rbd(\calW)$. 
    Since $w_n \in \calC$, we have $\D_Q(\wt w, w_n) \le \alpha$ 
    for every $n$. But since $w_n \in \relint(\calW)$ and 
    $w_n \to \overline{w} \in \rbd(\calW)$, it follows from the 
    boundary-blowup result proved in the first part of the
    proof that $\D_Q(\wt w, w_n) \to + \infty$, 
    which contradicts the bound $\D_Q(\wt w, w_n) \le \alpha$. 
    Therefore, $\overline{w} \notin \rbd(\calW)$ and hence 
    $\overline{w} \in \relint(\calW)$. Now, since 
    $w_n \to \overline{w}$ and $\D_Q(\wt w, \cdot)$ is continuous
    on $\relint(\calW)$, we have that
    $\D_Q(\wt w, \overline{w}) = \lim_{n \to + \infty} \D_Q(\wt w, w_n)$. 
    Given that $\D_Q(\wt w, w_n) \le \alpha$ for all $n$, it follows
    that $\D_Q(\wt w, \overline{w}) \le \alpha$ 
    Together with $\overline{w} \in \relint(\calW)$, this shows 
    that $\overline w \in \calC$. Thus, the set $\calC$ is a closed subset 
    of the compact set $\mathcal W$, and hence $\calC$ is compact. 
    Since by definition  $\calC \subset \relint(\calW)$, 
    then $\calC$ is a compact subset 
    of $\relint(\calW)$. 
\end{proof}

\medskip
\subsection{Examples of Regularizers}
\label{app:ftrl-prelims:examples}

In this section, we give 
details on instantiations
of regularizers $Q: \calW \to \R$
that satisfy Assumptions~\ref{ass:legendre},
\ref{ass:reg-sc}, and~\ref{ass:lhs}.
In particular, we verify the properties 
for the canonical examples
of negative Shannon entropy~(\ref{app:negative-shannon}),
negative Tsallis entropy~(\ref{app:tsallis})
of order $q \in (0, 1)$,
and for the log-barrier regularizer~(\ref{app:log-barrier}).
Throughout, 
we verify the properties of 
$Q$ and $H$ at a single component,
which suffices due to separability.
For readability, we will write 
(by slight abuse of notation)
$Q$ to denote a single component $Q_i$,
and $H$ for a single component $H_i$,
and we will assume that $\calW = \Delta_D$.

\subsubsection{Negative Shannon Entropy}
\label{app:negative-shannon}

For the negative Shannon entropy regularizer, 
we have that 
\begin{equation*}
    \textstyle 
    Q(w) =  \sum_{i=1}^D w_i \log w_i
    \;\;\text{for all $w \in \calW$}
\end{equation*}
It is well known that the corresponding
$H =Q^*: \R^D \to \R$ and
$\nabla H: \R^D \to \relint(\calW)$
are given by
\begin{equation*} 
    \textstyle
    H(z) = \log(\sum_{i=1}^D \exp(z_i)) \,,
    \;\;
    (\nabla H(z))_i
    = \frac{\exp(z_i)}{\sum_{j=1}^D \exp(z_j)}
    \;\;\;\text{for all $z \in \R^D$}.
\end{equation*}

\paragraph{Relative Legendre and Strong Convexity.}
It is well-known
for negative Shannon entropy that
$Q$ satisfies Assumption~\ref{ass:legendre}
(see, e.g.,~\citet[Section 4.1]{xiao2022convergence}). 
Moreover, by Pinsker's inequality, 
we have that $Q$ is 1-strongly-convex 
with respect to the $\ell_1$ norm $\|\cdot\|_1$,
and thus satisfies Assumption~\ref{ass:reg-sc}.

\smallskip
\paragraph{Local hessian stability.}
The energy $H$ additionally satisfies 
the~\eqref{eq:lhs} property of 
Assumption~\ref{ass:lhs} with $\omega = 2$,
as proven in~\cite[Proposition 4.3]{lazarsfeld2026}.

\subsubsection{Negative Tsallis Entropy}
\label{app:tsallis}

Fix  $q \in (0, 1)$. 
Then the (normalized) negative $q$-Tsallis entropy 
is given by
\begin{equation*}
    \textstyle
    Q_q(w)=\frac{1-\sum_{j=1}^D w_j^q}{q(1-q)}
    \;\;\text{for all $w\in\calW$},
\end{equation*}
with $0^q=0$. The corresponding energy $H_q=Q_q^*$
and gradient map are given by
\begin{equation*}
    \textstyle
    H_q(z)=\lambda_q(z)+\frac1q\sum_{j=1}^D w_j(z)^q
    -\frac1{q(1-q)},
    \;\;\;
    (\nabla H_q(z))_j=w_j(z)
    =\bigl((1-q)(\lambda_q(z)-z_j)\bigr)^{-1/(1-q)},
\end{equation*}
where $\lambda_q(z)>\max_j z_j$ is uniquely determined by
$\sum_j w_j(z)=1$. These expressions follow directly from
the optimality conditions defining $H_q$. The implicit
function theorem implies that $\lambda_q$ and $\nabla H_q$
are smooth on $\R^d$.

\paragraph{Relatively Legendre.}
On $\relint(\calW)$, note that $Q_q$ is smooth. 
Its natural extension to the positive orthant has 
ambient derivatives
\begin{equation*}
    \textstyle
    \partial_j Q_q(w)=-\frac{w_j^{q-1}}{1-q},
    \qquad
    \nabla^2 Q_q(w)=\operatorname{Diag}(w_j^{q-2}).
\end{equation*}
The ambient Hessian is positive definite, so $Q_q$ is strictly
convex on the simplex. As $w$ approaches the 
relative boundary, a smallest
coordinate tends to zero, and its gradient component tends
to $-\infty$. Meanwhile, some coordinate is at least $1/d$,
so its gradient component remains bounded. Thus the
projection of the ambient gradient onto the simplex tangent space
diverges in norm, establishing
the relative steepness condition of Assumption~\ref{ass:legendre}.

\paragraph{Strong Convexity.}
Since $q<1$ and $w_j\leq1$, we have by Cauchy-Schwarz 
\begin{equation*}
    \textstyle
    \langle v,\nabla^2 Q_q(w)v\rangle
    =\sum_{j=1}^D w_j^{q-2}v_j^2
    \geq\sum_{j=1}^D\frac{v_j^2}{w_j}
    \geq\|v\|_1^2
    \qquad
    \text{for all $w\in\relint(\calW)$ and $v\in\R^d$}.
\end{equation*}
Integrating along line segments and using continuity of
$Q_q$ on $\calW$ yields
$D_{Q_q}(w',w)\geq\frac12\|w'-w\|_1^2$ for every
$w'\in\calW$ and $w\in\relint(\calW)$.
Thus, $Q_q$ is $1$-strongly convex with respect to
$\|\cdot\|_1$ and satisfies Assumption~\ref{ass:reg-sc}.

\paragraph{Local Hessian Stability.}
The energy $H_q$ satisfies the~\eqref{eq:lhs} property of
Assumption~\ref{ass:lhs} with 
parameter $\omega_q=2(2-q)$.
To see this, set $a_j(z)=w_j(z)^{2-q}$ and
$A(z)=\sum_j a_j(z)$. Differentiating the optimality
conditions gives
\begin{equation*}
    \textstyle
    \nabla^2 H_q(z)
    =\operatorname{Diag}(a(z))
    -\frac{a(z)a(z)^\top}{A(z)}.
\end{equation*}
Fix $h=z'-z$ and $v\in\R^d$. Along $z_s=z+sh$
for $s \in [0, 1]$,
write $w_j=w_j(z_s)$, $a_j=a_j(z_s)$, and let
$\overline h_s=\sum_j a_jh_j/A$ and
$\overline v_s=\sum_j a_jv_j/A$.
Then $\dot w_j=a_j(h_j-\overline h_s)$, and
$f_v(s)=\langle v,\nabla^2H_q(z_s)v\rangle$ satisfies
\begin{equation*}
    \textstyle
    f_v(s)=\sum_j a_j(v_j-\overline v_s)^2,
    \qquad
    f_v'(s)=(2-q)\sum_j
    a_jw_j^{1-q}(h_j-\overline h_s)(v_j-\overline v_s)^2.
\end{equation*}
Since $w_j^{1-q}\leq1$ and
$|h_j-\overline h_s|\leq2\|h\|_\infty$, we obtain
$|f_v'(s)|\le 2(2-q) \|h\|_\infty f_v(s)$.
Applying Gr\"onwall's inequality and using 
the fact that $v$ was arbitrary then yields
\begin{equation*}
    \textstyle
    \exp(-\omega_q\|z-z'\|_\infty) \cdot \nabla^2H_q(z')
    \preceq\nabla^2H_q(z)
    \preceq 
    \exp(\omega_q\|z-z'\|_\infty)  \cdot \nabla^2H_q(z')\;,
\end{equation*}
where $\omega_{q} = 2(2-q) \ge 1$.

\subsubsection{Log-Barrier Regularizer}
\label{app:log-barrier}

Let $\calW=\Delta_D$. The log-barrier regularizer is given by
\begin{equation*}
    \textstyle
    Q_{\logb}(w)=-\sum_{j=1}^D\log w_j
    \;\;\text{for all $w\in\relint(\calW)$},
\end{equation*}
and $Q_{\logb}(w)=+\infty$ on the relative boundary.
Here, the regularizer assumptions are understood with this
extended-valued convention.
The corresponding energy $H_{\logb}=Q_{\logb}^*$
and gradient map are:
\begin{equation*}
    \textstyle
    H_{\logb}(z)
    =\lambda(z)-D-\sum_{j=1}^D\log(\lambda(z)-z_j),
    \;\;\;
    (\nabla H_{\logb}(z))_j=w_j(z)
    =\frac1{\lambda(z)-z_j},
\end{equation*}
where $\lambda(z)>\max_j z_j$ is uniquely determined by
$\sum_j w_j(z)=1$. These expressions follow directly from
the optimality conditions defining $H_{\logb}$.
The implicit function theorem implies that $\lambda$ and
$\nabla H_{\logb}$ are smooth on~$\R^d$.

\paragraph{Relatively Legendre.}
On $\relint(\calW)$, the regularizer is smooth, and its
natural extension to the positive orthant has ambient derivatives
\begin{equation*}
    \textstyle
    \partial_j Q_{\logb}(w)=-\frac1{w_j},
    \qquad
    \nabla^2 Q_{\logb}(w)
    =\operatorname{Diag}(w_j^{-2}).
\end{equation*}
The ambient Hessian is positive definite, so $Q_{\logb}$ is
strictly convex on the simplex. As $w$ approaches the relative boundary,
a smallest coordinate tends to zero, and its gradient
component tends to $-\infty$. Meanwhile, some coordinate
is at least $1/d$, so its gradient component remains bounded.
Thus the projection of the ambient gradient onto the simplex tangent
space diverges, establishing Assumption~\ref{ass:legendre}.

\paragraph{Strong Convexity.}
Since $w_j\leq1$, we have by Cauchy--Schwarz
\begin{equation*}
    \textstyle
    \langle v,\nabla^2 Q_{\logb}(w)v\rangle
    =\sum_{j=1}^D\frac{v_j^2}{w_j^2}
    \geq\sum_{j=1}^D\frac{v_j^2}{w_j}
    \geq\|v\|_1^2
    \qquad
    \text{for all $w\in\relint(\calW)$ and $v\in\R^d$}.
\end{equation*}
Integrating along line segments yields
$D_{Q_{\logb}}(w',w)\geq\frac12\|w'-w\|_1^2$
for every $w,w'\in\relint(\calW)$.
For $w'$ on the relative boundary, the inequality holds
in the extended-valued sense because
$D_{Q_{\logb}}(w',w)=+\infty$.
Thus, $Q_{\logb}$ is $1$-strongly convex with respect
to $\|\cdot\|_1$ and satisfies Assumption~\ref{ass:reg-sc}.

\paragraph{Local Hessian Stability.}
The energy $H_{\logb}$ satisfies the~\eqref{eq:lhs}
property of Assumption~\ref{ass:lhs} with parameter
$\omega_{\logb}=4$.
To see this, set $a_j(z)=w_j(z)^2$ and
$A(z)=\sum_j a_j(z)$. Differentiating the optimality
conditions gives
\begin{equation*}
    \textstyle
    \nabla^2 H_{\logb}(z)
    =\operatorname{Diag}(a(z))
    -\frac{a(z)a(z)^\top}{A(z)}.
\end{equation*}
Fix $h=z'-z$ and $v\in\R^d$. Along $z_s=z+sh$ for
$s\in[0,1]$, write $w_j=w_j(z_s)$, $a_j=a_j(z_s)$,
$A=A(z_s)$, and let
$\overline h_s=\sum_j a_jh_j/A$ and
$\overline v_s=\sum_j a_jv_j/A$.
Then $\dot w_j=a_j(h_j-\overline h_s)$, and
$f_v(s)=\langle v,\nabla^2H_{\logb}(z_s)v\rangle$
satisfies
\begin{equation*}
    \textstyle
    f_v(s)=\sum_j a_j(v_j-\overline v_s)^2,
    \qquad
    f_v'(s)=2\sum_j
    a_jw_j(h_j-\overline h_s)(v_j-\overline v_s)^2.
\end{equation*}
Since $w_j\leq1$ and
$|h_j-\overline h_s|\leq2\|h\|_\infty$, we obtain
$|f_v'(s)|\leq4\|h\|_\infty f_v(s)$.
Applying Gr\"onwall's inequality and using 
the fact that $v$ was arbitrary then yields
\begin{equation*}
    \textstyle
    \exp(-\omega_{\logb}\|z-z'\|_\infty)
    \cdot\nabla^2H_{\logb}(z')
    \preceq\nabla^2H_{\logb}(z)
    \preceq
    \exp(\omega_{\logb}\|z-z'\|_\infty)
    \cdot\nabla^2H_{\logb}(z')\;,
\end{equation*}
where $\omega_{\logb}=4\geq1$.


\medskip
\section{Details on 
Reduction to Symmetrized System}
\label{app:reduction}

This section provides more details
on the $M$-symmetrized system that
was introduced in Section~\ref{sec:main-proof-overview},
and that is studied to prove our main results in
Theorem~\ref{thm:main-last-iterate} and 
Theorem~\ref{thm:ftrl-diverge-main}.
In particular,
the proofs of Proposition~\ref{prop:symmetrized-meq}
and~\ref{prop:reduction-algorithmic}
are given below
in Sections~\ref{app:reduction:structural}
and~\ref{app:reduction:algorithmic}, respectively.
We further summarize all properties
of the symmetrized system
in Setting~\ref{setting:weighted},
which will be referenced in the proofs
of our later results.

\paragraph{Overview of underlying setting.}
Throughout this section,
we assume an operator
$G: \calW \to \R^D$
such that~$\WMEQ_M(G) \cap \relint(\calW)\neq \emptyset$
for a weights matrix
$M = \Diag(m_{1} I_{d_1}, \dots, m_N I_{d_N}) \in \R^{D \times D}$,
where each $m_i > 0$.
We assume that $G$ satisfies Assumption~\ref{ass:operator}
with Lipschitz constant $L \ge 1$
and boundedness constant $B \ge 1$.
We also assume
that $Q: \calW \to \R$
is a regularizer
satisfying Assumption~\ref{ass:legendre}
and Assumption~\ref{ass:reg-sc},
and its corresponding
energy function $H = Q^*:\R^D \to \R$
satisfies Assumption~\ref{ass:lhs}.
We further assume that
Assumption~\ref{ass:restricted-regularity} holds.

\subsection{Structural Components and Properties}
\label{app:reduction:structural}

\paragraph{Scaled Primal and Dual Spaces.}
Recall that we define the scaled primal domain 
$\calX := M^{1/2} \calW \subseteq \R^D$.
For each $w \in \calW$ and $z \in \R^D$,
we associate $x = M^{1/2} w$ and $y = M^{1/2} z$,
from which it follows that
$w = M^{-1/2} x$ and $z = M^{-1/2} y$.
Recalling that $\calW = \prod_{i=1}^N \calW_i$,
where each $\calW_i = \Delta_{d_i}$,
observe then that 
$\calX = \prod_{i=1}^N \calX_i$
where each $\calX_i = \sqrt{m_i} \Delta_{d_i}$. 
Recalling the definitions of  
$\calS = \prod_{i=1}^N \Span(\1_{d_i}) \subseteq \R^D$
and its orthogonal complement $\calS^\bot \subseteq \R^D$
from~\eqref{eq:calS-def}, we have
the following result, 
the proof of which is similar to that of 
Lemma~\ref{lem:normal-diff}:

\smallskip
\begin{restatable}
	{lem}{lemweightednormaldiff}
	\label{lem:normal-diff-weighted}
    For any $x,  x' \in \calX$, $x-x' \in \calS^\bot$.  
\end{restatable}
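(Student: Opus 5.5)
The plan is to reduce to the unweighted statement of Lemma~\ref{lem:normal-diff} via the change of coordinates $x = M^{1/2} w$. Write $x = M^{1/2} w$ and $x' = M^{1/2} w'$ with $w, w' \in \calW$, so that $x - x' = M^{1/2}(w - w')$. It then suffices to show that $M^{1/2}$ maps $\calS^\bot$ into itself, because $w - w' \in \calS^\bot$ by Lemma~\ref{lem:normal-diff}.

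For the invariance, the key observation is that $M^{1/2} = \Diag(\sqrt{m_1} I_{d_1}, \dots, \sqrt{m_N} I_{d_N})$ acts as a scalar multiple of the identity on each player block. Fix $s = (c_i \1_{d_i})_{i \in [N]} \in \calS$ and $v \in \calS^\bot$. Then
\[
\langle s, M^{1/2} v \rangle = \sum_{i=1}^N \sqrt{m_i}\, c_i \langle \1_{d_i}, v_i \rangle .
\]
Since $\calS$ contains the vector that equals $\1_{d_i}$ in block $i$ and is zero elsewhere, membership $v \in \calS^\bot$ forces $\langle \1_{d_i}, v_i \rangle = 0$ for every $i$. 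Each term above therefore vanishes, and $M^{1/2} v \in \calS^\bot$.

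Alternatively, one can argue blockwise and directly. Each block $x_i - x_i'$ equals $\sqrt{m_i}(w_i - w_i')$, and its entries sum to $\sqrt{m_i}(1 - 1) = 0$. Pairing with any $s \in \calS$ then gives zero.

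There is no real obstacle here. The only point needing care is that $\calS$ is a product of per-player spans rather than the span of the global all-ones vector, since this is exactly what makes the blockwise scalar weights harmless.
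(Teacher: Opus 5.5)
Your proof is correct, and your blockwise argument (each block $x_i - x_i' = \sqrt{m_i}(w_i - w_i')$ has entries summing to zero) is exactly the adaptation of Lemma~\ref{lem:normal-diff} the paper intends when it says the proof is similar. Your main route via the invariance of $\calS^\bot$ under $M^{1/2}$ is an equivalent repackaging of the same observation; that invariance is also what the paper relies on later when it commutes $M^{1/2}$ with $\Pi_{\calS^\bot}$.
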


In words, as each component
of $\calX_i$ is a positive constant scaling 
of $\calW_i$, 
$\calS$ and $\calS^\bot$ remain 
the normal and tangent spaces of $\calX$,
respectively.

\smallskip
\paragraph{Symmetrized operator.}
Recall that the $M$-symmetrized operator
$V: \calX \to \R^D$ is given by
\begin{equation}
	V(x) = M^{1/2} G(M^{-1/2} x) 
	\;\;\;\text{for all $x \in \calX$} \;.
	\label{eq:V-app}
\end{equation}
By a slight abuse of notation,
we let $\NE(V)\subset \calX$ denote the set
of points $x' \in \calX$ satisfying
\begin{equation}
	\langle V(x'), x-x' \rangle \ge 0
	\;\;\;\text{for all $x \in \calX$} \;.
	\label{eq:V-NE}
	\tag{NE}
\end{equation}

As introduced in Section~\ref{sec:main-proof-overview},
the key structural property of 
$V$ is that, for all 
$w^\star \in \WMEQ_M(G)$,
the scaled variable $x^\star = M^{1/2}w^\star \in \calX$
satisfies the \textit{unweighted}
Minty variational equality with respect to $V$:

\smallskip
\propmeqsymmetrized*

\begin{proof}
    Let $w^\star \in \WMEQ_M(G)$ and $x \in \calX$.  
    Define then $x^\star = M^{1/2} w^\star \in \calX$ 
    and $w = M^{-1/2} x$. By definition of~$V$ in  \eqref{eq:V-app},
    we have
    \begin{align*}
    \langle V(x), x-x^\star \rangle
    &= \langle M^{1/2} G(M^{-1/2}x), x-x^\star \rangle \\
    &= \langle M^{-1/2} M G(M^{-1/2}x), x-x^\star \rangle\\
    &= \langle MG(M^{-1/2}x), M^{-1/2}(x-x^\star)\rangle 
    = 
    \langle MG(w), w-w^\star \rangle 
    = 0\,, 
    \end{align*}
    where the third equality uses the symmetry of $M$, 
    and the last equality follows from $w^\star \in \WMEQ_M(G)$.
\end{proof}

\smallskip
Recall that the set of all solutions $x^\star$ 
of~\eqref{eq:minty-equality} is denoted as: 
\begin{equation}
\MEQ(V) = \{x^\star \in \calX : \langle V(x), x-x^\star \rangle = 0
	\;\text{for all $x \in \calX$}
\}
\subseteq \calX\,.
\end{equation}
 
We also have as an immediate corollary of 
Proposition~\ref{prop:symmetrized-meq} and \eqref{eq:V-NE} the
following relationships:

\begin{restatable}{cor}{corrweightedmeqne}
	\label{cor:meq-ne-weighted}
	The following hold:
	(i)
	$w^\star \in \WMEQ_M(G)$
	if and only if $M^{1/2} w^\star \in \MEQ(V)$
	and (ii)
	$w^\star \in \NE(G)$
	if and only if $M^{1/2} w^\star \in \NE(V)$.
\end{restatable}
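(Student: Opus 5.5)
Both parts follow from the change of variables $x = M^{1/2} w$, which is a bijection $\calW \to \calX$ because $M^{1/2}$ is invertible and diagonal with positive entries. The approach is to show that this change of variables carries each defining inequality or equality exactly onto its counterpart.

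For part (i), the forward direction is exactly Proposition~\ref{prop:symmetrized-meq}. For the converse, suppose $x^\star := M^{1/2} w^\star \in \MEQ(V)$, and check first that $w^\star = M^{-1/2}x^\star \in \calW$. Then fix any $w \in \calW$ and set $x = M^{1/2} w \in \calX$. Running the chain of equalities from the proof of Proposition~\ref{prop:symmetrized-meq} in reverse, using $M^{1/2} = M^{-1/2}M$ and the symmetry of $M$, gives
\[
\langle MG(w), w - w^\star\rangle = \langle V(x), x - x^\star\rangle = 0 .
\]
Since $w$ was arbitrary, $w^\star \in \WMEQ_M(G)$.

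For part (ii), take $x = M^{1/2}w$ and $x' = M^{1/2}w^\star$. A direct computation gives
\[
\langle V(x'), x - x'\rangle = \langle M^{1/2}G(w^\star), M^{1/2}(w - w^\star)\rangle = \langle MG(w^\star), w - w^\star\rangle .
\]
So $M^{1/2}w^\star \in \NE(V)$ if and only if $\langle MG(w^\star), w - w^\star\rangle \ge 0$ for all $w \in \calW$. It then remains to show this weighted inequality is equivalent to the unweighted inequality~\eqref{eq:NE}. Here I would use the product structure of $\calW$, as in Lemma~\ref{lem:wmeq-ne}, and test both inequalities on unilateral deviations $w = (w_i, w^\star_{-i})$. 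For such $w$, only block $i$ of $w - w^\star$ is nonzero, so the weighted inequality reduces to $m_i\langle G_i(w^\star), w_i - w^\star_i\rangle \ge 0$. Because $m_i > 0$, this is equivalent to the playerwise inequality $\langle G_i(w^\star), w_i - w^\star_i\rangle \ge 0$.

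The only step needing some care is the converse in (ii): recovering the full weighted inequality from the playerwise ones. Writing
\[
\langle MG(w^\star), w - w^\star\rangle = \sum_{i=1}^N m_i \langle G_i(w^\star), w_i - w^\star_i\rangle ,
\]
each summand is nonnegative, so the sum is nonnegative. The same block decomposition with all weights equal to one gives the corresponding equivalence for $G$ itself. This completes both directions.
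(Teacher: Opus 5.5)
Your proposal is correct and follows essentially the route the paper has in mind. The paper states this corollary as an immediate consequence of the change of variables in Proposition~\ref{prop:symmetrized-meq} and the definition of $\NE(V)$, without a written proof. Your blockwise reduction showing that the $M$-weighted and unweighted Stampacchia inequalities are equivalent is the same positivity-of-weights argument the paper uses in Lemma~\ref{lem:wmeq-ne} and Proposition~\ref{prop:ne-wmeq-linear}.
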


\paragraph{Lipschitzness and boundedness of V.}
Under Assumption~\ref{ass:operator} for the operator $G$,
it follows that the operator $V$ is Lipschitz and
bounded, with parameters 
depending on $L, B$ and $m_{\max}, m_{\min}$.
To state this more concretely, recall that we let 
$\kappa_{M} := m_{\max}/ m_{\min}$.
Then, we have the following properties of $V$:

\begin{restatable}{prop}{propVlipbounded}	
	\label{prop:V-lipschitz-bounded}
	The operator $V$ is $\wh L$-Lipschitz
	and $\wh B$ bounded,
	for $\wh L = \sqrt{\kappa_M} L$
	and $\wh B = \sqrt{m_{\max}}B$.
\end{restatable}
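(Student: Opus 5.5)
The plan is a direct computation with the block-diagonal scaling $M^{\pm 1/2}$, handled separately for each player block and then summed through the product-norm definitions.

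First, the dual norm on $\R^D$ is $\|z\|_* = (\sum_i \|z_i\|_{i,*}^2)^{1/2}$, and $M^{1/2}$ acts on block $i$ as multiplication by the scalar $\sqrt{m_i}$. Hence for any $z$,
\begin{equation*}
\|M^{1/2} z\|_*^2 = \sum_i m_i \|z_i\|_{i,*}^2 \le m_{\max} \|z\|_*^2 .
\end{equation*}
The analogous primal bound is $\|M^{-1/2} v\|^2 = \sum_i m_i^{-1}\|v_i\|_i^2 \le m_{\min}^{-1}\|v\|^2$. Both rely only on homogeneity of each blockwise norm.

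For boundedness, fix $x \in \calX$ and write $w = M^{-1/2}x \in \calW$. Then $\|V(x)\|_* = \|M^{1/2}G(w)\|_* \le \sqrt{m_{\max}}\,\|G(w)\|_* \le \sqrt{m_{\max}} B$ by Assumption~\ref{ass:operator}(ii), which gives $\wh B$.

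For Lipschitzness, take $x, x' \in \calX$. Chain the two scaling bounds around Assumption~\ref{ass:operator}(i):
\begin{equation*}
\|V(x)-V(x')\|_* \le \sqrt{m_{\max}}\,\|G(M^{-1/2}x)-G(M^{-1/2}x')\|_* \le \sqrt{m_{\max}}\,L\,\|M^{-1/2}(x-x')\| \le \sqrt{m_{\max}/m_{\min}}\,L\,\|x-x'\|,
\end{equation*}
which yields $\wh L = \sqrt{\kappa_M}L$.

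Nothing here is a real obstacle. The only point to state explicitly is that the norms on $\calX$ and its dual space are the same product norms $\|\cdot\|$ and $\|\cdot\|_*$ used for $\calW$ and $\R^D$. It is this convention that makes the blockwise scalar factors pull out cleanly.
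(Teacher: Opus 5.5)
Your proposal is correct and matches the paper's proof essentially line for line. Both arguments use the blockwise bounds $\|M^{1/2}v\|_* \le \sqrt{m_{\max}}\|v\|_*$ and $\|M^{-1/2}v\| \le \|v\|/\sqrt{m_{\min}}$ and chain them with Assumption~\ref{ass:operator}; the paper relies implicitly on the same product-norm convention on $\calX$ that you state explicitly.
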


\begin{proof}
First, using the block-diagonal structure of
$M = \Diag(m_1 I_{d_1}, \dots, m_N I_{d_N})$, 
observe that 
$\|M^{1/2} v\|_{*} \le \sqrt{m_{\max}}\|v\|_{*}$
and 
$\|M^{-1/2} v\| \le \tfrac{1}{\sqrt{m_{\min}}} \|v\|$
for all $v \in \R^D$.
Then by definition of $V(x) = M^{1/2} G(M^{-1/2}x)$
and using the $L$-Lipschitzness of 
$G$ from Assumption~\ref{ass:operator},
we have for all $x, x' \in \calX$:
\begin{equation*}
	\textstyle
	\|V(x')-V(x)\|_{*}
	= 
	\|M^{1/2}(G(M^{-1/2}x') - G(M^{-1/2}x))\|_{*}
	\le 
	\frac{\sqrt{m_{\max}}}{\sqrt{m_{\min}}} L \|x-x'\| \;.
\end{equation*}
Similarly, by $B$-boundedness of $G$,
we have for all $x \in \calX,$ 
\begin{equation*}
	\|V(x)\|_{*} 
	= 
	\|M^{1/2} G(M^{-1/2}x)\|_{*}
	\le 
	\sqrt{m_{\max}} B \;,
\end{equation*}
which concludes the proof.
\end{proof}

\smallskip
\subsection{Algorithmic Components and Properties}
\label{app:reduction:algorithmic}

We now expand on the
algorithmic components of the symmetrized
system. 
In particular, we develop
the proof of Proposition~\ref{prop:reduction-algorithmic}, 
restated here:

\smallskip
\propsymmetrizeddynamics*

We will prove the proposition
for the regularizer $R: \calX \to \R$
defined as follows:
first, recall from Section~\ref{sec:main-results}
that we defined the $M$-symmetrized
regularizer $Q_M : \calW \to \R$ by
\begin{equation}
	\textstyle
	Q_M(w)
	= 
	\sum_{i=1}^N 
	m_i \cdot \big( \frac{\eta}{\eta_i} Q_i(w_i) \big)
	\;\;\;\text{for all $w \in \calW$}\;.
	\label{eq:QM-def}
\end{equation}
Then we define $R$ as the separable regularizer given by 
\begin{equation}
	\textstyle
	R(x) := Q_M(M^{-1/2} x)
	= 
	\sum_{i=1}^N m_i \cdot 
	\big( \frac{\eta}{\eta_i} Q_i(\frac{x_i}{\sqrt{m_i}}) \big)
	\;\;\text{for all $x \in \calX$} \;.
	\label{eq:R-def}
\end{equation}
Recall that $F=R^*: \R^D \to \R$ is then the convex
conjugate of $R$.
Importantly, recall that $Q$ satisfies Assumption
\ref{ass:legendre} with respect to $\calW$.
Then, due to the positive weights in $M$
and the invertibility of the 
change of coordinates  defining
 $\calX = M^{1/2}\calW$,
it follows that $R$ 
is twice continuously
differentiable on $\relint(\calX)$,
and that the relative steepness
of $\nabla R$ also holds
with respect to $\rbd(\calX)$.
Thus, $R$ satisfies the
properties of Assumption~\ref{ass:legendre} with
respect to $\calX$,
and the primal-dual relationships
of Section~\ref{app:ftrl-prelims:primal-dual}
then identically apply to the pair $R$ and $F$.

We then proceed with the proof
of Proposition~\ref{prop:reduction-algorithmic}.

\begin{proof}
	The proof proceeds in two steps: 
	first, we establish a
	relationship between the
	original and symmetrized
	regularizer and energy gradients.
	Then, we apply the relationship
	to the iterates of OFTRL and EGFTRL
	under the original system,
	and we show that the update
	structure is exactly preserved for
	the symmetrized system.

	\noindent
	\textit{1. Relationship between gradient maps.}
	First, note that by separability 
	of $Q_M$ (and 
	using the definition of $Q$ 
	from~\eqref{eq:global-reg}),
	we have
	that $\nabla Q_M(w) = M \nabla Q(w)$
	for all $w \in\relint(\calW)$.
	Then for $x \in \relint(\calX)$,
	let $w = M^{-1/2} x \in \calW$.
	Using the fact that 
	$R(x) = \Q_M(M^{-1/2} x)$,
	it then follows by the chain rule 
	and the symmetry of $M$ that
	\begin{equation}
		\nabla R(x) 
		= 
		(M^{-1/2})^\top 
		\nabla Q_M(M^{-1/2}x)
		= 
		M^{-1/2} M \nabla Q(w)
		= M^{1/2} \nabla Q(w)
		\;.
		\label{eq:grad-R-property}
	\end{equation}
	Now, for fixed $y \in \R^D$,
	set $z = M^{-1/2} y \in \R^D$.
	Then, let $w = \nabla H(z) \in \relint(\calW)$,
	and $x = M^{1/2} w \in \relint(\calX)$.
	We will show that
	$x = \nabla F(y)$.
	For this, first recall
	that due to Proposition~\ref{prop:first-order-duality}
	applied to the pair $(w, z)$,
	we have that
	$z = \nabla Q(w) + s$
	for some $s \in \calS$.
	It follows using~\eqref{eq:grad-R-property}
	that 
	\begin{equation*}
		M^{1/2}z = M^{1/2} \nabla Q(w) + M^{1/2}s\,,\quad  
		y = \nabla R(x) + s'\,, 
		\label{eq:}
	\end{equation*}
	where $s' = M^{1/2}s \in\calS$.
	However, by 
	Proposition~\ref{prop:first-order-duality}
	applied with $R$ and $F$,
	this implies the identity 
	$x = \nabla F(y)\in \relint(\calX)$.
	Thus for every 
	$z \in \R^D$ and 
	$w = \nabla H(z) \in \relint(\calW)$, 
	it holds for 
	$y= M^{1/2}z \in \R^D$
	and $x = M^{1/2}w \in \relint(\calX)$
	that 
	\begin{equation}
		\nabla F(y)
		=
		x
		= 
		M^{1/2} w 
		= M^{1/2} \nabla H(M^{-1/2} y) \;.
		\label{eq:dual-identity-weighted}
	\end{equation}

	\noindent
	\textit{2. Preservation of FTRL-type update structure.}
	Using the relationship of~\eqref{eq:dual-identity-weighted},
	we then have the following equivalence 
	in the primal and dual update structure of 
	FTRL-type dynamics for the symmetrized system.
	For generality, let $z', z \in \R^D$, 
	$w' = \nabla H(z') \in \relint(\calW)$
	and $w = \nabla H(z) \in \relint(\calW)$.
	Let $y' = M^{1/2}z'$ and $y = M^{1/2}z$,
	and let $x'= M^{1/2}w'\in \relint(\calX)$ and
	$x = M^{1/2}w \in \relint(\calX)$.
	Suppose further that for some $\eta > 0$,
	we have $z' = z - \eta G(w)$.
	Then multiplying by $M^{1/2}$, it follows that:
	\begin{equation}
		y' = M^{1/2} z'
		= 
		M^{1/2} z - \eta M^{1/2} G(M^{-1/2} x) 
		= 
		y - \eta V(x) \;,
		\label{eq:dual-structure-reduction}
	\end{equation}
	where we use the definition of $V$
	from~\eqref{eq:V-def}.
	As $z'$ and $w'$ satisfy
	$w' = \nabla H(z')$,
	then applying~\eqref{eq:dual-identity-weighted}
	to $y'$, we further obtain 
	$x' = \nabla F(y') \in \relint(\calX)$.

	For the iterates $\{(w_t, z_t)\}$
	of~\eqref{eq:oftrl} and~\eqref{eq:egftrl}
	under the original system, 
	this immediately implies that, 
	for $x_t = M^{1/2} w_t$
	and $y_t = M^{1/2} z_t$, 
	we have the following update structure
	in the symmetrized coordinates
	for all $t \ge 0$:
	\begin{equation}
		\begin{aligned}
			\text{for OFTRL}:&
			\quad
			y_{t+1} 
			= y_t - 2 \eta V(x_t) + \eta V(x_{t-1})
			\,,\; x_{t+1} = \nabla F(y_{t+1})\,, \\[0.25em]
			\text{for EGFTRL}:&
			\quad
			\begin{cases}
				y_{t+1/2} = y_t - \eta V(x_t) \, \,,\;\;
				x_{t+1/2}	 = \nabla F(y_{t+1/2})\,, \\
				y_{t+1} = y_t - \eta V(x_{t+1/2}) \,,\;\; 
				x_{t+1}	 = \nabla F(y_{t+1}) \;.
			\end{cases}
		\end{aligned}
		\label{eq:sym-dynamics-0}
	\end{equation}	
	This completes the proof of the proposition.
\end{proof}

\medskip
\paragraph{Local Hessian stability.}
As the original energy function $H = Q^*$
satisfies the Assumption~\ref{ass:lhs} with
constant $\omega > 0$, 
we further have a similar property for the
symmetrized energy function:

\begin{restatable}
	[LHS property for $F$]
	{prop}{proplhssym}
	\label{prop:lhs-sym}
	For every $y, y' \in \R^D$, we have
	\begin{equation}
		\exp(-\wh \omega  \|y-y'\|_{*})\cdot
		\nabla^2 F(y') 
		\preceq
		 \nabla^2 F(y)
		\preceq
		 \exp(\wh \omega \|y-y'\|_{*}) \cdot
		\nabla^2 F(y') 
		\;.
		\label{eq:lhs-sym}
		\tag{LHS}
	\end{equation}
	where $\wh \omega=\omega/\sqrt{m_{\min}}$.
\end{restatable}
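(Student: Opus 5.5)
The plan is to derive an explicit formula for $\nabla^2 F$ in terms of $\nabla^2 H$, and then transfer the LHS inequality of Assumption~\ref{ass:lhs} through the change of coordinates. The starting point is the identity \eqref{eq:dual-identity-weighted} from the proof of Proposition~\ref{prop:reduction-algorithmic}, namely $\nabla F(y) = M^{1/2}\nabla H(M^{-1/2}y)$ for all $y\in\R^D$. Because $H$ is twice continuously differentiable by Assumption~\ref{ass:lhs}, the right-hand side is $C^1$ in $y$. Differentiating with the chain rule gives
\begin{equation*}
\nabla^2 F(y) = M^{1/2}\,\nabla^2 H(M^{-1/2}y)\,M^{-1/2},
\end{equation*}
which is the formula I will use throughout.

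I then want to pass Loewner inequalities through this map. The difficulty is that the conjugation is by $M^{1/2}$ on one side and $M^{-1/2}$ on the other, so it is not a congruence, and I must check that it still preserves $\preceq$. The block structure of $M$ settles this. By Proposition~\ref{prop:second-order-duality}(iii) and separability of $H$ (through the separability of $Q$), $\nabla^2 H(z)$ is block diagonal with blocks of size $d_i$. On block $i$ both $M^{1/2}$ and $M^{-1/2}$ are scalars, $\sqrt{m_i}$ and $1/\sqrt{m_i}$, and these cancel. Hence $\nabla^2 F(y)=\nabla^2 H(M^{-1/2}y)$ exactly.

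If I want to avoid the separability argument, there is an alternative route that avoids needing $\nabla^2 H$ to be block diagonal. Differentiate $F(y)=\sup_x\{\langle x,y\rangle - Q_M(M^{-1/2}x)\}$ directly. One computes $F(y)=Q_M^*(M^{-1/2}y)$, which gives the congruence $\nabla^2F(y)=M^{-1/2}\nabla^2 Q_M^*(M^{-1/2}y)M^{-1/2}$. Separability of $Q_M$ then gives $\nabla^2 Q_M^*(z)=\mathrm{Diag}_i(m_i^{-1}\cdot)$ of rescaled blocks of $\nabla^2 H$. Either way, every block is a fixed positive multiple of the corresponding block of $\nabla^2 H$ evaluated at the transformed point. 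Such a scaling is a congruence by a fixed invertible matrix, so it preserves $\preceq$.

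With the formula in hand, I apply \eqref{eq:lhs} to $H$ at the points $z=M^{-1/2}y$ and $z'=M^{-1/2}y'$. This gives the two-sided bound with factor $\exp(\pm\omega\|M^{-1/2}(y-y')\|_*)$, which transfers to $\nabla^2 F$ by the congruence just established. The last step is the norm estimate $\|M^{-1/2}v\|_*\le \|v\|_*/\sqrt{m_{\min}}$, which holds blockwise as in the proof of Proposition~\ref{prop:V-lipschitz-bounded}. Since the exponentials are monotone in their arguments, this yields the claim with $\wh\omega=\omega/\sqrt{m_{\min}}$.

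The step I expect to be the main obstacle is the Hessian identity: verifying that the relation between $\nabla^2F$ and $\nabla^2 H$ is an honest congruence. This comes down to checking carefully the $\eta/\eta_i$ and $m_i$ scalings in the separable conjugate of $Q_M$. Everything after that is routine.
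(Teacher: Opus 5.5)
Your main route matches the paper's proof: differentiate $\nabla F(y)=M^{1/2}\nabla H(M^{-1/2}y)$, use separability to cancel the $M^{\pm 1/2}$ factors so that $\nabla^2 F(y)=\nabla^2 H(M^{-1/2}y)$, apply the LHS property of $H$ at $M^{-1/2}y$ and $M^{-1/2}y'$, and finish with $\|M^{-1/2}v\|_*\le \|v\|_*/\sqrt{m_{\min}}$ (the block-diagonality comes from separability alone, not from Proposition~\ref{prop:second-order-duality}(iii)). Your optional alternative is unnecessary and contains an error: substituting $x=M^{1/2}w$ in the supremum gives $F(y)=Q_M^*(M^{1/2}y)$, hence $\nabla^2F(y)=M^{1/2}\nabla^2 Q_M^*(M^{1/2}y)M^{1/2}$ rather than the $M^{-1/2}$ version you wrote, and since it still relies on separability of $Q_M$ it avoids nothing.
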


\begin{proof}
	First, recall from~\eqref{eq:dual-identity-weighted}
	that for $y \in \R^{D}$ and $z = M^{-1/2} y$, we have
	$\nabla F(y) = M^{1/2} \nabla H(M^{-1/2} y)$.
	Differentiating then yields
	\begin{equation*}
		\nabla^2 F(y)
		= 
		M^{1/2} \nabla^2 H(M^{-1/2}y)M^{-1/2}
		= 
		\nabla^2 H(M^{-1/2} y) 
        =\nabla^2 H(z)\;.
	\end{equation*}
	Here, the final equality holds due
	to the separability of $H = Q^*$,
	which means that
	the block diagonal Hessian $\nabla^2 H$ 
	commutes with the block-diagonal matrix $M$. 
	Now fix also $y' \in \R^D$ 
	and let $z' =  M^{-1/2} y'$. 
	By applying Assumption~\ref{ass:lhs} for $H$, 
	we then have for a constant $ \omega \ge 1$ that
	\begin{equation}
		\exp(-\omega  \|z-z'\|_{*})\cdot
		\nabla^2 H(z') 
		\preceq
		 \nabla^2 H(z)
		\preceq
		 \exp( \omega \|z-z'\|_{*}) \cdot
		\nabla^2 H(z')  \;.
		\label{eq:lhs-0}
	\end{equation}
	Moreover, observe that 
	$\omega \|z-z'\|_* = \omega 
	\|M^{-1/2}(y-y')\|_{*} \leq \frac{\omega}{\sqrt{m_{\min}}} \|y - y'\|_{*}
	$.
	Substituting this inequality into~\eqref{eq:lhs-0}
	then yields the statement of the proposition. 
\end{proof}

\paragraph{Proof of Proposition~\ref{prop:bregman-shifted-energy}}
For the energy function $F = R^*$, 
recall that we define for any $x^\star \in \relint(\calX)$
the \textit{shifted energy function} $F_{x^\star}: \R^D \to \R$
by 
\begin{equation}
	F_{x^\star}(y)
	= 
	F(y) - \langle x^\star, y \rangle
	\;\;\;\text{for all $y \in\R^D$} \;.
	\label{eq:shifted-energy-app}
\end{equation}
We then prove the following relationship between
the Bregman divergence $\D_R$
and $F_{x^\star}$,
as introduced in Section~\ref{sec:proof-overview:energy}:

\smallskip
\propshiftedenergy*

\begin{proof}
    As $F$ and $R$ satisfy Assumption~\ref{ass:legendre}, we can apply Proposition~\ref{prop:bregman-pd} (ii)
    for the pairs $(x', y')$ and
    $(x,y)$, and use the definition of $F_{x^\star}$
    from~\eqref{eq:shifted-energy-app}.
\end{proof}

\medskip
\paragraph{Regularity condition under
symmetrized system.}
Finally, when Assumption~\ref{ass:restricted-regularity}
holds for the original system
using $G$, $Q$, and $H$,
then the analogous condition
also holds for the symmetrized system:

\begin{restatable}
	[Regularity under symmetrized system.]
	{prop}{propregularitysym}
	\label{prop:regularity-sym}
	Fix $z_0\in \R^D$ and
	let $w_0=\nabla H(z_0)$, and let
	$y_0=M^{1/2}z_0$ and $x_0=M^{1/2}w_0$. 
	If Assumption~\ref{ass:restricted-regularity}
	holds for the original system with
	$(G,Q, H)$ at $w_0$,
	then it also holds for the symmetrized system at 
	$x_0$ with $(V, R,F)$.
\end{restatable}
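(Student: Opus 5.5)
The plan is to show that the change of coordinates $(w,z)\mapsto(x,y)=(M^{1/2}w,M^{1/2}z)$ carries every object in Assumption~\ref{ass:restricted-regularity} for $(G,Q,H)$ to the corresponding object for $(V,R,F)$, and that it intertwines the two effective stationarity maps by a linear isomorphism. Zeros will then correspond to zeros, and invertibility of Jacobians will be preserved.

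\emph{Step 1: reachable sets and the non-NE condition.} By \eqref{eq:dual-identity-weighted}, $\nabla F(y_0+M^{1/2}z)=M^{1/2}\nabla H(z_0+z)$. Also $V(\calX)=M^{1/2}G(\calW)$, so $\Span(V(\calX))=M^{1/2}\calZ$. Together these give $\calR(x_0)=M^{1/2}\calR(w_0)$. Corollary~\ref{cor:meq-ne-weighted}(ii) gives $\NE(V)=M^{1/2}\NE(G)$. Since $M^{1/2}$ is a bijection, $\calR(w_0)\not\subseteq\NE(G)$ is equivalent to $\calR(x_0)\not\subseteq\NE(V)$.

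\emph{Step 2: effective spaces and the intertwining identity.} $M^{1/2}$ is block-scalar, so it commutes with $\Pi_{\calS^\bot}$ and preserves both $\calS$ and $\calS^\bot$. Hence the symmetrized effective dual space is
\[
\calU_V=\Pi_{\calS^\bot}M^{1/2}\calZ=M^{1/2}\calU .
\]
Define $\pi_0^V(v)=\Pi_{\calS^\bot}V(\nabla F(y_0+v))$ for $v\in\calU_V$, and write $v=M^{1/2}u$. Using \eqref{eq:dual-identity-weighted} and the commutation, we get
\[
\pi_0^V(M^{1/2}u)=\Pi_{\calS^\bot}M^{1/2}G(\nabla H(z_0+u))=M^{1/2}\pi_0(u).
\]
That is, $\pi_0^V=T\circ\pi_0\circ T^{-1}$, where $T=M^{1/2}|_{\calU}:\calU\to\calU_V$ is a linear isomorphism.

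\emph{Step 3: regularity.} A zero $v$ of $\pi_0^V$ gives a zero $u=T^{-1}v$ of $\pi_0$. By the chain rule, $J_{\pi_0^V}(v)=T\,J_{\pi_0}(u)\,T^{-1}$. This is nonsingular whenever $J_{\pi_0}(u)$ is, which Assumption~\ref{ass:restricted-regularity} guarantees for $(G,Q,H)$.

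The only delicate point is bookkeeping in Step 2. One must make sure that $\calU_V$ and the map $\pi_0^V$ are defined exactly as in \eqref{eq:pi0-def} with $(V,F,y_0)$ substituted. One must also check that projecting before or after scaling gives the same result; this holds because $M^{1/2}$ acts as a scalar on each player block, and $\calS^\bot$ decomposes blockwise.
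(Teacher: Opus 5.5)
Your proposal is correct and follows essentially the same route as the paper. Both arguments obtain $\calU_V = M^{1/2}\calU$ from the fact that $M^{1/2}$ commutes with $\Pi_{\calS^\bot}$, derive $\pi_0^V = M^{1/2}\circ\pi_0\circ M^{-1/2}$ from \eqref{eq:dual-identity-weighted}, and then conjugate the Jacobian. Your Step~1 is slightly more complete than the paper's: the paper cites Corollary~\ref{cor:meq-ne-weighted} but does not spell out the identity $\calR(x_0)=M^{1/2}\calR(w_0)$ that the transfer of the non-NE condition also needs.
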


\begin{proof}
	First, we extend the definitions 
	of the effective dual space and the map $\pi_0$
	introduced in~\ref{sec:ftrl:geometry}
	to the symmetrized system.
	In particular, to distinguish between the 
	original and symmetrized systems, 
	we define
	$\calU_G := \Pi_{\calS\bot}\Span(G(\calW))$
	and 
	$\calU_V := \Pi_{\calS\bot}\Span(V(\calX))$.
	Since we have $V(M^{1/2}w)=M^{1/2}G(w)$,
	and because 
	$M^{1/2}$ commutes with 
	the projection $\Pi_{\calS^\bot}$, 
	we have that $\calU_V = M^{1/2}\calU_G$.
	Thus $M^{1/2}$ restricts to a linear isomorphism from
	$\mathcal U_G$ onto $\mathcal U_V$.

	Now, we further distinguish between the 
	two respective restricted stationarity maps, 
	where we let 
	$\pi^G_0(v) := \Pi_{\calS^\bot}G(\nabla H(z_0+v))$
	for $v \in \calU_G$
	and 
	$\pi^V_0(u) := \Pi_{\calS^\bot} V(\nabla F(y_0 +u))$
	for $u \in\calU_V$.
	Using the relationship $y_0=M^{1/2}z_0$
	and the identity of~\eqref{eq:dual-identity-weighted},
	we then have 
	\begin{equation}
	 	\nabla F(y_0+ u) = M^{1/2}\nabla H(z_0 + M^{-1/2}u) \;.
	\end{equation}
	It then follows that
	\begin{equation}
		\pi^V_0(u)
		=
		\Pi_{\calS^\bot}
		M^{1/2}G(\nabla H(z_0+M^{-1/2}u))\\
		=
		M^{1/2}\pi_0^G(M^{-1/2}u),
	\end{equation}
	which implies that 
	$u$ is a zero of $\pi^V_0$
	if and only if $v = M^{-1/2} u$ is a 
	zero of $\pi^G_0$.
	Thus, differentiating on the effective dual spaces 
	gives the relationship
	$J_{\pi^V_0}(u) = M^{1/2} J_{\pi^G_0}(v) M^{-1/2}$.
	Now at every such zero, recall that 
	Assumption~\ref{ass:restricted-regularity} 
	implies that 
	the map $J_{\pi^G_0}(v):\calU_G \to \calU_G$
	is nonsingular.
	Thus, since all $m_i > 0$ and
    $M^{1/2}:\calU_G\to\calU_V$ is an
	isomorphism, it follows that
	$J_{\pi^V_0}(u):\calU_V  \to \calU_V$
	is also nonsingular.

    Finally, write $\calR_G(w_0)$ and
    $\calR_V(x_0)$ 
    to denote the respective reachable primal sets
    of the two systems. 
    Then by Corollary~\ref{cor:meq-ne-weighted},
    it follows immediately that 
    $\calR_G(w_0) \not \subseteq \NE(G)$
    if and only if 
    $\calR_V(x_0) \not \subseteq \NE(V)$.
    Thus, when Assumption~\ref{ass:restricted-regularity}
    holds for the original system,
    then the two conditions
    similarly hold for the symmetrized system.
\end{proof}

\subsection{Summary of Symmetrized Setting}
\label{subsec:summary-sym-setting}

Here, we give a summary of the key properties
of the symmetrized system
that will be used in the remainder
of the analysis:

\begin{restatable}
    [Summary of symmetrized system]
    {setting}{settingsymmetrized}
    \label{setting:weighted}
    Assume the operator
    $G : \calW \to \R^D$ 
    satisfies
    $\WMEQ_M(G)\cap \relint(\calW) \neq \emptyset$
    for a weights matrix $M$,
    and that $G$
    satisfies Assumption~\ref{ass:operator}.
    Let $Q: \calW \to \R$
    and $H = Q^* : \R^D \to \R$
    satisfy Assumptions~\ref{ass:legendre},
    \ref{ass:reg-sc}, and~\ref{ass:lhs}.
    Further let Assumption~\ref{ass:restricted-regularity} hold.

    Then the following properties hold
    for the symmetrized system:
    \begin{itemize}
    	\item
    	$V: \calX \to \R^D$ 
    	is $\wh L$-Lipschitz
    	and $\wh B$-bounded
    	with $\wh L = \sqrt{\kappa_M} L$
    	and $\wh B = \sqrt{m_{\max}}B$.

    	\item
    	$w^\star \in \WMEQ_M(G)$ 
    	if and only if $x^\star \in \MEQ(V)$.

    	\item
    	$R: \calX \to \R$ and $F  =R^*: \R^D \to \R$
    	satisfy Assumption~\ref{ass:legendre}
    	with respect to $\relint(\calX)$.

    	\item
    	$F$ satisfies~\eqref{eq:lhs-sym}
    	with parameter $\wh \omega = 
    	\omega/\sqrt{m_{\min}}$.

    	\item
    	For the iterates $\{(w_t,z_t)\}$
    	of~\eqref{eq:ftrl},~\eqref{eq:iftrl},
    	\eqref{eq:oftrl}, or~\eqref{eq:egftrl}
    	with stepsize $\eta > 0$,
    	letting $x_t = M^{1/2}w_t$
    	and $y_t = M^{1/2}z_t$ for all $t \ge 0$,
    	it holds correspondingly that:
    	\begin{align}
			y_{t+1}
			&= y_t - \eta \, V(x_t) 
			\label{eq:ftrl-sym}
			\tag{FTRL}\\
			y_{t+1}
			&= y_t - \eta \, V(x_{t+1})
			\label{eq:iftrl-sym}
			\tag{IFTRL} \\
			y_{t+1} 
			&= y_t - 2\eta \, V(x_t) + \eta \, V(x_{t-1})  
			\label{eq:oftrl-sym}
			\tag{OFTRL} 
			\\
			y_{t+1/2}
			&= 
			y_{t} - \eta \, V(x_t) 
			\,,\quad
			y_{t+1} 
			= y_t - \eta \, V(x_{t+1/2}) \;.
			\label{eq:egftrl-sym}
	    \tag{EGFTRL} 
    	\end{align}
    	Moreover, for each variant,
    	it holds for all $t \ge 0$ that
    	$\|y_{t+1} - y_t\|_{*} \le 3\eta \hat B$.
    	\item
    	Under Assumption~\ref{ass:restricted-regularity}
    	for the original system with
    	$(G, Q, H)$
    	the corresponding
    	properties from the assumption
    	also hold for 
    	the symmetrized system
    	with $(V, R, F)$.
    \end{itemize}
    Given the equivalence between the original
    and symmetrized systems, 
    throughout, 
    we will freely apply preliminaries proved
    for the original system
    (e.g., from Section~\ref{app:ftrl-prelims:primal-dual}).
\end{restatable}


\section{Details on Effective Dual Space and Reachable Primal Sets}
\label{app:geometry-primal-dual}

In this section,
we give more details
on properties on the
reachable primal set,
effective dual space, and 
the stationarity map 
introduced in Section~\ref{sec:ftrl:geometry}.
Specifically, given the reduction
to the symmetrized system
introduced in
Section~\ref{app:reduction}
(as summarized in Section~\ref{subsec:summary-sym-setting}, Setting~\ref{setting:weighted}),
we consider here the analogous
objects for the symmetrized system.
We begin by recalling these definitions:

\paragraph{Recap of Definitions.}
Let $\calY = \Span(V(\calX))$
and $\calU = \Pi_{\calS^\bot}\calY \subseteq \calS^\bot$
where $\calS^\bot$ is defined in \eqref{eq:calS-def}.
For a fixed primal and dual initialization
$x_0 = \nabla F(y_0) \in \relint(\calX)$, 
recall that we define
\begin{equation*}
	\calR(x_0) 
	= 
	\{\nabla F(y_0 + y) \;:\; y \in \calY\}  \;.
\end{equation*}
The effective stationarity map 
$\pi_0: \calU \to \calU$ is given by
\begin{equation}
\label{eq:recall-pi0-def}
	\pi_0(u) 
	= 
	\Pi_{\calS^\bot} V(\nabla F(y_0 + u)) 
	\;\;\text{for all $u \in \calU$} \;.
\end{equation}

\paragraph{Key 
property of effective
stationarity map.} 
To begin, we note the following
important relationship 
between zeros of $\pi_0$ 
and the set $\NE(V)$:

\begin{restatable}
    {prop}{proppi0zerosnash}
    \label{prop:pi0-zeros-ne}
    Fix $y_0 \in\R^D$ and 
    $x_0 = \nabla F(y_0) \in \relint(\calX)$
    and consider the map
    $\pi_0: \calU \to \calU$ as defined in \eqref{eq:recall-pi0-def}. 
    Then for all $u \in \calU$,
    it holds that 
    $\pi_0(u) = 0$
    if and only if 
    $\nabla F(y_0 + u) \in \NE(V)$.
\end{restatable}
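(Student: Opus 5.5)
The claim reduces to a characterization of $\NE(V)$ for interior points: by Proposition~\ref{prop:relative-duality} (applied to the pair $(R,F)$, which satisfies Assumption~\ref{ass:legendre} by Setting~\ref{setting:weighted}), the point $x := \nabla F(y_0+u)$ always lies in $\relint(\calX)$. So the plan is to show that, for interior $x$, we have $x \in \NE(V)$ if and only if $V(x) \in \calS$. Then we translate the condition $V(x)\in\calS$ into $\Pi_{\calS^\bot}V(x) = 0$, which is exactly $\pi_0(u)=0$.

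\emph{Interior Nash points are normal points.} We invoke Proposition~\ref{prop:ne-vs-normal}, transferred to the symmetrized system. Its proof only uses that $x-x'\in\calS^\bot$ for $x,x'\in\calX$ (Lemma~\ref{lem:normal-diff-weighted}). It also uses that an interior point admits small two-sided perturbations $x\pm\eps v\in\calX$ for every $v\in\calS^\bot$; this holds because $\calX=\prod_i \sqrt{m_i}\Delta_{d_i}$ has affine hull $x+\calS^\bot$.

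\emph{Forward direction.} Suppose $\pi_0(u)=0$. Then $\Pi_{\calS^\bot}V(x)=0$, i.e.\ $V(x)\in(\calS^\bot)^\bot=\calS$. Since $\tilde x-x\in\calS^\bot$ for every $\tilde x\in\calX$, we get $\langle V(x),\tilde x-x\rangle=0\ge 0$, so $x\in\NE(V)$ by \eqref{eq:V-NE}; this is part (i) of Proposition~\ref{prop:ne-vs-normal}.

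\emph{Reverse direction.} Suppose $x\in\NE(V)$ with $x\in\relint(\calX)$. Testing \eqref{eq:V-NE} at $x\pm\eps v$ for each $v\in\calS^\bot$ gives $\langle V(x),v\rangle=0$. Hence $V(x)\in\calS$, so $\pi_0(u)=\Pi_{\calS^\bot}V(x)=0$; this is part (ii) of Proposition~\ref{prop:ne-vs-normal}.

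The only point needing care is the interiority of $\nabla F(y_0+u)$ and the correct identification of $\calS$ and $\calS^\bot$ as the normal and tangent spaces of the scaled domain $\calX$. Once that is checked, there is no real obstacle. One small remark to include: $\pi_0(u)\in\calU$ by definition, but we only need that $\pi_0(u)$ is the orthogonal projection onto $\calS^\bot$, so $\pi_0(u)=0$ if and only if $V(x)\in\calS$. The restriction to $\calU$ plays no role in this equivalence.
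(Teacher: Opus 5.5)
Your proposal is correct and follows essentially the paper's proof. Both reduce $\pi_0(u)=0$ to $V(\nabla F(y_0+u))\in\calS$ and then apply parts (i) and (ii) of Proposition~\ref{prop:ne-vs-normal} in the symmetrized system, using that $\nabla F(y_0+u)\in\relint(\calX)$.
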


\begin{proof}
    Fix $u \in \calU$,
    and denote as shorthand $x_0(u) := \nabla F(y_0 + u)
    \in \relint(\calX)$. 
    For the forward direction
    suppose $\pi_0(u) = 0$. 
    Then by definition,
    we must have $V(x_0(u))\in \calS$.
    By Proposition~\ref{prop:ne-vs-normal} (i)
    (applied to the symmetrized system),
    this implies that $x_0(u) \in \NE(V)$.
    For the other direction,
    suppose $x_0(u) \in \NE(V)$. 
    Then since also 
    $x_0(u) \in \relint(\calX)$,
    it follows from 
    Proposition~\ref{prop:ne-vs-normal} (ii)
    that $V(x_0(u)) \in\calS$.
    Thus by definition,
    $\pi_0(u) = \Pi_{\calS^\bot} V(x_0(u))
    = 0$.
\end{proof}

\medskip
Next, we give in 
Section~\ref{app:reachable-props}
several properties
of the reachable primal  
sets $\calR(x_0)$.
Then, in 
Section~\ref{app:regularity},
we derive several 
consequences of
the regularity
property of Assumption~\ref{ass:restricted-regularity}.
These properties and consequences
are used to later establish our equilibrium
selection result 
and operator domination inequality.

\subsection{Properties
of Reachable Primal Set}
\label{app:reachable-props}

\medskip
\paragraph{Unique 
reachable MEQ point.} 

We show that, among all MEQ points, at most one is reachable by 
the dynamics from any given initialization. 
Note that this property holds without 
Assumption~\ref{ass:restricted-regularity}. 

\smallskip
\begin{restatable}[Unique reachable MEQ point]
{prop}{propmeqreachable set}
	\label{prop:unique-reachable-meq}
	Fix $y_0 \in \R^D$ and
	$x_0 = \nabla F(y_0) \in \relint(\calX)$.
	Then
	\begin{equation*}
	\big|\MEQ(V) \cap \calR(x_0)\big| \le 1 \;.
	\end{equation*}
\end{restatable}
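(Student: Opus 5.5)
Suppose $x^\star_1, x^\star_2 \in \MEQ(V) \cap \calR(x_0)$. The plan is to show that the Bregman divergence between them vanishes, using the energy identity of Proposition~\ref{prop:bregman-shifted-energy} together with the Minty equality. Since both points are reachable, we can write $x^\star_k = \nabla F(y_0 + v_k)$ with $v_k \in \calY = \Span(V(\calX))$ for $k=1,2$. Both points lie in $\relint(\calX)$, because $\nabla F$ maps into $\relint(\calX)$, so all Bregman divergences $\D_R(x^\star_j, x^\star_k)$ are finite.

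\textbf{Step 1: $\calY$ is orthogonal to differences of MEQ points.} Fix $x^\star \in \MEQ(V)$ and any $\wt x \in \MEQ(V)$. Applying~\eqref{eq:minty-equality} at $x$ for both points and subtracting gives $\langle V(x), \wt x - x^\star\rangle = 0$ for every $x \in \calX$. By linearity, $\langle y, \wt x - x^\star \rangle = 0$ for every $y \in \calY$.

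\textbf{Step 2: the difference of divergences vanishes.} Apply Proposition~\ref{prop:bregman-shifted-energy} with $y = y_0 + v_1$ and $y' = y_0 + v_2$, where $x = x^\star_1$ and $x' = x^\star_2$, using reference points $\tilde x \in \{x^\star_1, x^\star_2\}$. This gives
\[
\D_R(\tilde x, x^\star_2) - \D_R(\tilde x, x^\star_1) = F(y_0+v_2) - F(y_0+v_1) - \langle \tilde x, v_2 - v_1\rangle .
\]
By Step~1 applied with $x^\star = x^\star_1$ and $\wt x = x^\star_2$, the last inner product is the same for $\tilde x = x^\star_1$ and $\tilde x = x^\star_2$, since $v_2 - v_1 \in \calY$. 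So the right-hand side does not depend on which of the two points is used as the reference.

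Taking $\tilde x = x^\star_1$ gives $\D_R(x^\star_1, x^\star_2) - 0$. Taking $\tilde x = x^\star_2$ gives $0 - \D_R(x^\star_2, x^\star_1)$. Equating the two expressions yields
\[
\D_R(x^\star_1,x^\star_2) + \D_R(x^\star_2,x^\star_1) = 0 .
\]

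\textbf{Step 3: conclude.} Both divergences are nonnegative, so both are zero. Strict convexity of $R$ on $\relint(\calX)$ (the analogue of Assumption~\ref{ass:legendre} transferred via Setting~\ref{setting:weighted}) then forces $x^\star_1 = x^\star_2$.

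The only delicate point is the applicability of Proposition~\ref{prop:bregman-shifted-energy}. It requires the reference point to lie in $\relint(\calX)$ and the two dual points to map onto the respective primal points, and both hold here by construction. The argument uses no regularity assumption, consistent with the remark preceding the statement.
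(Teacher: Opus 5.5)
Your proof is correct and is essentially the paper's argument: both reduce to the vanishing of the pairing $\langle x^\star_1 - x^\star_2, v_1 - v_2\rangle$, since MEQ differences are orthogonal to $\Span(V(\calX))$ and reachable dual displacements lie in it, and your symmetrized divergence $\D_R(x^\star_1,x^\star_2)+\D_R(x^\star_2,x^\star_1)$ equals exactly this pairing. The only difference is the final step. You invoke strict convexity of $R$ on $\relint(\calX)$, whereas the paper projects the displacements onto $\calS^\bot$ and uses strict monotonicity of $u \mapsto \nabla F(y_0+u)$ there, via positive definiteness of $\nabla^2 F$ on $\calS^\bot$; these are dual forms of the same fact.
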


\noindent 
Intuitively, for two reachable MEQ points, 
the MEQ condition forces their difference to be orthogonal 
to the corresponding reachable dual displacement, 
whereas strict monotonicity of the gradient map 
forces their inner product to be strictly positive 
whenever the displacement is nonzero. Hence the displacement 
must vanish, and the two points must coincide. 
We now provide a formal proof. 

\begin{proof}
	Let $x^\star_1, x^\star_2 \in \MEQ(V) \cap \calR(x_0)$. 
	We prove that $x^\star_1 = x^\star_2$.

    \smallskip
    \noindent
	\textit{Step 1. Consequence of membership 
	in MEQ set.} ~

    \noindent
	First, since $x^\star_1 \in \MEQ(V)$
	and $x^\star_2 \in \MEQ(V)$, 
    $0 = \langle V(x), x - x^\star_1\rangle = \langle V(x), x - x^\star_2 \rangle$
    for all $x \in \calX$.
	Subtracting the two equalities implies that
	\begin{equation*}
		\langle V(x), x^\star_1 - x^\star_2 \rangle = 0\,,\quad
		\forall x \in \calX \;.
	\end{equation*}
	Thus $x^\star_1 - x^\star_2$ is orthogonal to 
	$V(x)$ for all $x \in \calX$,
	meaning
	$x^\star_1 - x^\star_2 \in \calY^\bot = \Span(V(\calX))^\bot$.

	\smallskip
    \noindent
	\textit{Step 2. Consequence of membership 
	in reachable primal set.}  

    \noindent
	Now as $x^\star_1, x^\star_2 \in \calR(x_0)$, this implies
	there exist $y_1, y_2 \in \calY$ such that
	$x^\star_1 = \nabla F(y_0 + y_1)$ 
	and $x^\star_2 = \nabla F(y_0 + y_2)$.
	Define $u_1 := \Pi_{\calS^\bot} y_1 \in \calU$
	and $u_2 := \Pi_{\calS^\bot}y_2 \in \calU$. 
	Now by the shift-invariance property of
	$\nabla F$ from Proposition~\ref{prop:invariance-corr}, 
    it follows that
	$x^\star_1 = \nabla F(y_0 + u_1)$
	and 
	$x^\star_2 = \nabla F(y_0 + u_2)$.

	\smallskip
    \noindent
	\textit{Step 3. Combining the pieces.} ~

	As $x^\star_1 - x^\star_2 \in \calY^\bot$
	and $y_1 - y_2 \in \calY$, 
	it follows that $\langle x^\star_1 - x^\star_2, y_1 - y_2 \rangle = 0$.
	Moreover, we can express $y_1-y_2$ via the 
	orthogonal decomposition 
	\begin{equation*}
		y_1 - y_2 
		= 
		\Pi_{\calS^\bot} (y_1 -y_2) + \Pi_{\calS}(y_1-y_2)
		= 
		u_1 - u_2 + \Pi_{\calS}(y_1 -y_2) \;,
	\end{equation*}
	where the final equality uses the definitions of 
	$u_1, u_2 \in \calU \subseteq \calS^\bot$.
	Combining these two facts yields
	\begin{align}
		0 
		&= 
			\langle x^\star_1 - x^\star_2, y_1 - y_2 \rangle
			\nonumber \\
		&= 
			\langle x^\star_1 - x^\star_2, u_1 - u_2 \rangle
			+ 
			\langle x^\star_1 - x^\star_2, \Pi_{\calS}(y_1 - y_2)\rangle
			\nonumber \\
		&= 
			\langle 
			 \nabla F(y_0 + u_1) - \nabla F(y_0 + u_2),
			u_1 - u_2 
			\rangle \,,
			\label{eq:meqrx-01}
	\end{align}
	where the final equality comes from the fact
	that $x^\star_1, x^\star_2 \in \calX$
	and thus $x^\star_1 - x^\star_2 \in \calS^\bot$,
	and also using the characterization of
	$x^\star_1, x^\star_2$ from Step 2. 
    We now prove that $u \mapsto \nabla F(y_0 + u)$ is strictly monotone on $\calS^{\bot}$. 
    Let $d = u_1 - u_2$, 
    and observe by the fundamental theorem of calculus
    that 
    \begin{equation*}
    \langle
    \nabla F(y_0+u_1)-\nabla F(y_0+u_2),d
    \rangle
    =
    \int_{0}^1 \,
    \langle
    d,\nabla^2F(y_0+u_2+sd)d
    \rangle\,ds.
    \end{equation*}
    By convexity of $F$, and using 
    Proposition~\ref{prop:second-order-duality} (iii),
    the Hessian $\nabla^2 F$
    in the integral is 
    positive definite on $\calS^\bot$.
    As $u_1, u_2 \in \calU \subseteq \calS^\bot$,
     $d = u_1 - u_2 \in \calS^\bot$.
    Thus, by positive definiteness, 
    the integral must be strictly
    positive for $d \neq 0$. 
    Thus,~\eqref{eq:meqrx-01}
    subsequently implies that
    we must have $u_1 = u_2$, 
    and therefore 
     $x^\star_1 = 
     \nabla F(y_0 + u_1) = \nabla F(y_0 + u_2) 
     = x^\star_2$,
	which concludes the proof.
\end{proof}

\smallskip
\paragraph{Useful characterization
of reachable primal set.} We further give the following
useful characterization
of the reachable primal set $\calR(x_0)$:

\smallskip
\begin{restatable}
    {prop}{propreachablehelper}
    \label{prop:reachable-helper}
    Let $y_0 \in \R^D$
    and define $x_0 = \nabla F(y_0) \in \relint(\calX)$.
    Then for all $x \in \relint(\calX)$, 
    \begin{equation*}
        x \in \calR(x_0)
        \;\iff\;
        \Pi_{\calS^\bot}(\nabla R(x) - y_0) \in \calU \;.
    \end{equation*}
\end{restatable}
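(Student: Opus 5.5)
We prove the two implications separately, using the first-order duality of Proposition~\ref{prop:first-order-duality} (applied to the pair $R$, $F$ in the symmetrized setting) together with the shift invariance of Proposition~\ref{prop:invariance-corr}. The basic fact is this: for $x \in \relint(\calX)$ and $y \in \R^D$, we have $x = \nabla F(y)$ if and only if $y - \nabla R(x) \in \calS$. Recall also that $\nabla R(x)$ is the relative gradient, so $\nabla R(x) \in \calS^\bot$.

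For the forward direction, suppose $x \in \calR(x_0)$, so $x = \nabla F(y_0 + y)$ for some $y \in \calY$. Proposition~\ref{prop:first-order-duality} gives $y_0 + y - \nabla R(x) = s$ for some $s \in \calS$. Rearranging, $\nabla R(x) - y_0 = y - s$. Applying $\Pi_{\calS^\bot}$ kills $s$ and gives $\Pi_{\calS^\bot}(\nabla R(x) - y_0) = \Pi_{\calS^\bot} y$. This lies in $\Pi_{\calS^\bot}\calY = \calU$ by definition of $\calU$.

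For the reverse direction, suppose $u := \Pi_{\calS^\bot}(\nabla R(x) - y_0) \in \calU$. Since $\calU = \Pi_{\calS^\bot}\calY$, we can pick $y \in \calY$ with $\Pi_{\calS^\bot} y = u$. By Proposition~\ref{prop:invariance-corr}, $\nabla F(y_0 + y) = \nabla F(y_0 + u)$, so it suffices to show $\nabla F(y_0 + u) = x$. Decompose $\nabla R(x) - y_0 = u + s'$ with $s' := \Pi_{\calS}(\nabla R(x) - y_0) \in \calS$. Then $(y_0 + u) - \nabla R(x) = -s' \in \calS$, and the backward implication of Proposition~\ref{prop:first-order-duality} yields $x = \nabla F(y_0 + u) = \nabla F(y_0 + y)$. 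Hence $x \in \calR(x_0)$.

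The only delicate point is the bookkeeping that justifies applying the first-order duality and invariance results to $(R, F)$ on $\calX$ rather than to $(Q, H)$ on $\calW$. This is covered by Setting~\ref{setting:weighted}: $R$ satisfies Assumption~\ref{ass:legendre} relative to $\calX$, and $\calX$ has the same normal space $\calS$ by Lemma~\ref{lem:normal-diff-weighted}. We also use that the projections onto $\calS$ and $\calS^\bot$ are linear and complementary. With these in place, no real obstacle remains.
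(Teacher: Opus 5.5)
Your proof is correct and follows essentially the same route as the paper, which runs the first-order duality characterization ($x = \nabla F(y) \iff y - \nabla R(x) \in \calS$) through a chain of equivalences with $\Pi_{\calS^\bot}$. In the reverse direction, your detour through $y_0 + u$ and Proposition~\ref{prop:invariance-corr} is valid but unnecessary: $\Pi_{\calS^\bot}(y_0 + y - \nabla R(x)) = u - u = 0$ already gives $y_0 + y - \nabla R(x) \in \calS$ directly.
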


\begin{proof}
    Recall that $\calY = \Span(V(\calX))$, 
    $\calU = \Pi_{\calS^\bot} \calY$,
    and
    $\calR(x_0) = \{\nabla F(y_0 + v): 
    v\in \calY\}$.
    Now by the primal-dual 
    relationship of 
    Proposition~\ref{prop:first-order-duality},
    recall that
    for $x \in \relint(\calX)$
    and $y \in\R^D$
    we have $x = \nabla F(y)$
    if and only if 
    $y - \nabla R(x) \in \calS$.
    Then fixing such an $x \in \relint(\calX)$,
    we then have the 
    following string of equivalences:
    \begin{align*}
        x \in \calR(x_0)
        &\iff 
        \;\exists v \in \calY : 
        x = \nabla F(y_0 +v)\\
        &\iff 
        \;\exists v \in \calY : 
        y_0 + v - \nabla R(x) \in \calS \\
        &\iff
         \;\exists v \in \calY : 
        \Pi_{\calS^\bot} (y_0 + v - \nabla R(x))
        = 0 \\
        &\iff
        \;\exists v \in \calY : 
        \Pi_{\calS^\bot}(\nabla R(x) - y_0)
        = \Pi_{\calS^\bot} v \\
        &\iff
        \Pi_{\calS^\bot} (\nabla R(x) - y_0) 
        \in \calU \;,
    \end{align*}
    where the final equivalence
    uses the definition of 
    $\calU = \Pi_{\calS^\bot} \calY$. 
\end{proof}

\subsection{Regularity of Effective Stationarity Map}
\label{app:regularity}

\medskip
In this section, 
we derive an important consequence
of the restricted regularity condition
of Assumption~\ref{ass:restricted-regularity}.
In particular, we show in the
following proposition that
\textit{regular} zeros of the stationarity
map $\pi_0$ must also be MEQ points. 
As a result, under 
the conditions of Assumption~\ref{ass:restricted-regularity} (when applied to
the symmetrized system),
every reachable zero
of the map $\pi_0$ 
corresponds to an MEQ point
(see Corollary~\ref{cor:reachable-meq}).
We begin with the following proposition:

\medskip
\begin{restatable}[Regular reachable zeros are MEQ points]
{prop}{proprestricteddegeneracy}
	\label{prop:restricted-degeneracy}
	Let $\MEQ(V) \neq \emptyset$.
	Fix initial $y_0 \in \R^D$
	and $x_0 = \nabla F(y_0) \in \relint(\calX)$. 
	For any $y \in \calY$, let 
	$u = \Pi_{\calS^\bot} y \in \calU$. 
	If $\pi_0(u) = 0$
	and $J_{\pi_0}(u): \calU \to \calU$ is non-singular,
	then $\nabla F(y_0 + u) \in \MEQ(V)$.
\end{restatable}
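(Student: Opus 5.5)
The plan is to compare the candidate point $x' := \nabla F(y_0+u)$ with an arbitrary fixed MEQ point $x^\star \in \MEQ(V)$, which exists by assumption. We then show that the regularity of $J_{\pi_0}(u)$ forces $x' - x^\star$ to be orthogonal to the whole dual update space $\calY$. Once that holds, the MEQ identity transfers from $x^\star$ to $x'$ by a one-line decomposition.

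\emph{Step 1 (a vanishing scalar function on $\calU$).} For $v \in \calU$, write $x(v) := \nabla F(y_0+v) \in \relint(\calX)$. Since $x(v), x^\star \in \calX$, Lemma~\ref{lem:normal-diff-weighted} gives $x(v) - x^\star \in \calS^\bot$. Define
\begin{equation*}
\psi(v) := \langle V(x(v)), x(v) - x^\star\rangle = \langle \pi_0(v), x(v) - x^\star\rangle .
\end{equation*}
The second equality holds because only the $\calS^\bot$-component of $V(x(v))$ pairs with a vector in $\calS^\bot$. By~\eqref{eq:minty-equality} for $x^\star$, we have $\psi \equiv 0$ on $\calU$. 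Writing $\psi$ through $\pi_0$ is deliberate: it lets us differentiate using only the assumed existence of $J_{\pi_0}(u)$ and the smoothness of $\nabla F$ (Assumption~\ref{ass:lhs}). No differentiability of $V$ itself is needed.

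\emph{Step 2 (differentiate at the zero).} Since $\psi$ is identically zero, its directional derivative at $u$ along any $h \in \calU$ vanishes. By the product rule,
\begin{equation*}
0 = \langle J_{\pi_0}(u) h,\, x' - x^\star\rangle + \langle \pi_0(u),\, \nabla^2 F(y_0+u) h\rangle .
\end{equation*}
The second term is zero because $\pi_0(u)=0$. Hence $\langle J_{\pi_0}(u)h, x'-x^\star\rangle = 0$ for all $h\in\calU$. Because $J_{\pi_0}(u):\calU\to\calU$ is nonsingular, its range is all of $\calU$, so $x' - x^\star \perp \calU$.

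\emph{Step 3 (lift to $\calY^\bot$ and conclude).} For any $y\in\calY$, the fact that $x'-x^\star\in\calS^\bot$ gives $\langle y, x'-x^\star\rangle = \langle \Pi_{\calS^\bot} y, x'-x^\star\rangle$. This vanishes because $\Pi_{\calS^\bot}y \in \calU$. So $x'-x^\star \in \calY^\bot = \Span(V(\calX))^\bot$. Then, for every $x\in\calX$,
\begin{equation*}
\langle V(x), x - x'\rangle = \langle V(x), x - x^\star\rangle + \langle V(x), x^\star - x'\rangle = 0 + 0 ,
\end{equation*}
so $x' \in \MEQ(V)$.

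The main point requiring care is Step 2. One must justify the chain/product rule: $x(\cdot)$ is $C^1$ by Assumption~\ref{ass:lhs}, and $\pi_0$ is differentiable at $u$ by hypothesis. One must also note that differentiating only along directions in $\calU$ suffices, which is exactly what the nonsingularity on $\calU$ exploits. Everything else is orthogonality bookkeeping between $\calS$, $\calS^\bot$, $\calU$ and $\calY$.
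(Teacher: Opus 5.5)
Your proof is correct and follows the paper's argument step for step. You fix $x^\star \in \MEQ(V)$, differentiate the vanishing MEQ identity along $h \in \calU$ at the zero $u$, use surjectivity of $J_{\pi_0}(u)$ to get $x'-x^\star \bot\, \calU$, lift this to $\calY^\bot$ via $x'-x^\star\in\calS^\bot$, and transfer the MEQ identity from $x^\star$ to $x'$.

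The one difference is small but cleaner. By writing the identity as $\langle \pi_0(v), x(v)-x^\star\rangle$ before differentiating, you avoid the paper's chain rule through $J_V(x_u)$, and hence any need for $V$ itself to be differentiable. The leftover term then vanishes directly from $\pi_0(u)=0$, rather than from $V(x_u)\in\calS$ together with $\nabla^2F(y_0+u)h\in\calS^\bot$.
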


\begin{proof}
Let $y\in\calY$ and set 
$u=\Pi_{\calS^\bot}y\in\calU$ and 
$x_u:=\nabla F(y_0+u).$ 
Suppose that
$\pi_0(u)=0$ 
and that
$J_{\pi_0}(u):\calU\to\calU$ 
is non-singular.
We will prove that $x_u\in\MEQ(V)$.

\smallskip
\noindent
\textit{Proof strategy.}
Since $\MEQ(V)\neq\emptyset$, let $x^\star\in\MEQ(V)$. 
Then, for all $x \in \calX$, 
\begin{equation}
\langle V(x), x - x_u \rangle 
= 
\langle V(x), x - x^{\star} \rangle - 
\langle V(x), x_u - x^{\star} \rangle 
= - \langle V(x), x_u - x^{\star} \rangle\,, 
\label{eq:master-proof}
\end{equation}
where the last equality follows from the fact 
that $x^\star\in\MEQ(V)$. 
Now, since $V(x)\in \Span(V(\calX)) =\calY$, 
observe that, if we can establish $x_u-x^\star\in\calY^\bot$,
then it will follow that $\langle V(x), x_u - x^{\star} \rangle = 0$.
From~\eqref{eq:master-proof}, this will then
imply that $\langle V(x), x - x_u \rangle = 0$ for all $x \in \calX$,
which shows that $x_u\in\MEQ(V)$. 
To establish that $x_u-x^\star\in\calY^\bot$, 
we first show that $x_u-x^\star$ is orthogonal 
to $\Range(J_{\pi_0}(u))$. Non-singularity of 
$J_{\pi_0(u)}$ then implies that this range is 
all of $\calU$. Finally, the fact that 
$\calU=\Pi_{\calS^\bot}\calY$ allows us to lift
orthogonality from $\calU$ to $\calY$. 
We proceed to give the details of these steps:

\noindent
\textit{Step 1. Show that $x_u-x^\star$ is orthogonal
to the range of $J_{\pi_0}(u)$.}

Fix any $h\in\calU$ and consider the curve
\begin{equation*}
x(t):=\nabla F(y_0+u+th)\,, 
\quad t \in \R\,.
\end{equation*}
By Proposition~\ref{prop:relative-duality}, 
$x(t) \in \relint(\calX)$ for all $t$.  
Since $x^\star\in\MEQ(V)$, then
$\langle V(x(t)),x(t)-x^\star\rangle=0$ 
for all sufficiently small $t$. 
Differentiating at $t=0$ then gives: 
\begin{equation}
    0
    =
    \big\langle
    J_V(x_u)\dot x(0),x_u-x^\star
    \big\rangle
    +
    \big\langle
    V(x_u),\dot x(0)
    \big\rangle.
    \label{eq:differentiation}
\end{equation}
Moreover,
$\dot x(0)=\nabla^2F(y_0+u)h.$ 
Since $h\in\calU\subseteq\calS^\bot$, 
Proposition~\ref{prop:second-order-duality} (iii) gives
$\dot x(0) = \nabla^2F(y_0+u)h \in \Range(\nabla^2F(y_0+u)) = \calS^\bot.$ 
On the other hand, $\pi_0(u)=0$ means
$\Pi_{\calS^\bot}V(x_u)=0,$ 
and hence
$V(x_u)\in\calS.$ 
Therefore, 
$\langle V(x_u),\dot x(0)\rangle=0$, 
and thus by~\eqref{eq:differentiation} we have:
\begin{equation}
    \big\langle
    J_V(x_u)\nabla^2F(y_0+u)h,
    x_u-x^\star
    \big\rangle
    =0 \;.
    \label{eq:diff-simplified1}
\end{equation}
Since $x_u,x^\star\in\calX$, we have
$x_u-x^\star\in\calS^\bot$ by Lemma~\ref{lem:normal-diff}.  
Thus projecting the first argument onto $\calS^\bot$ 
does not change the inner product, and it follows from 
\eqref{eq:diff-simplified1} that:  
\begin{equation}
\label{eq:diff-simplified2}
    \langle \Pi_{\calS^\bot}
    J_V(x_u)\nabla^2F(y_0+u)h,x_u-x^\star\rangle
    = 
    \big\langle
    J_V(x_u)\nabla^2F(y_0+u)h,
    x_u-x^\star
    \big\rangle
    =0\,.
\end{equation}
Using the chain rule, we have 
$J_{\pi_0}(u)h = \Pi_{\calS^\bot} J_V(x_u)\nabla^2F(y_0+u)h$,
and plugging this expression into \eqref{eq:diff-simplified2} 
then gives for all $h \in \calU$ that
$\langle J_{\pi_0}(u)h,x_u-x^\star\rangle=0$.
Thus $x_u-x^\star$
is orthogonal to
$\Range(J_{\pi_0}(u))$.

\smallskip
\noindent
\textit{Step 2. Use non-singularity to prove 
that $x_u-x^\star\in\calU^\bot$.}

\noindent
Since $J_{\pi_0}(u):\calU\to\calU$ 
is non-singular, it is surjective. Therefore,
$\Range(J_{\pi_0}(u))=\calU.$ 
The conclusion of Step 1 thus gives
$x_u-x^\star\in\calU^\bot.$

\smallskip
\noindent
\textit{Step 3. Lift orthogonality from $\calU$ to $\calY$.}

\noindent
Recall that
$\calU=\Pi_{\calS^\bot}\calY.$  
Also,
$x_u-x^\star\in\calS^\bot.$ 
Hence, for every $y'\in\calY$,
\begin{equation}
    \langle y',x_u-x^\star\rangle
    =
    \langle
    \Pi_{\calS^\bot}y',
    x_u-x^\star
    \rangle
    =0,
\end{equation}
because $\Pi_{\calS^\bot}y'\in\calU$ and
$x_u-x^\star\in\calU^\bot$ as shown in step 2. 
Therefore,
$x_u-x^\star\in\calY^\bot.$ 
\end{proof}

\medskip
As an immediate 
corollary of Proposition~\ref{prop:pi0-zeros-ne}
and Proposition~\ref{prop:restricted-degeneracy}, we then have under 
Assumption~\ref{ass:restricted-regularity}
(and by
Proposition~\ref{prop:regularity-sym},
which extends 
these properties 
to the present, symmetrized
system)
the following characterization:

\smallskip
\begin{restatable}
    {cor}{correachablemeq}
    \label{cor:reachable-meq}
    Let $y_0 \in \R^D$ and set $x_0= \nabla F(y_0)$, 
    and suppose that 
    Assumption~\ref{ass:restricted-regularity} holds
    for this initialization.
    Then for every $x \in \calR(x_0) \subseteq \relint(\calX)$,
    if $x \in \NE(V)$ then $x \in \MEQ(V)$. 
\end{restatable}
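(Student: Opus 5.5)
The statement is Corollary~\ref{cor:reachable-meq}. I will obtain it by chaining Proposition~\ref{prop:pi0-zeros-ne} with Proposition~\ref{prop:restricted-degeneracy}. The only additional work is bookkeeping: writing a reachable point $x$ in the form $\nabla F(y_0+u)$ with $u\in\calU$, and checking that the hypotheses of those propositions hold in the symmetrized system.

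\emph{Step 1: place the point in the effective dual space.} Fix $x\in\calR(x_0)\cap\NE(V)$. By the definition of $\calR(x_0)$, there is some $y\in\calY$ with $x=\nabla F(y_0+y)$. Set $u:=\Pi_{\calS^\bot}y$; this lies in $\calU$ by definition of $\calU$. By Proposition~\ref{prop:invariance-corr} (applied to $F$, using Setting~\ref{setting:weighted}), we get $x=\nabla F(y_0+u)$.

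\emph{Step 2: $u$ is a regular zero.} Since $x=\nabla F(y_0+u)\in\NE(V)$, Proposition~\ref{prop:pi0-zeros-ne} gives $\pi_0(u)=0$. The regularity hypothesis is imposed on the original system, but Proposition~\ref{prop:regularity-sym} transfers it to the symmetrized system $(V,R,F)$ at $x_0$. Hence every zero of $\pi_0$ on $\calU$ has a nonsingular Jacobian, and in particular $J_{\pi_0}(u)$ is nonsingular.

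\emph{Step 3: conclude.} Proposition~\ref{prop:restricted-degeneracy} also requires $\MEQ(V)\neq\emptyset$. This follows from the standing hypothesis $\WMEQ_M(G)\cap\relint(\calW)\neq\emptyset$ via Proposition~\ref{prop:symmetrized-meq} (equivalently Corollary~\ref{cor:meq-ne-weighted}). With $y\in\calY$, $u=\Pi_{\calS^\bot}y$, $\pi_0(u)=0$ and $J_{\pi_0}(u)$ nonsingular, Proposition~\ref{prop:restricted-degeneracy} gives $\nabla F(y_0+u)\in\MEQ(V)$, that is, $x\in\MEQ(V)$.

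There is no real obstacle here. The one point needing care is consistency of the representation: Proposition~\ref{prop:pi0-zeros-ne} and Proposition~\ref{prop:restricted-degeneracy} must be applied to the same $u$, namely the projection of a genuine element $y\in\calY$. Step~1 ensures this via shift invariance.
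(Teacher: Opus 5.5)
Your proof is correct and follows the paper's route: the paper obtains the corollary by chaining Proposition~\ref{prop:pi0-zeros-ne} with Proposition~\ref{prop:restricted-degeneracy}, using Proposition~\ref{prop:regularity-sym} to carry the regularity assumption over to the symmetrized system. Your extra bookkeeping makes explicit hypotheses the paper leaves implicit: writing $x=\nabla F(y_0+u)$ with $u=\Pi_{\calS^\bot}y$ via Proposition~\ref{prop:invariance-corr}, and deriving $\MEQ(V)\neq\emptyset$ from Proposition~\ref{prop:symmetrized-meq}.
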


\subsubsection{Regularity Under Linear Harmonic Operators}

When $V: \calX \to \R^D$
is the linear map $V(x) = Kx$ 
for some $K \in \R^{D \times D}$
satisfying the harmonic property
$\MEQ(V) \cap \relint(\calX) \neq \emptyset$,
then $V$ always satisfies 
the regularity condition
of Assumption~\ref{ass:restricted-regularity}
(for the symmetrized system).
We formally state and prove this here:

\medskip
\begin{restatable}
    [Regularity under linearity]
    {prop}{proplinearregular}
    \label{prop:linear-regular}
    Let $V: \calX \to \R^D$ 
    be given by the linear map $x \mapsto Kx$
    for some $K \in \R^{D\times D}$,
    and suppose that
    $\MEQ(V) \cap \relint(\calX) \neq \emptyset$.
    Let $R$ satisfy Assumption~\ref{ass:legendre}
    with respect to $\calX$
    and let $F = R^*$
    satisfy Assumption~\ref{ass:lhs}.
    Fix an initialization pair 
    $y_0 \in \R^D$ and $x_0 = \nabla F(y_0) \in \relint(\calX)$ such that
    $\calR(x_0) \not \subseteq \NE(V)$. 
    Then 
    the Jacobian $J_{\pi_0}(u) : \calU \to \calU$
    is nonsingular at every $u \in \calU$,
    and thus 
    Assumption~\ref{ass:restricted-regularity}
    holds whenever $\calR(x_0) \not \subseteq \NE(V)$.
\end{restatable}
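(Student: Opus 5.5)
The plan is to show directly that the kernel of $J_{\pi_0}(u)$ is trivial, using only the linearity of $V$ and the existence of an interior MEQ point. The hypothesis $\calR(x_0)\not\subseteq\NE(V)$ is not needed for nonsingularity; it only supplies the first clause of Assumption~\ref{ass:restricted-regularity}.

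\textbf{Step 1: structural consequences of an interior MEQ point.} Fix $x^\star\in\MEQ(V)\cap\relint(\calX)$. For any $a\in\calS^\bot$ and all sufficiently small $|t|$, the point $x^\star+ta$ lies in $\calX$, since $\calS^\bot$ is the tangent space of $\calX$ (Lemma~\ref{lem:normal-diff-weighted}). Plugging it into~\eqref{eq:minty-equality} with $V(x)=Kx$ gives
\begin{equation*}
0=\langle K(x^\star+ta),ta\rangle = t\langle Kx^\star,a\rangle + t^2\langle Ka,a\rangle .
\end{equation*}
Because this holds for all small $t$, both coefficients vanish. Hence $Kx^\star\in\calS$, and $\langle Ka,a\rangle=0$ for every $a\in\calS^\bot$. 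Polarizing the second identity gives $\langle Ka,b\rangle=-\langle Kb,a\rangle$ for all $a,b\in\calS^\bot$.

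\textbf{Step 2: computing the Jacobian.} By Assumption~\ref{ass:lhs}, $F$ is $C^2$. Writing $P:=\nabla^2F(y_0+u)$, the chain rule gives $J_{\pi_0}(u)h=\Pi_{\calS^\bot}KPh$ for $h\in\calU$. By Proposition~\ref{prop:second-order-duality}(iii), $P$ is positive semidefinite with null space $\calS$ and range $\calS^\bot$. Since $\calU\subseteq\calS^\bot$, $P$ is positive definite on $\calU$.

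\textbf{Step 3: trivial kernel.} Suppose $h\in\calU$ satisfies $J_{\pi_0}(u)h=0$, and set $v:=Ph\in\calS^\bot$. Then $\Pi_{\calS^\bot}Kv=0$, so $Kv\in\calS$. For any $x\in\calX$, write $x=x^\star+a$ with $a\in\calS^\bot$. Using Step 1,
\begin{equation*}
\langle Kx,v\rangle=\langle Kx^\star,v\rangle+\langle Ka,v\rangle = 0-\langle Kv,a\rangle = 0,
\end{equation*}
where the last equality holds because $Kv\in\calS$ and $a\in\calS^\bot$. Thus $v\perp\Span(V(\calX))=\calY$. Since $h\in\calU=\Pi_{\calS^\bot}\calY$, write $h=\Pi_{\calS^\bot}y$ with $y\in\calY$. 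Because $v\in\calS^\bot$,
\begin{equation*}
\langle h,Ph\rangle=\langle h,v\rangle=\langle y,v\rangle=0 .
\end{equation*}
Positive definiteness of $P$ on $\calU$ then forces $h=0$. Therefore $J_{\pi_0}(u):\calU\to\calU$ is injective, and hence nonsingular, at every $u\in\calU$, not just at zeros of $\pi_0$.

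\textbf{Main obstacle.} The key step is extracting the skew-symmetry of $K$ on $\calS^\bot$ from the MEQ identity in Step 1. This requires the MEQ point to be interior, so that both signs of $t$ are admissible, and it is the only place where harmonicity enters. The remaining steps are linear algebra.
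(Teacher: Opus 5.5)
Your proposal is correct and follows the paper's proof. It uses the same two consequences of an interior MEQ point ($Kx^\star\in\calS$ and skew-symmetry of $K$ on $\calS^\bot$), the same Jacobian $\Pi_{\calS^\bot}K\nabla^2F(y_0+u)$, and the same conclusion via $\langle h,Ph\rangle=0$ together with positive definiteness of $P$ on $\calS^\bot$. The only difference is packaging: the paper introduces $A=\Pi_{\calS^\bot}K|_{\calS^\bot}$ and proves $\Range(A)=\calU$ and $\Null(A)=\calU^\bot\cap\calS^\bot$, while your Step 3 inlines these facts into one computation showing $Ph\perp\calY$.
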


\begin{proof}
    To show nonsingularity of $J_{\pi_0}$, we show that its nullspace is trivial. 
    We first identify the structure of the projected 
    linear operator underlying $J_{\pi_0}$. 
    Since $\pi_0(u) = \Pi_{\calS^\bot} K \nabla F(y_0 + u)$ 
    for all $u \in \calU$, its Jacobian is:  
    \begin{equation}
    \label{eq:chain-rule}
        J_{\pi_0}(u) =  \Pi_{\calS^\bot} K \nabla^2 F(y_0 + u)\,,
        \quad \forall u \in \calU\,.
    \end{equation}
    Fix an arbitrary $u \in \calU$ and define $P := \nabla^2 F(y_0 + u)$  
    and recall from Proposition~\ref{prop:second-order-duality} 
    that $\Range(P) = \calS^{\bot}$.  
    Define then the linear map $A:\calS^\bot\to\calS^\bot$ by
    $Av:=\Pi_{\calS^\bot}Kv$ for any $v \in \calS^{\bot}$. 
    It then follows that $APv$ is well-defined for 
    any $v \in \calU$, and given \eqref{eq:chain-rule}, 
    we have $J_{\pi_0}(u)v = APv$ for any $v \in \calU$. 

    Now, recall from Proposition~\ref{prop:second-order-duality} (iii)
    that $P \succeq 0$ with $\Null(P)=\calS$. 
    In particular, $P$ is positive definite on $\calS^\bot$. 
    We then show that $A$ is skew-symmetric (e.g., $A^{\top} = -A$), 
    that $\Range(A) = \calU$, and that
    $\Null(A) = \calU^{\bot} \cap \calS^{\bot}$.
    We prove each one of these claims sequentially before 
    using them to conclude the statement of the proposition.

    \smallskip
    \noindent
    \textit{Proof of skew symmetry of $A: \calS^{\bot} \to \calS^{\bot}$.} 
    
    \noindent
    Let $p\in \MEQ(V)\cap\relint(\calX)$. 
    Since $p\in \MEQ(V)$, we have for any $x \in \calX$
    that $\langle Kx, x-p \rangle = 0$. 
    Since we also have $p \in \relint(\calX)$, 
    for any $v \in \calS^{\bot}$ and for sufficiently small $t > 0$, 
    we have $p + tv \in \calX$ and $p-tv \in \calX$. 
    Applying the MEQ identity to both points then gives:
    \begin{align}
        0 &= \langle K(p+tv), p + tv - p\rangle = t \langle Kp, v \rangle  + t^2 \langle Kv, v \rangle\,,\\
        0 &= \langle K(p-tv), p - tv - p\rangle = - t \langle Kp, v \rangle  + t^2 \langle Kv, v \rangle\,. 
    \end{align}
    Adding and subtracting the two identities gives respectively: 
    \begin{equation*}
        \langle Kv, v \rangle = 0\,,\quad \langle Kp, v \rangle = 0\,, \quad 
        \forall v \in \calS^{\bot}\,. 
        \end{equation*}
    In particular, it follows from the second identity above that $Kp \in \calS$ and hence $\Pi_{\calS^\bot} Kp = 0$. This fact will be used in later steps. 
    Now, using the definition of $A$ together with the first identity, 
    we have for any $v \in \calS^{\bot}$
    \begin{equation}
        \langle Av , v \rangle =  
        \langle \Pi_{\calS^\bot} Kv , v \rangle =  
        \langle Kv , v \rangle = 0\,.
        \label{eq:ssid}
    \end{equation}
    From this identity, we deduce that $A$ is skew-symmetric. 
    To see this, take any $v,w \in \calS^{\bot}$.
    Then~\eqref{eq:ssid} gives $\langle A (v +w), v+w \rangle = 0$. 
    Expanding this identity and using the facts that 
    $\langle Av , v \rangle=0$ and $\langle Aw, w \rangle=0$
    then yields $\langle Av , w \rangle = - \langle v , Aw \rangle$.
    Thus $A^{\top} = -A$,  where $A^{\top}$ is the adjoint as an operator on $\calS^{\bot}$. 

    \smallskip
    \noindent
    \textit{Proof of $\Range(A) = \calU$.} 

    \noindent
    Recall that $\calU = \Span\{\Pi_{\calS^\bot} Kx: x \in \calX\}$.
    Since $\Pi_{\calS^\bot} Kp = 0$ as proved above, 
    we have for every $x \in \calX$ that 
    $\Pi_{\calS^\bot} Kx= \Pi_{\calS^\bot} K(x-p) = A(x-p)$. 
    Therefore $\calU = \Span \{A(x-p): x \in \calX \}$. 
    Moreover, since $\Span(\calX - p) = \calS^{\bot}$ by 
    Lemma~\ref{lem:normal-diff}, then using the linearity of $A$, we have: 
    \begin{equation*}
    \calU = \Span \{A(x-p): x \in \calX \} 
    = A(\Span(\calX - p)) = A(\calS^{\bot}) = \Range(A)\,,
    \end{equation*}
    where the last identity follows by definition of $A$. 

    \smallskip
    \noindent
    \textit{Proof of $\Null(A) = \calU^{\bot} \cap \calS^{\bot}$.} 

    \noindent
    Given that $\calU = \Range(A)$ from the previous step, 
    it is sufficient to prove that 
    $\Null(A) = \Range(A)^{\bot} \cap \calS^{\bot}$. 
    Note first that $\Null(A) \subset \calS^{\bot}$ 
    by definition of $A: \calS^{\bot} \to \calS^{\bot}$.
    We now prove the two inclusions to show the desired identity. 
    We use the skew symmetry of $A$ proved in the first point above. 
    If $x \in \Null(A) \subseteq \calS^{\bot}$, 
    then for every $y \in \calS^{\bot}$, 
    $\langle x, Ay\rangle =  \langle A^{\top} x, y\rangle 
    = - \langle A x, y\rangle = 0$, 
    where the last identity is due to $x \in \Null(A)$. 
    Hence we have proved that $x \in \Range(A)^{\bot}$. 
    Now, conversely, if $x \in \Range(A)^{\bot} \cap \calS^{\bot}$, 
    then for any $y \in \calS^{\bot}$, 
    $0 = \langle x, Ay\rangle 
    = \langle A^{\top} x, y\rangle 
    = - \langle Ax, y\rangle$,
    where the first identity follows from the
    fact that $x \in \Range(A)^{\bot}$. 
    Taking $y = Ax \in \calS^\bot$ gives $\|Ax\|^2 = 0$, 
    and therefore $Ax =0$. Hence $x \in \Null(A)$. 

    \smallskip
    \noindent
    \textit{Conclusion: $J_{\pi_0}(u)$ is non-singular.} 

    \noindent
    We are now ready to show that $\Null(J_{\pi_0}(u)) = \{0\}$.
    Suppose that $J_{\pi_0}(u) v = 0$ for some $v \in \calU$. 
    Since $J_{\pi_0}(u)v = APv$ for any $v \in \calU$, this gives $APv  = 0$,
    and hence $Pv \in \Null(A)$.
    But $\Null(A) = \calU^{\bot} \cap \calS^{\bot}$
    as shown above. 
    Thus, $Pv \in \calU^{\bot}$. Since we also have 
    $v \in \calU$, we finally obtain $\langle v, Pv\rangle = 0$.
    Now, since $P$ is positive definite on $\calS^{\bot}$ and 
    $v \in \calU \subset \calS^{\bot}$, we must have $v=0$. 
    It follows that $J_{\pi_0}(u):\calU\to\calU$ has a
    trivial nullspace, and it is thus injective and 
    hence nonsingular since $\calU$ is finite-dimensional.
    As $u \in \calU$ was arbitrary, 
    we thus conclude that
    Assumption~\ref{ass:restricted-regularity} holds. 
\end{proof}


\section{Intuition from Continuous Time}
\label{app:ct-intuition}

In this section, we give more details
on the continuous-time intuition
introduced in Section~\ref{sec:proof-overview:energy}.

\paragraph{Continuous-time FTRL flow.} 
First, recall that we can consider the 
FTRL flow:
\begin{equation}
	\dot y_t = - V(x_t)
	\quad\text{and}\quad
	x_{t} = \nabla F(y_t) \;.
	\tag{FTRL flow}
	\label{eq:ftrl-flow-app}
\end{equation}
Under~\eqref{eq:ftrl-flow-app}, we showed in~\eqref{eq:flow-conservation}
that 
\begin{equation}
    \textstyle
    \tfrac{d}{dt}
    \D_R(x^\star, x_t)
    = 
    \tfrac{d}{dt} F_{x^\star}(y_t)
    = 
    - \langle V(x_t), x_t - x^\star \rangle 
    = 
    0 
\end{equation}
for all points $x^\star \in \MEQ(V)$.

\medskip
\paragraph{Forward and backward
discretizations of FTRL flow.} 
Under the forward and backward discretizations
of~\eqref{eq:ftrl-flow}, we then obtain 
the following bounds on change in shifted energy,
as stated in~\eqref{eq:forward-backward-energy}:

\begin{restatable}[Change in shifted energy
for FTRL and IFTRL]
	{prop}{propftrliftrlenergy}
	\label{prop:ftrl-iftrl-energy}
	Consider the symmetrized system of Setting~\ref{setting:weighted}.
	Then along~\eqref{eq:ftrl} and~\eqref{eq:iftrl},
    and for any $x^\star \in \MEQ(V)$:
		\begin{equation*}
		F_{x^\star}(y_{t+1}) - F_{x^\star}(y_t)
		= 
		\begin{cases}
		\;\;\;\D_F(y_{t+1}, y_t) \;\ge\; 0 
		\quad\textnormal{for FTRL} \\
		-\D_F(y_t, y_{t+1}) \;\le\; 0 
		\quad\textnormal{for IFTRL} \;.
		\end{cases} 
		\end{equation*}	
\end{restatable}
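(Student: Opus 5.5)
The plan is to start from the definition $F_{x^\star}(y)=F(y)-\langle x^\star,y\rangle$ and add and subtract a first-order Taylor term of $F$. For any $y,y'\in\R^D$ we have the exact identity
\begin{equation*}
F_{x^\star}(y')-F_{x^\star}(y)=\D_F(y',y)+\langle \nabla F(y)-x^\star,\,y'-y\rangle,
\end{equation*}
and symmetrically, expanding around $y'$ instead,
\begin{equation*}
F_{x^\star}(y')-F_{x^\star}(y)=-\D_F(y,y')+\langle \nabla F(y')-x^\star,\,y'-y\rangle.
\end{equation*}
Both identities follow directly from the definition of the Bregman divergence $\D_F(a,b)=F(a)-F(b)-\langle\nabla F(b),a-b\rangle$ of the (differentiable, convex) energy $F$, so no assumption beyond differentiability of $F$ (Assumption~\ref{ass:lhs} via Setting~\ref{setting:weighted}) is needed here.

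For \eqref{eq:ftrl} I will use the first identity with $y=y_t$ and $y'=y_{t+1}$. Since $\nabla F(y_t)=x_t$ by Proposition~\ref{prop:ftrl-primal-dual-invariant} (in the symmetrized form of Setting~\ref{setting:weighted}) and $y_{t+1}-y_t=-\eta V(x_t)$, the linear term equals $-\eta\langle V(x_t),x_t-x^\star\rangle$. This vanishes by \eqref{eq:minty-equality}, because $x_t\in\calX$ and $x^\star\in\MEQ(V)$ (Proposition~\ref{prop:symmetrized-meq}), leaving $\D_F(y_{t+1},y_t)$.

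For \eqref{eq:iftrl} I will use the second identity with the same $y,y'$. Now $\nabla F(y_{t+1})=x_{t+1}$ and $y_{t+1}-y_t=-\eta V(x_{t+1})$, so the linear term is $-\eta\langle V(x_{t+1}),x_{t+1}-x^\star\rangle=0$, again by \eqref{eq:minty-equality}. This leaves $-\D_F(y_t,y_{t+1})$.

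The sign claims then follow from convexity of $F=R^*$, which gives $\D_F\ge 0$. The only point needing care is that $x^\star$ may lie on the relative boundary of $\calX$. Unlike Proposition~\ref{prop:bregman-shifted-energy}, however, this argument never uses $\D_R(x^\star,\cdot)$; it only uses $x^\star\in\calX$ and the MEQ identity evaluated at interior points $x_t\in\relint(\calX)$. So there is no real obstacle, and I expect the proof to be a two-line computation per case.
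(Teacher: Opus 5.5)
Your proposal is correct and matches the paper's proof essentially step for step. The paper also rewrites the change in shifted energy through the definition of the Bregman divergence of $F$, expanding around $y_t$ for FTRL and around $y_{t+1}$ for IFTRL, then substitutes the dual update and eliminates the linear term with the Minty equality at $x_t$ or $x_{t+1}$. Like you, it needs only $x^\star \in \calX$, not interiority.
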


\begin{proof}
Let $x^\star \in  \MEQ(V).$ Then, by definition of $\MEQ(V)$, we have: 
\begin{equation}
\label{eq:MEQ(V)-xstar}
\langle V(x), x- x^{\star} \rangle = 0\,, \quad \forall x \in \calX\,.
\end{equation}
By definition of the shifted energy function $F_{x^\star}$
(see \eqref{eq:shifted-energy}), we have for any $t \geq 0$, 
\begin{equation}
\label{eq:shifted-energy-diff}
F_{x^\star}(y_{t+1}) - F_{x^\star}(y_{t}) = F(y_{t+1}) - F(y_{t}) - \langle x^{\star}, y_{t+1} - y_t \rangle\,.
\end{equation}
We start by proving the identity for vanilla FTRL. 
By definition of $\D_F(y_{t+1}, y_t)$, 
we have for any $t \geq 0$, 
\begin{align}
\label{eq:bregman-forward}
\D_F(y_{t+1},y_t) &= F(y_{t+1}) - F(y_t) - 
\langle \nabla F(y_{t}), y_{t+1} - y_t \rangle \nonumber\\
&= F(y_{t+1}) - F(y_t) - \langle x_t, y_{t+1} - y_t \rangle\,, 
\end{align}
where the last identity follows from 
the primal-dual relationship $x_t = \nabla F(y_t).$
Rearranging~\eqref{eq:bregman-forward} 
and combining it with~\eqref{eq:shifted-energy-diff} yields: 
\begin{align}
F_{x^\star}(y_{t+1}) - F_{x^\star}(y_{t}) 
&=  \D_F(y_{t+1},y_t) + \langle x_t - x^{\star}, y_{t+1} - y_t \rangle\nonumber\\
&= \D_F(y_{t+1},y_t) - \eta \langle x_t - x^{\star}, V(x_t) \rangle \nonumber\\
&= \D_F(y_{t+1},y_t)\,,
\end{align}
where the penultimate identity follows from using 
the FTRL dual update rule $y_{t+1} = y_t - \eta V(x_t)$, 
and the last equality follows from~\eqref{eq:MEQ(V)-xstar},
since $x^\star \in  \MEQ(V)$ and $x_t \in \mathcal{X}$ for all $t \geq 0$. 
This concludes the proof of the first identity for FTRL.

As for IFTRL, the proof follows similar lines.
Again using the definition of the Bregman divergence, 
we have for any $t \geq 0$ that
\begin{align}
\label{eq:bregman-backward}
\D_F(y_t,y_{t+1}) &= F(y_{t}) - F(y_{t+1}) 
- \langle \nabla F(y_{t+1}), y_{t} - y_{t+1} \rangle \nonumber\\
&= F(y_{t}) - F(y_{t+1}) - \langle x_{t+1}, y_{t} - y_{t+1} \rangle\,, 
\end{align}
where the last identity follows from the primal-dual 
relationship $x_{t+1} = \nabla F(y_{t+1})$.
Rearranging and combining the above identity~\eqref{eq:bregman-backward} 
with~\eqref{eq:shifted-energy-diff} yields for any $t \geq 1$, 
\begin{align}
F_{x^\star}(y_{t+1}) - F_{x^\star}(y_{t}) 
&= - \D_F(y_{t},y_{t+1}) + 
\langle x^{\star} - x_{t+1}, y_{t} - y_{t+1} \rangle\nonumber\\
&= - \D_F(y_t,y_{t+1}) 
+ \eta \langle x^{\star} - x_{t+1}, V(x_{t+1}) \rangle \nonumber\\
&= - \D_F(y_{t},y_{t+1})\,,
\end{align}
where the penultimate identity follows from using 
the IFTRL dual update rule $y_{t+1} = y_t - \eta V(x_{t+1})$, 
and the last equality follows from \eqref{eq:MEQ(V)-xstar},
since $x^\star \in  \MEQ(V)$ and $x_{t+1} \in \mathcal{X}$ 
for all $t \geq 0$. 
This concludes the proof of the second identity for IFTRL.
\end{proof}

\section{Properties of Energy Function Hessian}
\label{app:error-machinery}

In this section, we develop the machinery that will be used
in the proof of Lemma~\ref{lem:energy-dissipation-main},
which establishes one-step bounds on the dissipation
of the shifted energy function along the OFTRL and EGFTRL iterates.
We establish this machinery 
for the symmetrized system with operator $V: \calX \to \R^D$,
regularizer $R$, and energy function $F = R^*$
(as summarized in Proposition~\ref{prop:reduction-algorithmic}). 
In particular, most of the bounds derived in this section
stem from the local Hessian stability property~\eqref{eq:lhs}
for the energy Hessian~$\nabla^2 F$.

We start with a brief recap 
of the assumptions satisfied under the symmetrized system:

\paragraph{Summary of global assumptions for
symmetrized system.} 
Recall from~\ref{setting:weighted} that the operator
$V:\calX \to \R^D$, regularizer $R: \calX \to \R$, 
and energy $F: \R^D \to \R$ satisfy the following properties:
\begin{itemize}[
    leftmargin=1.5em,
]
    \item 
    $V$ is $\wh L$-Lipschitz and $\wh B$-Bounded. 
    \item 
    $R$ is relatively Legendre and
    1-strongly-convex with respect to $\|\cdot\|$.
    \item 
    $F=R^*$ is 1-smooth with respect 
    to $\|\cdot\|_{*}$ and 
    satisfies~\eqref{eq:lhs} with constant $\wh\omega > 0$.
\end{itemize}

\subsection{Properties of Local Semi-norms}
\label{app:error-machinery:seminorms}

We start by recalling the notation 
for the \textit{local semi-norms}
introduced in Section~\ref{sec:main-proof-overview}.
Specifically, for all $y \in \R^D$,
recall that $\nabla^2 F(y) \succeq 0$ 
is positive semi-definite,
but with the non-trivial nullspace 
$\Null(\nabla^2 F(y)) = \calS$
(Proposition~\ref{prop:second-order-duality}).
Thus, $\nabla^2 F(y)$ can be used to define a local
degenerate inner-product, which in turn induces a local semi-norm as follows:
\begin{equation}
    \langle u, v \rangle_{y}
    := \langle u, \nabla^2 F(y) v\rangle 
    \;\;\text{and}\;\;
    \|v\|_{y} := \sqrt{\langle v, \nabla^2 F(y) v \rangle}
    \;\;\text{for all $u, v \in \R^D$}\;.
    \label{eq:local-norm}
\end{equation}
Importantly, note that
$\|v\|_{y} = 0$ if and only if 
$v \in \calS = \Null(\nabla^2 F(y))$
(which holds even if $v \neq 0$).

When restricted to the subspace $\calS^\bot$,
we also define the dual semi-norm $\|\cdot\|_{y, *}$.
In particular, for $x = \nabla F(y)$, 
recall that we have the primal-dual relationship
between the operators $\nabla^2 F$ and $\nabla^2 R$ given 
by $(\nabla^2 F(y))^{-1} = \nabla^2 R(x)$, when the operators
are restricted to the subspace $\calS^\bot$,
e.g., when applied to vectors $v \in \calS^\bot$
(Proposition~\ref{prop:second-order-duality}).
Then for $y \in \R^D$, we define
\begin{equation}
    \|u\|_{y, *} 
    := 
    \sqrt{\langle u, (\nabla^2 F(y))^{-1} u\rangle}
    \;\;\text{for $u \in \calS^\bot$}.
    \label{eq:dual-local-norm}
\end{equation}
For $y \in \R^D$, the semi-norm $\|\cdot\|_{y}$ 
satisfies the Cauchy-Schwarz inequality:
\begin{equation}
    |\langle v, v' \rangle_{y}|
    \le \|v\|_{y} \cdot \|v'\|_{y}
    \;\;\text{for $v, v' \in \R^D$}\;.
    \label{eq:local-cs}
\end{equation}
Moreover, the pair
$(\|\cdot\|_{y}, \|\cdot\|_{y, *})$
satisfies the generalized Cauchy-Schwarz inequality:
\begin{equation}
    |\langle u, v \rangle |
    \le
    \|u\|_{y, *} \cdot \|v\|_{y}
    \;\;\text{for $u \in \calS^\bot$
    and $v \in \R^D$}\;.
    \label{eq:local-gcs}
\end{equation}
Finally, fixing $y \in \R^D$
and letting $P_{y} = \nabla^2 F(y)$
and $P^{-1}_y = (\nabla^2 F(y))^{-1}$
(restricted to $\calS^\bot$), 
we further have by definition
the following equivalent and
useful characterizations:
\begin{equation}
    \|v\|_y = \|P^{1/2}_{y} v\|_2
    \;\;\text{and}\;\;
    \|u\|_{y, *} = \|P^{-1/2}_{y} u\|_2
    \;\;\text{for all $v \in \R^D$
    and $u \in \calS^\bot$} \;.
    \label{eq:local-norm-useful}
\end{equation}
See~\cite{rudin1991functional}
for further background
and preliminaries 
on the degenerate inner products
and semi-norms induced by 
positive semi-definite matrices.

\subsection{Relationship between Local and Global (Semi)-Norms}

The following lemma relates the above local norms to 
particular global norms assuming strong convexity of
the regularizer $R$ with respect to the global norm.
\begin{restatable}{lem}{lemnorm-globalvslocal}
	\label{lem:norm-global-vs-local}
    Assume that $R$ is 1-strongly-convex
with respect to a particular norm $\|\cdot \|$ (equivalently, that
$F$ is 1-smooth with respect to its dual norm $\|\cdot \|_{*}$).
	Then, for all $y \in \R^D$ the following global relationships hold.
	\begin{enumerate}[
		label={(\roman*)},
	]
	\item
	$\|v\|_y \le \|v\|_*$ for all $v \in \R^D$. 
	\item
	$\|u\| \le \|u\|_{y, *}$ for all $u \in \calS^\bot$.
	\end{enumerate}
\end{restatable}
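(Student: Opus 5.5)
The plan is to first establish (i), and then obtain (ii) from (i) by duality on the subspace $\calS^\bot$.

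\textbf{Part (i).} I would use the fact that 1-strong convexity of $R$ with respect to $\|\cdot\|$ (Assumption~\ref{ass:reg-sc} transported to $R$) is equivalent to 1-smoothness of $F=R^*$ with respect to $\|\cdot\|_*$. Concretely, this means
\[
\|\nabla F(y') - \nabla F(y)\| \le \|y'-y\|_*
\quad\text{for all } y, y' \in \R^D .
\]
This is the standard conjugate-duality argument. It carries over to the relative setting: $\nabla F(y) = \argmax_{x \in \calX}\{\langle x,y\rangle - R(x)\}$ by Proposition~\ref{prop:relative-duality}, and the usual proof (add the two strong-convexity inequalities at the two maximizers) needs only the optimality conditions of Proposition~\ref{prop:first-order-duality}. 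Those conditions pair with $x'-x \in \calS^\bot$, so the normal components in $\calS$ drop out.

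Since $F$ is twice differentiable (Assumption~\ref{ass:lhs}), I would take the directional derivative along $y + tv$ and divide by $t$. This gives
\[
\|\nabla^2 F(y) v\| \le \|v\|_* .
\]
I then combine this with the plain Cauchy--Schwarz pairing of $\|\cdot\|$ and $\|\cdot\|_*$:
\[
\|v\|_y^2 = \langle v, \nabla^2 F(y) v\rangle \le \|v\|_* \, \|\nabla^2 F(y) v\| \le \|v\|_*^2 .
\]
Equivalently, one can Taylor-expand the smoothness bound $F(y+tv) \le F(y) + t\langle \nabla F(y), v\rangle + \tfrac{t^2}{2}\|v\|_*^2$ to second order.

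\textbf{Part (ii).} Fix $u \in \calS^\bot$ and set $P = \nabla^2 F(y)$. By Proposition~\ref{prop:second-order-duality}(iii), $P$ is invertible on $\calS^\bot$, so let $w = P^{-1}u \in \calS^\bot$. The key identity is
\[
\|u\|_{y,*}^2 = \langle u, P^{-1} u\rangle = \langle Pw, w\rangle = \|w\|_y^2 .
\]
Now take any $v \in \R^D$. Writing $\Pi_{\calS^\bot}v$ for its projection, $\langle u, v\rangle = \langle u, \Pi_{\calS^\bot} v\rangle$, and the generalized Cauchy--Schwarz inequality~\eqref{eq:local-gcs} followed by part (i) gives
\[
|\langle u, v\rangle| \le \|u\|_{y,*}\,\|v\|_y \le \|u\|_{y,*}\,\|v\|_* .
\]
Taking the supremum over $\|v\|_* \le 1$ yields $\|u\| = \sup_{\|v\|_*\le 1}\langle u,v\rangle \le \|u\|_{y,*}$, using that $\|\cdot\|$ is the dual of $\|\cdot\|_*$ in finite dimensions.

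\textbf{Main obstacle.} The one delicate step is rigorously deriving 1-smoothness of $F$ from relative strong convexity of $R$ on the lower-dimensional set $\calX$. I expect the concern to be limited to checking that every inner product involved pairs against a vector in $\calS^\bot$, so that normal components never contribute. If that turns out awkward, the fallback is to prove (ii) directly: on $\calS^\bot$, strong convexity gives $\langle w, \nabla^2 R(x) w\rangle \ge \|w\|^2$, and $\nabla^2 R(x) = P^{-1}$ there by Proposition~\ref{prop:second-order-duality}(ii). Part (i) would then follow by the reverse duality argument.
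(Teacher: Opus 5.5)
Your proof is correct, but it is organized differently from the paper's. The paper proves the two parts independently, each from one side of the conjugate duality that the lemma's hypothesis simply asserts. Part (ii) comes directly from the second-order form of strong convexity, $\langle u, \nabla^2 R(x) u\rangle \ge \|u\|^2$ for $u \in \calS^\bot$, combined with $\nabla^2 R(x) = (\nabla^2 F(y))^{-1}$ on $\calS^\bot$ (Proposition~\ref{prop:second-order-duality}(ii)). Part (i) comes from the second-order form of the assumed 1-smoothness of $F$.

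You instead use only strong convexity of $R$. You derive $\|\nabla F(y')-\nabla F(y)\|\le\|y'-y\|_*$ yourself from the first-order optimality conditions, correctly noting that the components in $\calS$ vanish against $x'-x\in\calS^\bot$. You then differentiate to get (i). Finally you obtain (ii) from (i) using the generalized Cauchy--Schwarz inequality~\eqref{eq:local-gcs} and the bidual identity $\|u\| = \sup_{\|v\|_*\le 1}\langle u,v\rangle$.

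The paper's route is shorter, but it relies without proof on the strong-convexity/smoothness equivalence in this relative setting where the Hessian is degenerate. Yours actually justifies that equivalence and avoids both the Hessian of $R$ and the inverse-Hessian identity, at the cost of a few extra steps. Your stated fallback for (ii) is exactly the paper's argument.
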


\begin{proof}
    Fix $y \in \R^D$ 
    and $x = \nabla F(y) \in \relint(\calX)$. 
    First, by 1-strong-convexity
    of $R$ with respect to the norm $\|\cdot\|$,
    we have the second-order characterization:
    $\langle u, \nabla^2 R(x) u \rangle
    \ge \|u\|^2$ for all $u \in \calS^\bot$
    (see, e.g.,~\cite{boyd2004convex}, Section 9.1).
    As $\nabla^2 R(x) = (\nabla^2 F(y))^{-1}$
    when restricted to the subspace $\calS^\bot$ 
    it then follows for all $u \in \calS^\bot$ that
    \begin{equation*}
        \|u\|^2
        \le
        \langle 
        u, \nabla^2 R(x) u
        \rangle
        = 
        \langle u, (\nabla^2 F(y))^{-1} u\rangle
        = \|u\|^2_{y, *} \;.
    \end{equation*}
    Taking square roots yields claim (ii) of the lemma.
    For claim (i),
    we similarly have
    by $1$-smoothness of $F = R^*$ 
    with respect to the dual norm $\|\cdot\|_{*}$ that
    $
        \|v\|^2_{y} =
        \langle v, \nabla^2 F(y) v \rangle
        \le 
        \|v\|^2_{*}
    $
    for all $v \in \R^D$.
    Taking square roots then yields claim (i)
    and completes the proof.
\end{proof}

\subsection{Local Semi-Norm Transfer Inequalities}

Under the local Hessian stability property of Assumption~\ref{ass:lhs},
the following inequalities are immediate:

\begin{restatable}{lem}{lemlocalnormtransfer}
	\label{lem:local-transfer}
	Let $y, y' \in \R^D$ satisfy
	$\|y' - y\|_{*} \le \alpha$
	for $\alpha > 0$. Then 
	for every $v \in \R^D$ it holds that
	\begin{equation*}
		\exp(- \wh \omega \alpha)
		\cdot \|v\|^2_{y'}
		\le 
		\|v\|_y^2 
		\le
		\exp(\wh \omega \alpha)
		\|v\|^2_{y'} \;.
	\end{equation*}
\end{restatable}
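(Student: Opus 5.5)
This is a direct consequence of the symmetrized local Hessian stability property \eqref{eq:lhs-sym} from Proposition~\ref{prop:lhs-sym}, which holds for $F$ with parameter $\wh\omega = \omega/\sqrt{m_{\min}}$. The plan is to specialize that Loewner-order sandwich to the quadratic form in a fixed vector $v$, and then weaken the exponent using the hypothesis $\|y'-y\|_* \le \alpha$.

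First, fix $v \in \R^D$ and apply \eqref{eq:lhs-sym} with the pair $(y, y')$:
\begin{equation*}
\exp(-\wh\omega\|y-y'\|_*)\,\nabla^2F(y') \preceq \nabla^2F(y) \preceq \exp(\wh\omega\|y-y'\|_*)\,\nabla^2F(y').
\end{equation*}
The relation $A \preceq B$ means $\langle v, Av\rangle \le \langle v, Bv\rangle$ for every $v$. Evaluating both inequalities at $v$ and using the definition \eqref{eq:local-norm} of $\|\cdot\|_y$ gives
\begin{equation*}
\exp(-\wh\omega\|y-y'\|_*)\,\|v\|_{y'}^2 \le \|v\|_y^2 \le \exp(\wh\omega\|y-y'\|_*)\,\|v\|_{y'}^2 .
\end{equation*}

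Second, since $\|y-y'\|_* = \|y'-y\|_* \le \alpha$ and $\wh\omega > 0$, monotonicity of $\exp$ gives $\exp(\wh\omega\|y-y'\|_*) \le \exp(\wh\omega\alpha)$ and $\exp(-\wh\omega\|y-y'\|_*) \ge \exp(-\wh\omega\alpha)$. Both quantities $\|v\|_{y'}^2$ and $\|v\|_y^2$ are nonnegative, since $\nabla^2 F \succeq 0$. So substituting these bounds preserves both inequalities, which yields the claim.

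No step is a genuine obstacle. The only care needed is to note that the semi-norm degeneracy on $\calS$ causes no trouble: for $v \in \calS$ all terms vanish, and the Loewner inequalities are stated on all of $\R^D$, so no restriction to $\calS^\bot$ is required.
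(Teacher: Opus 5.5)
Your proof is correct and takes the same approach as the paper, which states this lemma as an immediate consequence of the local Hessian stability of $F$ (Proposition~\ref{prop:lhs-sym}) without writing out the details. Evaluating that Loewner sandwich at $v$ and then using $\|y-y'\|_*\le\alpha$, monotonicity of $\exp$, and nonnegativity of the semi-norms is exactly the intended argument.
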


\subsection{Two-Sided Bounds on Bregman Divergences}

In this section, we derive bounds on 
the Bregman divergence $\D_F(\cdot, \cdot)$
induced by the energy Hessian 
that hold under the local semi-norm.

\paragraph{Bregman divergence in integral form.}
First, recall that the Bregman divergence $\D_F(y', y)$
admits the following integral formulation
as a result of Taylor's theorem with integral remainder:
\begin{restatable}
    {prop}{propbregmanintegral}
    \label{prop:bregman-integral}
    Let $y, y' \in\R^D$, and 
    let $y_s := y + s(y'-y)$ for all $s \in [0, 1]$. 
    Then the Bregman divergence $\D_F(y', y)$ is given by
    \begin{equation*}
        \D_F(y', y)
		  = 
		\int_{0}^1 
		(1-s) 
		\big\langle
			y'-y, \nabla^2 F(y_s) (y'-y)
		\big\rangle \, ds \;.
    \end{equation*}
\end{restatable}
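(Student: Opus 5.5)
The plan is to apply Taylor's theorem with integral remainder to the scalar function $\phi(s) := F(y_s)$ on $[0,1]$, where $y_s = y + s(y'-y)$. By Assumption~\ref{ass:lhs} (transferred to $F$ via Proposition~\ref{prop:lhs-sym}), $F$ is twice continuously differentiable on all of $\R^D$. Hence $\phi$ is $C^2$ on $[0,1]$, with
\[
\phi'(s) = \langle \nabla F(y_s), y'-y\rangle \quad\text{and}\quad \phi''(s) = \langle y'-y, \nabla^2 F(y_s)(y'-y)\rangle,
\]
both by the chain rule.

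Next I will invoke the one-dimensional Taylor formula with integral remainder:
\[
\phi(1) = \phi(0) + \phi'(0) + \int_0^1 (1-s)\,\phi''(s)\,ds.
\]
This formula follows from integrating by parts in $\int_0^1 (1-s)\phi''(s)\,ds = -\phi'(0) + \int_0^1 \phi'(s)\,ds = -\phi'(0) + \phi(1)-\phi(0)$. Substituting $\phi(1)=F(y')$, $\phi(0)=F(y)$ and $\phi'(0)=\langle \nabla F(y), y'-y\rangle$ turns the left side minus the first-order terms into exactly $\D_F(y',y) = F(y') - F(y) - \langle \nabla F(y), y'-y\rangle$. The remainder is then the claimed integral.

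There is essentially no obstacle here. The only point to check is regularity: the Hessian must be continuous along the segment so that the integral and the integration by parts are justified. This continuity is supplied directly by the $C^2$ hypothesis in Assumption~\ref{ass:lhs}. We do not need any properties of the nullspace $\calS$ of $\nabla^2 F$, since the identity is purely ambient.
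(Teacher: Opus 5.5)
Your proposal is correct and follows the same argument the paper has in mind: the paper omits the proof as standard and cites Taylor's theorem with integral remainder along the segment $y_s$, which is exactly what your integration by parts of $\phi(s)=F(y_s)$ carries out. One small precision: the $C^2$ regularity of $F$ comes from the identity $\nabla F(y) = M^{1/2}\nabla H(M^{-1/2}y)$ in \eqref{eq:dual-identity-weighted} together with Assumption~\ref{ass:lhs} for $H$, not from the two-sided Hessian bound stated in Proposition~\ref{prop:lhs-sym} itself.
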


This result is standard, and thus we omit a proof (see, e.g.,~\cite{nocedal2006numerical}, Theorem 2.1).

Under the \eqref{eq:lhs} property, 
the Bregman divergence  $\D_F(y',y)$ 
is locally equivalent to the squared local semi-norm $\|y'-y\|_{y}$. 
More precisely, whenever $y'$ is sufficiently close to $y$ 
in the global dual norm $\|\cdot\|_{*}$, the two quantities
are comparable up to multiplicative factors that
depend explicitly on $\|y'-y\|_{*}$. 
The following lemma makes this relationship precise.

\begin{restatable}{lem}{lembregmanlocalbound}
	\label{lem:bregman-local-bound}
	Let $y, y' \in \R^D$ be such that
	$\|y' - y\|_{*} \le \alpha$ for
    $\alpha > 0$.
	Then it holds that:
	\begin{equation*}
    \textstyle
	\big(
	\frac{\exp(-\wh \omega \alpha) +  \wh \omega\alpha - 1}{(\wh \omega\alpha)^2}
	\big)
	\cdot 
	\|y' - y\|^2_{y}
	\le
	\D_F(y', y)
	\le
	\big(
	\frac{\exp(\wh \omega\alpha) - \wh \omega\alpha - 1}{(\wh \omega\alpha)^2}
	\big)
	\cdot \|y' - y\|^2_{y} \;.
	\end{equation*}
\end{restatable}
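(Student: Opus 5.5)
The plan is to start from the integral form of the Bregman divergence in Proposition~\ref{prop:bregman-integral} and bound the integrand pointwise using the local Hessian stability property~\eqref{eq:lhs-sym}. Throughout, write $d := y'-y$ and $y_s := y + s d$ for $s \in [0,1]$. Then Proposition~\ref{prop:bregman-integral} gives
\begin{equation*}
\D_F(y',y) = \int_0^1 (1-s)\, \|d\|^2_{y_s}\, ds .
\end{equation*}

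\textbf{Step 1: pointwise transfer of the local seminorm.} Since $\|y_s - y\|_{*} = s\|d\|_{*} \le s\alpha$, Lemma~\ref{lem:local-transfer}, applied with radius $s\alpha$ in place of $\alpha$ (equivalently, \eqref{eq:lhs-sym} directly), yields
\begin{equation*}
\exp(-\wh\omega s\alpha)\, \|d\|^2_{y} \;\le\; \|d\|^2_{y_s} \;\le\; \exp(\wh\omega s\alpha)\, \|d\|^2_{y}
\end{equation*}
for every $s\in[0,1]$. The one point to be careful about is that the radius must be taken as $s\alpha$ rather than $\alpha$. Using $\alpha$ uniformly would still give valid bounds, but it would lose the sharp constants claimed in the lemma.

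\textbf{Step 2: integrate the scalar weights.} Multiplying by $(1-s) \ge 0$ and integrating reduces the claim to the scalar identity
\begin{equation*}
\int_0^1 (1-s) e^{cs}\, ds = \frac{e^{c} - c - 1}{c^2}, \qquad c = \pm \wh\omega \alpha .
\end{equation*}
This follows from a single integration by parts, or alternatively by expanding $e^{cs}$ as a power series and integrating term by term. Taking $c = \wh\omega\alpha$ gives the upper bound and $c = -\wh\omega\alpha$ gives the lower bound, in each case as a multiple of $\|y'-y\|^2_{y}$, which is exactly the stated inequality.

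There is no genuine obstacle here. The only points needing attention are that $\|\cdot\|_{y}$ is merely a seminorm (the inequalities remain valid for all $d$, including components in $\calS$, because \eqref{eq:lhs-sym} is a Loewner-order statement on all of $\R^D$) and that $\alpha>0$, so the divisions by $(\wh\omega\alpha)^2$ are well defined.
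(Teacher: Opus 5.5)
Your proposal is correct and follows essentially the same route as the paper's proof: the integral form of $\D_F$ from Proposition~\ref{prop:bregman-integral}, the pointwise transfer via Lemma~\ref{lem:local-transfer} with radius $s\alpha$, and the evaluation of $\int_0^1 (1-s)e^{\pm\wh\omega\alpha s}\,ds$. Your remarks on using $s\alpha$ rather than $\alpha$ to get the sharp constants, and on $\|\cdot\|_y$ being only a seminorm, are accurate.
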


\noindent
Note that this is the generalization of~\cite[Lemma E.9]{lazarsfeld2026},
and we include a full proof here for completeness:

\begin{proof}
	We first prove the upper bound.
	For this, let $y_s := y + s (y' - y)$ for $s \in [0, 1]$,
	and observe for all such~$s$ that 
	$\|y_s - y \|_{*}
		= 
		s \|y' - y\|_{*}
		\le
		s \alpha 
		\le \alpha$,
	where we use the assumption 
    that $\|y' - y\|_{*} \le \alpha$. 
	Then using the integral representation 
	of Bregman divergence (Proposition~\ref{prop:bregman-integral})
	and the upper bound part of Lemma~\ref{lem:local-transfer}, we obtain
	\begin{align*}
		\D_F(y', y)
		&= 
		\int_{0}^1 
		(1-s) 
		\big\langle
			y'-y, \nabla^2 F(y_s) (y'-y)
		\big\rangle \, ds \\
        		&= 
		\int_{0}^1 
		(1-s) 
		\|y' - y\|_{y_s}^2 \, ds \\
		&\le 
		\int_{0}^1 
		(1-s) 
		\Big( 
		\exp(\wh \omega\alpha s)
        \|y' - y\|_y^2 
		\Big) \, ds \\
		&= 
		\Big(\int_{0}^1(1-s) \exp(\wh \omega\alpha s) \, ds \Big)
		\cdot 
		\|y' - y\|^2_{y} \;.
	\end{align*}
	Integration by parts yields 
	$\int_{0}^1(1-s) \exp(\wh \omega\alpha s) ds = 
	\frac{\exp(\wh \omega\alpha) - \wh \omega\alpha -1}{(\wh \omega\alpha)^2}$, 
	which yields the upper bound. 

	\noindent
	For the lower bound, we use the lower bound part of Lemma~\ref{lem:local-transfer} similarly to obtain
	\begin{equation*}
		\D_F(y', y)
		\ge
		\Big(
		\int_{0}^1 (1-s) \exp(-\wh \omega\alpha s) \, ds
		\Big)
		\|y' - y\|^2_y \;.
	\end{equation*}
	Then integration by parts again yields
	$\int_{0}^1(1-s) \exp(-\wh \omega\alpha s) ds = 
	\frac{\exp(- \wh \omega\alpha) + \wh \omega\alpha -1}{(\wh \omega\alpha)^2}$, 
	which yields the lower bound of the lemma and
	concludes the proof. 
\end{proof}

\medskip
\subsection{Dual-to-Primal Local Semi-Norm Inequalities}

We next establish a relationship between
differences of primal iterates $x', x \in \calX$
under the local dual seminorm $\|\cdot\|_{y, *}$,
and differences of the corresponding 
dual iterates $y', y \in \R^D$ 
under the seminorm $\|\cdot \|_{y}$.
Formally, we prove the following:

\begin{restatable}{lem}{lemdualtoprimallocal}
	\label{lem:dual-to-primal-local}
	Let $y, y' \in \R^D$ such that $\|y'-y\|_{*} \le \alpha$
	for $\alpha > 0$.
    Let $x = \nabla F(y)$
	and $x' = \nabla F(y')$. Then it holds that:
	\begin{equation*}
    \textstyle
	\|x'- x\|_{y, *} \le 
	\big(\frac{\exp(\wh \omega\alpha)-1}{\wh \omega\alpha}\big)
	\cdot \|y' - y\|_{y}\;.
	\end{equation*}
\end{restatable}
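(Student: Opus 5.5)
We would write $x'-x$ as an integral of the Hessian along the dual segment and then transfer every local seminorm back to the base point $y$ using the local Hessian stability inequality, in the same way as in the proof of Lemma~\ref{lem:bregman-local-bound}. Let $d := y'-y$ and $y_s := y + s d$ for $s\in[0,1]$. By the fundamental theorem of calculus, $x'-x = \nabla F(y')-\nabla F(y) = \int_0^1 \nabla^2 F(y_s)\, d\, ds$. By Proposition~\ref{prop:second-order-duality}(iii), each $\nabla^2F(y_s) d$ lies in $\calS^\bot$. Hence $x'-x\in\calS^\bot$, which is also immediate from Lemma~\ref{lem:normal-diff-weighted}, and the dual seminorm $\|x'-x\|_{y,*}$ is well defined. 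Since $\|\cdot\|_{y,*}$ is a genuine norm on $\calS^\bot$ (it is $\|P_y^{-1/2}\cdot\|_2$ by \eqref{eq:local-norm-useful}), the triangle inequality for integrals gives
\begin{equation*}
\|x'-x\|_{y,*} \le \int_0^1 \|\nabla^2F(y_s)\, d\|_{y,*}\, ds .
\end{equation*}

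The key step is to bound each integrand by $\exp(\wh\omega \alpha s)\,\|d\|_y$. Write $P_s := \nabla^2F(y_s)$ and $P := \nabla^2 F(y)$, both positive definite on $\calS^\bot$ with common null space $\calS$. Then $\|P_s d\|_{y,*}^2 = \langle P_s d, P^{-1} P_s d\rangle$. Since $\|y_s-y\|_*\le s\alpha$, \eqref{eq:lhs-sym} gives $P_s \preceq e^{\wh\omega s\alpha} P$. Conjugating on $\calS^\bot$, this is equivalent to $\|P^{-1/2} P_s^{1/2}\|_{op}\le e^{\wh\omega s\alpha/2}$. Therefore
\begin{equation*}
\|P^{-1/2}P_s d\|_2 \le \|P^{-1/2}P_s^{1/2}\|_{op}\,\|P_s^{1/2} d\|_2 \le e^{\wh\omega s\alpha/2}\,\|d\|_{y_s}.
\end{equation*}
Applying Lemma~\ref{lem:local-transfer} to the segment from $y$ to $y_s$ gives $\|d\|_{y_s}\le e^{\wh\omega s\alpha/2}\|d\|_y$. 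Combining the two bounds yields $\|P_s d\|_{y,*}\le e^{\wh\omega \alpha s}\|d\|_y$.

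Integrating over $s$ gives $\int_0^1 e^{\wh\omega\alpha s}\,ds = \frac{e^{\wh\omega\alpha}-1}{\wh\omega\alpha}$, which is exactly the claimed constant. The main obstacle is to make the operator-norm step rigorous when the matrices are only semidefinite. We would handle this by restricting everything to $\calS^\bot$, where $P$ and $P_s$ are invertible. Both $d$'s component in $\calS$ and the ranges of $P_s$ are compatible with this restriction: $P_s d = P_s \Pi_{\calS^\bot} d$, and $\|d\|_{y} = \|\Pi_{\calS^\bot}d\|_y$. So we may assume $d\in\calS^\bot$ without loss of generality. The Loewner inequality $P_s\preceq cP$ on $\calS^\bot$ then gives $P^{-1/2}P_sP^{-1/2}\preceq cI$, which is equivalent to $\|P_s^{1/2}P^{-1/2}\|_{op}^2\le c$.
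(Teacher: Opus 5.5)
Your proposal is correct and follows essentially the same route as the paper: write $x'-x=\int_0^1 \nabla^2F(y_s)(y'-y)\,ds$, apply the triangle inequality in $\|\cdot\|_{y,*}$, bound the integrand pointwise by $\exp(\wh\omega\alpha s)\|y'-y\|_y$ using \eqref{eq:lhs-sym}, and integrate. The only difference is cosmetic: the paper bounds the normalized operator $P^{-1/2}P_sP^{-1/2}$ in spectral norm by $\exp(\wh\omega\alpha s)$ in one step, whereas you split that factor into two half-exponents (an operator-norm bound on $P^{-1/2}P_s^{1/2}$ plus Lemma~\ref{lem:local-transfer}), which gives the same constant.
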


\medskip
\begin{proof}
The proof proceeds in several steps:

\noindent
\textit{1. Shorthand notation.}
For readability, we begin by introducing 
several pieces of shorthand notation.
First, we drop the subscript $y$ to denote $P := \nabla^2 F(y)$. 
Recall that when restricted to $\calS^\bot$
the Hessian $P$ is positive-definite, and 
thus over this subspace we will denote the
energy Hessian inverse by $P^{-1} := (\nabla^2 F(y))^{-1}$.
For $s \in [0, 1]$, let $y_s := y + s (y' -y)$, 
and further define 
$P_s := \nabla^2 F(y_s)$,
and the normalized $B_s := P^{-1/2} P_s P^{-1/2}$.
As with $P^{-1}$, 
we only use $B_s$ when explicitly 
working over $\calS^\bot$.

\smallskip
\noindent
\textit{2. Bound on $\|B_s\|_2$.}
Recall that for all $s \in [0, 1]$ 
that $\|y_s - y\|_{*} = s\|y'-y\|_{*} \le \alpha s$.
Thus, the~\eqref{eq:lhs} property with respect to 
the pair $(y_s,y)$ can be expressed in our shorthand notation as
\begin{equation*}
	\exp(-\wh \omega\alpha s) P 
	\preceq
	P_s 
	\preceq \exp(\wh \omega\alpha s) P \;.
\end{equation*}
Then by congruence (using the fact that $P^{-1/2}$ is symmetric), 
it follows for all $s \in [0,1]$ that
\begin{equation*}
	\exp(-\wh \omega\alpha s) I \preceq B_s \preceq \exp(\wh \omega\alpha s) I \;.
\end{equation*}
By definition of the spectral norm, this implies for
all $s \in [0, 1]$ that $\|B_s\|_2 \le \exp(\wh \omega\alpha s)$.

\smallskip
\noindent
\textit{3. Expand primal difference in integral form.}
Recall by assumption that $x = \nabla F(y)$
and $x' = \nabla F(y')$. It follows by the fundamental
theorem of calculus (see, e.g.,
\cite{nocedal2006numerical}, Theorem 2.1)
that we can write 
\begin{equation}
	x' - x 
	= 
	\nabla F(y') - \nabla F(y)
	= 
	\int_{0}^1 \nabla^2 F(y_s) (y'- y) \, ds 
	=
	\int_{0}^1 P_s (y'- y) \, ds \;.
	\label{eq:grad-diff-expression}
\end{equation}
As $x' - x \in \calS^\bot$ by Lemma~\ref{lem:normal-diff-weighted},
we can take its dual local seminorm. 
Using the triangle inequality with respect to 
the integral then yields 
\begin{equation}
	\|x' - x\|_{y, *}
	= 
	\Big\| 
	\int_{0}^1 P_s (y'- y) \, ds
	\Big\|_{y, *}
	\le 
	\int_{0}^1 
	\| 
	P_s (y'- y)
	\|_{y, *} \, ds \;.
	\label{eq:ldp01}
\end{equation}
Now by definition of $\|\cdot \|_{y, *}$
and $B_s = P^{-1/2} P_s P^{-1/2}$,
we have for all $s \in [0, 1]$ that
\begin{align}
	\|P_s(y'-y)\|_{y, *}
	= 
	\|P^{-1/2} P_s(y'-y)\|_2
	&= 
	\| B_s P^{1/2} (y'-y) \|_2 
		\nonumber \\
 	&\le \| B_s\|_2 \cdot \|P^{1/2} (y'-y)\|_2 
 		\nonumber \\
 	&\le \exp(\wh \omega\alpha s) \cdot  \|P^{1/2} (y'-y)\|_2 
 		\nonumber \\
 	&=  \exp(\wh \omega\alpha s) \cdot \|y'-y\|_y  \;.
 	\label{eq:ldp02}
\end{align}
Here, the first inequality follows by 
definition of the spectral norm $\|B_s\|_2$,
the second inequality follows from the bound in
Step 2, and the final equality comes from the
definition of $\|\cdot\|_y$.

Substituting~\eqref{eq:ldp02} into~\eqref{eq:ldp01}
and integrating then yields
\begin{equation*}
	\|x'-x\|_{y, *}
	\le 
	\Big(
		\int_{0}^1 \exp(\wh \omega\alpha s) \, ds
	\Big)
	\cdot 
	\|y' -y \|_{y}
	= 
	\Big(\frac{\exp(\wh \omega\alpha) - 1}{\wh \omega\alpha}\Big)
	\cdot 
	\|y' - y\|_y \;,
\end{equation*}
which completes the proof.
\end{proof}

\medskip
\subsection{Upper Bound on Hessian Remainder Term}

In this section, we derive bounds on the magnitude
of the following \textit{Hessian remainder term}:

\paragraph{Hessian remainder term.}
For $y', y \in \R^D$, let $\G_F(y', y) \in \R^D$
be the vector
\begin{equation}
    \G_F(y', y) 
    = 
    \nabla F(y') - \nabla F(y)
    - \nabla^2 F(y)(y'-y)
    \;.
    \label{eq:hessian-remainder}
\end{equation}
The vector $\G_F(y', y)$
can be thought of as a higher-order
analogue of the Bregman divergence $\D_F(y', y)$.
Specifically, $\G_F(y', y)$ is the 
first-order error vector associated
with approximating the difference of gradients
$\nabla F(y') - \nabla F(y)$
by the matrix-vector product  $\nabla^2 F(y)(y'-y)$. 

Under the~\eqref{eq:lhs} property,
we can then bound
the norm of $\G_F(y', y)$ under
the semi-norm $\|\cdot\|_{y, *}$.
Formally, we prove the following:

\smallskip
\begin{restatable}{lem}{lemhessianremainder}
	\label{lem:hessian-remainder-bound}
	Let $y, y' \in \R^D$ such that
	$\|y'-y\|_{*} \le \alpha$ for $\alpha > 0$.
	Let $\G_F(y', y)$
	be the Hessian remainder term as defined in~\eqref{eq:hessian-remainder}.
    Then it holds that 
	$\G_F(y', y) \in \calS^\bot$ and that
	\begin{equation*}
		\textstyle
		\|\G_F(y', y)\|_{y, *} 
		\le 
		\big(
			\frac{\exp(\wh \omega\alpha) - 1 - \wh \omega\alpha}{\wh \omega\alpha}
		\big)
		\cdot 
		\|y' - y\|_{y} \;.
	\end{equation*}
\end{restatable}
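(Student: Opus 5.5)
We follow the template of the proof of Lemma~\ref{lem:dual-to-primal-local}, reusing its shorthand: $P := \nabla^2 F(y)$, $y_s := y + s(y'-y)$, $P_s := \nabla^2 F(y_s)$, and the normalized $B_s := P^{-1/2} P_s P^{-1/2}$ on $\calS^\bot$.

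\emph{Membership in $\calS^\bot$.} First we show $\G_F(y',y) \in \calS^\bot$. Since $x' - x = \nabla F(y') - \nabla F(y) \in \calS^\bot$ by Lemma~\ref{lem:normal-diff-weighted}, it suffices to check the Hessian term. By Proposition~\ref{prop:second-order-duality}(iii), $\Range(\nabla^2 F(y)) = \calS^\bot$, so $\nabla^2 F(y)(y'-y) \in \calS^\bot$. Hence the difference $\G_F(y',y)$ lies in $\calS^\bot$, and its dual local seminorm $\|\cdot\|_{y,*}$ is well defined.

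\emph{Integral representation.} By the fundamental theorem of calculus, as in~\eqref{eq:grad-diff-expression},
\begin{equation*}
\G_F(y',y) = \int_0^1 (P_s - P)(y'-y)\, ds .
\end{equation*}
Applying the triangle inequality in $\|\cdot\|_{y,*}$ and using~\eqref{eq:local-norm-useful} gives
\begin{equation*}
\|(P_s - P)(y'-y)\|_{y,*} = \|P^{-1/2}(P_s-P)(y'-y)\|_2 = \|(B_s - I)P^{1/2}(y'-y)\|_2 .
\end{equation*}
Here we used that $P^{-1/2}P = P^{1/2}$ on $\calS^\bot$. The $\calS$-component of $y'-y$ is annihilated by both $P_s$ and $P$, so we may replace $y'-y$ by its projection onto $\calS^\bot$ throughout.

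\emph{Spectral bound.} The main step is bounding $\|B_s - I\|_2$. From step 2 of the previous proof, \eqref{eq:lhs} gives $\exp(-\wh\omega\alpha s) I \preceq B_s \preceq \exp(\wh\omega\alpha s) I$. Since $B_s$ is symmetric, its eigenvalues lie in $[e^{-\wh\omega\alpha s}, e^{\wh\omega\alpha s}]$, so
\begin{equation*}
\|B_s - I\|_2 \le \max\{e^{\wh\omega\alpha s}-1,\; 1-e^{-\wh\omega\alpha s}\} = e^{\wh\omega\alpha s}-1 ,
\end{equation*}
using $1 - e^{-a} \le e^{a} - 1$ for $a \ge 0$. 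This is the only step needing care: the bound must be centered at $I$ rather than being a crude bound on $\|B_s\|_2$.

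\emph{Integration.} Combining the above,
\begin{equation*}
\|\G_F(y',y)\|_{y,*} \le \Big(\int_0^1 (e^{\wh\omega\alpha s}-1)\,ds\Big)\,\|y'-y\|_y = \frac{e^{\wh\omega\alpha}-1-\wh\omega\alpha}{\wh\omega\alpha}\,\|y'-y\|_y ,
\end{equation*}
which is the claimed bound.
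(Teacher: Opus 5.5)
Your proposal is correct and matches the paper's proof essentially step for step. It uses the same integral representation $\G_F(y',y)=\int_0^1 (P_s-P)(y'-y)\,ds$, the same centered spectral bound $\|B_s-I\|_2\le e^{\wh \omega\alpha s}-1$ via $1-e^{-a}\le e^{a}-1$, and the same final integration. The only cosmetic difference is how you get $\G_F(y',y)\in\calS^\bot$: you combine $x'-x\in\calS^\bot$ with $\Range(\nabla^2 F(y))=\calS^\bot$, whereas the paper reads it off the integral form using $\Range(P_s)=\calS^\bot$ for every $s$.
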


\medskip
\begin{proof}
	The proof proceeds in several steps
	and reuses several calculations 
	from the proof of Lemma~\ref{lem:dual-to-primal-local}.
	We start by re-introducing several pieces
	of notation from that proof:

    \smallskip
    \noindent
    \textit{1. Shorthand notation.}
	We let $P := \nabla^2 F(y)$, and let
	$P^{-1} = (\nabla^2 F(y))^{-1}$ when restricted 
	to $\calS^\bot$. 
	For all $s \in [0, 1]$, let 
	$y_s := y + s(y'-y)$, 
	let $P_s := \nabla^2 F(y_s)$,
	and let $B_s := P^{-1/2} P_s P^{-1/2}$.

    \smallskip
    \noindent
	\textit{2. Properties of $B_s$ from 
	Lemma~\ref{lem:dual-to-primal-local}.}
	From Step 2 of the proof of Lemma~\ref{lem:dual-to-primal-local},
	recall that we established for all $s \in [0, 1]$ that
	$\exp(-\wh \omega\alpha s) I \preceq B_s \preceq \exp(\wh \omega\alpha s) I$.
	This further implies for $B_s - I$ that:
	\begin{equation*}
		(\exp(-\wh \omega\alpha s) -1) I \preceq B_s - I \preceq (\exp(\wh \omega\alpha s) - 1)I  \;.
	\end{equation*}
	As $\exp(-x) - 1 \ge 1-\exp(x)$ for all $x \in \R$
    (by the AM-GM inequality), we further 
	have that $B_s - I \succeq -(\exp(\wh \omega \alpha s) -1)I$.
	Thus, it follows from the definition of spectral norm that 
	$\|B_s - I\|_{2} \le \exp(\wh \omega\alpha s) - 1$ for all $s \in [0, 1]$.

    \smallskip
    \noindent
	\textit{3. Expand $\G_F$ in integral form.}
    Then, using the definition of $\G_F(y', y)$
    from~\eqref{eq:hessian-remainder},
    together with the expression for $\nabla F(y') - \nabla F(y)$
	from~\eqref{eq:grad-diff-expression} in the proof
	of Lemma~\ref{lem:dual-to-primal-local}, it follows that 
	\begin{equation}
		\G_F(y', y)
		= 
		\int_{0}^1 (P_s - P)(y'-y) ds  \;.
		\label{eq:hessian-remainder-integral}
	\end{equation}
	Now recall from Proposition~\ref{prop:second-order-duality} 
    (applied to $\nabla^2 F$) that 
    $\Range(\nabla^2 F(y'')) = \calS^\bot$
    for all $y'' \in \R^{D}$.
	As $\calS^\bot$ is a linear subspace, 
	then also $(P_s - P)(y'-y) \in \calS^\bot$
	for all $s \in [0, 1]$, and thus
	$\G_F(y', y) \in \calS^\bot$. 
	This establishes the first claim of the lemma. 

    \smallskip
    \noindent
	\textit{4. Bound on $\G_F$ in dual seminorm.}
	Using expression~\eqref{eq:hessian-remainder-integral} and noting that we have shown that $\G_F(y',y) \in \calS^\bot$,
	we can take the dual semi-norm $\|\cdot\|_{y,*}$ and apply
	the triangle inequality to write:
	\begin{equation}
		\|\G_F(y', y)\|_{y, *}
		\le 
		\int_{0}^{1} \|(P_s - P)(y'-y)\|_{y, *} ds \;.
		\label{eq:gf00}
	\end{equation}
	Then for each $s \in [0,1]$, we can use the
	definition of $B_s = P^{-1/2} P_s P^{-1/2}$
	to further write
	\begin{align}
		\|(P_s - P)(y'-y)\|_{y, *}
		&=
		\|P^{-1/2} (P_s - P)(y'-y)\|_2 
			\nonumber \\
		&= \|(B_s P^{1/2} - P^{1/2})(y'-y)\|_{2} 
			\nonumber  \\
		&= \|(B_s - I) P^{1/2}(y'-y)\|_{2} 
			\nonumber \\
		&\le \|B_s - I \|_2 \cdot \|P^{1/2}(y'-y)\|_{2}
		\le (\exp(\wh \omega\alpha s)-1) \cdot \|y'-y\|_{y}
		\label{eq:gf01} \;.
	\end{align}
	Here, the final inequality uses the 
	bound on $\|B_s - I \|_2$ from Step 2
	and the definition of $\|\cdot \|_y$.
	
	Substituting~\eqref{eq:gf01} into~\eqref{eq:gf00}
	and integrating then yields
	\begin{equation*}
		\|\G_F(y', y)\|_{y, *}	
		\le 
		\Big(\int_{0}^1 (\exp(\wh \omega\alpha s)-1) ds \Big) 
		\cdot \|y'-y\|_{y} 
		= 
		\Big(\frac{\exp(\wh \omega\alpha) - 1 - \wh \omega\alpha}{\wh \omega\alpha}\Big)
		\cdot \|y'- y\|_{y} \;,
	\end{equation*}
	which concludes the proof. 
\end{proof}

\medskip
\subsection{Upper Bound on Operator Differences}

Under $\wh L$-Lipschitzness of $V$,
1-strong-convexity of $R$, 
and using the dual-to-primal local semi-norm
bound of Lemma~\ref{lem:dual-to-primal-local},
we can derive the following key lemma that bounds
the magnitude of operator differences $\|V(x') - V(x)\|_y$ 
in terms of the magnitude of 
dual increments $\|y' - y\|_{y}$. 
Formally:

\smallskip
\begin{restatable}{lem}{lemopdiffdualdiff}
	\label{lem:op-to-dual-diff}
	Fix $y, y' \in \R^D$
	such that $\|y' - y\|_* \le \alpha$
	for $\alpha > 0$,
	and let $x = \nabla F(y) \in \relint(\calX)$
	and $x' = \nabla F(y') \in \relint(\calX)$.
	Then it holds that
	\begin{equation*}
        \textstyle
		\|V(x') - V(x)\|_y
		\le
		\wh L \cdot 
		\big(
			\frac{\exp(\wh \omega\alpha) - 1}{\wh \omega\alpha}
		\big)
		\cdot 
		\|y' - y \|_y \;.
	\end{equation*}
\end{restatable}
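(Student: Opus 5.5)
The plan is to chain three earlier inequalities: the comparison between local and global norms, Lipschitzness of $V$, strong convexity of $R$, and then the dual-to-primal local bound of Lemma~\ref{lem:dual-to-primal-local}.

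First, by Lemma~\ref{lem:norm-global-vs-local}(i), which uses 1-smoothness of $F$ with respect to $\|\cdot\|_*$, we have
\[
\|V(x')-V(x)\|_y \le \|V(x')-V(x)\|_*.
\]
By the $\wh L$-Lipschitz property of $V$ (Proposition~\ref{prop:V-lipschitz-bounded}), this is at most $\wh L\|x'-x\|$.

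Next, since $x,x'\in\calX$, Lemma~\ref{lem:normal-diff-weighted} gives $x'-x\in\calS^\bot$. We may therefore apply Lemma~\ref{lem:norm-global-vs-local}(ii), which uses 1-strong-convexity of $R$, to get
\[
\|x'-x\| \le \|x'-x\|_{y,*}.
\]
Finally, Lemma~\ref{lem:dual-to-primal-local} applies under the hypothesis $\|y'-y\|_*\le\alpha$ and gives
\[
\|x'-x\|_{y,*}\le \tfrac{\exp(\wh\omega\alpha)-1}{\wh\omega\alpha}\,\|y'-y\|_y.
\]
Multiplying the constants through yields the claimed bound.

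The only point that needs care is checking that each lemma's hypotheses are met. For the dual seminorm $\|\cdot\|_{y,*}$ to be defined, the vector must lie in $\calS^\bot$, which is why Lemma~\ref{lem:normal-diff-weighted} is needed. The step through the global norms is legitimate because $R$ is 1-strongly convex with respect to $\|\cdot\|$ under the setting assumptions, so the Lipschitz constant $\wh L$ is measured in the matching pair of norms. No other step is nontrivial.
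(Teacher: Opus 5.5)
Your proposal is correct and follows the paper's proof exactly. The paper uses the same chain: Lemma~\ref{lem:norm-global-vs-local}(i), then $\wh L$-Lipschitzness of $V$, then Lemma~\ref{lem:norm-global-vs-local}(ii) (justified by $x'-x\in\calS^\bot$), and finally Lemma~\ref{lem:dual-to-primal-local}.
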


\smallskip
\begin{proof}
	Observe the following sequence of inequalities:
	\begin{align*}
		\|V(x')- V(x)\|_{y}
		&\le 
		\|V(x') - V(x)\|_{*}
			&&\text{(Lemma~\ref{lem:norm-global-vs-local}, part (i))} \\
		&\le
		\wh L \cdot \|x'-x\| 
			&&\text{(Lipschitzness of $V$)} \\
		&\le
		\wh L \cdot \|x' - x\|_{y, *}
			&&\text{(Lemma~\ref{lem:norm-global-vs-local}, part (ii))} \\
		&\le 
		\wh L \cdot 
		\big(
			\tfrac{\exp(\wh \omega\alpha) - 1}{\wh \omega\alpha}
		\big)
		\cdot 
		\|y' - y \|_y  
			&&\text{(Lemma~\ref{lem:dual-to-primal-local})} 
			\;.
	\end{align*}
	Note that the third inequality holds due to the fact that 
	$x, x' \in \calX$, which means
	$x'-x \in \calS^\bot$ and allows for 
	applying Lemma~\ref{lem:norm-global-vs-local}, part (ii).
	This completes the proof.
\end{proof}

\section{Proof of Lemma~\ref{lem:energy-dissipation-main}: Energy Dissipation}
\label{app:energy-unified}

In this section, we develop the proof of 
Lemma~\ref{lem:energy-dissipation-main}.
This lemma establishes tight bounds on 
the one-step change in (shifted) energy dissipation
for OFTRL and EGFTRL
under the symmetrized system of 
Proposition~\ref{prop:reduction-algorithmic}
(as summarized in Setting~\ref{setting:weighted}). 
We restate the lemma here:

\smallskip
\lemenergydissipationmain*

Before giving the proof of the lemma,
we mention several further remarks: 

\smallskip
\begin{restatable}[Assumptions and generality]
	{rem}{remdissipationlemassumptions}
	\label{rem:dissipation-assumptions}
	Note that the energy dissipation 
	bounds of Lemma~\ref{lem:energy-dissipation-main} 
	\textit{do not} require the game regularity condition
	of Assumption~\ref{ass:restricted-regularity}.
    Moreover, the bounds hold simultaneously
    for all points $x^\star \in \MEQ(V)$. 
\end{restatable}

\bigskip
\begin{restatable}[Constant stepsize]
	{rem}{remenergystepsize}
	\label{rem:stepsize-energy-dissipation}
	The tight dissipation bounds of 
	Lemma~\ref{lem:energy-dissipation-main}
	hold for a sufficiently small \textit{constant} 
	stepsize. In general, our proof does not attempt 
	to optimize the final stepsize constraint
	(and the resulting absolute constant dependencies
	in the lemma statement),
	and thus the result likely also holds under
	larger constant stepsize settings.
	Note that within the stepsize constraint, 
	the inverse dependencies on the
	Lipschitz and boundedness parameters
	of the operator are standard. 
	On the other hand, the constraint
	also depends inversely on the 
	ratio $\kappa_M = \tfrac{m_{\max}}{m_{\min}}$
	for the weights matrix $M$.
	A similar constraint is required for the
	asymptotic result of~\cite{legacci2024no},
	and we leave the design and analysis
	of adaptive stepsize methods 
	as an interesting direction for future work. 
\end{restatable}

\subsubsection*{Preliminaries and Recap of Setting}
\label{app:energy:prelims}

For convenience, we recall here some properties of
the symmetrized system and dynamics. 
Specifically, for a weights matrix
$M = \Diag(m_1 I_{d_1}, \dots, m_N I_{d_N}) \succ 0$,
we consider the $M$-symmetrized
dynamics from Proposition~\ref{prop:reduction-algorithmic}
using operator $V$, regularizer $R$, and energy $F = R^*$.
Recall that we define $\kappa_M = m_{\max}/m_{\min}$.

\paragraph{Definition of constants.}
Recall that the Lipschitz and boundedness constants 
for the operator $G$ (from Assumption~\ref{ass:operator})
are given by $L \ge 1$ and $B \ge 1$.
Moreover, the~\eqref{eq:lhs} constant
for the energy $H$ in Assumption~\ref{ass:lhs} 
is given by $\omega \ge 1$.
Then for the symmetrized system, we define the constants:
\begin{equation}
    \textstyle
    \wh L := \sqrt{\kappa_M} L \,,
    \;\;\;
    \wh B := \sqrt{m_{\max}} B \,,
    \;\;\;
    \wh \omega := \frac{1}{\sqrt{m_{\min}}} \omega  \,,
    \;\;\;
    \wh \tau := 3 \wh B \;.
    \label{eq:constant-helpers}
\end{equation}

\paragraph{Properties of symmetrized system.}
For the symmetrized system with operator $V$ 
regularizer $R$, recall that $V$ 
is $\wh L$-Lipschitz and $\wh B$-bounded (Setting~\ref{setting:weighted}):
\begin{equation}
	\|V(x') - V(x)\|_{*} \le 
	L \sqrt{\kappa_M} \cdot \|x' - x\|   \;,
	\;\;\;
	\|V(x) \|_{*} 
	\le \sqrt{m_{\max}} B 
	\;\;\text{for all $x, x' \in \calX$} \;.
	\label{eq:V-lip-bounded}
\end{equation}
Moreover, the energy function $F = R^*$ 
satisfies the~\eqref{eq:lhs}
with parameter $\wh \omega$.

\paragraph{Dual update rules.}
For convenience, we also restate here the dual update rules of
OFTRL and EGFTRL of the symmetrized system.
For all $t \ge 0$, we have:
\begin{align}
	&\quad y_{t+1}
	= 
	y_t - \eta 2 V(x_t) + \eta V(x_{t-1}) 
	\;,\quad
	x_{t+1} = \nabla F(y_{t+1}) 
	\label{eq:oftrl-helper}
	\tag{OFTRL} \\[0.25em]
	&\begin{cases}
	y_{t+1/2}	= y_t - \eta V(x_t) \;, 
	\quad &x_{t+1/2} = \nabla F(y_{t+1/2}) \\
	y_{t+1}	= y_t - \eta V(x_{t+1/2}) \;, 
	\quad &x_{t+1} = \nabla F(y_{t+1}) 
	\end{cases}
	\label{eq:egftrl-helper}
	\tag{EGFTRL}
\end{align}
Recall from Setting~\ref{setting:weighted} that
for both~\eqref{eq:oftrl-helper} and~\eqref{eq:egftrl-helper},
we have $\|y_{t+1} - y_t\|_{*} \le \eta \wh \tau$
for all $t \ge 0$
(and also $\|y_{t+1/2} - y_t\|_{*} \le \eta \wh \tau$
for~\eqref{eq:egftrl-helper} for all $t \ge 0$). 

\paragraph{Stepsize constraint.}
We assume the following bound 
on the stepsize $\eta$: 
\begin{equation}
	0 < \eta \le (880 \wh L \wh B \wh \omega)^{-1}
	= (880 L B \omega \kappa_M)^{-1} \;.
	\label{eq:eta-helper}
\end{equation}
Observe that~\eqref{eq:eta-helper}
corresponds to the condition 
from the statement of Lemma~\ref{lem:energy-dissipation-main}.

\subsection{Proof of Lemma~\ref{lem:energy-dissipation-main}}
\label{app:energy:main-lemma-proof}

We now give the proof 
of Lemma~\ref{lem:energy-dissipation-main}, 
which builds off the high-level overview
from Section~\ref{sec:proof-overview:energy}.

\paragraph{Scalar functions.}
For convenience, we first define the scalar functions 
$\phi(\cdot)$,
$\mu(\cdot)$, $\nu(\cdot)$, 
and $\gamma(\cdot)$:
\begin{equation}
    \textstyle
    \phi(\alpha)
	= \exp(\wh \omega \alpha) \,,
    \;
	\mu(\alpha)
	= 
	\frac{\exp(\wh\omega \alpha)-(\wh \omega \alpha) - 1}{(\wh \omega \alpha)^2} \,, 
	\;
	\nu(\alpha)
	= 
	\frac{\exp(\wh \omega \alpha) - 1 - \wh \omega \alpha}{\wh \omega \alpha} \,,
	\;
	\gamma(\alpha)
	= 
	\frac{\exp(\wh\omega \alpha) -1}{\wh \omega \alpha} \,.
    \label{eq:scalar-helpers}
\end{equation}
Each of these scalar functions corresponds 
to a multiplicative factor arising from
the bounds developed in Section~\ref{app:error-machinery},
which hold for the energy $F$
satisfying~\eqref{eq:lhs} with parameter $\wh \omega$.
In particular, each function in~\eqref{eq:scalar-helpers}
is non-negative
and increasing in $\alpha$.
Throughout the proof of the lemma, 
we will usually apply the functions of~\eqref{eq:scalar-helpers}
with $\alpha = \eta \wh \tau$, 
which stems from the upper bound 
$\|y_{t+1} - y_t\|_{*} \le \eta \wh \tau$
that holds for the~\eqref{eq:oftrl-helper} 
and~\eqref{eq:egftrl-helper} iterates.

\medskip
\paragraph{1. Coupling with implicit step.} ~ 

\noindent
The first step of the proof involves setting up
a general coupling between a sequence
of FTRL-type iterates and 
an \textit{implicit} update step:

\smallskip
\propenergycouplingmeta*

To streamline the presentation, we give the
proof of the proposition further below in 
Section~\ref{app:energy:coupling-proof}.
It thus remains to instantiate
Proposition~\ref{prop:meta-coupling}
with the~\eqref{eq:oftrl-helper} and~\eqref{eq:egftrl-helper} iterates
and to control at each step
the Bregman divergence $\D_{F}(y_t, y_{t+1})$
and the coupling error term
$C_t := \langle x_{t+1} - x^\star, y_{t+1} - \wh y_{t+1} \rangle$.
We describe these steps below.

\medskip
\paragraph{2. Bounds on Bregman divergence terms.} ~ 

\noindent
Suppose that $\|y_{t+1} - y_t \|_{*} \le \alpha$
for some $\alpha >0$. 
First, we apply the two-sided bounds on 
the Bregman divergence
$\D_F$ from Lemma~\ref{lem:bregman-local-bound}.
Together, with an application of
Lemma~\ref{lem:local-transfer}, which transfers
the local seminorm to the iterate $y_t$, 
we obtain the inequalities:
\begin{equation}
	\mu(-\alpha) \phi(-\alpha) \cdot 
	\|y_{t+1} - y_t\|^2_{y_t}
	\;\le\;
	\D_F(y_t, y_{t+1})
	\;\le\;
	\mu(\alpha) \phi(\alpha) 
	\cdot \|y_{t+1} - y_t\|^2_{y_t} \;,
	\label{eq:energy-bregman-term}
\end{equation}
where $\mu(\cdot)$ and $\phi(\cdot)$
are as defined in~\eqref{eq:scalar-helpers}.
To instantiate~\eqref{eq:energy-bregman-term} 
with the~\eqref{eq:oftrl-helper} and~\eqref{eq:egftrl-helper}
iterates, we thus require two-sided
bounds on the magnitude of increments 
$\|y_{t+1} -y_t\|_{y_t}$ under the local semi-norm.
For this, we prove, under the constraint on 
$\eta$ in~\eqref{eq:eta-helper},
the following inequalities 
that relate the increments $y_{t+1} -y_t$
to the operator vector $V(x_t)$ in local semi-norm:

\smallskip
\noindent
\textbf{Proposition~\ref{prop:oftrl-dual-increment-op}.}\,
\textit{For~\eqref{eq:oftrl-helper}}:
$\frac{6}{7} \eta  \|V(x_t)\|_{y_t} 
	\le \|y_{t+1} - y_t \|_{y_t}
	\le \frac{6}{5} \eta \|V(x_t)\|_{y_t}$\;
\textit{for all $t \ge 1$.}

\noindent
\textbf{Proposition~\ref{prop:egftrl-dual-increment-op}.}\,
\textit{For~\eqref{eq:egftrl-helper}}:
$\frac{5}{6} \eta  \|V(x_t)\|_{y_t} 
	\le \|y_{t+1} - y_t \|_{y_t}
	\le \frac{7}{6} \eta \|V(x_t)\|_{y_t}$\;
\textit{for all $t \ge 1$.}

\medskip
We give the proofs of these propositions
in Sections~\ref{app:energy:oftrl:dual-increment}
and~\ref{app:energy:egftrl:dual-increment}, respectively.
Then, recall that 
$\|y_{t+1} - y_t\|_{*} \le \eta \wh \tau$
for all $t \ge 1$ under both~\eqref{eq:oftrl-helper}
and~\eqref{eq:egftrl-helper}
(where $\wh \tau = 3\wh B$, as in~\eqref{eq:constant-helpers}),
meaning that we can set $\alpha = \eta \wh \tau$.
Substituting these statements into~\eqref{eq:energy-bregman-term} 
yields the following bounds:
\begin{itemize}[
	leftmargin=1.5em,
	itemsep=0.5em,
]
	\item
	For~\eqref{eq:oftrl-helper}, 
	we obtain for all $t \ge 1$:
	\begin{equation}
		\textstyle
		\frac{36}{49}
		\mu(-\eta \wh \tau)
		\phi(-\eta \wh \tau)
		\cdot  \eta^2
		\|V(x_t)\|^2_{y_t} 
		\le
		\D_F(y_t, y_{t+1})
		\le
		\frac{36}{25}
		\mu(\eta \wh \tau)
		\phi(\eta \wh \tau)
		\cdot \eta^2 
		\|V(x_t)\|^2_{y_t} \;.
		\label{eq:oftrl-bregman-term} 
	\end{equation}

	\item
	For~\eqref{eq:egftrl-helper}, 
	we obtain for all $t \ge 1$:
	\begin{equation}
		\textstyle
		\frac{25}{36}
		\mu(-\eta \wh \tau)
		\phi(-\eta \wh \tau)
		\cdot  \eta^2
		\|V(x_t)\|^2_{y_t} 
		\le
		\D_F(y_t, y_{t+1})
		\le
		\frac{49}{36}
		\mu(\eta \wh \tau)
		\phi(\eta \wh \tau)
		\cdot \eta^2 
		\|V(x_t)\|^2_{y_t} \;.
		\label{eq:egftrl-bregman-term}
	\end{equation}
\end{itemize}

Now, the product of scalar functions
$\mu(-\eta \wh \tau)\cdot\phi(- \eta \wh \tau)$
is decreasing with $\eta$ and 
can be made a sufficiently large positive constant 
bounded above by $1/2$ when
$\eta$ is sufficiently small.
Similarly, $\mu(\eta \wh \tau)\cdot\phi(\eta \wh \tau)$
is increasing with $\eta$ and be made 
arbitrarily close to $1/2$ from above
for sufficiently small $\eta$. 
In particular, for $\eta$ as in~\eqref{eq:eta-helper},
it is straightforward to check that
\begin{equation}
    \textstyle
    \mu(- \eta \wh \tau) \cdot \phi(- \eta \wh \tau) 
    \ge \frac{49}{100}
    \;\;\text{and}\;\;
    \mu(\eta \wh \tau) \cdot \phi(\eta \wh\tau) 
    \le 
    \frac{51}{100} \;.
    \label{eq:df-constants}
\end{equation}

\smallskip
\paragraph{3. Bounds on coupling error terms.} ~

\noindent
Recall that $C_t = \langle x_{t+1} - x^\star, y_{t+1} - \wh y_{t+1} \rangle$
denotes the coupling error term at each step $t \ge 1$.
For the iterates of~\eqref{eq:oftrl-helper} and
\eqref{eq:egftrl-helper}, we obtain the following bounds
for $\eta$ as in~\eqref{eq:eta-helper}:

\smallskip
\noindent
\textbf{Proposition~\ref{prop:oftrl-dual-ct-bound}}\;
\textit{For~\eqref{eq:oftrl-helper}, for all $x^\star \in \MEQ(V)$ and $t \ge 1$:}
\begin{equation}
	\textstyle
	|C_t|
	\le 
	\frac{24}{5}
	\cdot \nu(\eta \wh \tau) \cdot \eta^2 \|V(x_t)\|^2_{y_t}
	+ 
	110 \wh L \cdot \eta^3 \|V(x_t)\|^2_{y_t} \;.
\end{equation}

\smallskip
\noindent
\textbf{Proposition~\ref{prop:egftrl-dual-ct-bound}}\;
\textit{For~\eqref{eq:egftrl-helper}, 
for all $x^\star \in \MEQ(V)$ and $t \ge 0$}:
\begin{equation}
	\textstyle
	|C_t|
	\le 
	9 \wh L \cdot \eta^3 \|V(x_t)\|^2_{y_t} \;.
\end{equation}

We give the proofs of Propositions~\ref{prop:oftrl-dual-ct-bound}
and~\ref{prop:egftrl-dual-ct-bound}
in Sections~\ref{app:energy:oftrl:ct}
and~\ref{app:energy:egftrl:ct}, respectively.
Similar to the prior step, we note that 
$\nu(\eta \wh \tau)$ is non-negative and strictly increasing in $\eta$
and can be made an arbitrarily small positive
constant for $\eta$ sufficiently small.
Moreover, the $O(\eta^3)$ terms in 
Propositions~\ref{prop:oftrl-dual-ct-bound}
and~\ref{prop:egftrl-dual-ct-bound} can also be made
small positive constants for $\eta$ sufficiently small.
In particular, for $\eta$ as in~\eqref{eq:eta-helper},
we have the bounds:
\begin{equation}
    \textstyle
    \nu(\eta \wh \tau) \le \frac{1}{500} 
    \;\;\text{and}\;\;
    \eta \cdot 110 \wh L \le \frac{1}{8} \;.
    \label{eq:ct-constants}
\end{equation}

\paragraph{Combining the pieces: upper bound.}
To derive the upper bound of Lemma~\ref{lem:energy-dissipation-main},
fix $x^\star \in \MEQ(V)$ and $t \ge 1$.
Then we combine the lower bounds on $\D_F(y_t, y_{t+1})$ from
\eqref{eq:oftrl-bregman-term} and~\eqref{eq:egftrl-bregman-term}
and the upper bounds on $|C_t|$ from 
Proposition~\ref{prop:oftrl-dual-ct-bound}
and~\ref{prop:egftrl-dual-ct-bound}. 
Taking the worse bound of the two instances,
this yields for both~\eqref{eq:oftrl-helper} 
and~\eqref{eq:egftrl-helper}:
\begin{align*}
    F_{x^\star}(y_{t+1}) 
    - 
    F_{x^\star}(y_t)
    &= 
    -\D_F(y_{t},y_{t+1})
    + C_t
    \nonumber \\
    &\le 
    - \big(\tfrac{25}{36} \mu(-\eta \wh \tau) \phi(-\eta \wh \tau)
        -  \tfrac{24}{5} \nu(\eta \wh \tau)\big)
    \cdot \eta^2 \|V(x_t)\|^2_{y_t}
    + 110 \wh L \cdot \eta^3 \|V(x_t)\|^2_{y_t} \\
    &\le 
    - \tfrac{1}{4} \cdot \eta^2 \|V(x_t)\|^2_{y_t}
    + \tfrac{1}{8} \cdot \eta^2 \|V(x_t)\|^2_{y_t} \\
    &= 
    - \tfrac{1}{8} \cdot \eta^2 \|V(x_t)\|^2_{y_t} \;.
\end{align*}
Here, the penultimate inequality comes from substituting
the lower bound on $\mu(- \eta \wh \tau) \cdot \phi(- \eta \tau)$
from~\eqref{eq:df-constants} and 
the upper bounds of~\eqref{eq:ct-constants},
both of which hold for $\eta$ as in~\eqref{eq:eta-helper}.
Thus, for such sufficiently small constant $\eta$, 
the negative second order $\D_{F}(y_t, y_{t+1})$ term 
dominates the coupling error and establishes
a dissipative one-step upper bound on $F_{x^\star}$. 

\smallskip
\paragraph{Combining the pieces: lower bound.}
For the lower bound of Lemma~\ref{lem:energy-dissipation-main},
we now use the upper bounds on the $\D_F(y_t, y_{t+1})$ terms
from~\eqref{eq:oftrl-bregman-term} and~\eqref{eq:egftrl-bregman-term}.
Then for any $x^\star \in \MEQ(V)$ 
and $t \ge 1$, we similarly find for both~\eqref{eq:oftrl-helper}
and~\eqref{eq:egftrl-helper} that:
\begin{align*}
    F_{x^\star}(y_{t+1}) 
    - 
    F_{x^\star}(y_t)
    &= 
    -\D_F(y_{t},y_{t+1})
    + C_t \\
    &\ge
    - \D_F(y_t, y_{t+1}) 
    - |C_t| \\
    &\ge
    - \big(\tfrac{36}{25} \mu(\eta \wh \tau) \phi(\eta \wh \tau)
        + \tfrac{24}{5}\nu(\eta \wh \tau) \big) \cdot 
        \eta^2 \|V(x_t)\|^2_{y_t}
    - 110 \wh L \cdot \eta^3 \|V(x_t)\|^2_{y_t} \\
    &\ge
    - \tfrac{3}{4} \cdot \eta^2 \|V(x_t)\|^2_{y_t}
    - \tfrac{1}{8} \cdot \eta^2 \|V(x_t)\|^2_{y_t} \\ 
    &= 
    - \tfrac{7}{8} \cdot \eta^2 \|V(x_t)\|^2_{y_t} \;.
\end{align*}
Here, the penultimate inequality now follows from 
the upper bound on $\mu(\eta \wh \tau) \cdot \phi(\eta \wh \tau)$
from~\eqref{eq:df-constants} 
(and again using~\eqref{eq:ct-constants}), 
which holds for $\eta$ as in~\eqref{eq:eta-helper}.
This completes the proof of the lemma.
\hfill $\blacksquare$

\subsubsection{Proof of Proposition~\ref{prop:meta-coupling}}
\label{app:energy:coupling-proof}

Here we give the proof of the 
implicit-step coupling of
Proposition~\ref{prop:meta-coupling}:

\begin{proof}
The proof relies on two steps: (1) expressing
the difference in shifted energy in terms 
of the (unshifted) Bregman divergence $\D_F(y_t, y_{t+1})$
and (2) coupling the true iterate $y_{t+1}$ 
with the implicit-step  iterate $\wh y_{t+1}$. 
We proceed with these two steps:

\smallskip
\noindent
\textit{Step (1).} 
Fix $x^\star \in \MEQ(V)$ and $t \ge 1$, 
and observe for any $t \ge 1$ that
\begin{align}
	F_{x^\star}(y_{t+1}) - F_{x^\star}(y_t)
	&= 
		F(y_{t+1}) - F(y_t) - \langle x^{\star}, y_{t+1} - y_t \rangle
		\nonumber \\
	&= 
		\langle \nabla F(y_{t+1}), y_{t+1} - y_{t} \rangle
		- \D_F(y_t, y_{t+1}) 
		-  \langle x^{\star}, y_{t+1} - y_t \rangle 
		\nonumber \\
	&= 
		\langle x_{t+1}, y_{t+1} - y_t \rangle
		- \D_F(y_t, y_{t+1}) 
		-  \langle x^{\star}, y_{t+1} - y_t \rangle 
		\nonumber \\
	&= 
		- \D_F(y_t, y_{t+1}) + 
		\langle x_{t+1} - x^\star, y_{t+1} - y_t \rangle \;.
    \label{eq:one-step-energy-shift1}
\end{align}
Here, the second equality comes
from the definition of $\D_F(y_t, y_{t+1})$
and the third equality comes from
the assumption that $x_{t+1} = \nabla F(y_{t+1})$
for all $t \ge 0$.

\smallskip
\noindent
\textit{Step (2).}
To further simplify~\eqref{eq:one-step-energy-shift1},
we use the implicit update $\wh y_{t+1} = y_t - \eta V(x_{t+1})$
to write:
\begin{equation}
y_{t+1} - y_t 
= \wh y_{t+1} - y_t + y_{t+1} - \wh y_{t+1}
= - \eta V(x_{t+1})  + y_{t+1} - \wh y_{t+1} \;.
\end{equation}
Then it follows that
\begin{align}
	\langle x_{t+1} - x^\star, y_{t+1} - y_t \rangle 
	&= 
	- \eta \langle x_{t+1} - x^\star, V(x_{t+1}) \rangle 
	+ \langle x_{t+1} - x^\star, y_{t+1} - \wh y_{t+1} \rangle 
	\nonumber \\
	&= 
	\langle x_{t+1} - x^\star, y_{t+1} - \wh y_{t+1} \rangle \,.
	\label{eq:uee02}
\end{align}
Here the last equality follows 
due to the fact that $x^\star \in \MEQ(V)$,
and thus $\langle x_{t+1}- x^\star, V(x_{t+1})\rangle = 0$.
Substituting~\eqref{eq:uee02} 
into~\eqref{eq:one-step-energy-shift1} 
then yields the desired statement.
\end{proof}

\subsection{Instantiating Implicit Step Coupling for OFTRL}
\label{app:energy:oftrl}

In this section, we provide details on the instantiation
of Lemma~\ref{lem:energy-dissipation-main} 
along~\eqref{eq:oftrl-helper}, as outlined in 
the proof in Section~\ref{app:energy:main-lemma-proof}.
This requires proving two main propositions,
which we state here.

First, the following proposition relates 
the dual increments $y_{t+1} - y_t$
to the operator value $V(x_t)$:

\smallskip
\begin{restatable}
	{prop}{propoftrldualincrement}
	\label{prop:oftrl-dual-increment-op}
	For~\eqref{eq:oftrl-helper}
	with $\eta > 0$ as in~\eqref{eq:eta-helper}, 
	it holds for all $t \ge 1$:
	\begin{equation*}
		\textstyle
		\frac{6}{7} \eta  \|V(x_t)\|_{y_t} 
		\le \|y_{t+1} - y_t \|_{y_t}
		\le \frac{6}{5} \eta \|V(x_t)\|_{y_t} \;.
	\end{equation*}
\end{restatable}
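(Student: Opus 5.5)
We write the OFTRL increment as a perturbation of $-\eta V(x_t)$:
\[
y_{t+1}-y_t = -\eta V(x_t) - \eta\big(V(x_t)-V(x_{t-1})\big).
\]
By the triangle inequality for the seminorm $\|\cdot\|_{y_t}$, it then suffices to show
\[
\eta\|V(x_t)-V(x_{t-1})\|_{y_t}\le c\,\eta\|V(x_t)\|_{y_t}
\]
for a small constant $c$ ($c\le 1/6$ suffices for both constants $6/7$ and $6/5$). This gives $(1-c)\eta\|V(x_t)\|_{y_t}\le\|y_{t+1}-y_t\|_{y_t}\le(1+c)\eta\|V(x_t)\|_{y_t}$. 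Let $\alpha=\eta\wh\tau$, which uses $\|y_t-y_{t-1}\|_*\le\eta\wh\tau$ from Setting~\ref{setting:weighted}.

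\textbf{Bounding the difference.} Lemma~\ref{lem:op-to-dual-diff} applied to the pair $(y_t,y_{t-1})$ gives
\[
\|V(x_t)-V(x_{t-1})\|_{y_{t-1}}\le \wh L\gamma(\alpha)\|y_t-y_{t-1}\|_{y_{t-1}}.
\]
Lemma~\ref{lem:local-transfer} converts the seminorm at $y_{t-1}$ to $y_t$ at the cost of a factor $\phi(\alpha)^{1/2}$ on each side. This yields
\[
\|V(x_t)-V(x_{t-1})\|_{y_t}\le \wh L\gamma(\alpha)\phi(\alpha)\,\|y_t-y_{t-1}\|_{y_{t-1}}.
\]
The previous increment $\|y_t-y_{t-1}\|_{y_{t-1}}$ still has to be related to $\|V(x_t)\|_{y_t}$, not $\|V(x_{t-1})\|_{y_{t-1}}$.

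\textbf{Induction on $t$.} We prove the upper bound $\|y_{s+1}-y_s\|_{y_s}\le \tfrac65\eta\|V(x_s)\|_{y_s}$ for all $s$ by induction. The base case uses $w_{-1}=w_0$, so at $t=0$ the increment is exactly $-\eta V(x_0)$; note that $y_1-y_0=-\eta V(x_0)$ as well. Using the inductive hypothesis at $s=t-1$,
\[
\|y_t-y_{t-1}\|_{y_{t-1}}\le\tfrac65\eta\|V(x_{t-1})\|_{y_{t-1}}
\le\tfrac65\eta\phi(\alpha)^{1/2}\big(\|V(x_t)\|_{y_t}+\|V(x_t)-V(x_{t-1})\|_{y_t}\big).
\]
Set $\varepsilon:=\tfrac65\eta\wh L\gamma(\alpha)\phi(\alpha)^{3/2}$. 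Substituting into the previous display and absorbing the difference term to the left gives
\[
\|V(x_t)-V(x_{t-1})\|_{y_t}\le \frac{\varepsilon}{1-\varepsilon}\|V(x_t)\|_{y_t}.
\]
Under \eqref{eq:eta-helper} we have $\eta\wh L\le 1/880$ and $\wh\omega\alpha=3\eta\wh\omega\wh B\le 3/880$, so $\gamma,\phi\approx1$ and $\varepsilon\le 1/700$. In particular $\varepsilon/(1-\varepsilon)\ll 1/6$, which yields both inequalities at step $t$, including the upper bound that closes the induction.

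\textbf{Main obstacle.} The delicate point is this self-referential bookkeeping: $\|V(x_{t-1})\|$ may be large while $\|V(x_t)\|_{y_t}$ is small, so the reference point must be handled without circularity. The absorption argument handles this, provided the induction hypothesis is carried only as the upper bound. One further check: Lemma~\ref{lem:op-to-dual-diff} needs $x_t-x_{t-1}\in\calS^\bot$, which holds by Lemma~\ref{lem:normal-diff-weighted}.
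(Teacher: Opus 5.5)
Your proof is correct, but it is organized differently from the paper's.

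\textbf{The paper's route.} The paper bounds $\|V(x_t)-V(x_{t-1})\|_{y_t}\le \wh L\gamma(\eta\wh\tau)\|y_t-y_{t-1}\|_{y_t}$ directly at the base point $y_t$. It then invokes the separately proved stability Lemma~\ref{lem:oftrl-dual-stability}, $\|y_t-y_{t-1}\|_{y_t}\le 3\|y_{t+1}-y_t\|_{y_t}$, and absorbs the error into the \emph{current} increment. The induction sits inside that lemma and runs on second differences, $\|\Delta_t-\Delta_{t-1}\|_{y_t}\le \eta C\|\Delta_{t-1}\|_{y_t}$.

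\textbf{Your route.} You make the proposition's own upper bound the inductive invariant. You control the previous increment by $\|V(x_{t-1})\|_{y_{t-1}}$ and absorb $\|V(x_t)-V(x_{t-1})\|_{y_t}$ into $\|V(x_t)\|_{y_t}$. This avoids any backward comparison of consecutive increments.

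\textbf{What each buys.} Your argument is shorter and self-contained. It also gives much sharper constants: $1\pm\varepsilon/(1-\varepsilon)$ with $\varepsilon\le 1/700$, instead of $6/5$ and $6/7$. As a by-product it yields $\|V(x_{t-1})\|_{y_t}\le(1+O(\eta\wh L))\|V(x_t)\|_{y_t}$, from which a version of Lemma~\ref{lem:oftrl-dual-stability} with constants near $1$ follows. The paper pays for a separate lemma, but that lemma does double duty. It is reused to control $\calE_{t,1},\dots,\calE_{t,4}$ in Proposition~\ref{prop:oftrl-dual-ct-bound}, including the two-step-back increment $\|y_{t-1}-y_{t-2}\|_{y_t}$. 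With your approach you would have to iterate the comparison once more to get that.

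\textbf{Two harmless slips.}
\begin{itemize}
\item $c\le 1/6$ does not give the lower constant, since $1-c\ge 6/7$ requires $c\le 1/7$. This does not matter because your actual $c=\varepsilon/(1-\varepsilon)$ is about $1/699$.
\item Moving the left-hand seminorm from $y_{t-1}$ to $y_t$ costs only $\phi(\alpha)^{1/2}$, so your factor $\phi(\alpha)$ there is a loose but valid overestimate.
\end{itemize}
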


Second, we establish bounds on the coupling
error terms
$C_t := \langle x_{t+1} -x^\star, y_{t+1} - \wh y_{t+1}\rangle$
that hold for all $x^\star \in \MEQ(V)$.
Formally we prove the following:

\smallskip
\begin{restatable}
	{prop}{propoftrlctbound}
	\label{prop:oftrl-dual-ct-bound}
	For~\eqref{eq:oftrl-helper}
	with $\eta > 0$ as in~\eqref{eq:eta-helper}, 
	it holds for all $t \ge 1$
    and $x^\star \in \MEQ(V)$:
	\begin{equation*}
        \textstyle
        |C_t|
        \le 
        \frac{24}{5} \nu(\eta \wh\tau) 
        \cdot \eta^2 \|V(x_t)\|^2_{y_t}
        + 
        110 \wh L \cdot \eta^3 \|V(x_t)\|^2_{y_t} \;.
	\end{equation*}
\end{restatable}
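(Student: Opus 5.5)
The plan is to rewrite $C_t$ so that the unbounded factor $x_{t+1}-x^\star$ drops out, using only \eqref{eq:minty-equality}. What remains are primal increments, which the machinery of Section~\ref{app:error-machinery} controls in the local seminorm at $y_t$.

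\textbf{Step 1 (algebraic reduction via MEQ).} From \eqref{eq:oftrl-helper} and $\wh y_{t+1}=y_t-\eta V(x_{t+1})$ we have
\[
y_{t+1}-\wh y_{t+1}=\eta\big[(V(x_{t+1})-V(x_t))-(V(x_t)-V(x_{t-1}))\big].
\]
Fix $x^\star\in\MEQ(V)$, so that $\langle V(x_s),x_s-x^\star\rangle=0$ for $s=t-1,t,t+1$. This gives
\[
\langle V(x_{t+1})-V(x_t),x_{t+1}-x^\star\rangle=-\langle V(x_t),x_{t+1}-x_t\rangle .
\]
Splitting $x_{t+1}-x^\star=(x_{t+1}-x_t)+(x_t-x^\star)$ gives
\[
\langle V(x_t)-V(x_{t-1}),x_{t+1}-x^\star\rangle=\langle V(x_t)-V(x_{t-1}),x_{t+1}-x_t\rangle-\langle V(x_{t-1}),x_t-x_{t-1}\rangle .
\]
Combining the two identities,
\[
C_t/\eta=-\langle V(x_t),x_{t+1}-x_t\rangle+\langle V(x_{t-1}),x_t-x_{t-1}\rangle-\langle V(x_t)-V(x_{t-1}),x_{t+1}-x_t\rangle .
\]

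\textbf{Step 2 (the cross term).} By \eqref{eq:local-gcs}, Lemma~\ref{lem:op-to-dual-diff} and Lemma~\ref{lem:dual-to-primal-local} (all with $\alpha=\eta\wh\tau$), the last term is at most $\wh L\,\gamma(\eta\wh\tau)^2\,\|y_t-y_{t-1}\|_{y_t}\|y_{t+1}-y_t\|_{y_t}$. Proposition~\ref{prop:oftrl-dual-increment-op} bounds $\|y_{t+1}-y_t\|_{y_t}\lesssim\eta\|V(x_t)\|_{y_t}$. For $\|y_t-y_{t-1}\|_{y_t}$, I would use Lemma~\ref{lem:local-transfer} to pass to $y_{t-1}$ and then the same proposition at index $t-1$. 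This leaves a factor $\|V(x_{t-1})\|_{y_t}$, which must be compared with $\|V(x_t)\|_{y_t}$. Lemma~\ref{lem:op-to-dual-diff} gives $\|V(x_{t-1})-V(x_t)\|_{y_t}\le \wh L\gamma\|y_t-y_{t-1}\|_{y_t}\lesssim \eta\wh L\|V(x_{t-1})\|_{y_t}$. Since $\eta\wh L\le 1/880$, the error can be absorbed into the left side, so $\|V(x_{t-1})\|_{y_t}\le c\|V(x_t)\|_{y_t}$ with $c$ close to $1$. After multiplying back by $\eta$, the cross term therefore contributes $O(\wh L\eta^3)\|V(x_t)\|^2_{y_t}$.

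\textbf{Step 3 (the two increment terms).} Expand each primal increment with the Hessian remainder:
\[
x_{t+1}-x_t=\nabla^2F(y_t)(y_{t+1}-y_t)+\G_F(y_{t+1},y_t),
\]
and similarly for $x_t-x_{t-1}$ around $y_{t-1}$. The remainder pieces are bounded using \eqref{eq:local-gcs} together with Lemma~\ref{lem:hessian-remainder-bound}. They give $\nu(\eta\wh\tau)\,\|V(x_s)\|_{y_s}\|y_{s+1}-y_s\|_{y_s}$ for $s=t,t-1$. Transferring to $y_t$ and applying Proposition~\ref{prop:oftrl-dual-increment-op} yields the $\tfrac{24}{5}\nu(\eta\wh\tau)\eta^2\|V(x_t)\|^2_{y_t}$ term. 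The Hessian pieces are
\[
-\langle V(x_t),y_{t+1}-y_t\rangle_{y_t}+\langle V(x_{t-1}),y_t-y_{t-1}\rangle_{y_{t-1}} .
\]
Writing $y_{t+1}-y_t=-\eta V(x_t)-\eta(V(x_t)-V(x_{t-1}))$ and $y_t-y_{t-1}=-\eta V(x_{t-1})-\eta(V(x_{t-1})-V(x_{t-2}))$ (for $t=1$, $x_{-1}=x_0$), the leading parts are $\eta\|V(x_t)\|^2_{y_t}-\eta\|V(x_{t-1})\|^2_{y_{t-1}}$. I expect this difference to be $O(\eta^2\wh L+\eta\wh\omega\wh\tau)\cdot\eta\|V(x_t)\|^2$. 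The $\eta^2\wh L$ part comes from Lipschitzness and the comparison of Step 2. The $\eta\wh\omega\wh\tau$ part comes from the seminorm change via Lemma~\ref{lem:local-transfer}, since $e^{\wh\omega\eta\wh\tau}-1=O(\eta\wh\omega\wh B)$. The correction terms carry an extra $\eta\wh L$.

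The main obstacle is this cancellation. The seminorm-transfer error of order $\eta\wh\omega\wh\tau$ is not obviously of the form $\wh L\eta^3$ or $\nu(\eta\wh\tau)\eta^2$. I would first try to pair it with the $\nu$ budget, noting that $e^{a}-1-a\approx a^2/2$ is weaker than $a$. If that fails, I would avoid the transfer entirely by using Step 1 in a telescoping-friendly form: keep both Hessian pieces evaluated at $y_t$, and exploit the fact that a difference of gradients along the segment is an exact integral (Step 3 of Lemma~\ref{lem:dual-to-primal-local}), so the transfer error enters only through $B_s-I$, which is again controlled by $\nu$.
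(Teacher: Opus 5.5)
Your Step 1 identity is correct and is the paper's decomposition regrouped. The paper writes $C_t=\eta\langle V(x_t),2x_t-x_{t-1}-x_{t+1}\rangle+\eta\langle x_{t+1}-x_{t-1},V(x_{t-1})-V(x_t)\rangle$. This differs from yours only in which group contains $\eta\langle V(x_t)-V(x_{t-1}),x_t-x_{t-1}\rangle$. Step 2 also works. Lemma~\ref{lem:oftrl-dual-stability} ($\|y_t-y_{t-1}\|_{y_t}\le 3\|y_{t+1}-y_t\|_{y_t}$) would give it directly, without the detour through $\|V(x_{t-1})\|_{y_t}$.

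The gap is in Step 3, at the point you flag yourself. You expand $x_t-x_{t-1}$ around $y_{t-1}$, so the leading terms cancel only up to a seminorm-change error,
\[
\eta^2\bigl|\|V(x_{t-1})\|^2_{y_{t-1}}-\|V(x_{t-1})\|^2_{y_t}\bigr|\le(e^{\wh\omega\eta\wh\tau}-1)\,\eta^2\|V(x_{t-1})\|^2_{y_t},
\]
and the proposal never bounds this.

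Your reason for doubting that it fits the $\nu$ budget rests on a misreading of $\nu$. In fact $\nu(\alpha)=(e^{\wh\omega\alpha}-1-\wh\omega\alpha)/(\wh\omega\alpha)\ge\wh\omega\alpha/2$ is first order. With $a=\wh\omega\eta\wh\tau$ one has $e^{a}-1=(2+O(a))\,\nu(\eta\wh\tau)$. So this error is of the \emph{same} order as the $\nu$ term, not smaller.

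It could be absorbed only by explicit bookkeeping against the fixed constant $\tfrac{24}{5}$. Your two remainder pieces already use about $\tfrac{12}{5}$ of that budget, and this error needs about $2$ more. You do not carry out that bookkeeping.

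The idea that removes the issue, which your fallback gestures at and the paper uses, is to expand \emph{both} primal increments around the same point $y_t$:
\[
2x_t-x_{t-1}-x_{t+1}=\nabla^2F(y_t)(2y_t-y_{t-1}-y_{t+1})-\G_F(y_{t-1},y_t)-\G_F(y_{t+1},y_t).
\]
Then use the OFTRL second difference $2y_t-y_{t-1}-y_{t+1}=2\eta(V(x_t)-V(x_{t-1}))+\eta(V(x_{t-2})-V(x_{t-1}))$. This gives three consequences:
\begin{itemize}
\item The $\eta\|V\|^2$ leading terms cancel exactly, and no seminorm transfer is applied to a leading-order term.
\item The two remainders give $\eta\|V(x_t)\|_{y_t}\,\nu(\eta\wh\tau)\,(1+3)\|y_{t+1}-y_t\|_{y_t}\le\tfrac{24}{5}\nu(\eta\wh\tau)\,\eta^2\|V(x_t)\|^2_{y_t}$, by Lemma~\ref{lem:hessian-remainder-bound}, Lemma~\ref{lem:oftrl-dual-stability} and Proposition~\ref{prop:oftrl-dual-increment-op}.
\item The second-difference terms and the cross term contribute only $\wh L\eta^3\|V(x_t)\|^2_{y_t}$, with total constant $\tfrac{504}{5}\gamma(\eta\wh\tau)^2\phi(\eta\wh\tau)\le 110$.
\end{itemize}
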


The proofs of Proposition~\ref{prop:oftrl-dual-increment-op}
and~Proposition~\ref{prop:oftrl-dual-ct-bound}
are given below in Section~\ref{app:energy:oftrl:dual-increment}
and Section~\ref{app:energy:oftrl:ct}, respectively.
Both proofs additionally use the following helper lemma, which
establishes the \textit{stability}
of the OFTRL dual iterates in local seminorm.
Formally:

\smallskip
\begin{restatable}[OFTRL dual stability]{lem}{lemoftrlstability}
	\label{lem:oftrl-dual-stability}
	Under~\eqref{eq:oftrl-helper} with $\eta$ 
    satisfying~\eqref{eq:eta-helper},
    for all $t \ge 2$:
	\begin{equation*}
		\textstyle
		\frac{1}{2} \cdot 
		\|y_{t-1} - y_{t-2}\|_{y_t} 
		\;\le\;
		\|y_t - y_{t-1} \|_{y_t}
		\;\le\;
		3 \cdot \|y_{t+1} - y_t\|_{y_t} \;.
	\end{equation*}
\end{restatable}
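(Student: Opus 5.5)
The plan is to argue directly from the OFTRL recursion, writing the dual increments as $\Delta_s := y_s - y_{s-1}$. From \eqref{eq:oftrl-helper} we have $\Delta_{s+1} = -2\eta V(x_s) + \eta V(x_{s-1})$, so consecutive increments differ only by operator differences:
\begin{equation*}
\Delta_{s+1} - \Delta_s = -2\eta\big(V(x_s) - V(x_{s-1})\big) + \eta\big(V(x_{s-1}) - V(x_{s-2})\big),
\end{equation*}
with the convention $x_{-1} = x_0$. In particular, for $s=1$ the last term vanishes. Every operator difference here is controlled by the corresponding dual increment through Lemma~\ref{lem:op-to-dual-diff}. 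Since $\|\Delta_s\|_* \le \eta\wh\tau$ for all $s$, we take $\alpha = \eta\wh\tau$, which gives a factor $\wh L\gamma(\eta\wh\tau)$. The lemma is stated in the seminorm at the base point, so differences based at $y_{s-1}$ or $y_{s-2}$ have to be moved to $y_t$. For this we use Lemma~\ref{lem:local-transfer}, with $\|y_t - y_{s}\|_* \le k\eta\wh\tau$ for $k \le 3$. Under \eqref{eq:eta-helper}, all the multiplicative factors $\gamma(\cdot)$ and $\phi(\pm k\eta\wh\tau)$ are within $1\%$ of $1$, and $\eta\wh L \le 1/880$. Below I write $c$ for a generic absolute constant close to $1$.

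First I prove the lower inequality $\|\Delta_{t-1}\|_{y_t} \le 2\|\Delta_t\|_{y_t}$ by induction on $t\ge 2$, in a slightly stronger form at the base point $y_{t-1}$ so that it can be transferred. The base case is $t=2$. Here $\Delta_2 - \Delta_1 = -2\eta(V(x_1)-V(x_0))$, whose $y_2$-seminorm is at most $2c\eta\wh L\|\Delta_1\|_{y_2}$. Hence $\|\Delta_2\|_{y_2} \ge (1-2c\eta\wh L)\|\Delta_1\|_{y_2}$, and this is far better than the factor $1/2$.

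For the inductive step, the displayed identity with $s = t-1$ gives
\begin{equation*}
\|\Delta_t - \Delta_{t-1}\|_{y_t} \le 2c\eta\wh L\|\Delta_{t-1}\|_{y_t} + c\eta\wh L\|\Delta_{t-2}\|_{y_t}.
\end{equation*}
The induction hypothesis at $t-1$, transferred from $y_{t-1}$ to $y_t$, gives $\|\Delta_{t-2}\|_{y_t} \le 2c\|\Delta_{t-1}\|_{y_t}$. Therefore $\|\Delta_t\|_{y_t} \ge (1 - 4c\eta\wh L)\|\Delta_{t-1}\|_{y_t}$, and this exceeds $\tfrac12\|\Delta_{t-1}\|_{y_t}$ with room to spare. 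The slack is what lets the transferred hypothesis survive at each step: I would state the inductive claim as $\|\Delta_{t-1}\|_{y_{t}} \le (1+\epsilon)\|\Delta_t\|_{y_t}$ for a small fixed $\epsilon$, say $0.05$, which is self-reproducing, and then weaken it to $2$ at the end.

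The upper inequality then follows without a separate induction. Using $s=t$ in the identity,
\begin{equation*}
\|\Delta_t\|_{y_t} \le \|\Delta_{t+1}\|_{y_t} + 2c\eta\wh L\|\Delta_t\|_{y_t} + c\eta\wh L\|\Delta_{t-1}\|_{y_t}.
\end{equation*}
Inserting the just-proved bound $\|\Delta_{t-1}\|_{y_t} \le 2\|\Delta_t\|_{y_t}$ yields $(1-4c\eta\wh L)\|\Delta_t\|_{y_t} \le \|\Delta_{t+1}\|_{y_t}$, which is much stronger than the claimed factor $3$.

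I expect the main obstacle to be bookkeeping rather than any idea. Lemma~\ref{lem:op-to-dual-diff} measures $V(x_{s-1}) - V(x_{s-2})$ in the seminorm at $y_{s-1}$ or $y_{s-2}$, while the statement is pinned to $y_t$, and the induction hypothesis is naturally phrased at $y_{t-1}$. This forces repeated applications of Lemma~\ref{lem:local-transfer}. The constants have to be tracked so that the exponential factors, which are at most $\exp(3\wh\omega\eta\wh\tau) \le 1.02$ under \eqref{eq:eta-helper}, do not erode the inductive slack. The edge case $t=2$ with the convention $x_{-1}=x_0$ also needs separate checking.
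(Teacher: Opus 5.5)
Your proposal is correct and is essentially the paper's proof. The paper also inducts on $t$, bounds $\|\Delta_t-\Delta_{t-1}\|_{y_t}$ via Lemma~\ref{lem:op-to-dual-diff} and Lemma~\ref{lem:local-transfer}, and, with hypothesis constant $2$, your ratio-form induction recovers its constant $C=4\hat{L}\phi^4\gamma$ and the condition $\eta C\le 1/2$. The one deviation is the upper inequality: the paper obtains it by applying the lower inequality at time $t+1$ and transferring seminorms twice ($2\phi^2\le 3$), whereas you rerun the recursion at $s=t$ and use the lower inequality at time $t$; both routes work.
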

We will grant this result as true for now
and defer its proof to 
Section~\ref{app:energy:oftrl:stability-proof}.

\subsubsection{Proof of Proposition~\ref{prop:oftrl-dual-increment-op}}
\label{app:energy:oftrl:dual-increment}

This section gives the proof of 
Proposition~\ref{prop:oftrl-dual-increment-op}, 
which relates the magnitude of increments
$y_{t+1} - y_t$ to the magnitude of the
operator vector $V(x_t)$ under the local
semi-norm $\|\cdot\|_{y_t}$. 

\begin{proof}
    First, recall the 
    scalar function $\gamma(\cdot)$ from~\eqref{eq:scalar-helpers}
    and the constant $\wh \tau$ from~\eqref{eq:constant-helpers}.
    In particular, recall for~\eqref{eq:oftrl-helper}
    that $\|y_{t}-y_{t-1}\|_* \le \eta \wh \tau$ 
    for all $t \ge 2$.
    Moreover, it is straightforward to check that 
    for $\eta$ as in~\eqref{eq:eta-helper},
    we have $3 \eta \wh L \gamma(\eta \wh \tau) \le \frac{1}{6}$. 
	We then prove the lower and upper inequalities
	of the proposition separately:
    
	\paragraph{Upper inequality.}
	By definition of~\eqref{eq:oftrl-helper} and using the 
	triangle inequality with respect to the local norm $\|\cdot\|_{y_t}$, we obtain the bound
	\begin{alignat*}{2}
		\|y_{t+1} - y_t \|_{y_t}
		&\le
		\eta \|V(x_t)\|_{y_t}
		+ 
		\eta \|V(x_t) - V(x_{t-1})\|_{y_t}  \\
		&\le
		\eta \|V(x_t)\|_{y_t} + 
		\eta \wh L \gamma(\eta \wh \tau) \cdot \| y_t - y_{t-1}\|_{y_t}
		&&\quad\text{(Lemma~\ref{lem:op-to-dual-diff})} \\
		&\le 
		\eta \|V(x_t)\|_{y_t} + 
		3 \eta \wh L \gamma(\eta \wh \tau) \cdot \| y_{t+1} - y_t \|_{y_t} 
		&&\quad\text{(Lemma~\ref{lem:oftrl-dual-stability})} \;.
	\end{alignat*}
	Rearranging and using the fact that 
	$3 \eta \wh L \gamma(\eta \wh \tau)  < 1/6$, we conclude that
	\begin{equation*}
        \textstyle
		\|y_{t+1} - y_t \|_{y_t}
		\le 
		\frac{1}{1-3\eta \wh L \gamma(\eta \wh \tau)} \cdot 
		\eta \|V(x_t)\|_{y_t} 
		\le
		\frac{6}{5} \cdot \eta \|V(x_t)\|_{y_t} \;.
	\end{equation*}

	\paragraph{Lower inequality.}
	For the lower inequality, we again use the definition
	of~\eqref{eq:oftrl-helper} and the triangle inequality to write
	\begin{alignat*}{2}
		\eta \|V(x_t)\|_{y_t} 
		&\le 
		\|y_t - y_{t+1}\|_{y_t}
		+ 
		\eta \|V(x_{t}) - V(x_{t-1})\|_{y_t} \\
		&\le 
		\|y_t - y_{t+1}\|_{y_t}
		+ 
		\wh L \gamma(\eta \wh \tau) \eta \cdot \|y_t - y_{t-1}\|_{y_t} 
		&\quad\text{(Lemma~\ref{lem:op-to-dual-diff})} \\
		&\le 
		\|y_t - y_{t+1}\|_{y_t}
		+ 
		3 \wh L \gamma(\eta \wh \tau) \eta \cdot  \|y_{t+1} - y_{t}\|_{y_t} 
		&\quad\text{(Lemma~\ref{lem:oftrl-dual-stability})} \\
		&\le
		\tfrac{7}{6} \cdot \|y_{t+1} - y_t\|_{y_t}
		&\quad\text{($3 \eta \wh L \gamma(\eta \wh \tau) \le 1/6$)} \;.
	\end{alignat*}
	Rearranging then yields the desired lower inequality and 
    concludes the proof.
\end{proof}

\subsubsection{Proof of Proposition~\ref{prop:oftrl-dual-ct-bound}}
\label{app:energy:oftrl:ct}

This section gives the proof of 
Proposition~\ref{prop:oftrl-dual-ct-bound},
which establishes a bound on the 
coupling error terms $C_t$ 
from Proposition~\ref{prop:meta-coupling}
along the~\eqref{eq:oftrl-helper} iterates:

\begin{proof}
The proof proceeds in two steps:
first, deriving an exact expansion of the
terms $C_t$ into four constituent \textit{error terms}, 
and second, deriving bounds on the absolute magnitudes of
the error terms.

\paragraph{1. Expanding the inner product term.} ~ 

First, recall that $\wh y_{t+1} = y_t - \eta V(x_{t+1})$. 
Then under~\eqref{eq:oftrl-helper}, it follows that
we can express the difference $y_{t+1} - \wh y_{t+1}$ as
\begin{align*}
	y_{t+1} - \wh y_{t+1}
	&= 
	y_t - 2\eta V(x_t) + \eta V(x_{t-1})
	- (y_t - \eta V(x_{t+1})) \\
	&= 
	\eta (V(x_{t+1}) - 2 V(x_t)
    +V(x_{t-1}))  \;.
\end{align*}
Now fix $x^\star$ satisfying~\eqref{eq:minty-equality}.
Then we can expand $C_t$ as follows:
\begin{align}
	C_t := 
	\langle 
	x_{t+1} - x^\star, y_{t+1} - \wh y_{t+1}
	\rangle
	&= 
	\eta \big\langle 
	x_{t+1} - x^\star, 
	V(x_{t+1}) - 2 V(x_t) + V(x_{t-1})
	\big\rangle \nonumber \\
	&= 
	- 2
	\eta \big \langle
		x_{t+1} - x^\star, V(x_t)
	\big\rangle
	+ 
	\eta \big\langle
		x_{t+1} - x^\star, V(x_{t-1}) 
	\big\rangle\,,
	\label{eq:oi0}
\end{align}
where the final equality follows from observing that 
$\langle x_{t+1} - x^\star, V(x_{t+1}) \rangle = 0$ since 
$x^\star$ satisfies~\eqref{eq:minty-equality}. 
Both terms in~\eqref{eq:oi0} can then further be expanded as:
\begin{align}
	- 2\eta \big\langle x_{t+1} - x^\star, V(x_t) \big\rangle
	&= 
	- 2\eta \big\langle
		x_t - x^\star + (x_{t+1} - x_t), V(x_t)
	\big\rangle  
		\nonumber \\
	&= 
	- 2\eta \big\langle
		x_{t+1} - x_t, V(x_t)  
	\big\rangle \,, 
		\label{eq:oiet1} \\[0.5em]
	\eta \big\langle x_{t+1} - x^\star, V(x_{t-1}) \big\rangle 
	&= 
	\eta\big \langle 
		x_{t-1} - x^\star + (x_{t+1} - x_{t-1}) , V(x_{t-1}) 
	\big\rangle 
		\nonumber \\
	&= 
	\eta \big\langle
		x_{t+1} - x_{t-1}, V(x_{t-1})
	\big\rangle 
		\nonumber \\
	&= 
	\eta \big\langle
		x_{t+1} - x_{t-1}, V(x_{t-1}) - V(x_t)
	\big\rangle
	+ 
	\eta \big\langle
		x_{t+1} - x_{t-1}, V(x_t)
	\big\rangle .
		\label{eq:oiet2}
\end{align}
In the expansion of both terms, the second equalities
follow from the fact that $x^\star$ satisfies~\eqref{eq:minty-equality}.
Substituting~\eqref{eq:oiet1} and~\eqref{eq:oiet2} into~\eqref{eq:oi0}
and simplifying then yields
\begin{equation}
	C_t 
	= 
	\underbrace{\eta \big \langle 
		V(x_t), 
		x_t - x_{t-1} + x_t - x_{t+1}
	\big\rangle}_{\text{Term I}}
	+ 
	\eta \big \langle
		x_{t+1} - x_{t-1}, V(x_{t-1}) - V(x_t)
	\big\rangle\;.
	\label{eq:oi1}
\end{equation}
We then further simplify the first term of~\eqref{eq:oi1}
as follows:

\smallskip
\noindent
\textit{Expansion of Term I.}
First recall the primal-dual relationships
$x_{t-1} = \nabla F(y_{t-1})$, $x_t = \nabla F(y_t)$,
and $x_{t+1} = \nabla F(y_{t+1})$. 
Using the definition of the Hessian remainder terms
$\G_F(y_{t+1}, y_t)$ and $\G_F(y_{t-1}, y_t)$, it follows that
\begin{align}
	x_t - x_{t-1} &+ x_t - x_{t+1}  
    \nonumber \\
	&= 
	\nabla F(y_t) - \nabla F(y_{t-1}) 
	+ 
	\nabla F(y_t) - \nabla F(y_{t+1}) 
		\nonumber \\
	&= 
		\nabla^2 F(y_t)(y_t - y_{t-1}) - \G_F(y_{t-1}, y_t)
		+ 
		\nabla^2 F(y_t)(y_t - y_{t+1}) - \G_F(y_{t+1}, y_t)
		\nonumber \\
	&= 
		\nabla^2 F(y_t)	(y_t - y_{t-1} + y_t - y_{t+1})
		- \G_F(y_{t-1}, y_t) - \G_F(y_{t+1}, y_t) \;.
		\label{eq:oit10}
\end{align}
By definition of~\eqref{eq:oftrl}, observe that 
\begin{align}
	y_t - y_{t-1} + y_t - y_{t+1} 
	&= 
	- 2 \eta V(x_{t-1}) + \eta V(x_{t-2})
	- \eta V(x_{t-1}) + 2\eta V(x_t)
		\nonumber \\
	&= 
	2 \eta \big(V(x_t) - V(x_{t-1})\big)
	+ \eta \big( V(x_{t-2}) - V(x_{t-1})\big) \;.
	\label{eq:oit11}
\end{align}
Substituting~\eqref{eq:oit11} into~\eqref{eq:oit10}
then allows for expanding (Term I) in~\eqref{eq:oi1} as
\begin{align}
	\text{(Term I)} 
	= 
	2 \eta^2 \big\langle 
		V(x_t), V(x_t) - V(x_{t-1})
	\big\rangle_{y_t}
	&+ 
	\eta^2 \big \langle 
		V(x_t), V(x_{t-2}) - V(x_{t-1})
	\big\rangle_{y_t} 
		\nonumber \\ 
	&- \eta \big\langle
		V(x_t),
		\G_F(y_{t+1}, y_t)
		+ \G_F(y_{t-1}, y_t) 
	\big\rangle  \;.
	\label{eq:oit1expand}
\end{align}

\noindent
\textit{Final expansion of $C_t$.}
Finally, 
substituting~\eqref{eq:oit1expand} into~\eqref{eq:oi1}
yields the full expansion
\begin{equation}
	\begin{aligned}
	C_t 
	&= 
		\underbrace{- \eta \big\langle 
			V(x_t), \G_F(y_{t+1}, y_t) + \G_F(y_{t-1}, y_t)
		\big\rangle}_{\calE_{t,1}} \\
	&\qquad
		+  \underbrace{2 \eta^2 \big\langle 
			V(x_t), V(x_t) - V(x_{t-1})
		\big\rangle_{y_t}}_{\calE_{t, 2}}
		+ \underbrace{
			\eta^2 \big \langle 
				V(x_t), V(x_{t-2}) - V(x_{t-1})
			\big\rangle_{y_t}
		}_{\calE_{t,3}} \\
	&\qquad
		+ \underbrace{
			\eta \big\langle 
				x_{t+1} - x_{t-1}, V(x_{t-1}) - V(x_t)
			\big\rangle
		}_{\calE_{t, 4}} \;.
	\end{aligned}
	\label{eq:oftrl-error-terms}
\end{equation}

\paragraph{2. Bounds on magnitude of coupling error terms.} ~ 

\noindent
We  proceed to derive upper bounds on the magnitude
of the error terms $\calE_{t, 1}$, $\calE_{t, 2}$, 
$\calE_{t, 3}$, and $\calE_{t, 4}$ that comprise the 
exact expansion of the inner  product 
$C_t = \langle x_{t+1} - x^\star, y_{t+1} - \wh y_{t+1} \rangle$
from~\eqref{eq:oftrl-error-terms}.
We assume $\eta$ is constrained as in~\eqref{eq:eta-helper} 
and we also use the definitions of 
the scalar functions $\nu(\cdot)$ and $\gamma(\cdot)$
from~\eqref{eq:scalar-helpers}.

\smallskip
\paragraph{Bound on second-order error terms.}
The first term $\calE_{t, 1}$ scales like $O(\eta^2 \|V(x_t)\|^2_{y_t})$,
and with a leading constant that can be made arbitrarily
small for $\eta$ sufficiently bounded. In particular, 
by an application of the generalized Cauchy-Schwarz
inequality with respect to the local norm $\|\cdot\|_{y_t}$, we can 
write and further simplify
\begin{alignat}{2}
	|\calE_{t, 1}|
	&= 
	\big|\eta \big\langle 
			V(x_t), \G_F(y_{t+1}, y_t) + G_F(y_{t-1}, y_t)
		\big\rangle \big|
		\nonumber  \\
	&\le 
	\eta \|V(x_t)\|_{y_t} \cdot 
	\|\G_F(y_{t+1}, y_t) + \G_F(y_{t-1}, y_t) \|_{y_t, *} 
		\nonumber \\
	&\le 
	\eta \|V(x_t)\|_{y_t} \cdot 
	\big(
		\|\G_F(y_{t+1}, y_t)\|_{y_t, *}
		+ 
		\|\G_F(y_{t-1}, y_{t})\|_{y_t, *}
	\big) 
		\nonumber \\
	&\le
	\eta \|V(x_t)\|_{y_t} \cdot 
	\nu(\eta \wh \tau) \cdot \big(
		\|y_{t+1}-y_t\|_{y_t}
		+ 
		\|y_{t}-y_{t-1}\|_{y_t}
	\big)
	&&\qquad\text{(Lemma~\ref{lem:hessian-remainder-bound})} 
		\nonumber \\
	&\le
		\eta \|V(x_t)\|_{y_t} \cdot 
		4\nu(\eta \wh \tau) \cdot 
		\|y_{t+1}-y_t\|_{y_t}
        &&\qquad\text{(Lemma~\ref{lem:oftrl-dual-stability})} 
		\nonumber \\
	&\le
	 \frac{24}{5} \nu(\eta \wh\tau) \cdot \eta^2 \|V(x_t)\|^2_{y_t} \;. 
	 \nonumber 
	 &&\qquad\text{(Proposition~\ref{prop:oftrl-dual-increment-op})} 
\end{alignat}

\paragraph{Bounds on third-order error terms.}
We now derive bounds on the magnitudes of 
$\calE_{t, 2}$, $\calE_{t, 3}$, and $\calE_{t,4}$
from~\eqref{eq:oftrl-error-terms}.
In particular, we show each of these three terms scales 
like $O(\eta^3 \|V(x_t)\|^2_{y_t})$.

For the term $\calE_{t, 2}$, we can apply
the Cauchy-Schwarz inequality with respect to the local induced norm $\|\cdot\|_{y_t}$ and Lemma~\ref{lem:op-to-dual-diff}
and further simplify to find
\begin{alignat*}{2}
	|\calE_{t, 2}|
	&= 
	\big| 2 \eta^2 \big \langle 
		V(x_t), V(x_t) - V(x_{t-1})
	\big\rangle_{y_t} \big| \\
	&\le
	2\eta^2 
	\|V(x_t)\|_{y_t}
	\cdot 
	\|V(x_t) - V(x_{t-1})\|_{y_t} \\
	&\le 
	2 \eta^2 \|V(x_t)\|_{y_t}
	\cdot 
	\wh L \gamma(\eta \wh \tau) \cdot \|y_t - y_{t-1}\|_{y_t} 
	&&\quad\text{(Lemma~\ref{lem:op-to-dual-diff})} \\
	&\le 
	2 \eta^2 \|V(x_t)\|_{y_t}
	\cdot 
	3 \wh L \gamma(\eta \wh \tau) \cdot \|y_{t+1} - y_{t}\|_{y_t} 
	&&\quad\text{(Lemma~\ref{lem:oftrl-dual-stability})} \\
	&\le
	\frac{36}{5} \wh L \gamma(\eta \wh \tau) \cdot \eta^3 \|V(x_t)\|^2_{y_t}  \;.
	&&\quad\text{(Proposition~\ref{prop:oftrl-dual-increment-op})} 
\end{alignat*}

For the term $\calE_{t, 3}$, we use similar calculations
as for $\calE_{t, 2}$, but now the application 
of Lemma~\ref{lem:oftrl-dual-stability}  gives
$\|y_{t-1}- y_{t-2}\|_{y_t} \le 6 \|y_{t+1}-y_t\|_{y_t}$.
Together with the fact that $\calE_{t, 3}$ does 
not contain leading multiplicative factor 2, 
and using two applications of Lemma~\ref{lem:local-transfer},
we obtain the bound
\begin{equation*}
    \textstyle
	|\calE_{t, 3}| 
	= 
	\big| 
	\eta^2 \big\langle 
	V(x_t), V(x_{t-2}) - V(x_{t-1})
	\big\rangle_{y_t}
	\big|
	\le 
	\frac{36}{5} \wh L \gamma(\eta \wh \tau) \phi(\eta \wh \tau)
	 \cdot \eta^3 \|V(x_t)\|^2_{y_t} \;.
\end{equation*}

For the term $\calE_{t, 4}$, we use generalized
Cauchy-Schwarz with respect to the local norm $\|\cdot\|_{y_t}$, together with applications
of Lemma~\ref{lem:dual-to-primal-local}
and Lemma~\ref{lem:op-to-dual-diff}, to obtain
\begin{alignat*}{2}
	|\calE_{t, 4}|
	&= 
	\big|
	\eta
	\big\langle 
	x_{t+1} - x_{t-1}, V(x_t) - V(x_{t-1})
	\big\rangle
	\big| \\
	&=
	\eta
	\big|
	\big\langle 
	x_{t+1} - x_t + x_t -  x_{t-1}, V(x_t) - V(x_{t-1})
	\big\rangle
	\big| \\
	&\le
	\eta \cdot 
	\big(
		\|x_{t+1} - x_{t}\|_{y_t, *}  + 
		\|x_t - x_{t-1}\|_{y_t, *}
	\big)	
	\cdot 
	\|V(x_t) - V(x_{t-1})\|_{y_t}
	&&\quad\text{(Generalized C-S)} \\
	&\le
	\eta \gamma(\eta \wh \tau) \cdot 
	\big(
		\|y_{t+1} - y_t\|_{y_t}
		+ \|y_t - y_{t-1}\|_{y_t}
	\big)
	\cdot 
	\wh L \gamma(\eta \wh \tau) \|y_{t} -y_{t-1}\|_{y_t} 
	&&\quad\text{(Lemmas~\ref{lem:dual-to-primal-local} 
		and~\ref{lem:op-to-dual-diff})} \\
	&\le
	12 \wh L \gamma(\eta \wh \tau)^2 \eta \cdot \|y_{t+1} -y_t\|^2_{y_t}
	&&\quad\text{(Lemma~\ref{lem:oftrl-dual-stability})} \\
	&\le 
	\frac{432}{5} \wh L \gamma(\eta \wh \tau)^2 \eta^3 \|V(x_t)\|^2_{y_t} \;.
	&&\quad\text{(Proposition~\ref{prop:oftrl-dual-increment-op})} 
\end{alignat*}

Combining the bounds on the third-order terms
$|\calE_{t, 2}|$, $|\calE_{t, 3}|$ and $|\calE_{t, 4}|$,
and also using the fact that $\gamma(\eta \wh \tau) \ge 1$ 
and $\phi(\eta \wh \tau) \ge 1$
for all $\eta >0$, 
we then  have that
\begin{equation*}
    \textstyle
	|\calE_{t, 2}| 
	+ 
	|\calE_{t, 3}| 
	+ 
	|\calE_{t, 4}|
	\le
	\frac{504}{5} \wh L \gamma(\eta \wh \tau)^2 \phi(\eta \wh \tau)
    \cdot \eta^3 \|V(x_t)\|^2_{y_t}
	\le
    110 \wh L \eta^3 \|V(x_t)\|^2_{y_t} \;.
\end{equation*}
Here, the final inequality comes from 
the fact that 
$\frac{504}{5} \cdot \gamma(\eta \wh \tau)^2 \phi(\eta \wh \tau) \le 110$
for $\eta$ as in~\eqref{eq:eta-helper}.

\paragraph{3. Combining the pieces.} ~

\noindent
Together, for $\eta$ as in~\eqref{eq:eta-helper}, 
we conclude that the magnitude of
$C_t$ is bounded above by
\begin{equation*}
    \textstyle
	|C_t| 
	\le 
	|\calE_{t, 1}|
	+ 
	|\calE_{t, 2}|
	+ |\calE_{t, 3}|
	+ |\calE_{t, 4}|
	\le
    \frac{24}{5} \nu(\eta \wh\tau) \cdot \eta^2 \|V(x_t)\|^2_{y_t}
	+ 110 \wh L \cdot \eta^3 \|V(x_t)\|^2_{y_t} \;,
\end{equation*}
which completes the proof of the proposition.
\end{proof}

\subsubsection{Proof of Lemma~\ref{lem:oftrl-dual-stability}}
\label{app:energy:oftrl:stability-proof}

Here, we prove Lemma~\ref{lem:oftrl-dual-stability}, 
which establishes the stability of 
the dual iterates of~\eqref{eq:oftrl}.
Note that this lemma is the analog 
of~\citet[Proposition F.6]{lazarsfeld2026};
but the present proof is generalized 
to handle general non-linear operators. 

\smallskip
\begin{proof}
For readability, we will use the following shorthand notation:
for $t \ge 1$, define
\begin{equation*}
	\Delta_{t} := y_t - y_{t-1} = 
	\begin{cases}
		\; - \eta V(x_{0}) &\text{for $t=1$} \\
		\; - 2 \eta V(x_{t-1}) + \eta V(x_{t-2}) 
	 	&\text{for $t \ge 2$} \;.
 	\end{cases}
\end{equation*}
Thus, our goal is to prove for all $t \ge 2$ that
\begin{equation}
	\text{(i)}\;\;	
	\|\Delta_{t-1}\|_{y_t}  \le 2 \|\Delta_t\|_{y_t} 
	\quad\text{and}\quad
	\text{(ii)}\;\;
	\|\Delta_t\|_{y_t}
	\le 
	3 \cdot \| \Delta_{t+1} \|_{y_t} \;.
	\label{eq:os-goals}
\end{equation}
We start by proving inequality (i) 
from which inequality (ii) will follow. 

\paragraph{Proof of inequality (i).}
First, observe by the triangle inequality that 
\begin{equation}
	\|\Delta_{t-1}\|_{y_t}
	= 
	\|\Delta_{t} + (\Delta_{t-1} - \Delta_{t}) \|_{y_t}
	\le
	\|\Delta_{t}\|_{y_t} + \|\Delta_{t} -  \Delta_{t-1} \|_{y_t}\;.
	\label{eq:os01}
\end{equation}
Now if we can establish for some $C > 0$ that
\begin{equation}
    \|\Delta_{t} - \Delta_{t-1}\|_{y_t}
    \le 
    \eta C \|\Delta_{t-1}\|_{y_t}
    \label{eq:os-ind}
\end{equation}
then assuming $\eta < 1/(2C)$, it will follow
from~\eqref{eq:os01} that 
\begin{equation}
	\|\Delta_{t-1} \|_{y_t}
	\le 
	\frac{1}{1-\eta C} \cdot \|\Delta_{t}\|_{y_t}
	\le 
	2 \|\Delta_{t}\|_{y_t}  \;.
	\label{eq:os-final-i}
\end{equation}
We will prove the inequality of~\eqref{eq:os-ind}
via induction on $t \ge 2$. For this, we first
define several constants:

\smallskip
\paragraph{Definition of constants.}
Fix $\eta > 0$ satisfying~\eqref{eq:eta-helper}.
For the remainder of the proof, we will slightly 
overload notation and
let $\phi$, $\gamma$, and $C$ be
the ($\eta$-dependent) constants
\begin{equation}
    \textstyle
	\phi := \exp(\eta \wh \omega \wh \tau)\;,
	\quad
	\gamma := 
		\frac{\exp(3\eta \wh \omega \wh \tau)- 1}{3\eta \wh \omega \wh \tau} \;,
	\quad 
	\text{and}\;\;
	C := 4 \wh L \phi^4 \gamma \;.
	\label{eq:kappa-gamma}
\end{equation}
Here, $\wh \omega > 0$, $\wh \tau > 0$, and $\wh L \ge 1$ are
the constants from~\eqref{eq:constant-helpers},
where we recall that $\|y_{t}-y_{t-1}\|_{*} \le \eta \wh \tau$
for all $t \ge 0$.
Further observe for all $\eta > 0$ that
both $\phi > 0$ and $\gamma > 0$ are positive
and increasing in $\eta$.
Moreover, for $\eta$ as in~\eqref{eq:eta-helper},
it is also straightforward to check that $\eta C \le 1/2$. 

We now proceed with the inductive 
proof of expression~\eqref{eq:os-ind}
with $C$ as defined in~\eqref{eq:kappa-gamma}:

\smallskip
\noindent
\textbf{Base case.}
At $t = 2$, we have by definition that
\begin{equation}
	\Delta_{2}  - \Delta_1
	= 
	- 2 \eta V(x_1) + \eta V(x_0) - (- \eta V(x_0))
	= 
	- 2 \eta (V(x_1) - V(x_0))\;.
\end{equation}
Then taking local seminorms and applying
Lemma~\ref{lem:op-to-dual-diff}, we have by 
definition of $\gamma$ that 
\begin{equation}
	\|\Delta_2 - \Delta_1 \|_{y_1}
	= 
	2 \eta \|V(x_1) - V(x_0)\|_{y_1} \\
	\le 
	2 \eta \cdot \hat L 
	\gamma
	\cdot 
	\|y_1 - y_0 \|_{y_1} \;.
	\label{eq:os02}
\end{equation}
Then, using two applications of
Lemma~\ref{lem:local-transfer}, we 
conclude that 
\begin{align}
	\|\Delta_2 - \Delta_1 \|_{y_2} 
	\le 
	\phi
	\cdot 
	\|\Delta_2 - \Delta_1 \|_{y_1}  
	\le 
	\eta \cdot 
	2 \wh L
	\phi
	\gamma
	\cdot \|y_1 - y_0 \|_{y_1}  
	\le 
	\eta \cdot  2 \wh L \phi^2 \gamma
	\cdot \|y_1- y_0 \|_{y_2}  \;.
	\label{eq:os03}
\end{align}
It then follows by definition of $\Delta_1$
and using the fact that $\wh L, \phi, \gamma > 0$ that
\begin{equation}
	\|\Delta_2 - \Delta_1\|_{y_2}
	\le 
	\eta \cdot 2 \wh L \phi^2 \gamma  \cdot 
	\|\Delta_1 \|_{y_2} 
	\le 
	\eta \cdot 4 \wh L \phi^4 \gamma
	\cdot 
	\|\Delta_1 \|_{y_2} 
	\;.
	\label{eq:os-base}
\end{equation}

\smallskip
\paragraph{Inductive case.}
Now assume for all $2 \le k \le t-1$ that 
\begin{equation*}
	\|\Delta_k - \Delta_{k-1}\|_{y_k} 
	\le 
	\eta \cdot  
	4 \wh L \phi^4 \gamma
	\cdot \|\Delta_{k-1} \|_{y_k} \;.
\end{equation*}
Our goal is to establish this inequality also holds
at time $t$. For this observe by definition that 
\begin{align*}
	\Delta_{t} - \Delta_{t-1}
	&= 
	- 2 \eta V(x_{t-1})
	+ \eta V(x_{t-2})
	- 
	(-2\eta V(x_{t-2})+ \eta V(x_{t-3})) \\
	&= 
	- 2 \eta (V(x_{t-1}) - V(x_{t-2}))
	+ \eta (V(x_{t-2})  - V(x_{t-3})) \;.
\end{align*}
Then taking local semi-norms and using the triangle 
inequality, we can further write and bound:
\begin{alignat}{2}
	\|\Delta_t - \Delta_{t-1}\|_{y_t}
	&\le
	2 \eta \|V(x_{t-1}) - V(x_{t-2})\|_{y_t}
	+ 
	\eta \|V(x_{t-2}) - V(x_{t-3})\|_{y_t}  
		\nonumber \\
	&\le
	2 \eta \phi
	\|V(x_{t-1}) - V(x_{t-2})\|_{y_{t-1}}
	+ 
	\eta \phi^2 \|V(x_{t-2}) - V(x_{t-3})\|_{y_{t-2}}  
	&\;\;\text{(Lemma~\ref{lem:local-transfer})} 
		\nonumber \\
	&\le 
	2 \eta \phi
	  \wh L \gamma 
	\| y_{t-1} - y_{t-2}\|_{y_{t-1}}
	+ 
	\eta \phi^2 
	\wh L \gamma 
	\| y_{t-2} - y_{t-3}\|_{y_{t-2}}
	&\;\;\text{(Lemma~\ref{lem:op-to-dual-diff})} 
		\nonumber \\
	&= 
		\eta 2\wh L\phi \gamma  
		\|\Delta_{t-1}\|_{y_{t-1}}
		+ 
		\eta \wh L \phi^2 \gamma 
		\|\Delta_{t-2}\|_{y_{t-2}}   
		\nonumber \\
	&\le
		\eta \cdot 
		2 \wh L \phi^2 \gamma  
		\|\Delta_{t-1}\|_{y_{t}}
		+ 
		\eta
		\wh L \phi^3 \gamma 
		\|\Delta_{t-2}\|_{y_{t-1}} \;.
		\qquad\text{(Lemma~\ref{lem:local-transfer})} 
        &
	\label{eq:os-i-01}
\end{alignat}
Now recall by the inductive hypothesis that 
$\|\Delta_{t-1} - \Delta_{t-2}\|_{y_{t-1}} \le 
	\eta
	4 \wh L \phi^4 \gamma 
	\|\Delta_{t-2}\|_{y_{t-1}}$. 
By the triangle inequality, it then follows that 
\begin{equation*}
	\|\Delta_{t-2}\|_{y_{t-1}}
	\le 
	\|\Delta_{t-1}\|_{y_{t-1}}
	+ 
	\|\Delta_{t-1} - \Delta_{t-2}\|_{y_{t-1}}
	\le
	\|\Delta_{t-1}\|_{y_{t-1}}
	+ 
	\eta \cdot 4 \wh L \phi^4 \gamma \cdot 
	\|\Delta_{t-2}\|_{y_{t-1}}  \;.
\end{equation*}
Recall from~\eqref{eq:kappa-gamma} 
that we let $C := 4 \wh L \phi^4 \gamma$. 
Then rearranging and using 
the fact that $C \eta \le 1/2$ 
under the constraint on $\eta$,
we then obtain the bound
\begin{equation}
    \textstyle
	\|\Delta_{t-2}\|_{y_{t-1}}
	\le 
	\frac{1}{1- \eta C} \cdot 
	\|\Delta_{t-1}\|_{y_{t-1}}
	\le 
	\frac{\phi}{1- \eta C} \cdot 
	\|\Delta_{t-1}\|_{y_t}
	\le 2 \phi 
	\cdot \|\Delta_{t-1}\|_{y_t}
	 \;,
	\label{eq:os-i-02} 
\end{equation}
where the second inequality follows
from Lemma~\ref{lem:local-transfer}. 
Substituting~\eqref{eq:os-i-02} into~\eqref{eq:os-i-01}
then yields
\begin{align}
	\|\Delta_{t} - \Delta_{t-1}\|_{y_t}
	\le 
	\eta \cdot 
	(2 \wh L \phi^2 \gamma + 2\wh L \phi^4 \gamma) 
	\cdot 
	\|\Delta_{t-1}\|_{y_t} 
	\le 
	\eta \cdot 
	4 \wh L \phi^4 \gamma \cdot 
	\|\Delta_{t-1}\|_{y_t} \;,
\end{align}
where the final inequality is due to the fact
that $\wh L, \phi, \gamma > 0$.
Thus we conclude the inductive hypothesis
also holds at time $t$.

\paragraph{Combining the pieces for inequality (i).} 
Having established that expression~\eqref{eq:os-ind} holds
for all $t \ge 2$ with $C = 4\wh L \phi^4 \gamma$,
it follows by expression~\eqref{eq:os-final-i}
and the fact that $\eta C \le 1/2$
for $\eta$ as in~\eqref{eq:eta-helper} that 
$\|\Delta_{t-1}\|_{y_{t}}  \le 2 \|\Delta_t\|_{y_t}$, 
which yields the desired inequality (i) from~\eqref{eq:os-goals}.

\smallskip
\paragraph{Proof of inequality (ii).}
To prove the second inequality (ii) from~\eqref{eq:os-goals},
we apply inequality (i) at time $t+1$. 
Together with two applications 
of Lemma~\ref{lem:local-transfer}, we then obtain
\begin{equation}
	\|y_{t}-y_{t-1}\|_{y_{t}}
	\le 
	\phi \cdot 
	\|y_{t}-y_{t-1}\|_{y_{t+1}}
	\le 
	2 \phi \cdot
	\|y_{t+1} - y_t \|_{y_{t+1}}
	\le 
	2 \phi^2 \cdot  
	\| y_{t+1} - y_t\|_{y_t} \;,
	\label{eq:os-ii}
\end{equation}
where $\phi > 0$ is as defined in~\eqref{eq:kappa-gamma}.
Under the setting of $\eta > 0$ as in~\eqref{eq:eta-helper},
it is straightforward
to check that $\phi^2 = \exp( 6 \eta \wh \omega \wh B) \le 3/2$.
Thus we conclude from~\eqref{eq:os-ii} that
$\|y_{t}-y_{t-1}\|_{y_t} \le 3 \|y_{t+1} - y_t\|_{y_t}$.
This yields the desired inequality (ii) from~\eqref{eq:os-goals}
and completes the proof of the lemma. 
\end{proof}

\subsubsection{Bound on initial change in energy}
\label{app:energy:oftrl:initial-bound}

For~\eqref{eq:oftrl}, note that the dissipation 
of Lemma~\ref{lem:energy-dissipation-main}
holds only for times $t \ge 1$. 
At time $t=0$, the shifted energy 
$F_{x^\star}(y_{1}) - F_{x^\star}(y_0)$ 
(for $x^\star \in \MEQ(V)$) may increase.
In the following lemma, we derive an upper bound
on this maximum increase in shifted energy
after the first iteration:

\begin{restatable}[OFTRL energy startup]
    {lem}{lemoftrlstartup}
    \label{lem:oftrl-initial-energy}
    Let $\{(x_t, y_t)\}$ be the
    iterates of~\eqref{eq:oftrl-helper}
    with $\eta$ as in Lemma~\ref{lem:energy-dissipation-main}.
    Then for any $x^\star \in \MEQ(V)$, it holds
    at time $t=1$ that
    \begin{equation*}
        F_{x^\star}(y_1) - F_{x^\star}(y_0) \le \eta^2 \|V(x_0)\|^2_{y_0} \;.
    \end{equation*}
\end{restatable}

\begin{proof}
    Recall for~\eqref{eq:oftrl} that 
    $y_1 - y_0 = - \eta V(x_0)$. 
    Thus at the first step, OFTRL takes 
    a forward, vanilla FTRL step. 
    Then, fix $x^\star \in \MEQ(V)$,
    and using the fact that the initializations 
    satisfy $x_0 = \nabla F(y_0)$, observe that
    \begin{align*}
        F_{x^\star}(y_1)
        - 
        F_{x^\star}(y_0)
        &= 
        F(y_1) - F(y_0)
        - 
        \langle 
        x^\star, y_1- y_0
        \rangle 
        \\
        &= 
        - \eta 
        \langle x_0 - x^\star, V(x_0)\rangle
        + \D_F(y_1, y_0) 
        = 
         \D_F(y_1, y_0) \;.
    \end{align*}
    Here, the final equality is
    due to $x^\star \in \MEQ(V)$. 
    As $\|y_1 - y_0\|_{*} \le \eta \wh \tau$, 
    we apply Lemma~\ref{lem:bregman-local-bound} to obtain
    \begin{equation*}
        F_{x^\star}(y_1)
        - 
        F_{x^\star}(y_0)
        =
        \D_F(y_1, y_0) 
        \le \mu(\eta \wh \tau) \cdot \|y_1 - y_0\|^2_{y_0}
        \le 
        \mu(\eta \wh \tau) \cdot \eta^2 \|V(x_0)\|^2_{y_0} \;,
    \end{equation*}
    where $\mu(\cdot)$ is the scalar function from~\eqref{eq:scalar-helpers}.
    For $\eta$ as in Lemma~\ref{lem:energy-dissipation-main}
    (and in~\eqref{eq:eta-helper}),
    it is straightforward to verify the crude bound
    $\mu(\eta \wh \tau) \le 1$. 
    Substituting this into the prior expression
    yields the desired bound
    and completes the proof.
\end{proof}

\medskip
We also derive the following crude upper bound 
on $\D_R(x^\star, x_1)$
which will be used later:

\smallskip
\begin{restatable}[OFTRL initial Bregman]
    {lem}{lemoftrlstartupbregman}
    \label{lem:oftrl-initial-bregman}
    Let $\{(x_t, y_t)\}$ be the
    iterates of~\eqref{eq:oftrl-helper}
    with $\eta$ as in Lemma~\ref{lem:energy-dissipation-main}.
    Then for every 
    $x^\star \in \MEQ(V) \cap \relint(\calX)$,
    it holds that
    $\D_R(x^\star, x_1) 
    \le 2 \D_R(x^\star, x_0)$. 
\end{restatable}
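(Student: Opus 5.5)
The plan is to go through Proposition~\ref{prop:bregman-shifted-energy}, which turns the change in Bregman divergence into the change in shifted energy:
\[
\D_R(x^\star, x_1) - \D_R(x^\star, x_0) = F_{x^\star}(y_1) - F_{x^\star}(y_0).
\]
Lemma~\ref{lem:oftrl-initial-energy} then bounds the right-hand side by $\eta^2 \|V(x_0)\|^2_{y_0}$. It therefore suffices to show
\[
\eta^2 \|V(x_0)\|^2_{y_0} \le \D_R(x^\star, x_0) \qquad \text{for every } x^\star \in \MEQ(V) \cap \relint(\calX).
\]

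The key observation is that $V(x^\star) \in \calS$. This follows from Lemma~\ref{lem:wmeq-ne} applied in the symmetrized system, since $x^\star \in \MEQ(V) \subseteq \NE(V)$ and $x^\star$ is interior. Because $\nabla^2 F(y_0)$ annihilates $\calS$ (Proposition~\ref{prop:second-order-duality}(iii)), we get
\[
\|V(x_0)\|_{y_0} = \|V(x_0) - V(x^\star)\|_{y_0}.
\]
We then chain the global bounds:
\begin{itemize}
\item Lemma~\ref{lem:norm-global-vs-local}(i) gives $\|V(x_0) - V(x^\star)\|_{y_0} \le \|V(x_0) - V(x^\star)\|_*$.
\item $\wh L$-Lipschitzness gives $\|V(x_0) - V(x^\star)\|_* \le \wh L \|x_0 - x^\star\|$.
\item 1-strong convexity of $R$ (inherited from Assumption~\ref{ass:reg-sc}, with $\D_R \ge m_{\min}\D_Q$ via~\eqref{eq:DQ-vs-DR}) gives $\|x_0 - x^\star\|^2 \le 2\D_R(x^\star, x_0)$.
\end{itemize}

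Combining these yields $\eta^2 \|V(x_0)\|_{y_0}^2 \le 2\eta^2 \wh L^2 \D_R(x^\star, x_0)$. The stepsize bound~\eqref{eq:eta-helper} gives $\eta \wh L \le 1/880$, so $2\eta^2 \wh L^2 \le 1$ with plenty of room. This gives $\D_R(x^\star, x_1) \le 2\D_R(x^\star, x_0)$.

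The main point needing care is the strong-convexity step for $R$ in the norm $\|\cdot\|$. If the constant carried through the scaling is not exactly $1$, I would simply absorb it: it is at worst a factor involving $m_{\min}$ or $\kappa_M$, and $\eta^2 \wh L^2$ is tiny under~\eqref{eq:eta-helper}. Alternatively, the strong-convexity constant is already implicit in Lemma~\ref{lem:norm-global-vs-local}, which the paper applies to $R$ directly, so I would rely on that version.
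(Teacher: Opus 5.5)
Your proposal is correct and follows the paper's proof essentially step for step. It combines Proposition~\ref{prop:bregman-shifted-energy} with Lemma~\ref{lem:oftrl-initial-energy}, uses $V(x^\star)\in\calS$ to write $\|V(x_0)\|_{y_0}=\|V(x_0)-V(x^\star)\|_{y_0}$, and then chains Lemma~\ref{lem:norm-global-vs-local}, $\wh L$-Lipschitzness, strong convexity of $R$, and $2\wh L^2\eta^2\le 1$.

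Your hedge on the strong-convexity constant is unnecessary. Note that $\D_R(x',x)=\D_{Q_M}(w',w)=\sum_i m_i\tfrac{\eta}{\eta_i}\D_{Q_i}(w_i',w_i)$, where each block term is at least $\tfrac{1}{2}m_i\|w_i'-w_i\|_i^2=\tfrac12\|x_i'-x_i\|_i^2$ by separability of Assumption~\ref{ass:reg-sc}. So $R$ is exactly $1$-strongly convex in $\|\cdot\|$. In any case, the cruder route losing a factor $\kappa_M$ would still close under~\eqref{eq:eta-helper}.
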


\begin{proof}
    First, we will derive a general 
    (algorithm-agnostic) upper bound on 
    the term $\|V(x)\|^2_{y}$ 
    that holds for all primal-dual pairs
    $x$ and $y$. 
    For this, fix $y \in \R^D$ and $x = \nabla F(y) \in \relint(\calX)$.
    Further
    fix $x^\star \in \MEQ(V) \cap \relint(\calX)$,
    and recall by Lemma~\ref{lem:wmeq-ne}
    that $V(x^\star) \in \calS$. 
    Thus, as $\Null(\nabla^2 F(y)) = \calS$
    for all $y \in \R^D$,
    we have $\|V(x)\|^2_{y} = \|V(x) - V(x^\star)\|^2_{y}$.
    Then, applying Lemma~\ref{lem:norm-global-vs-local},
    using the $\wh L$-Lipschitzness of $V$, 
    and by 1-strong-convexity of $R$ with respect
    to $\|\cdot\|$, we obtain:
    \begin{equation}
        \|V(x)\|^2_{y}
        = 
        \|V(x)-V(x^\star)\|^2_{y}
        \le 
        \|V(x)-V(x^\star)\|^2_{*}
        \le 
        \wh L^2 \cdot \|x- x^\star\|^2 
        \le 
        2 \wh L^2 \cdot \D_R(x^\star, x) \;.
        \label{eq:dissipation-breg-upper}
    \end{equation}
    Now, for the~\eqref{eq:oftrl-helper}
    iterates, we have by 
    Lemma~\ref{lem:oftrl-initial-energy} that
    $F_{x^\star}(y_1) - F_{x^\star}(y_0) \le 
    \eta^2 \|V(x_0)\|^2_{y_0}$ 
    for all $x^\star \in \MEQ(V)$. 
    Then applying~\eqref{eq:dissipation-breg-upper}
    to the OFTRL pair $(x_0, y_0)$
    and using Proposition~\ref{prop:bregman-shifted-energy},
    we obtain for any 
    $x^\star \in \MEQ(V) \cap \relint(\calX)$:
    \begin{equation}
        \D_R(x^\star, x_1)
        -
        \D_R(x^\star, x_0)
        \le 
        \eta^2 \|V(x_0)\|^2_{y_0}
        \le
        2 \wh L^2 \eta^2 \D_R(x^\star, x_0)
        \le 
        \D_R(x^\star, x_0) \;.
        \label{eq:diss-breg-oftrl}
    \end{equation}
    Here, the final inequality comes
    from the fact that
    (conservatively) $2\wh L^2 \eta^2 \le 1$
    for $\eta$ as in~\eqref{eq:eta-helper}.
    Rearranging~\eqref{eq:diss-breg-oftrl}
    then yields the statement of the lemma.
\end{proof}

\subsection{Instantiating Coupling for EGFTRL}
\label{app:energy:egftrl}

Here we give details on instantiating 
of Lemma~\ref{lem:energy-dissipation-main}
along the iterates of~\eqref{eq:egftrl-helper},
which requires slightly
less work than for the OFTRL case. 

First, the following proposition relates 
the dual increments $y_{t+1} - y_t$
to the operator value $V(x_t)$:

\smallskip
\begin{restatable}
	{prop}{propegftrldualincrement}
	\label{prop:egftrl-dual-increment-op}
	For~\eqref{eq:egftrl-helper}
	with $\eta > 0$ as in~\eqref{eq:eta-helper}, 
	it holds for all $t \ge 0$:
	\begin{equation*}
		\textstyle
		\frac{5}{6} \eta  \|V(x_t)\|_{y_t} 
		\le \|y_{t+1} - y_t \|_{y_t}
		\le \frac{7}{6} \eta \|V(x_t)\|_{y_t} \;.
	\end{equation*}
\end{restatable}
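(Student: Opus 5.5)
For \eqref{eq:egftrl-helper} the increment is $y_{t+1}-y_t = -\eta V(x_{t+1/2})$. I will write it as $-\eta V(x_t) - \eta\big(V(x_{t+1/2})-V(x_t)\big)$ and then apply the triangle inequality in the local seminorm $\|\cdot\|_{y_t}$. This gives
\begin{equation*}
\eta\|V(x_t)\|_{y_t} - \eta\|V(x_{t+1/2})-V(x_t)\|_{y_t}
\;\le\; \|y_{t+1}-y_t\|_{y_t}
\;\le\; \eta\|V(x_t)\|_{y_t} + \eta\|V(x_{t+1/2})-V(x_t)\|_{y_t}.
\end{equation*}
So both inequalities reduce to showing $\|V(x_{t+1/2})-V(x_t)\|_{y_t} \le \tfrac16 \|V(x_t)\|_{y_t}$.

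For that, I apply Lemma~\ref{lem:op-to-dual-diff} to the pair $(y_t, y_{t+1/2})$. This is legitimate because $\|y_{t+1/2}-y_t\|_{*} = \eta\|V(x_t)\|_{*} \le \eta \wh B \le \eta\wh\tau$. The lemma gives
\begin{equation*}
\|V(x_{t+1/2})-V(x_t)\|_{y_t} \le \wh L\,\gamma(\eta\wh\tau)\,\|y_{t+1/2}-y_t\|_{y_t} = \eta\wh L\,\gamma(\eta\wh\tau)\,\|V(x_t)\|_{y_t}.
\end{equation*}
Unlike OFTRL, the half-step increment is exactly $-\eta V(x_t)$. There is therefore no recursion and no stability lemma in the style of Lemma~\ref{lem:oftrl-dual-stability}. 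Since $\|\cdot\|_{y_t}$ kills components in $\calS$, taking the local seminorm of $-\eta V(x_t)$ directly yields $\eta\|V(x_t)\|_{y_t}$.

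Finally, the stepsize constraint \eqref{eq:eta-helper} gives $\eta\wh\omega\wh\tau = 3\eta\wh\omega\wh B \le 3/880$, so $\gamma(\eta\wh\tau)\le 2$. It also gives $\eta\wh L \le 1/(880\wh B\wh\omega)\le 1/880$, using $\wh B\wh\omega = \sqrt{\kappa_M}B\omega \ge 1$. Hence $\eta\wh L\gamma(\eta\wh\tau)\le 1/6$, which yields the constants $\tfrac56$ and $\tfrac76$. The only step needing care is this last constant check, in particular confirming $\wh B\wh\omega\ge1$ from $B,\omega\ge1$ and $m_{\max}\ge m_{\min}$, and it is routine. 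The bound holds for all $t\ge0$ because no earlier iterate is used.
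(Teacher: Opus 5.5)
Your proof is correct and is essentially the paper's argument. The paper first proves Lemma~\ref{lem:egftrl-dual-stability} using the same triangle inequality, Lemma~\ref{lem:op-to-dual-diff} on the pair $(y_t,y_{t+1/2})$, and the bound $\eta\wh L\gamma(\eta\wh\tau)\le 1/6$, and then substitutes $y_{t+1/2}-y_t=-\eta V(x_t)$; you substitute first and inline the lemma, and your explicit check that $\wh B\wh\omega=\sqrt{\kappa_M}B\omega\ge 1$ fills in the constant verification the paper leaves as straightforward.
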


\medskip
The second proposition derives a bound on 
the coupling error terms 
$C_t = \langle x_{t+1} - x^\star, y_{t+1} - \wh y_{t+1}\rangle$
from the implicit-step coupling 
of Proposition~\ref{prop:meta-coupling}.

\begin{restatable}
	{prop}{propegftrlctbound}
	\label{prop:egftrl-dual-ct-bound}
	For~\eqref{eq:egftrl-helper}
	with $\eta > 0$ as in~\eqref{eq:eta-helper}, 
	it holds for all $t \ge 0$
    and $x^\star \in \MEQ(V)$:
	\begin{equation*}
        \textstyle
        |C_t|
        \le 
        9 \wh L \cdot \eta^3 \|V(x_t)\|^2_{y_t} \;.
	\end{equation*}
\end{restatable}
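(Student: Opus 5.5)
The plan is to collapse the coupling error $C_t$, using the Minty equality twice, into a single inner product that is visibly third order in $\eta$. I then bound that inner product with the local-norm machinery of Section~\ref{app:error-machinery}.

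\textbf{Step 1: exact simplification of $C_t$.} Under~\eqref{eq:egftrl-helper}, the implicit step is $\wh y_{t+1} = y_t - \eta V(x_{t+1})$, so
\[
y_{t+1} - \wh y_{t+1} = \eta\big(V(x_{t+1}) - V(x_{t+1/2})\big).
\]
Fix $x^\star \in \MEQ(V)$. Since $x_{t+1}\in\calX$, \eqref{eq:minty-equality} gives $\langle x_{t+1}-x^\star, V(x_{t+1})\rangle = 0$, hence
\[
C_t = -\eta\langle x_{t+1}-x^\star, V(x_{t+1/2})\rangle .
\]
Next I apply \eqref{eq:minty-equality} at $x_{t+1/2}\in\calX$, which gives $\langle x_{t+1/2}-x^\star, V(x_{t+1/2})\rangle=0$. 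Subtracting yields the exact identity
\[
C_t = -\eta\langle x_{t+1}-x_{t+1/2}, V(x_{t+1/2})\rangle .
\]
This no longer depends on $x^\star$, which explains why the bound holds uniformly over $\MEQ(V)$.

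\textbf{Step 2: bounding the two factors.} Since $x_{t+1}-x_{t+1/2}\in\calS^\bot$ (Lemma~\ref{lem:normal-diff-weighted}), the generalized Cauchy--Schwarz inequality~\eqref{eq:local-gcs} at base point $y_t$ gives
\[
|C_t|\le \eta\,\|x_{t+1}-x_{t+1/2}\|_{y_t,*}\,\|V(x_{t+1/2})\|_{y_t}.
\]
For the primal difference, I use $y_{t+1}-y_{t+1/2} = \eta(V(x_t)-V(x_{t+1/2}))$, whose $\|\cdot\|_*$-norm is at most $2\eta\wh B\le\eta\wh\tau$. Lemma~\ref{lem:dual-to-primal-local}, applied at base $y_{t+1/2}$, bounds $\|x_{t+1}-x_{t+1/2}\|_{y_{t+1/2},*}$ by $\gamma(\eta\wh\tau)\,\eta\,\|V(x_t)-V(x_{t+1/2})\|_{y_{t+1/2}}$. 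Lemma~\ref{lem:op-to-dual-diff}, applied to the pair $(y_t,y_{t+1/2})$ with $\|y_{t+1/2}-y_t\|_*\le\eta\wh\tau$, then bounds this by $\wh L\gamma(\eta\wh\tau)\,\|y_{t+1/2}-y_t\|$ in local norm, and $\|y_{t+1/2}-y_t\| = \eta\|V(x_t)\|$. Re-basing every local (semi)norm to $y_t$ uses Lemma~\ref{lem:local-transfer}. For the dual seminorm $\|\cdot\|_{y,*}$, I use the inverted Loewner inequality implied by~\eqref{eq:lhs} on $\calS^\bot$. Each re-basing costs at most a factor $\phi(\eta\wh\tau)^{1/2}$, and the result is
\[
\|x_{t+1}-x_{t+1/2}\|_{y_t,*}\lesssim \wh L\gamma^2\phi^{O(1)}\,\eta^2\|V(x_t)\|_{y_t}.
\]
For the second factor, the triangle inequality and Lemma~\ref{lem:op-to-dual-diff} give
\[
\|V(x_{t+1/2})\|_{y_t}\le (1+\wh L\gamma\eta)\|V(x_t)\|_{y_t}.
\]

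\textbf{Step 3: constants.} Multiplying the two factors gives
\[
|C_t|\le \wh L\,\gamma^2\phi^{O(1)}(1+\eta\wh L\gamma)\,\eta^3\|V(x_t)\|_{y_t}^2 .
\]
Under~\eqref{eq:eta-helper}, $\eta\wh\omega\wh\tau\le 3/880$, so $\gamma,\phi\approx 1$ and $\eta\wh L\le 1/880$. The prefactor is therefore well below $9$, which gives the claim. The only place needing care is the base-point bookkeeping in Step 2: the dual-to-primal lemma is naturally centred at $y_{t+1/2}$ while the target norm is centred at $y_t$, so the transfers must be done for both the primal dual-seminorm and the operator seminorm. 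All such factors are $1+O(\eta\wh\omega\wh B)$, so I expect no real obstacle beyond tracking them. Proposition~\ref{prop:egftrl-dual-increment-op} is not even needed here, since the bound is directly in terms of $\|V(x_t)\|_{y_t}$.
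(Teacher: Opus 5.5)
Your proof is correct, and after Step 1 it takes a shorter route than the paper. Step 1 matches the paper's reduction to $C_t=-\eta\langle x_{t+1}-x_{t+1/2},V(x_{t+1/2})\rangle$.

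\textbf{The paper's route.} It expands $x_{t+1}-x_{t+1/2}=\eta\nabla^2F(y_{t+1/2})(V(x_t)-V(x_{t+1/2}))+\G_F(y_{t+1},y_{t+1/2})$. This splits $C_t$ into a linearized term $\calE_{t,1}$ and a Hessian-remainder term $\calE_{t,2}$, the latter bounded via Lemma~\ref{lem:hessian-remainder-bound}. Both pieces are expressed through $\|y_{t+1}-y_t\|_{y_t}$ and converted back to $\eta\|V(x_t)\|_{y_t}$ using Lemma~\ref{lem:egftrl-dual-stability} and Proposition~\ref{prop:egftrl-dual-increment-op}. This gives the prefactor $\tfrac92\gamma\phi^2(1+\nu)\le 9$.

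\textbf{Your route.} You apply Lemma~\ref{lem:dual-to-primal-local} to the whole primal difference at once. You then use $y_{t+1/2}-y_t=-\eta V(x_t)$ to land directly on $\|V(x_t)\|_{y_t}$. This avoids the remainder term and the stability lemma, and yields a prefactor $\gamma^2\phi(1+\eta\wh L\gamma)\approx 1$.

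\textbf{Why the split is unnecessary here.} The paper's decomposition mirrors its OFTRL analysis (Proposition~\ref{prop:oftrl-dual-ct-bound}). There the split is what exposes the cancellation in $x_t-x_{t-1}+x_t-x_{t+1}$, since each primal difference is only first order. In the extragradient case, $y_{t+1}-y_{t+1/2}$ is already $O(\eta^2)$ in local norm, so your one-shot bound suffices.

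\textbf{Dual-seminorm transfer.} You explicitly invert the Loewner bounds on $\calS^\bot$ to move $\|\cdot\|_{y,*}$ between base points. This is the correct justification for a step the paper attributes to Lemma~\ref{lem:local-transfer}, which as stated covers only $\|\cdot\|_y$.
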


We give the proof of
Propositions~\ref{prop:egftrl-dual-increment-op}
and~\ref{prop:egftrl-dual-ct-bound} 
for~\eqref{eq:egftrl-helper}
below in Section~\ref{app:energy:egftrl:dual-increment}
and Section~\ref{app:energy:egftrl:ct}, respectively.
Compared to the proofs for~\eqref{eq:oftrl-helper},
for the~\eqref{eq:egftrl-helper}, we require slightly 
different machinery given the differences
in the dual updates. In particular, 
we require the following stability bound
on the dual~\eqref{eq:egftrl-helper} iterates
in terms of the halfstep sequence $\{y_{t+1/2}\}$:

\medskip
\begin{restatable}[EGFTRL Dual Stability]{lem}{lemegftrldualstability}
	\label{lem:egftrl-dual-stability}
	Under~\eqref{eq:egftrl} with $\eta$ 
    satisfying~\eqref{eq:eta-helper},
	for all $t \ge 0$:
	\begin{equation*}
        \textstyle
		\tfrac{5}{6} \cdot \|y_{t+1/2}- y_t\|_{y_t}
		\;\le\;
		\|y_{t+1} - y_t\|_{y_t}
		\;\le\;
		\tfrac{7}{6} \cdot \|y_{t+1/2} - y_t\|_{y_t} \;.
	\end{equation*}
\end{restatable}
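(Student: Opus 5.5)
Both inequalities follow from writing the full step relative to the half step. By definition of \eqref{eq:egftrl-helper}, $y_{t+1/2}-y_t=-\eta V(x_t)$ and $y_{t+1}-y_t=-\eta V(x_{t+1/2})$. Hence
\begin{equation*}
y_{t+1}-y_t = (y_{t+1/2}-y_t) - \eta\big(V(x_{t+1/2})-V(x_t)\big).
\end{equation*}
Applying the triangle inequality for the local seminorm $\|\cdot\|_{y_t}$ in both directions gives
\begin{equation*}
\big|\,\|y_{t+1}-y_t\|_{y_t}-\|y_{t+1/2}-y_t\|_{y_t}\big| \le \eta\,\|V(x_{t+1/2})-V(x_t)\|_{y_t}.
\end{equation*}
It therefore suffices to show that this perturbation term is at most $\tfrac16\|y_{t+1/2}-y_t\|_{y_t}$.

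To bound the perturbation, apply Lemma~\ref{lem:op-to-dual-diff} with the pair $(y,y')=(y_t,y_{t+1/2})$. Since $\|y_{t+1/2}-y_t\|_*=\eta\|V(x_t)\|_*\le \eta\wh B\le\eta\wh\tau$, we may take $\alpha=\eta\wh\tau$. The norms are already evaluated at the base point $y_t$, so no local-norm transfer via Lemma~\ref{lem:local-transfer} is required. The lemma yields
\begin{equation*}
\eta\|V(x_{t+1/2})-V(x_t)\|_{y_t}\le \eta\wh L\,\gamma(\eta\wh\tau)\,\|y_{t+1/2}-y_t\|_{y_t},
\end{equation*}
with $\gamma$ as in \eqref{eq:scalar-helpers}.

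It remains to check the numerical constraint $\eta\wh L\gamma(\eta\wh\tau)\le 1/6$ under \eqref{eq:eta-helper}. That condition gives $\eta\wh L\le 1/880$, since $\wh B,\wh\omega\ge1$ in the regime where $m_{\min}\le1$; in general one uses $\eta\wh L\wh B\wh\omega\le 1/880$ together with $\wh\tau=3\wh B$. It also gives $\eta\wh\omega\wh\tau=3\eta\wh\omega\wh B\le 3/880$, so $\gamma(\eta\wh\tau)\le 1.01$. The only point needing care is bounding $\eta\wh L$ alone, since \eqref{eq:eta-helper} controls only the product $\eta\wh L\wh B\wh\omega$. This is where the normalization $B,L,\omega\ge1$ matters: $\wh B\wh\omega=\sqrt{\kappa_M}B\omega\ge1$ because $\kappa_M\ge1$, so $\eta\wh L\le(880\wh B\wh\omega)^{-1}\le1/880$. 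With both bounds in hand, $\eta\wh L\gamma(\eta\wh\tau)\le 1.01/880<1/6$, which yields both the $\tfrac56$ and $\tfrac76$ factors.
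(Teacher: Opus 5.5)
Your proof is correct and is essentially the paper's argument: write $y_{t+1}-y_t=(y_{t+1/2}-y_t)+\eta\bigl(V(x_t)-V(x_{t+1/2})\bigr)$, bound the perturbation with Lemma~\ref{lem:op-to-dual-diff} at base point $y_t$ and $\alpha=\eta\widehat\tau$, then use $\eta\widehat L\gamma(\eta\widehat\tau)\le 1/6$ and the triangle inequality in both directions. Your explicit check of that constant, which the paper only calls straightforward, is also right (via $\widehat B\widehat\omega=\sqrt{\kappa_M}B\omega\ge 1$ and $\widehat L\ge 1$); the earlier aside about the regime $m_{\min}\le 1$ is unnecessary and slightly off, since $\widehat B\ge 1$ would need $m_{\max}\ge 1$, but your general argument supersedes it.
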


The proof of Lemma~\ref{lem:egftrl-dual-stability}
is given in Section~\ref{app:energy:egftrl:stability}.
For now, we grant the result as true for use
in the proofs of Propositions~\ref{prop:egftrl-dual-increment-op}
and~\ref{prop:egftrl-dual-ct-bound}.

\subsubsection{Proof of Proposition~\ref{prop:egftrl-dual-increment-op}}
\label{app:energy:egftrl:dual-increment}

Here, we give the short proof of Proposition~\ref{prop:egftrl-dual-increment-op},
which relates for~\eqref{eq:egftrl-helper} the magnitude
of the dual increments $y_{t+1} - y_t$ 
to the vector $V(x_t)$ under the local seminorm $\|\cdot\|_{y_t}$

\smallskip
\begin{proof}
Recall under~\eqref{eq:egftrl} that
$y_{t+1/2} - y_t = - \eta V(x_t)$ for all $t \ge 0$.
Thus substituting this difference into the statement 
of Lemma~\ref{lem:egftrl-dual-stability} yields the claim. 
\end{proof}

\subsubsection{Proof of Proposition~\ref{prop:egftrl-dual-ct-bound}}
\label{app:energy:egftrl:ct}

In this section, we give the proof of
Proposition~\ref{prop:egftrl-dual-ct-bound}.
For this, recall for any $x^\star \in \MEQ(V)$
that we define 
$C_t := \langle x_{t+1} - x^\star, y_{t+1} - \wh y_{t+1} \rangle$.

\begin{proof}
Similar to Proposition~\ref{prop:oftrl-dual-ct-bound},
the proof proceeds by first expanding 
$C_t$ under the~\eqref{eq:egftrl-helper} iterates,
and then bounding the resulting constituent error terms.

\paragraph{1. Expanding the inner product term.} ~

\noindent
Under~\eqref{eq:egftrl-helper}, the difference $y_{t+1} - \wh y_{t+1}$
is given by 
\begin{equation*}
	y_{t+1} - \wh y_{t+1}
	= 
		y_t - \eta V(x_{t+1/2}) - (y_t - \eta V(x_{t+1})) 
	= 	\eta \big( V(x_{t+1})	- V(x_{t+1/2})\big) \;.
\end{equation*}
Fixing $x^\star$ satisfying~\eqref{eq:minty-equality}, 
it follows that we can simplify
\begin{align}
    C_t = 
	\langle x_{t+1} - x^\star, y_{t+1} - \wh y_{t+1} \rangle
	&= 
	\eta \langle x_{t+1} -x^\star, V(x_{t+1}) - V(x_{t+1/2}) \rangle
		\nonumber  \\
	&= 
		- \eta \langle x_{t+1} - x^\star, V(x_{t+1/2}) \rangle 
		\nonumber \\
	&= 
		- \eta \langle x_{t+1/2} - x^\star + x_{t+1} - x_{t+1/2}, V(x_{t+1/2}) \rangle 
		\nonumber \\
	&= 
	- \eta \langle x_{t+1} - x_{t+1/2}, V(x_{t+1/2}) \rangle \;.
		\label{eq:eginner0}
\end{align}
Here, the second and fourth equalities follow from
the fact that $x^\star$ satisfies~\eqref{eq:minty-equality}. 
To further simplify the inner product in~\eqref{eq:eginner0},
we use the definition of the~\eqref{eq:egftrl-helper} iterates 
and the Hessian remainder term $\G_F(y_{t+1}, y_{t+1/2})$ 
to write
\begin{align}
	x_{t+1} - x_{t+1/2}
	&= 
	\nabla F(y_{t+1}) - \nabla F(y_{t+1/2}) 
		\nonumber \\
	&= 
	\nabla^2 F(y_{t+1/2})(y_{t+1} - y_{t+1/2}) + \G_F(y_{t+1}, y_{t+1/2})
		\nonumber \\
	&= 
	\eta \nabla^2 F(y_{t+1/2})(V(x_t) - V(x_{t+1/2})) 
	+ \G_F(y_{t+1}, y_{t+1/2}) 
	\label{eq:eginner1} \;.
\end{align}
Due to the symmetry of the Hessian $\nabla^2 F(y_{t+1/2})$,
substituting~\eqref{eq:eginner1} into~\eqref{eq:eginner0} yields
\begin{equation}
	C_t
	= 
	\underbrace{\eta^2 
	\big\langle V(x_{t+1/2}) - V(x_t), V(x_{t+1/2})\big\rangle_{y_{t+1/2}}}_{\calE_{t,1}}
	- 
    \underbrace{\eta \big\langle \G_F(y_{t+1}, y_{t+1/2}), V(x_{t+1/2}) \big\rangle}_{\calE_{t, 2}}
	\label{eq:eginnerfinal}
	\;.
\end{equation}

\paragraph{2. Bounds on magnitude of coupling error terms.} ~

We now derive bounds on the magnitudes
of the two error terms $\calE_{t, 1}$
and $\calE_{t, 2}$ from~\eqref{eq:eginnerfinal}.
For this, we use the scalar functions
$\phi(\cdot)$, $\gamma(\cdot)$,
and $\nu(\cdot)$ from~\eqref{eq:scalar-helpers}.
Recall also that under~\eqref{eq:egftrl-helper},
we have $\|y_{t+1} -y_t\|_{*}, \|y_{t+1/2} - y_t\|_{*} \le \eta \wh \tau$ 
for all $t \ge 0$.
Thus, for readability, we will 
slightly abuse notation and overload 
$\phi := \phi(\eta \wh \tau)$,
$\gamma := \gamma(\eta \wh \tau)$,
and $\nu := \nu(\eta \wh \tau)$. 
We then proceed with the bounds on $\calE_{t, 1}$, and $\calE_{t,2}$:

First, for the term $\calE_{t, 1}$,
recall that $y_{t+1} - y_t = - \eta V(x_{t+1/2})$. 
Then we can bound:
\begin{alignat}{2}
    |\calE_{t, 1}| 
    &=
    \big|
    \langle \eta (V(x_{t+1/2}) - V(x_t)), \eta V(x_{t+1/2}) \rangle_{y_{t+1/2}}
    \big|
        \nonumber \\
    &\le
    \eta \|V(x_{t+1/2})- V(x_t)\|_{y_{t+1/2}}
    \cdot 
    \|\eta V(x_{t+1/2})\|_{y_{t+1/2}}
    &&\quad\text{(Cauchy-Schwarz)}  
    \nonumber \\
    &\le 
    \phi^2
    \cdot \eta 
    \|V(x_{t+1/2}) - V(x_t)\|_{y_t} \cdot \|y_{t+1} - y_t\|_{y_t} 
    &&\quad\text{(Lemma~\ref{lem:local-transfer}, twice)} 
    \nonumber \\
    &\le 
    \wh L 
    \gamma \phi^2 \cdot 
    \eta 
    \|y_{t+1/2} - y_t\|_{y_t} \cdot \|y_{t+1} - y_t\|_{y_t} 
    &&\quad\text{(Lemma~\ref{lem:op-to-dual-diff})} 
        \nonumber \\
    &\le
    2 \wh L
    \gamma \phi^2 \cdot 
    \eta \|y_{t+1} - y_t\|^2_{y_t} 
    &&\quad\text{(Lemma~\ref{lem:egftrl-dual-stability})}
        \nonumber \\
    &\le 
    \tfrac{9}{2} 
    \wh L \gamma \phi^2
    \cdot 
    \eta^3 \|V(x_t)\|^2_{y_t}
    &&\quad\text{(Proposition~\ref{prop:egftrl-dual-increment-op})}
    \label{eq:eg-et1}
    \;.
\end{alignat}

For the second term $\calE_{t, 2}$, first observe 
under~\eqref{eq:egftrl-helper} that
$y_{t+1} - y_{t+1/2} = \eta (V(x_t) - V(x_{t+1/2}))$.
Then it follows that we can bound the term
\begin{alignat}{2}
    \|\G_F(y_{t+1}, y_{t+1/2})\|_{y_t, *}
    &\le
    \phi \cdot \|\G_F(y_{t+1}, y_{t+1/2})\|_{y_{t+1/2, *}}
    &&\quad\text{(Lemma~\ref{lem:local-transfer})} 
    \nonumber  \\
    &\le
    \phi \nu
    \cdot \|y_{t+1} -y_{t+1/2}\|_{y_{t+1/2}}
    &&\quad\text{(Lemma~\ref{lem:hessian-remainder-bound})} 
    \nonumber \\
    &\le 
    \phi^2 \nu
    \cdot \|y_{t+1} -y_{t+1/2}\|_{y_{t}}
    &&\quad\text{(Lemma~\ref{lem:local-transfer})} 
    \nonumber \\
    &= 
    \phi^2 \nu
    \cdot \eta
    \|V(x_{t+1/2}) - V(x_t)\|_{y_t} 
    \nonumber \\
    &\le 
    \wh L \gamma \phi^2 \nu 
    \cdot \eta \|y_{t+1/2} - y_t\|_{y_t} 
    &&\quad\text{(Lemma~\ref{lem:op-to-dual-diff})} 
    \nonumber \\
    &\le
    2 \wh L \gamma \phi^2 \nu 
    \cdot \eta \|y_{t+1} - y_t \|_{y_t} 
    &&\quad \text{(Lemma~\ref{lem:egftrl-dual-stability})} \;.
    \label{eq:eg-et2-gf}
\end{alignat}
To bound $|\calE_{t, 2}|$, we can then use the generalized
Cauchy-Schwarz inequality to write
\begin{alignat}{2}
    |\calE_{t, 2}|
    &\le
    \|\G_F(y_{t+1}, y_{t+1/2})\|_{y_t, *} 
    \cdot 
    \|\eta V(x_{t+1/2})\|_{y_t} 
    \nonumber \\
    &\le 
    2 \wh L \gamma \phi^2 \nu 
    \cdot \eta \|y_{t+1} - y_t\|^2_{y_t} 
    &&\quad\text{(Substituting~\eqref{eq:eg-et2-gf})} 
    \nonumber \\
    &\le
    \tfrac{9}{2} \wh L \gamma \phi^2 \nu 
    \cdot \eta^3 \|V(x_t)\|^2_{y_t} 
    &&\quad\text{(Proposition~\ref{prop:egftrl-dual-increment-op})}
    \;.
    \label{eq:eg-et2}
\end{alignat}

\paragraph{3. Combining the pieces.} ~   

\nonumber 
Substituting the bounds on $|\calE_{t, 1}|$ and $|\calE_{t, 2}|$
into~\eqref{eq:eginnerfinal}, we obtain
\begin{equation*}
    \textstyle
    |C_t|
    \le
    |\calE_{t,1}| + |\calE_{t, 2}|
    \le
    \frac{9}{2} \wh L \gamma \phi^2  \cdot 
    (1 + \nu) 
    \cdot \eta^3
    \|V(x_t)\|^2_{y_t} \;.
\end{equation*}
For $\eta$ as in~\eqref{eq:eta-helper},
it is straightforward to verify that
$\gamma \phi^2 (1+ \nu) \le 2$.
Thus we obtain
\begin{equation*}
    \textstyle
    |C_t| 
    \le 
    9 \wh L \cdot \eta^3 \|V(x_t)\|^2_{y_t} \;,
\end{equation*}
which concludes the proof.
\end{proof}

\subsubsection{Proof of Lemma~\ref{lem:egftrl-dual-stability}}
\label{app:energy:egftrl:stability}

This section gives the proof of Lemma~\ref{lem:egftrl-dual-stability},
which establishes the stability of 
the dual iterates of~\eqref{eq:egftrl-helper}
in terms of the half-step sequence $\{y_{t+1/2}\}$:

\begin{proof}
For all $t \ge 0$, 
let $\Delta_{t+1/2} = y_{t+1/2} - y_t$
and $\Delta_{t+1} = y_{t+1} - y_t$. 
Then under~\eqref{eq:egftrl-helper}, observe that
\begin{equation*}
	\Delta_{t+1} - \Delta_{t+1/2}
	= 
	\eta (V(x_{t}) - V(x_{t+1/2})) \;.
\end{equation*}
Further recall for all $t \ge 0$ 
that $\|y_t - y_{t+1/2}\|_{*} \le \eta \wh \tau$
(for $\wh \tau$ as in~\eqref{eq:constant-helpers}).
Similar to the proof of Lemma~\ref{lem:oftrl-dual-stability},
we again slightly overload notation and let 
$\gamma := \frac{\exp(\eta \wh \omega \wh \tau) - 1}
{\eta \wh \omega \wh \tau}$.
For $\eta$ as in~\eqref{eq:eta-helper},
it is then straightforward to verify that 
$\eta \wh L \gamma \le 1/6$.
Then, taking local semi-norms and applying 
Lemma~\ref{lem:op-to-dual-diff}, we obtain the bound
\begin{equation}
	\|\Delta_{t+1} - \Delta_{t+1/2}\|_{y_t}
	=
	\eta \|V(x_{t+1/2}) - V(x_t)\|_{y_t}  
	\le 
	\eta \cdot \wh L \gamma \cdot \|y_{t+1/2} - y_t \|_{y_t} \;.
	\label{eq:egs00}
\end{equation}
We now prove the upper and lower inequalities of 
the lemma statement separately:

\paragraph{Upper inequality.}
We have that 
$y_{t+1} - y_t = \Delta_{t+1} = \Delta_{t+1/2} +
(\Delta_{t+1} - \Delta_{t+1/2})$. 
Then by the triangle inequality and substituting
the bound from~\eqref{eq:egs00}, we find
\begin{align*}
	\|y_{t+1} - y_t\|_{y_t}
	&\le
	\|\Delta_{t+1/2}\|_{y_t}
	+ 
	\|\Delta_{t+1} - \Delta_{t+1/2} \|_{y_t} \\
	&\le
	(1 + \eta L \gamma) \cdot \|\Delta_{t+1/2}\|_{y_t} 
	\le 
	\tfrac{7}{6} \cdot \|\Delta_{t+1/2}\|_{y_t} 
	= 
	\tfrac{7}{6} \cdot \|y_{t+1/2} - y_t \|_{y_t}  \;.
\end{align*}
Here, the final inequality comes from 
$\eta \wh L \gamma \le 1/6$
for $\eta$ as in~\eqref{eq:eta-helper}.
This proves the upper inequality. 

\paragraph{Lower inequality.}
We similarly have that
$y_{t+1/2} - y_t = \Delta_{t+1/2} = \Delta_{t+1} + 
(\Delta_{t+1/2} - \Delta_{t+1})$.
Again by the triangle inequality and substituting the
bound from~\eqref{eq:egs00}, we then obtain
\begin{equation*}
	\|\Delta_{t+1/2} \|_{y_t}
	\le 
	\|\Delta_{t+1}\|_{y_t} 
	+ 
	\|\Delta_{t+1} - \Delta_{t+1/2}\|_{y_t}
	\le 
	\|\Delta_{t+1}\|_{y_t}
	+ \eta \wh L \gamma \cdot \|\Delta_{t+1/2}\|_{y_t} \;.
\end{equation*}
Rearranging and using the fact that 
$\eta \wh L \gamma \le 1/6$ then yields
\begin{equation*}
	\|y_{t+1} - y_t\|_{y_t}
	= \|\Delta_{t+1}\|_{y_t}
	\ge 
	(1- \eta \wh L \gamma) \cdot \|\Delta_{t+1/2}\|_{y_t}
	\ge 
	\tfrac{5}{6} \cdot \|y_{t+1/2} - y_t\|_{y_t} \;.
\end{equation*}
This proves the lower inequality and
completes the proof of the lemma.
\end{proof}


\section{Details on Equilibrium Selection}
\label{app:eq-selection}

In this section, we prove
for the symmetrized system
that OFTRL and EGFTRL
converge asymptotically
to a unique limit point,
and we further give an exact characterization of the limit.
Formally:

\begin{restatable}[Equilibrium selection]
	{lem}{lemeqselection}
	\label{lem:eq-selection}
	Assume the setting of Lemma~\ref{lem:energy-dissipation-main}.
	Further, let Assumption~\ref{ass:restricted-regularity} hold,
    and assume that $\MEQ(V) \cap \relint(\calX) \neq \emptyset$.
	Then for any initialization satisfying $x_0 \in \relint(\calX)$, 
    it holds for both
	\eqref{eq:oftrl-helper} and~\eqref{eq:egftrl-helper} 
    that:
	\begin{enumerate}[
		label={(\roman*)},
		leftmargin=2em,
	]
	\item
	There exists a unique limit point
	$x_{\infty} := \lim_{t \to \infty} x_t \in \MEQ(V) \cap \calR(x_0)$.
	\item
	The point $x_{\infty}$ satisfies
	$x_{\infty} = \argmin_{x^\star \in \MEQ(V)} \D_R(x^\star, x_0)$.
	\end{enumerate}
\end{restatable}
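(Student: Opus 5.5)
The plan is to combine three ingredients: the energy dissipation of Lemma~\ref{lem:energy-dissipation-main}, compactness of Bregman sublevel sets (Proposition~\ref{prop:bregman-boundary-blowup}), and the reachable-set structure of Section~\ref{app:geometry-primal-dual}.

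\textbf{Step 1: a compact trap.} Fix any $\bar x \in \MEQ(V)\cap\relint(\calX)$. Proposition~\ref{prop:bregman-shifted-energy} turns the upper bound of Lemma~\ref{lem:energy-dissipation-main} into monotone decrease of $\D_R(\bar x, x_t)$ for $t\ge 1$. The first step needs separate handling:
\begin{itemize}
\item for OFTRL, Lemma~\ref{lem:oftrl-initial-bregman} gives $\D_R(\bar x,x_1)\le 2\D_R(\bar x,x_0)$;
\item for EGFTRL, the dissipation bound already holds from $t=0$, by Propositions~\ref{prop:egftrl-dual-increment-op} and~\ref{prop:egftrl-dual-ct-bound}.
\end{itemize}
So every iterate lies in the sublevel set $\calC=\{x:\D_R(\bar x,x)\le 2\D_R(\bar x,x_0)\}$, which is compact in $\relint(\calX)$ by Proposition~\ref{prop:bregman-boundary-blowup}.

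\textbf{Step 2: summable dissipation.} Telescoping the upper bound of Lemma~\ref{lem:energy-dissipation-main} gives $\sum_t \eta^2\|V(x_t)\|_{y_t}^2<\infty$, hence $\|V(x_t)\|_{y_t}\to 0$. On the compact set $\calC$, the Hessian $\nabla^2 F(y_t)$ restricted to $\calS^\bot$ equals $(\nabla^2 R(x_t))^{-1}$, and its eigenvalues are bounded below uniformly by continuity of $\nabla^2R$. Therefore $\|\Pi_{\calS^\bot}V(x_t)\|\to0$.

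\textbf{Step 3: limit points are reachable MEQ points.} Take any limit point $\tilde x$ of $(x_t)$; it lies in $\calC\subset\relint(\calX)$.
\begin{itemize}
\item By Step 2 and continuity, $\Pi_{\calS^\bot}V(\tilde x)=0$, so $\tilde x\in\NE(V)$ by Proposition~\ref{prop:ne-vs-normal}.
\item $\tilde x\in\calR(x_0)$. By Proposition~\ref{prop:reachable-helper}, $x_t\in\calR(x_0)$ means $\Pi_{\calS^\bot}(\nabla R(x_t)-y_0)\in\calU$. The subspace $\calU$ is closed and $\nabla R$ is continuous on $\relint(\calX)$, so this membership passes to the limit.
\item Corollary~\ref{cor:reachable-meq}, which relies on the regularity in Assumption~\ref{ass:restricted-regularity}, then gives $\tilde x\in\MEQ(V)\cap\calR(x_0)$.
\end{itemize}
By Proposition~\ref{prop:unique-reachable-meq} this set has at most one element, so all limit points coincide. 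Together with compactness, the whole sequence converges to a single $x_\infty\in\MEQ(V)\cap\calR(x_0)$, which proves (i).

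\textbf{Step 4: variational characterization, part (ii).} This is the main obstacle. For any $x^\star\in\MEQ(V)$, use Proposition~\ref{prop:bregman-pd}(i) (for points with finite $R$):
\[
\D_R(x^\star,x_0)-\D_R(x_\infty,x_0)=R(x^\star)-R(x_\infty)-\langle y_0,x^\star-x_\infty\rangle .
\]
Since $x_\infty\in\calR(x_0)$, write $\nabla R(x_\infty)=y_0+v+s$ with $v\in\calY$ and $s\in\calS$. The Step 1 computation in the proof of Proposition~\ref{prop:unique-reachable-meq} shows $x^\star-x_\infty\in\calY^\bot$, and also $x^\star-x_\infty\in\calS^\bot$. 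Hence $\langle y_0,x^\star-x_\infty\rangle=\langle\nabla R(x_\infty),x^\star-x_\infty\rangle$. The right-hand side above is therefore exactly $\D_R(x^\star,x_\infty)\ge0$, with equality only when $x^\star=x_\infty$. So $x_\infty$ is the unique minimizer.

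Boundary MEQ points with $R(x^\star)=+\infty$ trivially do no better. To make the identity rigorous at such points, I would either restrict the comparison to interior points or use the extended-value convention.
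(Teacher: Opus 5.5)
Your proof is correct. Part (i) follows the paper's argument closely: a compact Bregman sublevel set, summable dissipation, limit points that are reachable Nash points and hence reachable MEQ points by Corollary~\ref{cor:reachable-meq}, and uniqueness by Proposition~\ref{prop:unique-reachable-meq}. The differences there are cosmetic. You use the threshold $2\D_R(\bar x,x_0)$ instead of $\D_R(\wt x^\star,x_1)$, and a uniform upper bound on $\nabla^2 R$ over $\calC$ instead of continuity of the paper's dissipation function $\Phi$.

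Part (ii) is where you take a genuinely different route. The paper argues along the trajectory:
\begin{itemize}
\item Since $y_{t+1}-y_t\in\calY$ and differences of MEQ points lie in $\calY^\bot$, the one-step change in shifted energy is the same for every MEQ point (Lemma~\ref{lem:shifted-energy-invariant}).
\item Telescoping (Lemma~\ref{lem:bregman-invariant-t-steps}) and letting $t\to\infty$ gives $\D_R(x^\star,x_0)=\D_R(x^\star,x_\infty)+\D_R(x_\infty,x_0)$, but only for interior $x^\star$.
\item A separate convex-combination and Jensen step, using convexity of $\MEQ(V)$, is then needed for boundary MEQ points with $R(x^\star)<\infty$.
\end{itemize}

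You obtain the same Pythagorean identity in one stroke from the three-point identity. You use only the static facts $\nabla R(x_\infty)-y_0\in\calY+\calS$ and $x^\star-x_\infty\in\calY^\bot\cap\calS^\bot$. This buys three things:
\begin{itemize}
\item It avoids passing to the limit inside $\D_R$.
\item It handles finite-$R$ boundary MEQ points directly.
\item It shows the selection rule is purely geometric: any point of $\MEQ(V)\cap\calR(x_0)$ is the $\D_R$-projection onto $\MEQ(V)$ of $x_0$. Replacing $y_0$ by $y_t$, which is allowed since $y_t-y_0\in\calY$, it is also the projection of every iterate $x_t$. Given a reachable MEQ point, this is exactly identity~\eqref{eq:vh-01} that the paper later needs for vanilla FTRL.
\end{itemize}

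The paper's dynamic route instead isolates the energy-invariance mechanism along the iterates, which fits its energy-based analysis but is not needed for the characterization itself.
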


\smallskip
\subsection{Proof of Lemma~\ref{lem:eq-selection}}

This section gives the proof of Lemma~\ref{lem:eq-selection}.
We prove the two claims in sequence:

\smallskip
\paragraph{Proof of part (i) -- unique MEQ limit point.} ~

\noindent
First, we show that the iterates $\{x_t\}_{t \geq 1}$ converge
asymptotically to a unique limit point $x_{\infty} \in \MEQ(V)$:
We proceed in several steps:

\medskip
\noindent
\textit{1. Iterates lie in a compact sublevel set.} ~

\smallskip
\noindent
First, using Proposition~\ref{prop:bregman-shifted-energy}
and the upper bound on shifted energy 
dissipation from Lemma~\ref{lem:energy-dissipation-main},
we have for all $t \ge 1$ 
and $x^\star \in \MEQ(V) \cap \relint(\calX)$ that:
\begin{equation}
    \textstyle
    \D_R(x^\star, x_{t+1})
    - 
    \D_R(x^\star, x_t) 
    = 
    F_{x^\star}(y_{t+1})
    - 
    F_{x^\star}(y_{t})
    \le 
    - \frac{1}{8} \eta^2 \|V(x_t)\|^2_{y_t} \;.    
    \label{eq:breg-change}
\end{equation}
Now, fix $\wt x^\star \in \MEQ(V) \cap \relint(\calX)$,
which exists by assumption
of the lemma. 
Further, let $v := \D_R(\wt x^\star, x_1)$, 
where $v$ is finite due to 
$x_1, \wt x^\star \in \relint(\calX)$.
By Proposition~\ref{prop:bregman-boundary-blowup}, 
recall that $\D_R(\widetilde x^\star, x) \to + \infty$
as $x$ approaches the relative boundary of $\calX$.
Together with the continuity of $\D_R(\wt x^\star, \cdot)$
over $\relint(\calX)$, 
this implies compactness of the sublevel set
\begin{equation}
    \calC := \{x \in \relint(\calX) : 
    \D_R(\wt x^\star,x) \le v\}
    \subset \relint(\calX) \;.
    \label{eq:iterates-compact}
\end{equation}
Together with~\eqref{eq:breg-change}, 
this implies that $x_t \in \calC$ for all $t \ge 1$.

\noindent
\textit{2. Primal dissipation function and its zero set.} 

\noindent
First, let $\Phi: \relint(\calX) \to \R$ 
be given by
$\Phi(x)
:= 
\frac{1}{8} \eta^2 
\langle V(x), \nabla^2 F(y) V(x) \rangle
$
for all $x \in \relint(\calX)$, 
where $y = \nabla R(x)$.
We first relate $\Phi(x_t)$ to the dissipation term
in~\eqref{eq:breg-change}. 
For any $t$ and for $x_t = \nabla F(y_t) \in \relint(\calX)$
we have by definition of $\Phi$ that 
$\Phi(x_t) 
= 
\frac{1}{8} \eta^2 
\langle V(x_t), \nabla^2 F(\nabla R(x_t)) V(x_t) \rangle$.
Now, since $x_t = \nabla F(y_t)$, it follows from 
Proposition~\ref{prop:first-order-duality} that 
$y_t - \nabla R(x_t)  \in \calS$. Then by the shift-invariance 
identity of Proposition~\ref{prop:second-order-duality} (i) 
we obtain that 
$\nabla^2 F(\nabla R(x_t)) 
= \nabla^2 F(\nabla R(x_t) + y_t - \nabla R(x_t)) 
= \nabla^2 F(y_t)$. 
Plugging this identity back into the expression of
$\Phi(x_t)$ gives: 
$\Phi(x_t) = \frac{1}{8} \eta^2 
\langle V(x_t), \nabla^2 F(y_t) V(x_t) \rangle 
=  \frac{1}{8} \eta^2 \|V(x_t)\|_{y_t}^2$,
where the final equality follows directly from the 
definition of the local semi-norm. 

By definition of the local norm $\|\cdot\|_{y}$ and continuity of
$V$, $\nabla R$, and $\nabla^2 F(y)$, we have that
$\Phi$ is continuous on $\relint(\calX)$. 
Moreover, recall from 
Section~\ref{app:error-machinery:seminorms} that
$\Phi(x) \ge 0$, and $\Phi(x) = 0$ if and only if 
$V(x) \in \Null(\nabla^2 F(y)) = \calS$. 
Thus by Proposition~\ref{prop:ne-vs-normal}, we have
$\Phi(x) = 0$ if and only if $x \in \NE(V)$.

\smallskip
\noindent
\textit{3. Subsequence convergence to NE.} ~

\noindent
Now fix the 
point $\wt x^\star \in \MEQ(V) \cap \relint(\calX)$ 
from Step 1.
Then summing~\eqref{eq:breg-change} 
from $t=1$ to $T$ 
and using the definition of $\Phi$ 
yields 
\begin{equation*}
    \textstyle
    \D_R(\wt x^\star, x_{T+1}) - \D_R(\wt x^\star, x_1)
    = 
    \sum_{t=1}^T 
    \big(\D_R(\wt x^\star, x_{t+1})
    - 
    \D_R(\wt x^\star, x_{t})\big)
    \le 
    - \sum_{t=1}^T \Phi(x_t)\;.
\end{equation*}
Rearranging, using the nonnegativity
of $\D_R(\wt x^\star, x_{T+1})$ and 
each $\Phi(x_t)$,
it follows that
\begin{equation*}
\textstyle
0 \le \sum_{t=1}^T \Phi(x_t) \le \D_R(\wt x^\star, x_1)
= v < \infty
\;\;\text{for all $T \ge 1$}\;.
\end{equation*}
Taking $T \to \infty$, this means
that $\sum_{t=1}^\infty \Phi(x_t) < \infty$, which implies 
$\lim_{t \to \infty} \Phi(x_t) = 0$.
Now, since all $x_t \in \calC$, 
it follows by compactness of $\calC$
that there exists a convergent subsequence $\{x_{t_k}\}_{k \geq 1}$ such that
$x_{t_k} \to \overline x$.
By compactness, we then have
$\overline x \in \calC \subset \relint(\calX)$,
and by continuity of $\Phi$ on $\relint(\calX)$,
it thus follows that
$\lim_{k \to \infty} \Phi(x_{t_k}) =
\Phi(\overline x) = 0$.
Here, the second equality holds
due to $\Phi(x_t) \to 0$ along the entire sequence.
By Step 2, this means that
the limit $\overline x$ of any convergent
subsequence $\{x_{t_k}\}_{k \geq 1}$ satisfies
$\overline x \in \NE(V)$.

\smallskip
\noindent
\textit{4. Global convergence to unique MEQ point.} ~ 

\noindent
Finally, we establish that the entire
sequence $\{x_t\}_{t \geq 1}$ converges to a unique 
MEQ point. First, we show that
the limit point $\overline x$ of any
convergent subsequence from Step 3
is in the reachable primal set $\calR(x_0)$. 
Let $\overline x$ be such a limit point,
and recall that $x_t \in \calR(x_0)$
for all $t \ge 0$. 
Then, by Step 3, we also have that 
$\overline x \in \NE(V) \cap \calC \subset \relint(\calX)$.
Recall from Proposition~\ref{prop:reachable-helper}
that for the initialization pair 
$y_0 \in \R^D$ and $x_0 = \nabla F(y_0)$, 
it holds for any $x \in \relint(\calX)$
that 
$x \in \calR(x_0)$
if and only if 
$\Pi_{\calS^\bot} (\nabla R(x) - y_0) \in \calU$,
where $\calU = \Pi_{\calS^\bot} \Span(V(\calX))$.
By continuity of $\nabla R$ on $\relint(\calX)$,
and since $\calU$ is a closed, linear subspace,
it then follows that if
$x_{t_k} \in \calR(x_0)$ for all points
in the convergent subsequence $\{x_{t_k}\}$,
then also $\overline x \in \calR(x_0)$. 

Now, under Assumption~\ref{ass:restricted-regularity},
recall by Corollary~\ref{cor:reachable-meq} that
if $x \in \NE(V) \cap \calR(x_0)$, 
then $x \in \MEQ(V)$. 
Thus $\overline x \in \MEQ(V) \cap \calR(x_0)$. 
However, we have by Proposition~\ref{prop:unique-reachable-meq}
that 
$|\MEQ(V) \cap \calR(x_0)| \le 1$
(which holds independently of Assumption~\ref{ass:restricted-regularity}).
Since the preceding arguments held 
for the limit of any convergent subsequence,
it follows that all convergent subsequences
have the same limit, which we denote by
$x_{\infty} \in \MEQ(V)$.
Finally, by compactness, 
the entire sequence $\{x_t\}$
must also converge to this unique point 
$x_{\infty} := \lim_{t \to \infty} x_t \in \MEQ(V)$.
\hfill $\blacksquare$

\medskip
\paragraph{Proof of part (ii) -- characterization of limit point.} ~

By Part (i) of Lemma~\ref{lem:eq-selection},
we have that
$x_{\infty} := \lim_{t\to\infty} x_t \in \MEQ(V)$. 
For Part (ii),
we further give an exact characterization 
of $x_{\infty}$ in terms of 
the initialization $x_0$. 
We proceed in several steps:

\smallskip
\noindent
\textit{1. Change in shifted energy is equivalent 
for all MEQ points.} ~

\noindent
Recall for $x^\star \in \MEQ(V)$
and $y \in \R^D$ that
$F_{x^\star}(y) = F(y) - \langle x^\star, y \rangle$.
Then the following helper lemma
establishes that 
the change in shifted energy
with respect to any MEQ point is invariant:

\begin{restatable}{lem}{lemshiftedenergyinvariant}
	\label{lem:shifted-energy-invariant}
    Assume the setting of
    Lemma~\ref{lem:eq-selection}.
    Then for any $x^\star_1, x^\star_2 \in \MEQ(V)$
    and all $t \ge 0$:
	\begin{equation*}
		F_{x^\star_1}(y_{t+1})
		- 
		F_{x^\star_1}(y_{t})
		= 
		F_{x^\star_2}(y_{t+1})
		-
		F_{x^\star_2}(y_{t}) \;.
	\end{equation*}
\end{restatable}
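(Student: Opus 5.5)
The identity follows directly from the definition of the shifted energy. No dynamics-specific analysis is needed, only the structure of the dual updates. Expanding $F_{x^\star}(y) = F(y) - \langle x^\star, y\rangle$, for any $x^\star_1, x^\star_2 \in \MEQ(V)$ we have
\begin{equation*}
\big(F_{x^\star_1}(y_{t+1}) - F_{x^\star_1}(y_t)\big) - \big(F_{x^\star_2}(y_{t+1}) - F_{x^\star_2}(y_t)\big) = \langle x^\star_2 - x^\star_1, y_{t+1} - y_t\rangle ,
\end{equation*}
since the $F(y_{t+1}) - F(y_t)$ terms cancel. It therefore suffices to show that this inner product vanishes for every $t \ge 0$.

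The first step is Step 1 of the proof of Proposition~\ref{prop:unique-reachable-meq}. Subtracting the two MEQ identities $\langle V(x), x - x^\star_1\rangle = 0$ and $\langle V(x), x - x^\star_2\rangle = 0$ gives $\langle V(x), x^\star_1 - x^\star_2\rangle = 0$ for all $x \in \calX$. Hence $x^\star_1 - x^\star_2 \in \calY^\bot$, where $\calY = \Span(V(\calX))$.

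The second step is to observe that each dual increment lies in $\calY$. For \eqref{eq:oftrl-helper}, $y_{t+1} - y_t = -2\eta V(x_t) + \eta V(x_{t-1})$, using the convention $x_{-1} = x_0$ at $t=0$. For \eqref{eq:egftrl-helper}, $y_{t+1} - y_t = -\eta V(x_{t+1/2})$. All the points $x_t$, $x_{t-1}$, $x_{t+1/2}$ lie in $\relint(\calX) \subseteq \calX$ by Proposition~\ref{prop:relative-duality}, so each increment is a linear combination of operator values and belongs to $\calY$. Combining the two steps, the inner product vanishes, which gives the claim.

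Neither step is a real obstacle. The only point that needs care is confirming that the half-step and the $t=0$ convention still produce increments in $\Span(V(\calX))$, and both are immediate. Note also that the argument does not require $x^\star_1, x^\star_2$ to be interior, nor does it use Assumption~\ref{ass:restricted-regularity}.
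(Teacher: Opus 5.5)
Your proposal is correct and follows essentially the same argument as the paper. Both cancel the $F$ terms, use the subtracted MEQ identities to get $x^\star_1 - x^\star_2 \in \Span(V(\calX))^\bot$, and then note that the OFTRL and EGFTRL dual increments lie in $\Span(V(\calX))$.
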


\noindent
\textit{Proof.}
Let 
$\Delta F_{x^\star_1}(y_t) := F_{x^\star_1}(y_{t+1}) - F_{x^\star_1}(y_t)$
and 
$\Delta F_{x^\star_2}(y_t) := F_{x^\star_2}(y_{t+1}) - F_{x^\star_2}(y_t)$.
Taking the difference between
$\Delta F_{x^\star_1}(y_t)$ and $\Delta F_{x^\star_2}(y_t)$
then yields:
\begin{equation}
    \textstyle
	\Delta F_{x^\star_1}(y_t)
	- 
	\Delta F_{x^\star_2}(y_t)
	= 
	- \langle y_{t+1} - y_t, x^\star_1 - x^\star_2\rangle\,.
	\label{eq:bdi-01}
\end{equation}
On the one hand, since $x^\star_1, x^\star_2 \in \MEQ(V)$ it follows
from the definition of~\eqref{eq:minty-equality} that 
\begin{equation}
\langle V(x), x- x^\star_1 \rangle = 0\,, 
\quad \langle V(x), x- x^\star_2 \rangle = 0\,, 
\quad \forall x \in \calX\,. 
\end{equation}
Subtracting both identities then yields 
$\langle V(x), x^\star_1 - x^\star_2 \rangle = 0$ 
for all $x \in \calX$. 
Hence $x^\star_1 - x^\star_2 \in \Span(V(\calX))^{\bot}$. 
On the other hand, recall that for both~\eqref{eq:oftrl}
and~\eqref{eq:egftrl}, we have $y_{t+1} - y_t \in \Span(V(\calX))$
for all $t \ge 0$. 
Hence $\langle y_{t+1} - y_t, x^\star_1 - x^\star_2\rangle = 0$ and it follows from~\eqref{eq:bdi-01} that 
$\Delta F_{x^\star_1}(y_t)	- \Delta F_{x^\star_2}(y_t) = 0$. 
~ \hfill $\square$

\medskip
\noindent
\textit{2. Change in Bregman divergence
is equivalent for all MEQ points.} ~ 

\noindent
Using Lemma~\ref{lem:shifted-energy-invariant}, 
we can further establish the following invariant
between the total change in Bregman divergence
from two interior  MEQ points after $t$ steps: 

\medskip

\begin{restatable}{lem}{lembregmaninvarianttsteps}
	\label{lem:bregman-invariant-t-steps}
	Let $x^\star_1, x^\star_2 \in \MEQ(V) \cap \relint(\calX)$.
	Then for any $t \ge 1$, it holds that
	\begin{equation*}
		\D_R(x^\star_1, x_t) - \D_R(x^\star_1, x_0)
		=
		\D_R(x^\star_2, x_t) - \D_R(x^\star_2, x_0) \;.
	\end{equation*}
\end{restatable}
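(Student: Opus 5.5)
The plan is to telescope the one-step invariance of Lemma~\ref{lem:shifted-energy-invariant} and then convert shifted-energy differences into Bregman differences via Proposition~\ref{prop:bregman-shifted-energy}. Fix $x^\star_1, x^\star_2 \in \MEQ(V) \cap \relint(\calX)$ and $t \ge 1$.

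\textbf{Step 1 (telescoping).} Summing the identity of Lemma~\ref{lem:shifted-energy-invariant} over $k = 0, \dots, t-1$ gives
\begin{equation*}
F_{x^\star_1}(y_t) - F_{x^\star_1}(y_0) = F_{x^\star_2}(y_t) - F_{x^\star_2}(y_0)\,.
\end{equation*}
Alternatively, this can be seen directly. The difference of the two sides equals $-\langle y_t - y_0, x^\star_1 - x^\star_2\rangle$. Here $y_t - y_0 \in \Span(V(\calX))$ for both OFTRL and EGFTRL, including the half-steps. Moreover, $x^\star_1 - x^\star_2 \in \Span(V(\calX))^\bot$ by subtracting the two \eqref{eq:minty-equality} identities, exactly as in the proof of Lemma~\ref{lem:shifted-energy-invariant}.

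\textbf{Step 2 (conversion).} Apply Proposition~\ref{prop:bregman-shifted-energy} with $y = y_0$, $y' = y_t$, $x = \nabla F(y_0) = x_0$ and $x' = \nabla F(y_t) = x_t$. This requires $x^\star_i \in \relint(\calX)$, which holds by assumption; it ensures the divergences are finite. We get
\begin{equation*}
\D_R(x^\star_i, x_t) - \D_R(x^\star_i, x_0) = F_{x^\star_i}(y_t) - F_{x^\star_i}(y_0)\,, \qquad i = 1, 2\,.
\end{equation*}
Combining this with Step 1 yields the claim.

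\textbf{Main point to verify.} The only subtlety is the primal-dual consistency at time $0$, namely $x_0 = \nabla F(y_0)$. This holds by the initialization convention in Proposition~\ref{prop:reduction-algorithmic} and Setting~\ref{setting:weighted}, where $w_0 = \nabla H(z_0)$ is mapped to $x_0 = \nabla F(y_0)$ via~\eqref{eq:dual-identity-weighted}. One should also check that for EGFTRL the full step $y_{t+1} - y_t = -\eta V(x_{t+1/2})$ still lies in $\Span(V(\calX))$; it does, since $x_{t+1/2} \in \calX$. With these checks in place, no dissipation bounds or regularity assumptions are needed.
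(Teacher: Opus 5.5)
Your proposal is correct and follows essentially the same route as the paper. The paper first converts each one-step difference $\D_R(x^\star_i, x_{t+1}) - \D_R(x^\star_i, x_t)$ via Proposition~\ref{prop:bregman-shifted-energy}, equates these across $i$ using Lemma~\ref{lem:shifted-energy-invariant}, and then telescopes; you telescope the shifted energies first and apply Proposition~\ref{prop:bregman-shifted-energy} once to the pair $(y_0, y_t)$, which is an equivalent reordering. Your direct argument, that $y_t - y_0 \in \Span(V(\calX))$ and $x^\star_1 - x^\star_2 \in \Span(V(\calX))^\bot$, is exactly the proof of Lemma~\ref{lem:shifted-energy-invariant} applied cumulatively rather than step by step.
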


\noindent
\textit{Proof.} First, observe for any $x^\star_1, x^\star_2 \in \MEQ(V) \cap \relint(\calX)$ 
and for any $t \ge 0$ that 
\begin{align}
	\D_R(x^\star_1, x_{t+1}) - \D_R(x^\star_1, x_t)
	&= 
		F_{x^\star_1}(y_{t+1}) - F_{x^\star_1}(y_t) 
		\nonumber \\
	&= 
		F_{x^\star_2}(y_{t+1}) - F_{x^\star_2}(y_t) 
	= 
	\D_R(x^\star_2, x_{t+1}) - \D_R(x^\star_2, x_t)
	\label{eq:beq01}
\end{align}
The first and last equalities follow
from Proposition~\ref{prop:bregman-shifted-energy},
and the second equality 
by Lemma~\ref{lem:shifted-energy-invariant}.
Telescoping then yields
$\sum_{k=1}^{t}
	\D_R(x^\star_1, x_k) - \D_R(x^\star_1, x_{k-1})
    = 
\sum_{k=1}^{t}
	\D_R(x^\star_2, x_k) - \D_R(x^\star_2, x_{k-1})$
for all $t \ge 1$.
Thus for 
$x^\star_1, x^\star_2 \in \MEQ(V) \cap \relint(\calX)$,
we conclude by simplifying the two telescoping sums that
\begin{equation}
	\D_R(x^\star_1, x_t)  - \D_R(x^\star_1, x_0) 
	= 
	\D_R(x^\star_2, x_t)  - \D_R(x^\star_2, x_0) \;,
	\label{eq:bregman-id}
\end{equation}
which yields the desired equality.
~\hfill $\square$

\medskip
\noindent
\textit{3. Characterization of the limit as the
Bregman projection onto  $\MEQ(V)$.} ~

\noindent
As part (i) of the lemma proves
$x_{\infty} = 
\lim_{t\to \infty} x_t \in\MEQ(V) \cap \relint(\calX)$,
we substitute $x^\star_2 = x_{\infty}$ 
in the right hand side of 
Lemma~\ref{lem:bregman-invariant-t-steps}.
This then yields for all $t \ge 1$
and $x^\star\in \MEQ(V) \cap \relint(\calX)$ that
\begin{equation}
	\D_R(x^\star, x_t) - \D_R(x^\star, x_0)
	= 
	\D_R(x_{\infty}, x_t) - \D_R(x_{\infty}, x_0)
	\label{eq:bid-01}
\end{equation}
Then by continuity of $\D_R$, 
taking $t \to \infty$ yields 
for all $x^\star \in \MEQ(V) \cap \relint(\calX)$:
\begin{equation*}
	\D_R(x^\star, x_\infty) - \D_R(x^\star, x_0)
	= 
	\D_R(x_{\infty}, x_\infty) - \D_R(x_{\infty}, x_0)
	= 
	- \D_R(x_{\infty}, x_0) \;.
\end{equation*}
Rearranging then yields:
$\D_R(x^\star, x_0)
= 
\D_R(x^\star, x_\infty)
+ 
\D_R(x_\infty, x_0)$
for all $x^\star \in \MEQ(V) \cap \relint(\calX)$.
As $\D_R(x^\star, x_\infty) \ge 0$ for
all $x^\star \in \MEQ(V)$ by convexity,
we further obtain
\begin{equation}
	\D_R(x^\star, x_0) 
	\ge
	\D_R(x_{\infty}, x_0) 
	\;\;\text{for all $x^\star \in \MEQ(V) 
    \cap \relint(\calX) $} \;.
    \label{eq:argmin-interior}
\end{equation}
To extend~\eqref{eq:argmin-interior}
to also hold for \textit{non-interior}
$x^\star \in \MEQ(V)$,
observe that if $R(x^\star) = +\infty$, then also
$\D_R(x^\star, x_0) = +\infty$,
and thus~\eqref{eq:argmin-interior}
trivially holds. 
Now suppose 
$x^\star \in \MEQ(V)$ but 
$x^\star \notin \relint(\calX)$
and $R(x^\star) < \infty$ is finite. 
Then for  $s \in (0, 1)$,
let $x^\star_s := (1-s) x^\star + sx_{\infty}$. 
By convexity of $\MEQ(V)$
and since $x_{\infty} \in \relint(\calX)$,
it follows that $x^\star_s \in \MEQ(V) \cap \relint(\calX)$. 
Thus, applying~\eqref{eq:argmin-interior}
to $x^\star_s$ and by Jensen's inequality
(since $\D_R(\cdot, x_0)$ is convex 
in its first argument), we have
\begin{equation*}
    \D_R(x_{\infty}, x_0)
    \le
    \D_R(x^\star_s, x_0)
    \le 
    (1-s) \cdot \D_R(x^\star, x_0)
    + s \cdot \D_R(x_{\infty}, x_0) \;.
\end{equation*}
Rearranging, we then obtain 
$(1-s) \cdot \D_R(x_{\infty}, x_0) 
\le 
(1-s) \cdot \D_R(x^\star, x_0)$,
and thus dividing 
by $1-s > 0$ we obtain
$\D_R(x_{\infty}, x_0) \le \D_R(x^\star,x_0)$. 
Thus it holds also that 
\begin{equation}
    \D_R(x^\star, x_0)
    \ge
    \D_R(x_{\infty}, x_0)
    \;\;\text{for all $x^\star \in \MEQ(V)$}
    \label{eq:meq-projection}
\end{equation}
Moreover, since $\MEQ(V)$ is convex 
and $\D_R(\cdot, x_0)$ is strictly convex, 
\eqref{eq:meq-projection} implies that
$x_{\infty}$ is the unique minimizer
\begin{equation*}  
    \textstyle
    x_{\infty} 
    = 
    \argmin_{x^\star \in \MEQ(V)} \D_R(x^\star, x_0) 
    \;,
\end{equation*}
which concludes the proof.
\hfill $\blacksquare$


\section{Proof of Proposition~\ref{prop:op-dom}: Operator Domination Inequality}
\label{app:op-dom-details}

In this section, we develop the proof
of the operator domination inequality
of Proposition~\ref{prop:op-dom}
that was introduced in Section~\ref{sec:main-proof-overview}.
This inequality is restated here.

\medskip
\propopdom*

\subsection{Proof of Proposition~\ref{prop:op-dom}}
\label{app:op-dom-details:proof}

We now give the proof of Proposition~\ref{prop:op-dom}.
For readability, we first recall several definitions
for the symmetrized system from Setting~\ref{setting:weighted}.

\paragraph{Recap of spaces and maps.} 
Recall the subspaces $\calY = \Span(V(\calX))$ 
and $\calU = \Pi_{\calS^\bot} \calY$.
For the fixed initialization pair
$y_0 \in \R^D$ and $x_0 = \nabla F(y_0)$, 
recall that $\calR(x_0) = \{\nabla F(y_0 + y) : y \in \calY \} \subset \relint(\calX)$.
For $u \in \calU$, recall that:
\begin{equation*}
	x_0(u) = \nabla F(y_0 + u) \in \calR(x_0)  \,,
	\;\;
	\pi_{0}(u)
	= 
	\Pi_{\calS^\bot} V(x_0(u)) \,,
	\;\;
	u_0(x) 
	= 
	\Pi_{\calS^\bot}(\nabla R(x) - y_0) \;.
\end{equation*}
By Proposition~\ref{prop:reachable-helper},
recall the mapping $u \mapsto  x_0(u)$
is a bijection from the effective dual space $\calU$ 
onto the reachable primal set $\calR(x_0)$,
with inverse given by $u_0(x_0(u)) = u$
for $u \in \calU$.
Thus, fix $x^\star \in \MEQ(V) \cap \calR(x_0)$.
We will then let $u^\star = u_0(x^\star)$,
meaning $x_0(u^\star) = x^\star$.

\noindent
\textbf{Proof strategy.}
The proof is most conveniently carried out in the 
effective dual space $\calU$. 
Indeed, the full dual representation is degenerate 
along $\calS$, whereas the maps $u\mapsto x_0(u)=\nabla F(y_0+u)$
and $x\mapsto u_0(x)$ give inverse parameterizations 
between $\calU$ and the reachable set $\calR(x_0)$. 
This restriction to $\calR(x_0)$ is natural for our purposes, 
since the iterates initialized at $x_0$ remain in this set.
Moreover, energy dissipation confines the iterates to a compact
subset $\calK$. Thus, it suffices to establish a uniform
domination bound on the compact reachable region 
$\calK\cap\calR(x_0)$, rather than globally over the 
entire state space.

In the effective coordinates, the equilibrium $x^\star$ is 
represented by $u^\star=u_0(x^\star)$, and the projected operator
$\pi_0(u)=\Pi_{\calS^\bot}V(x_0(u))$ 
has $u^\star$ as its unique zero on the relevant reachable set. 
The restricted regularity assumption implies that 
the Jacobian of $\pi_0$ at $u^\star$ is non-singular, 
and therefore yields a local error bound
$\|\pi_0(u)\|_2
\gtrsim
\|u-u^\star\|_2$
for $u$ sufficiently close to $u^\star$.
Away from $u^\star$, uniqueness of the zero ensures that the ratio $\frac{\|\pi_0(u)\|_2}{\|u-u^\star\|_2}$
is strictly positive, while compactness of the 
relevant effective dual region upgrades this pointwise 
positivity to a uniform positive lower bound. 
Combining the local and non-local regimes
therefore yields a global error bound on the compact reachable region.

Finally, uniform upper and lower bounds on $\nabla^2F$ over
this compact region relate $\|\pi_0(u)\|_2^2$ to the dissipation term $\|V(x_0(u))\|_{y_0+u}^2$, and relate $\|u-u^\star\|_2^2$ to
the Bregman divergence $\D_R(x^\star,x_0(u))$. 
Combining these comparisons gives the desired operator 
domination inequality in effective coordinates. 
We then translate the resulting bound back to an 
arbitrary representation $x=\nabla F(y)\in\calK\cap\calR(x_0)$ 
using the fact that $y$ and the canonical representative $y_0+u_0(x)$ 
differ only by a direction in $\calS$, 
along which $\nabla^2F$ is invariant.

\smallskip
\noindent
\textbf{Proof.} The proof now proceeds via the following steps:

\smallskip
\noindent
\textbf{1. Compactness in effective dual space.} ~ 

\noindent
Let $\calK \subset \relint(\calX)$ be the compact
set from the proposition statement,
where we assume that $x^\star \in \calK$,
and recall that we are interested in establishing
a bound on $\|V(x)\|^2_{y}$
that holds for $y \in \R^D$
and $x = \nabla F(y)$
where $x \in \calK \cap \calR(x_0)$. 
Given the inverse relationship between
$u_0(\cdot)$
and $x_0(\cdot)$
between $\calU$ and $\calR(x_0)$,
it is convenient to consider the subset
of $\calU$ corresponding to 
$\calK \cap \calR(x_0)$.
For this, let $\calK_{\calU}$
be this subset of the effective dual space given by
\begin{equation}
	\calK_{\calU} 
	:= 
	\{
		u_0(x) : x \in \calK \cap \calR(x_0)
	\}
	\subseteq \calU \;.
	\label{eq:calK-calU}
\end{equation}
By Proposition~\ref{prop:reachable-helper},
recall that $x \in \calR(x_0)$
if and only if $u_0(x) \in \calU$. 
As the map $u_0$ is continuous 
on $\relint(\calX)$ and $\calU$ is a
closed linear subspace, 
it follows that $\calR(x_0)$ is
relatively closed in $\relint(\calX)$. 
Thus, $\calK \cap \calR(x_0)$
is a closed subset of the compact set $\calK$
and hence is compact. 
Moreover, $\calK_{\calU}$
and its convex hull $\conv(\calK_{\calU})$
are then both also compact.
Observe further that 
for $x^\star \in \calK \cap \calR(x_0)$
the corresponding $u^\star = u_0(x^\star) \in \calK_{\calU}$.

\smallskip
\noindent
\textbf{2. Relating
dissipation term to Bregman divergence.} ~

\noindent
Fix $u \in \calK_{\calU} \subset \calU$,
and recall that $x_0(u) = \nabla F(y_0 + u)$.
It then suffices to derive a lower bound 
on $\|V(x_0(u))\|^2_{y_0 + u}$
in terms of $\D_R(x^\star, x_0(u))$.
For this, we have the following components:

\smallskip
\noindent
\textit{(i) Lower-bounding $\|V(x_0(u))\|^2_{y_0 + u}$
in terms of $\|\pi_0(u)\|^2_{2}$}.

\noindent
First, recall that $\nabla^2 F(y) \succeq 0$
and $\Null(\nabla^2 F(y)) = \calS$ for
all $y \in \R^D$.
Then it holds that $\nabla^2 F(y)$
is positive definite when restricted to
$\calS^\bot$.
Recall by definition that $\calU \subseteq \calS^\bot$.
Then by the compactness of the
convex hull $\conv(\calK_{\calU})$
and the continuity of $\nabla^2 F(\cdot)$,
it follows that there exist positive constants
$0 < c_{\ell} \le c_{h} < \infty$ such that
\begin{equation}
	c_{\ell} \|v\|^2_2
	\le
	\langle v, \nabla^2 F(y_0 + u) v \rangle
	\le 
	c_{h} \|v\|^2_{2}
	\quad \forall\; u \in \conv(\calK_{\calU}), v \in \calU\;.
	\label{eq:energy-local-sc}
\end{equation}
For $u \in \calK_{\calU}$,
it then follows that
\begin{align}
	\|V(x_0(u))\|^2_{y_0 +u} 
	&= 
	\langle V(x_0(u)), \nabla^2 F(y_0 + u) V(x_0(u)) \rangle \nonumber\\
	&= 
	\langle \Pi_{\calS^\bot} V(x_0(u)),
	 \nabla^2 F(y_0 + u) \Pi_{\calS^\bot} V(x_0(u)) \rangle \nonumber\\ 
	&= 
	\langle \pi_0(u), \nabla^2 F(y_0 + u) \pi_0(u) \rangle \nonumber\\
	&\ge 
	c_{\ell} \|\pi_0(u)\|^2_{2} \;.
	\label{eq:od1}
\end{align}
Here, the second equality follows from the
fact that $\Null(\nabla^2 F(y_0+u)) = \calS$,
the third equality comes from the 
definition of $\pi_0(u)$,
and the final inequality comes from~\eqref{eq:energy-local-sc}.

\smallskip
\noindent
\textit{(ii) Upper-bounding $\D_R(x^\star, x_0(u))$
	in terms of $\|u^\star - u\|^2_2$.}

Next, recall that $x^\star = \nabla F(y_0 + u^\star)$
for $u^\star \in \calK_{\calU}$, $x_0(u) = \nabla F(y_0 +u)$, 
and use Proposition~\ref{prop:bregman-pd} to obtain: 
\begin{equation*}
\D_R(x^\star, x_0(u))
= 
F(y_0 +u) - F(y_0 + u^\star)
- \langle \nabla F(y_0 + u^\star), u - u^\star \rangle \,. 
\end{equation*}
By Taylor's theorem with remainder, 
it then follows that
\begin{equation}
	\textstyle
	\D_R(x^\star, x_0(u))
	=
	\frac{1}{2}
	\langle u - u^\star, 
	\nabla^2 F(y_0+ u^\star + u_s)(u-u^\star) \rangle \;,
	\label{eq:dra}
\end{equation}
where $u_s := s(u-u^\star)$
for some $s \in (0,1)$.

As $u \in \calK_{\calU}$,
then also $u^\star + u_s \in \conv(\calK_{\calU})$.
Moreover, we have $u-u^\star \in \calU$.
Thus, we can apply the upper bound
from~\eqref{eq:energy-local-sc}
to obtain 
\begin{equation}
	\textstyle
	\D_R(x^\star, x_0(u)) 
	\le 
	c_h \cdot \frac{1}{2} \|u - u^\star \|^2_{2} \;.
	\label{eq:od2}
\end{equation}

\smallskip
\noindent
Combining~\eqref{eq:od1} and~\eqref{eq:od2},
we have the inequalities:
\begin{equation}
	\|V(x_0(u))\|^2_{y_0+u}
	\ge 
	c_{\ell} \|\pi_0(u)\|^2_{2}
	\;\;\;\text{and}\;\;
	\|u-u^\star\|^2_{2}
	\ge 
	\tfrac{1}{c_h} \cdot 2 \D_R(x^\star, x_0(u)) \;.
	\label{eq:od3-summary}
\end{equation}
The remaining task is thus to establish 
that there exists a positive constant 
$b > 0$ such that
\begin{equation}
\|\pi_0(u)\|_2 \ge b \|u -u^\star\|_2
\;\;\;\;\text{for all $u \in \calK_{\calU}$}\;.
\label{eq:od3-goal}
\end{equation}
We proceed to establish existence
of such a constant:

\smallskip
\noindent
\textbf{3. Positivity of 
constant in~\eqref{eq:od3-goal}.}~ 

\noindent
\textit{(i) Unique zero of $\pi_0$.}

\noindent
First, we show that 
$u^\star= u_0(x^\star)$ is the unique zero of $\pi_0$.
For this, we have by assumption 
that $x^\star \in \MEQ(V) \cap \relint(\calX)$,
and thus by Proposition~\ref{prop:ne-vs-normal}
and Proposition~\ref{prop:pi0-zeros-ne}
we have that $V(x^\star) \in \calS$ 
and thus $\pi_0(u^\star) = 0$.
For the other direction, 
observe that  if $\pi_0(u) =0$, 
then we must have $x_0(u) \in \NE(V) \cap \calR(x_0)$
by Proposition~\ref{prop:pi0-zeros-ne}.
Under the regularity condition of 
Assumption~\ref{ass:restricted-regularity},
it then follows from 
Corollary~\ref{cor:reachable-meq}
that $x_0(u) \in \MEQ(V)$.
However, by Proposition~\ref{prop:unique-reachable-meq},
$|\MEQ(V) \cap \calR(x_0)| \le 1$,
and thus $x_0(u) = x^\star$.
Because the map $u \mapsto x_0(u)$
is bijective, it then
follows that $u = u_0(x_0(u)) = u_0(x^\star) = u^\star$.
Thus $u^\star \in \calK_{\calU}$ 
is the unique zero of $\pi_0(u)$.

\smallskip
\noindent
\textit{(ii) Local error bound.}  

\noindent
By Assumption~\ref{ass:restricted-regularity},
we thus have that the Jacobian
$J_{\pi_0(u^\star)} : \calU \to \calU$ 
of $\pi_0$ at $u^\star$ is non-singular. 
Then let $\sigma > 0$ denote the smallest 
singular value of $J_{\pi_0(u^\star)}$
(with respect to $\|\cdot\|_2$ on $\calU$).
As $\pi_0(u^\star) = 0$, differentiability of 
$\pi_0$ at $u^\star$ then implies that 
for $u \in \calU$:
\begin{equation*}
	\pi_0(u) = J_{\pi_0(u^\star)}(u-u^\star)
	+ r(u)\,,
	\;\;\text{where}\;\;
	\frac{\|r(u)\|_{2}}{\|u-u^\star\|_2} \to 0
	\;\;\text{as $u \to u^\star$}\;.
\end{equation*}
This implies \textit{locally} that  
there exists some $\delta > 0$ 
such that $\|r(u)\|_2 \le \frac{\sigma}{2} \|u-u^\star\|_2$
for all $u \in \calU$ where $\|u-u^\star\|_2 \le \delta$.
Taking norms, it then follows for all such
$u \in \calU: \|u - u^\star\|_2 \leq \delta$ that 
\begin{equation}
	\|\pi_0(u)\|_2 \ge 
	\|J_{\pi_0(u^\star)}(u-u^\star)\|_2 - \|r(u)\|_2
	\ge 
	\sigma \cdot \tfrac{1}{2} \|u-u^\star\|_2 \;.
	\label{eq:od3-local}
\end{equation}

\smallskip
\noindent
\textit{(iii) Non-local error bound.}

\noindent
For $\delta >0$ that leads to~\eqref{eq:od3-local},
let 
$\calK_{\calU, \delta} := 
\{u \in \calK_{\calU} : \|u -u^\star\|_2 \ge \delta\}$.
Suppose first 
that $\calK_{\calU, \delta} \neq \emptyset$,
which implies $u^\star \notin \calK_{\calU, \delta}$.
Moreover, $\calK_{\calU, \delta}$ is 
a closed subset of the compact set 
$\calK_{\calU}$ and is thus also compact.
Also, since $u^\star$ is the unique zero of $\pi_0$
from part (i), the map 
$u \mapsto \|\pi_0(u)\|_2 / \|u - u^\star\|_2$
is continuous and strictly positive on $\calK_{\calU, \delta}$.
Together, this implies
\begin{equation}
	a := 
	\min\nolimits_{u \in \calK_{\calU, \delta}}\,
	\frac{\|\pi_0(u)\|_2}{\|u-u^\star\|_2} > 0 \;.
	\label{eq:od3-global}
\end{equation}
If $\calK_{\calU,\delta}=\emptyset$,
the local bound~\eqref{eq:od3-local} 
already holds throughout
$\calK_\calU$, so taking $b=\sigma/2$
yields the desired~\eqref{eq:od3-goal}.
Combining the local 
inequality of~\eqref{eq:od3-local}
and the non-local inequality 
of~\eqref{eq:od3-global},
it then follows that 
\begin{equation}
	\|\pi_0(u)\|_2 
	\ge 
	\min(\sigma/2, a) \cdot \|u-u^\star\|_2
	\;\;\;\text{for all $u \in \calK_{\calU}$}\;.
	\label{eq:od3}
\end{equation}
We thus let $b := \min(\sigma/2, a) > 0$.

\smallskip
\paragraph{4. Combining the pieces.} ~

\noindent
Finally, we combine the bounds of~\eqref{eq:od3-summary}
and~\eqref{eq:od3} to obtain
for every $u \in \calK_{\calU}$ that
\begin{equation}
	\|V(x_0(u))\|^2_{y_0 + u}
	\ge 
	c_{\ell} \|\pi_0(u)\|^2_2
	\ge 
	c_{\ell} b^2 \|u -u^\star\|^2_2 
	\ge 
	\frac{2 c_\ell b^2}{c_h} \cdot \D_R(x^\star, x_0(u)) \;,
	\label{eq:od-full}
\end{equation}
where $c_{\ell}, b, c_h > 0$.
Then set $\wt \gamma(x_0, \calK) := (2 c_{\ell} b^2) / c_h > 0$,
and consider $y \in \R^D$ such that 
$x= \nabla F(y) \in \calK \cap \calR(x_0)$. 
Then instantiate $u = u_0(x) \in \calK_{\calU}$,
which implies that $x = x_0(u)$. 
Moreover, since by definition 
$u = \Pi_{\calS^\bot} (\nabla R(x) - y_0)$,
it follows that 
$\nabla R(x) - y_0 - u 
= \Pi_{\calS} (\nabla R(x) - y_0) \in \calS$.  
We also have $y - \nabla R(x) \in \calS$ 
by Proposition~\ref{prop:first-order-duality} 
since $x = \nabla F(y)$. 
Hence $y - (y_0 + u) = (y - \nabla R(x)) + (\nabla R(x) - y_0 - u) \in \calS$.
Thus by Proposition~\ref{prop:second-order-duality},
it follows that $\nabla^2 F(y) = \nabla^2 F(y_0 + u)$. 
Combining this with $x = x_0(u)$ yields:  
\begin{equation}
\|V(x)\|^2_{y} 
= \langle V(x), \nabla^2 F(y) V(x) \rangle 
= \langle V(x_0(u)), \nabla^2 F(y_0+ u) V(x_0(u)) \rangle 
= \|V(x_0(u))\|^2_{y_0+u}\,.
\end{equation}
Using this identity with~\eqref{eq:od-full}, 
we finally obtain
\begin{equation*}
	\|V(x)\|^2_{y}
	= 
	\|V(x_0(u))\|^2_{y_0+u}
	\ge 
	\wt \gamma(x_0, \calK) 
	\cdot \D_R(x^\star,  x)  \;,
\end{equation*}
which completes the proof of the proposition.
\hfill ~ $\blacksquare$

\section{Proof of Theorem~\ref{thm:main-last-iterate}:
    Last-Iterate Convergence Rates}
\label{app:last-iterate-proof}

In this section, we formally give the
proof of Theorem~\ref{thm:main-last-iterate},
restated here:

\medskip
\thmmainlast*

\medskip

\begin{proof}
The proof follows by combining the components
introduced in the overview of Section~\ref{sec:main-proof-overview}.
In particular, recall that our analysis is primarily based on the 
$M$-symmetrized system 
summarized in Setting~\ref{setting:weighted}.
The proof then follows by 
studying the iterates of the symmetrized system,
and using
(i) energy dissipation bounds of 
Lemma~\ref{lem:energy-dissipation-main},
(ii) the asymptotic convergence and equilibrium selection
characterization of Lemma~\ref{lem:eq-selection},
and (iii) by applying the operator domination 
inequality of Proposition~\ref{prop:op-dom} 
over the OFTRL and EGFTRL iterates.
In more detail:

\smallskip
\paragraph{1. Asymptotic convergence 
and equilibrium selection.} ~ 

\noindent
Let $\{(x_t, y_t)\}$ denote the symmetrized iterates.
By Lemma~\ref{lem:eq-selection}, 
we have that there is a unique limit
$x_{\infty} := \lim_{t \to \infty} x_t 
= \argmin_{x^\star \in \MEQ(V)} \, \D_R(x^\star, x_0) 
\in \relint(\calX)$.
By the relationships between the 
original and $M$-symmetrized systems 
of Proposition~\ref{prop:reduction-algorithmic},
this immediately implies
the existence of a unique limit
$w_{\infty} := \lim_{t \to \infty} w_t$
satisfying
$w_\infty = \argmin_{w^\star \in \WMEQ_M(G)}\, D_{Q_M}(w^\star, w_0)$.
Moreover, it thus holds by definition 
that $x_{\infty} = M^{1/2} w_{\infty}$.
This proves part (i) of the Theorem.

\smallskip
\paragraph{2. Convergence rate for symmetrized system.} ~ 

\noindent
For part (ii) of the theorem, we first establish
a linear last-iterate convergence rate
for the symmetrized system.
In particular, combining the equivalence of 
Proposition~\ref{prop:bregman-shifted-energy}
and 
using the energy dissipation upper bounds
of Lemma~\ref{lem:energy-dissipation-main},
we have for all $t  \ge 1$ that,
since $x_\infty \in \MEQ(V) \cap \relint(\calX)$:
\begin{equation}
	\textstyle
	\D_R(x_{\infty}, x_{t+1})
	- 
	\D_R(x_{\infty}, x_{t})
	= 
	F_{x_{\infty}}(y_{t+1}) 
	- 
	F_{x_{\infty}}(y_{t}) 
	\le 
	- 
	\frac{1}{8} 
	\cdot 
	\eta^2 \|V(x_t)\|^2_{y_t}
	\label{eq:last-energy}
\end{equation}
Now we have by definition 
of the iterates that $x_t \in \calR(x_0) \subset \relint(\calX)$ 
for all $t \ge 1$.
Also as established in the proof of
Lemma~\ref{lem:eq-selection}, 
there exists a compact set $\calC \subset \relint(\calX)$
such that $x_t \in \calC$ for all $t \ge 1$,
and moreover that $x_{\infty} \in \calC \cap \calR(x_0)$.
Then applying the operator domination
inequality of Proposition~\ref{prop:op-dom}
with the compact set $\calC$, the reachable primal 
set $\calR(x_0)$, 
for the point $x_{\infty} \in \MEQ(V) \cap \calR(x_0)$,
and for the iterates
$y_t \in \R^D$ and $x_t = \nabla F(y_t) \in \calC \cap \calR(x_0)$
it then follows from~\eqref{eq:last-energy} for
$t \ge 1$ that
\begin{align}
	\D_R(x_{\infty}, x_{t+1})
	- 
	\D_R(x_{\infty}, x_{t})
	&\le 
	- 
	\tfrac{1}{8} \wt \gamma(x_0, \calC)  \eta^2 
	\cdot
	\D_R(x_{\infty}, x_{t}) \;,
    \label{eq:op-dom-app}
\end{align}
where $\wt \gamma(x_0, \calC) > 0$ is positive
and uniform over all $t \ge 1$.

Now let $\eta_{\max}:=1/(880LB\omega\kappa_M)$,
and let 
$\wt \gamma := \min(\wt \gamma(x_0, \calC),
4/\eta^2_{\max}) > 0$.
Then observe that for all $\eta$ 
as in the theorem statement, it holds 
that $\wt \gamma \eta^2 / 8 \le 1/2$,
and $\wt \gamma(x_0, \calC) \eta^2 \ge \wt \gamma \eta^2$.

Then rearranging~\eqref{eq:op-dom-app}, 
it then follows recursively
for all $t \ge 0$ that
\begin{align}
	\D_R(x_{\infty}, x_{t+1})
	&\le
	\D_R(x_{\infty}, x_1)
	\cdot 
	(1- \tfrac{1}{8}  \wt \gamma \eta^2)^t  
	 \nonumber \\
	&\le
	\D_R(x_{\infty}, x_1)
	\cdot \exp\big(
		-  \tfrac{1}{8}  \wt \gamma  \eta^2 \cdot t
	\big)
\end{align}
Now for~\eqref{eq:egftrl-helper},
it is straightforward to check 
that the preceding arguments also directly yield
$\D_R(x_{\infty}, x_1) \le \D_R(x_{\infty}, x_0)$.
On the other hand, 
for~\eqref{eq:oftrl-helper},
we have by Lemma~\ref{lem:oftrl-initial-bregman}
that $\D_R(x_{\infty}, x_1) \le 2 \D_R(x_{\infty}, x_0)$.
Taking the larger of the two, it follows
for both algorithms that 
\begin{equation}
	\D_R(x_{\infty}, x_{t+1})
	\le 
	2 \D_R(x_{\infty}, x_0) 
	\cdot 
	\exp\big(
		- \tfrac{1}{8}  \wt \gamma  \eta^2 \cdot t
	\big)
	\;\;\;\text{for all $t \ge 0$} \;.
	\label{eq:last-iterate-symmetrized}
\end{equation}

\smallskip
\paragraph{3. Convergence rate for original system.} ~ 

\noindent
To complete the proof of part (ii)
of the theorem,
we translate the bound of~\eqref{eq:last-iterate-symmetrized}
back to the original system. 
For this, we use the relationships
$x_{t+1} = M^{1/2} w_{t+1}$,
$x_{0} = M^{1/2} w_{0}$,
and $x_{\infty} = M^{1/2} w_{\infty}$
and the inequalities
of~\eqref{eq:DQ-vs-DR} to obtain 
\begin{equation}
	m_{\min} \cdot \D_Q(w_{\infty}, w_{t+1})
	\le 
	\D_R(x_{\infty}, x_{t+1})
	\;\;\;\text{and}\;\;\;
	\D_R(x_{\infty}, x_{0})
	\le
	m_{\max}
	\D_Q(w_{\infty}, w_{0}) \;.
	\label{eq:drelations}
\end{equation}
Let $\gamma(w_0) := \frac{1}{8} \wt \gamma > 0$
for the positive constant from Step 2. 
Then substituting the bounds of~\eqref{eq:drelations}
into~\eqref{eq:last-iterate-symmetrized}
and rearranging, 
we obtain
\begin{equation}
	\D_Q(w_{\infty}, w_{t+1})
	\le 
	2 \tfrac{m_{\max}}{m_{\min}} \cdot 
	\D_Q(w_{\infty}, w_0)
	\cdot 
	\exp\big(
		- \gamma(w_0) \cdot \eta^2 t
	\big)
	\;\;\;\;\text{for all $t \ge 0$}\;,
\end{equation}
which concludes the proof of the theorem.
\end{proof}

\section{Proof of Theorem~\ref{thm:ftrl-diverge-main}: 
Divergence of FTRL}
\label{app:vanilla-ftrl}

In this section,
we develop the proof of Theorem~\ref{thm:ftrl-diverge-main},
restated here:

\medskip
\thmftrldiverge*

\medskip
The proof of the theorem first requires
establishing a bound on the 
shifted energy \textit{expansion}
under vanilla FTRL (for the symmetrized system).
We present this result in Section~\ref{app:ftrl-main:energy},
and we then give the full
proof of Theorem~\ref{thm:ftrl-diverge-main}
in Section~\ref{app:vanilla:main-proof}.

\smallskip
\subsection{Energy Expansion under FTRL}
\label{app:ftrl-main:energy}

\smallskip
We prove the following
bounds on \textit{expansion}
of shifted-energy along
the vanilla~\eqref{eq:ftrl} iterates
for the symmetrized system from Setting~\ref{setting:weighted}.

\begin{restatable}[Energy expansion]
	{lem}{lemftrlenergyexpand}
	\label{lem:ftrl-energy-expand}
	Let $\{(x_t, y_t)\}$
	be the iterates of~\eqref{eq:ftrl}
	for the symmetrized system
	with $\eta \in (0, 1/(10\wh B \wh \omega)]$. 
	Then for all $x^\star \in \MEQ(V)$
	and all $t \ge 0$:
	\begin{equation*}
		\textstyle
		F_{x^\star}(y_{t+1}) - F_{x^\star}(y_t)
		\ge
		\frac{1}{4} 
		\cdot 
		\eta^2 \|V(x_t)\|^2_{y_t} \;.
	\end{equation*}
\end{restatable}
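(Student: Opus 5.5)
We start from the exact identity for vanilla FTRL in Proposition~\ref{prop:ftrl-iftrl-energy}. For every $x^\star \in \MEQ(V)$ it gives
\[
F_{x^\star}(y_{t+1}) - F_{x^\star}(y_t) = \D_F(y_{t+1}, y_t).
\]
The Minty equality~\eqref{eq:minty-equality} kills the first-order term exactly, so no coupling error appears. The whole task is therefore to lower-bound the forward Bregman divergence $\D_F(y_{t+1}, y_t)$ by a constant multiple of $\eta^2\|V(x_t)\|^2_{y_t}$. This is simpler than the OFTRL/EGFTRL case.

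Under~\eqref{eq:ftrl} we have $y_{t+1} - y_t = -\eta V(x_t)$. By the bounded-operator property of Setting~\ref{setting:weighted}, $\|y_{t+1}-y_t\|_* \le \eta \wh B$, so we can set $\alpha = \eta\wh B$.

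Apply the lower bound of Lemma~\ref{lem:bregman-local-bound}, with the base point $y = y_t$, directly. This yields
\[
\D_F(y_{t+1}, y_t) \ge \frac{\exp(-\wh\omega\alpha) + \wh\omega\alpha - 1}{(\wh\omega\alpha)^2}\,\|y_{t+1}-y_t\|^2_{y_t}
= \frac{e^{-a}+a-1}{a^2}\,\eta^2\|V(x_t)\|^2_{y_t},
\qquad a := \eta\wh\omega\wh B.
\]
No transfer of the local seminorm (Lemma~\ref{lem:local-transfer}) is needed, because the seminorm is already evaluated at $y_t$, which is exactly the base point of the divergence $\D_F(y_{t+1},y_t)$.

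The only remaining step is the scalar inequality $\frac{e^{-a}+a-1}{a^2} \ge \frac14$ for $a \in (0, 1/10]$, which holds by the stepsize constraint $\eta \le 1/(10\wh B\wh\omega)$. The function $g(a) = \int_0^1 (1-s)e^{-as}\,ds$ is decreasing in $a$, with $g(0) = 1/2$. It is bounded below by $\tfrac12 e^{-a} \ge \tfrac12 e^{-1/10} > \tfrac14$, so the inequality holds with room to spare. Hence $F_{x^\star}(y_{t+1}) - F_{x^\star}(y_t) \ge \tfrac14\eta^2\|V(x_t)\|^2_{y_t}$ for all $t\ge 0$ and every $x^\star\in\MEQ(V)$.

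I expect no real obstacle. The one point needing care is that the bound holds from $t=0$, unlike the OFTRL lemma. This is fine because FTRL uses no past operator values, so no stability lemma or startup step is required. One should also confirm that Proposition~\ref{prop:ftrl-iftrl-energy} applies verbatim to non-interior $x^\star \in \MEQ(V)$. It does, since its proof uses only $x_t \in \calX$ and the definition of $F_{x^\star}$.
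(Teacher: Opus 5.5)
Your proposal is correct and matches the paper's proof: the exact identity $F_{x^\star}(y_{t+1})-F_{x^\star}(y_t)=\D_F(y_{t+1},y_t)$ from Proposition~\ref{prop:ftrl-iftrl-energy}, then the lower bound of Lemma~\ref{lem:bregman-local-bound} at base point $y_t$ with $\alpha=\eta\wh B$, then a lower bound of $\tfrac14$ on the scalar factor. Your estimate $\frac{e^{-a}+a-1}{a^2}\ge \tfrac12 e^{-a}\ge \tfrac12 e^{-1/10}$ spells out the step the paper only calls straightforward to check.
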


\begin{proof}
Reusing 
Proposition~\ref{prop:ftrl-iftrl-energy},
we have
for all $x^\star \in \MEQ(V)$
and $t \ge 0$ that:
\begin{equation}
	\textstyle
	F_{x^\star}(y_{t+1})
	- 
	F_{x^\star}(y_{t})
	= 
	\D_F(y_{t+1}, y_t)
	\label{eq:expand-term}
\end{equation}
Now for all $t \ge 0$,
recall that $\|y_{t+1}- y_t\|_{*} \le \eta \wh B$.
Then by the lower bound of Lemma~\ref{lem:bregman-local-bound},
we have 
\begin{equation}
	\textstyle
	\D_F(y_{t+1}, y_t)
	\ge 
	\big(\frac{\exp(-\wh \omega \eta \wh B) + \wh \omega \eta \wh B -1}
	{(\wh \omega \eta \wh B )^2} \big)
	\cdot \|y_{t+1} -y_t\|^2_{y_t}
	= 
	\big(\frac{\exp(-\wh \omega \eta \wh B) + \wh \omega \eta \wh B -1}
	{(\wh \omega \eta \wh B )^2} \big)
	\cdot \eta^2 \|V(x_t)\|^2_{y_t} \;.
	\label{eq:df-term}
\end{equation}
The multiplicative factor in~\eqref{eq:df-term}
is positive and strictly decreasing in $\eta$.
Moreover, for $\eta$ as in the lemma statement,
it is straightforward to check that 
this term is bounded below by $1/4$.
Substituting this bound into~\eqref{eq:df-term}
and back into~\eqref{eq:expand-term}
then concludes the proof.
\end{proof}

\subsection{Proof of Theorem~\ref{thm:ftrl-diverge-main}}
\label{app:vanilla:main-proof}

We now give the full proof of
Theorem~\ref{thm:ftrl-diverge-main},
which proceeds in several steps.

\paragraph{1. Overview and setup.} ~

Similar to the proof of Theorem~\ref{thm:main-last-iterate},
we analyze the FTRL iterates under the symmetrized 
system of Setting~\ref{setting:weighted}
and then translate the guarantees back to the
original system.
Specifically, let $\{(x_t, y_t)\}$
denote the vanilla~\eqref{eq:ftrl}
iterates for the symmetrized system.

Then, fix $w^\star \in \NE(G) \cap \relint(\calW)$,
and let $x^\star = M^{1/2} w^\star \in \NE(V) \cap \relint(\calX)$.
Fix $\alpha$ satisfying the statement of the theorem,
and define $T_{\alpha}(x^\star)
= \sup\{t \ge 0 : \D_R(x^\star, x_t) \le \alpha\}$.
Then by definition of the symmetrized system,
it holds that $T_{\alpha}(x^\star) = T_{\alpha}(w^\star)$,
for $T_{\alpha}(w^\star)$ as defined 
in the theorem statement.
Moreover, we fix an initialization pair 
$(w_0,z_0)$ (respectively, $(x_0, y_0)$)
such that $w_0 \in \relint(\calW) \setminus \NE(G)$
(respectively, $x_0 \in \relint(\calX) \setminus \NE(V)$).

\medskip
\paragraph{2. Convergence rate to
sublevel set boundary for symmetrized system.} ~

\noindent
First, recall that the operator $G$ is linear 
by assumption, which also implies $V$ is linear.
In this case, we have by Proposition~\ref{prop:ne-wmeq-linear}
that $\NE(V) = \MEQ(V)$.
Now recall for $t \ge 0$
that $x_t \in \calR(x_0) \subseteq \relint(\calX)$. 
Moreover, observe that combining 
Proposition~\ref{prop:unique-reachable-meq}
and the result of Lemma~\ref{lem:eq-selection}
establishes the existence and uniqueness
of a point $\wt x \in \MEQ(V) \cap \calR(x_0)$.
In particular, note that it suffices to 
apply Lemma~\ref{lem:eq-selection}
to an auxiliary OFTRL trajectory
initialized at the same pair $(x_0, y_0)$,
where it then follows that its limit
is a point $\wt x \in \MEQ(V) \cap 
\calR(x_0)$, which establishes existence.
Proposition~\ref{prop:unique-reachable-meq}
then implies this point is unique.

Now let $x^\star \in \NE(V) \cap \relint(\calX)$
be the point from the setup above.
As $\NE(V) = \MEQ(V)$,
it follows also that $x^\star \in \MEQ(V)$.
Then following arguments identical
to part (ii) of Lemma~\ref{lem:eq-selection},
we further obtain for all $t \ge 0$ that
\begin{equation}
	\D_R(x^\star, x_t) 
	= 
	\D_R(x^\star, \wt x) 
	+ 
	\D_R(\wt x, x_t) \;.
	\label{eq:vh-01}
\end{equation}
For convenience, we will define
\begin{equation}
	a_t := \D_R(x^\star, x_t) \,,
	\;\;
	b_t:= \D_R(\wt x, x_t)  \,,
	\;\;
	c := \D_R(x^\star, \wt x) \;.
	\label{eq:vh-constants}
\end{equation}
Then~\eqref{eq:vh-01}
says that 
$a_t = b_t + c$,
where $c \ge 0$ by convexity of $R$. 

Now, by Lemma~\ref{lem:ftrl-energy-expand}
and Proposition~\ref{prop:bregman-shifted-energy},
we obtain for all $t \ge 0$ and
$x^\star \in \MEQ(V) = \NE(V)$ that
\begin{equation}
	\D_R(x^\star, x_{t+1}) 
	- \D_R(x^\star, x_t)
	= 
	F_{x^\star}(y_{t+1})
	- 
	F_{x^\star}(y_{t})
	\ge
	\tfrac{1}{4} \cdot \eta^2 \|V(x_t)\|^2_{y_t} \;.
	\label{eq:ftrl-energy}
\end{equation}
As both $\wt x \in \MEQ(V)$
and $x^\star \in \MEQ(V)$,
it follows that we can write  
for all $t \ge 0$ that
\begin{equation}
	a_{t+1} - a_t
	= 
	b_{t+1}- b_t 
	\ge 
	\tfrac{1}{4} \cdot \eta^2 \|V(x_t)\|^2_{y_t}
	\ge 0\;.
	\label{eq:ftrl-energy-simple}
\end{equation}

For $\alpha$ as in the statement of
the theorem, we then define the sublevel set
\begin{equation*}
\calC_{\alpha} 
:= \{x \in \relint(\calX) : \D_R(x^\star,x) \le \alpha \}
\subset \relint(\calX) \;.
\end{equation*}
As $x^\star \in \relint(\calX)$
by assumption, it holds 
from Proposition~\ref{prop:bregman-boundary-blowup}
that 
$\calC_{\alpha} \subset \relint(\calX)$ is compact.
Also, for the assumptions on $x_0$
(as implied by the condition on $w_0$) in the theorem statement, 
we have $x_0 \in \calC_{\alpha}$. 
Moreover, we have from the identity of~\eqref{eq:vh-01} 
at time $t=0$ that 
\begin{equation*}
	\D_R(x^\star, \wt x)
	\le 
	\D_R(x^\star, \wt x)
	+ 
	\D_R(\wt x, x_0)
	= 
	\D_R(x^\star, x_0)
	\le 
	\alpha \;,
\end{equation*}
which also implies that 
$\wt x \in \calC_{\alpha}$.
By definition of $T_{\alpha}(x^\star)$,
and using the monotonicity of $\{a_t\}$
established in~\eqref{eq:ftrl-energy-simple},
we then further have that $x_t \in \calC_{\alpha} \cap \calR(x_0)$
for all $0 \le t \le T_{\alpha}(x^\star)$.

Together, this allows us to apply the operator
domination inequality of Proposition~\ref{prop:op-dom}
for the point $\wt x \in \MEQ(V) \cap \calR(x_0)$,
for the compact set $\calC_{\alpha} \subset \relint(\calX)$,
and for the FTRL iterates ${(x_t, y_t)}$
satisfying $x_t = \nabla F(y_t)  \in \calR(x_0) \cap \calC_{\alpha}$
for all $0 \le t \le T_{\alpha}(x^\star)$.
This yields for all such $t$ that
\begin{equation}
	\|V(x_t)\|^2_{y_t}
	\ge 
	\wt \gamma \cdot 
	\D_R(\wt x, x_t)
	= 
	\wt \gamma \cdot b_t 
	\label{eq:ftrl-opdom-app}
\end{equation}
for a positive constant $\wt \gamma := \wt \gamma(x_0, \calC_{\alpha}) > 0$.

Now, to relate \eqref{eq:ftrl-opdom-app}
back to the term $a_t = \D_R(x^\star, x_t)$,
observe that $b_t \ge b_0$ and $a_t \ge a_0$
 due to~\eqref{eq:ftrl-energy-simple},
and moreover $b_0 = \D_R(\wt x, x_0) > 0$
and $a_0 = \D_R(x^\star, x_0) > 0$
by strict convexity of $R$ and
due to the assumption that $x_0 \notin \NE(V)$.
As also $c \ge 0$ by convexity of $R$, we have 
the relationships:
\begin{equation}
	\frac{b_t}{a_t}
	= 
	\frac{b_t}{b_t+c} 
	\ge 
	\frac{b_0}{b_0+ c}
	= \frac{b_0}{a_0} \;,
	\label{eq:chelper-02}
\end{equation}
where the final equality is
due to $b_0 +c = a_0$ from~\eqref{eq:vh-01}.
Then combining the inequalities~\eqref{eq:ftrl-energy-simple},
\eqref{eq:ftrl-opdom-app}, and~\eqref{eq:chelper-02},
we obtain the bound
\begin{equation}
	a_{t+1}
	\ge
	a_t 
	+ 
	\tfrac{1}{4} \cdot 
	\wt \gamma \eta^2 \cdot  b_t
	\ge 
	\big(
		1+ \tfrac{\wt \gamma b_0}{4 a_0} \eta^2 
	\big)
	\cdot 
	a_t \;,
	\label{eq:ftrl-diverge-t}
\end{equation}
for all $0 \le t \le T_{\alpha}(x^\star)$.
Then define 
the constant 
$\beta := \frac{\wt \gamma b_0}{4 a_0}  > 0$.
From~\eqref{eq:ftrl-diverge-t}, it then follows recursively for
all $0 \le t \le T_{\alpha}(x^\star)$ that
\begin{equation}
	\D_R(x^\star, x_t) 
	= 
	a_t
	\ge 
	a_0 \cdot 	(1 + \beta \eta^2)^t
	= 
	\D_R(x^\star, x_0) \cdot (1+ \beta \eta^2)^t \;.
	\label{eq:ftrl-diverge-sym}
\end{equation}

\medskip
\paragraph{3. Translating back to original system.} ~

\noindent
It remains to translate 
the bound in~\eqref{eq:ftrl-diverge-sym}
back to the original system. 
For this, recall that we have
$x_t = M^{1/2} w_t$,
$x_0 = M^{1/2} w_0$,
and $x^\star = M^{1/2}w^\star$. 
Then using the relationships of~\eqref{eq:DQ-vs-DR}:
\begin{equation}
	m_{\max} \cdot \D_Q(w^\star, w_t) 
	\ge \D_R(x^\star, x_t) 
	\;\;\;\text{and}\;\;\;
	\D_R(x^\star, x_0)
	\ge 
	m_{\min}
	\D_Q(w^\star, w_0) \;.
	\label{eq:ftrl-breg-relate}
\end{equation}
for the constant $\beta$ defined in step 2. 
Then substituting~\eqref{eq:ftrl-breg-relate}
into~\eqref{eq:ftrl-diverge-sym} and rearranging
yields 
for all $0 \le t \le T_{\alpha}(x^\star)$ that 
\begin{equation*}
	\D_Q(w^\star, w_t)
	\ge 
	\tfrac{m_{\min}}{m_{\max}}
	\cdot 
	\D_Q(w^\star, w_0)
	\cdot 
	(1 + \beta \eta^2)^t \;.
\end{equation*}
Setting $\beta(w_0, w^\star, \alpha) := \beta > 0$
and recalling that $T_{\alpha}(x^\star) = T_{\alpha}(w^\star)$
then yields the desired statement.
\hfill $\blacksquare$

\newpage
\bibliographystyle{apalike}
\bibliography{references.bib}

\end{document}